\newtheorem{theorem}{Theorem}
\newtheorem{definition}{Definition}
\newtheorem{lemma}{Lemma}
\newcommand{\bout}[1]
\begin{document}

\title{Characterization of randomness in quantum circuits of continuous gate sets}

\author{Yosuke Mitsuhashi}
\email{mitsuhashi@noneq.t.u-tokyo.ac.jp}
\affiliation{Department of Basic Science, University of Tokyo, 3-8-1 Komaba, Meguro-ku, Tokyo 153-8902, Japan}

\author{Ryotaro Suzuki}
\email{ryotaro.suzuki@fu-berlin.de}
\affiliation{Dahlem Center for Complex Quantum Systems, Freie Universität Berlin, Berlin 14195, Germany}

\author{Tomohiro Soejima}
\email{tomohiro\_soejima@g.harvard.edu}
\affiliation{Department of Physics, Harvard University, Cambridge, MA 02138, USA}

\author{Nobuyuki Yoshioka}
\email{nyoshioka@ap.t.u-tokyo.ac.jp}
\affiliation{Department of Applied Physics, University of Tokyo, 7-3-1 Hongo, Bunkyo-ku, Tokyo 113-8656, Japan}
\affiliation{Quantum Computing Center, RIKEN Cluster for Pioneering Research (CPR), Wako-shi, Saitama 351-0198, Japan}
\affiliation{JST, PRESTO, 4-1-8 Honcho, Kawaguchi, Saitama, 332-0012, Japan}

\begin{abstract}
In the accompanying paper of arXiv:2408.13472, we have established the method of characterizing the maximal order of asymptotic unitary designs generated by symmetric local random circuits, and have explicitly specified the order in the cases of $\mathbb{Z}_2$, $\mathrm{U}(1)$, and $\mathrm{SU}(2)$ symmetries. 
Here, we provide full details on the derivation of the main theorems for general symmetry and for concrete symmetries. 
Furthermore, we consider a general framework where we have access to a finite set of connected compact unitary subgroups, which includes symmetric local unitary gate sets. 
\end{abstract}

\maketitle

\onecolumngrid

\let\oldaddcontentsline\addcontentsline
\renewcommand{\addcontentsline}[3]{}

\section{Introduction}

In quantum mechanics, symmetry plays a fundamental role in both constraining and enriching phenomena in wide range of spatial and dynamical sense.
An early seminal example is represented by Noether's theorem~\cite{Noether1918}, which states that a global symmetry of a system results in a constrained dynamics that preserves  a conserved charge.
Symmetry also plays a crucial role in enriching physics, as represented in spontaneous symmetry breaking~\cite{nambu1960axial, nambu1961dynamical, goldstone1961field, nambu2009nobel} and deconfined quantum criticality~\cite{senthil2004deconfined, senthil2004quantum, sandvik2007evidence}.
A prominent application in quantum information science is the protection of quantum memory by quantum error correction~\cite{shor1995scheme, steane1996error, calderbank1996good, nielsen2000quantum}, in addition to the Eastin-Knill theorem that in turn puts restriction on a single error-correcting code to perform universal quantum computation~\cite{eastin2009restrictions}.

Driven by the capability of quantum circuit models to capture various statistical physics phenomena,  there is a surging interest in the interplay between symmetry and locality in quantum circuits.
One primary example is the symmetry-protected topological order in quantum phases of matter, in which the presence/absence of  constant-depth local symmetric quantum circuit is crucial for the definition~\cite{wen1995topological, kane2005quantum, hasan2010colloquium, sato2017topological, gu209tensor, pollman2010entanglement, chen2012science}.
Also, the interplay has been of interest to the statistical physics community which employs the quantum circuit model to describe discretized time evolution of local Hamiltonian to discover novel symmetry enriched phases in both static and dynamical ways~\cite{bao2021symmetry, lavasani2021measurement, morral-yepes2023detecting, hauser2024continuous}.

The interplay between the symmetry and locality has shed light on a primary problem in quantum information science--- the  universality of symmetric local quantum circuits. Here the universality refers to the ability of a given set of local quantum gates to express arbitrary global unitary, and its practical significance is highlighted in the Solovay-Kitaev theorem which states that $\epsilon$-close approximation of arbitrary unitary can be constructed from polylogarithmic number of universal gate sets~\cite{kitaev1997quantum, dawson2006solovay}.
While the fundamental theory of quantum computing has established that universality can be achieved with a finite set of locally universal  unitaries~\cite{divincenzo1995universal, lloyd1995almost},
surprisingly it was shown recently that the representability of {\it symmetric} local circuit is restricted, i.e., does not satisfy universality~\cite{marvian2022restriction}.
It was later pointed out that some local circuits under symmetry constraints satisfy a property called the {\it semi-universality}~\cite{marvian2023theory, marvian2024rotationally, hulse2024framework}, a weaker version of universality which ignores the tunability of relative phases between symmetry sectors~\cite{kempe2001encoded}.

The discovery of such a qualitative difference has further invoked question in terms of quantitative characterization, concretely in terms of the symmetric version of unitary $t$-design. Note that unitary design,  representing a set of unitaries that reproduces the Haar measure up to the $t$th moment~\cite{dankert2009exact}, has been a standard tool to understand the condition to perform various tasks in quantum information science including quantum advantage~\cite{boixo2018characterizing, arute2019quantuma}, quantum tomography~\cite{huang2020predicting}, randomized benchmarking~\cite{emerson2005scalable}, optimal quantum communication capacity~\cite{dupuis2014one}, and chaotic dynamics~\cite{roberts2017chaos}. 
While it is known that accumulation of non-symmetric local circuits allows us to generate unitary designs up to arbitrary order~\cite{brandao2016local, haferkamp2022random, mittal2023local, belkin2024approximate, schuster2024random, laracuente2024approximate}, in symmetric cases the expressibility of the symmetric local unitaries remains unestablished.
While there are existing attempts to characterize the design under U(1) and SU(d) symmetries~\cite{li2023designs, hearth2025unitary}, we are lacking of integrated theory that provides the exact number of maximal order $t$ achievable with symmetric local quantum circuits.

In an accompanying letter~\cite{mitsuhashi2025unitary}, we establish the method for general symmetric local quantum circuits that characterizes its expressibility in terms of symmetric unitary design. 
We have concretely shown that the necessary and sufficient condition of forming an asymptotic symmetric $t$-design is given by the nonexistence of a nontrivial integer solution of a certain linear equation specified by the symmetry and locality of the circuit.
In this manuscript, we provide the full details of the derivation of the main theorems. 
The equivalence between asymptotic unitary designs and the nonexistence of a nontrivial integer solution can be intuitively understood as follows:
Since we consider the situation where our accessible gate set is semi-universal, the difference between the expressibility of accessible gate set and that of the whole symmetric gate set appears only in the relative phases.
Therefore, the distribution of a random circuit generated with some gate set is an asymptotic unitary $t$-design if and only if whenever we are given the sum of $t$ relative phases, we can estimate the component of them, which can be equivalently expressed as the nonexistence of nontrivial integer solutions of a certain set of equations.

As for technical perspective, the core idea is to show the equivalence between the nonexistence of a nontrivial integer solution and the coincidence of the commutant of the $t$-fold allowed gate set and that of the $t$-fold symmetric unitary operators, which means that symmetric local quantum circuits are asymptotic unitary $t$-designs. 
When we prove the coincidence of the two commutants, we show the coincidence of the algebras of the $t$-fold allowed gate set of the $t$-fold symmetric unitary operators. 
On the other hand, when we prove the converse part, we explicitly construct an operator that commutes with all the $t$-fold allowed gates, but not with all the $t$-fold symmetric unitaries.

The remainder of this paper is organized as follows.
In Sec.~\ref{SMsec:preliminaries}, we introduce the preliminaries. 
In Sec.~\ref{SMsec:main_results}, we present a theorem about the explicit order of unitary designs, which is applicable to general symmetries and general continuous gate sets. 
We also present the detailed results for the $\mathrm{Z}_2$, $\mathrm{U}(1)$, and $\mathrm{SU}(2)$ symmetries. 
Then, in Sec.~\ref{SMsec:proof_general}, we present the proof of the general theorem. 
This is followed by Sec.~\ref{sec:conclusion} which gives the conclusion and discussion.
For the completeness of our work, in Appendix~\ref{SMsec:theorem_specific}, we present the proof of the theorems about the concrete symmetries $\mathbb{Z}_2$, $\mathrm{U}(1)$, and $\mathrm{SU}(2)$. 
In Appendix~\ref{SMsec:technical}, we show technical lemmas used in the proofs of the main theorems.

{
\hypersetup{linkcolor=blue}
%\tableofcontents
}

%\begin{spacing}{1.1}

\section{Preliminaries} \label{SMsec:preliminaries}

The notations used in this paper are as follows: 
For a general Hilbert space $\mathcal{K}$, we denote the sets of all linear operators and all unitary operators on $\mathcal{K}$ by $\mathcal{L}(\mathcal{K})$ and $\mathcal{U}(\mathcal{K})$, respectively. 
As for the definition of the Lie algebra associated with a Lie group, we adopt the physical version, i.e., we define the Lie algebra as the tangent space at the identity divided by the imaginary unit $i$. 
We define $\mathbb{N}:=\{n\in\mathbb{Z}\ |\ n>0\}$ and $\mathbb{Z}_{\geq 0}:=\{n\in\mathbb{Z}\ |\ n\geq 0\}$. 
For the sake of convenience, we define the sum and product over the empty set as $0$ and $1$, respectively.

We consider a circuit consisting of $n$ qudits with a local dimension $d$, and we denote the associated Hilbert space by $\mathcal{H}$. 
For convenience, we denote the set of all linear operators and all unitary operators on the $n$ qudits by $\mathcal{L}_n$ and $\mathcal{U}_n$, respectively, which are the same as $\mathcal{L}(\mathcal{H})$ and $\mathcal{U}(\mathcal{H})$. 
In the following, we give the notations about symmetry and the construction of random circuits. 
First, we explain the symmetry condition. 
By using the pair of a group $G$ and its representation $R$ on $\mathcal{H}$, we say that an operator $O\in\mathcal{L}(\mathcal{H})$ is $(G, R)$-symmetric if $O$ commutes with $R(g)$ for all $g\in\mathcal{G}$. 
We denote the set of all $(G, R)$-symmetric linear operators and unitary operators by $\mathcal{L}_{n, G, R}$ and $\mathcal{U}_{n, G, R}$, i.e., 
\begin{align}
    &\mathcal{L}_{n, G, R}:=\{L\in\mathcal{L}_n\ |\ [L, R(g)]=0\ \forall g\in G\}, \\
    &\mathcal{U}_{n, G, R}:=\{U\in\mathcal{U}_n\ |\ [U, R(g)]=0\ \forall g\in G\}. 
\end{align}

As examples of representations on multiqudit systems, we can take the following representations of three groups $\mathbb{Z}_2$, $\mathrm{U}(1)$, and $\mathrm{SU}(2)$ on $n$ qubits: 
\begin{align}
	&R(g)=\left(\mathrm{Z}^g\right)^{\otimes n}\ \textrm{when}\ G=\mathbb{Z}_2=\{0, 1\}, \label{SMeq:Z2_representation}\\
	&R\left(e^{i\theta}\right)=\left(e^{i\theta \mathrm{Z}}\right)^{\otimes n}\ \textrm{when}\ G=\mathrm{U}(1), \label{SMeq:U1_representation}\\
	&R\left(e^{i(\theta_\mathrm{X} \mathrm{X}+\theta_\mathrm{Y} \mathrm{Y}+\theta_\mathrm{Z} \mathrm{Z})}\right)
    =\left(e^{i(\theta_\mathrm{X} \mathrm{X}+\theta_\mathrm{Y} \mathrm{Y}+\theta_\mathrm{Z} \mathrm{Z})}\right)^{\otimes n}\ \textrm{when}\ G=\mathrm{SU}(2), \label{SMeq:SU2_representation}
\end{align}
where $\mathrm{X}$, $\mathrm{Y}$, and $\mathrm{Z}$ are the Pauli operators. 
We note that these representations $R$ can be written as the tensor product of representation $T^{\otimes n}$ with a representation $T$ on a single qubit. \\

Next, we explain the construction of random circuits. 
We consider the case when the allowed gate set is expressed as a finite number of connected compact unitary subgroups of $\mathcal{U}_{n, G, R}$. 
We denote each connected compact subgroup by $\mathcal{S}^\gamma$ and the set of all possible $\gamma$ by $\Gamma$. 
By using these gate sets $\{\mathcal{S}^\gamma\}_{\gamma\in\Gamma}$, we consider the distribution 
\begin{align}
    \zeta_{\{\mathcal{S}^\gamma\}_{\gamma\in\Gamma}}:=\sum_{\gamma\in\Gamma} p^\gamma \mu_{\mathcal{S}^\gamma} \label{SMeq:zeta_def}
\end{align}
with the Haar measure $\mu_{\mathcal{S}^\gamma}$ on $\mathcal{S}^\gamma$ and $p^\gamma>0$ satisfying $\sum_{\gamma\in\Gamma} p^\gamma=1$. 
We note that the exact values of $p^\gamma$'s do not affect our results as long as $p^\gamma\neq 0$, as we explain later.

We note that this setup includes the random circuits consisting of symmetric and local gates as follows: 
We label $n$ qudits as $1$, $2$, ..., and $n$, and for a subset of $\{1, 2, ..., n\}$, we denote by $\mathcal{U}_{n, G, R}^\gamma$ the set of all unitary subgroup of $\mathcal{U}_{n, G, R}$ acting nontrivially on the qudits labeled by $\gamma$. 
For example, when we have access to all symmetric nearest-neighbor unitary operators in a one-dimensional chain with the open boundary condition, $\Gamma$ is given by $\{\{1, 2\}, \{2, 3\}, ..., \{n-1, n\}\}$, which is illustrated in Fig.~\ref{fig:circuit}.

\begin{figure} 
    \centering
    \includegraphics[width=90mm]{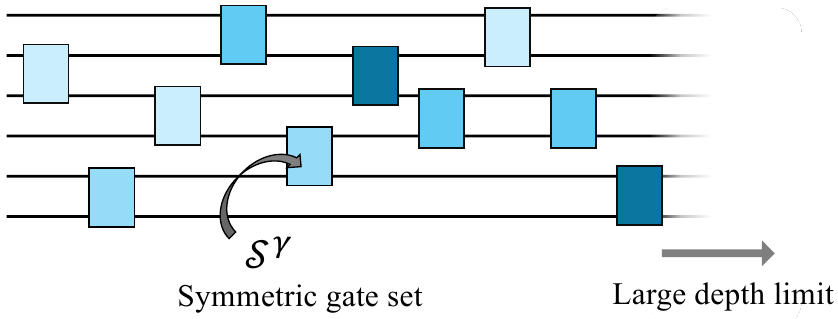}
    \caption{
    Example of symmetric random circuits. 
    We construct a random circuit by taking symmetric gate set $\mathcal{S}^\gamma$ with probability $p^\gamma$ and randomly drawing a unitary operator from the gate set. 
    This setup includes symmetric local random circuits when we consider the case when $\mathcal{S}^\gamma=\mathcal{U}_{n, G, R}^\gamma$ where $\gamma$ denotes the locality of the gate set.}
    \label{fig:circuit}
\end{figure}

In order to investigate randomness of the distribution $\zeta_{\{\mathcal{S}^\gamma\}_{\gamma\in\Gamma}}$ defined by Eq.~\eqref{SMeq:zeta_def}, we use \textit{asymptotic unitary designs} defined as follows:

\begin{definition} \label{SMdef:approx_sym_design}
    (Asymptotic symmetric unitary design.)
    Let $n, t\in\mathbb{N}$, $R$ be a unitary representation of a group $G$ on $\mathcal{H}$, and $\nu$ be a distribution on $\mathcal{U}_{n, G, R}$. 
    $\nu$ is an asymptotic $(G, R)$-symmetric unitary $t$-design if 
    \begin{align}
        \lim_{D\to\infty} (M_{t, \nu})^D=M_{t, \mu_{\mathcal{U}_{n, G, R}}}, \label{SMeq:approx_design_def}
    \end{align}
    with the normalized Haar measure $\mu_\mathcal{S}$ on a compact Lie subgroup $\mathcal{S}$ of $\mathcal{U}_{n, G, R}$ and the $t$th-order moment operator of $\nu$ defined by 
    \begin{align}
        M_{t, \nu}:=\int_{U\in\mathcal{U}_{n, G, R}} U^{\otimes t}\otimes U^{*\otimes t} d\nu(U). \label{SMeq:moment_operator_def}
    \end{align}
\end{definition}

This definition means that if a distribution $\nu$ is an asymptotic $(G, R)$-symmetric unitary $t$-design, the distribution of a circuit with infinite depth coincides with the Haar random distribution up to the $t$th moment. 
We use the term ``asymptotic unitary design'' because we only care about the asymptotic behavior of the distribution for deep circuits. 
We note that the distribution $\nu$ is an asymptotic unitary design if and only if for any $\epsilon>0$, there exists $D_0\in\mathbb{N}$ such that for any $D\geq D_0$, the $D$-fold convolution of $\nu$ is an $\epsilon$-approximate unitary design.  \\

In order to state the main theorem, we prepare the notion of semi-universality. 
If a gate set generates $\mathcal{U}_{n, G, R}$, it is called universal for $\mathcal{U}_{n, G, R}$. 
The semi-universality is a weaker version of the universality, defined as follows~\cite{kempe2001encoded}:

\begin{definition}
    (Semi-universality.)
    Let $n\in\mathbb{N}$, $R$ be a representation of a group $G$, and $\mathcal{X}$ be a subset $\mathcal{U}_{n, G, R}$. 
    $\mathcal{X}$ is semi-universal for $\mathcal{U}_{n, G, R}$ if 
    \begin{align}
        \Braket{\mathcal{X}}\cdot Z(\mathcal{U}_{n, G, R})=\mathcal{U}_{n, G, R}, \label{SMeq:semi_universality}
    \end{align}
    where $\braket{\mathcal{X}}$ is the group generated by the elements of $\mathcal{X}$, and $Z(\mathcal{U}_{n, G, R})$ is the center of $\mathcal{U}_{n, G, R}$, i.e., $Z(\mathcal{U}_{n, G, R}):=\{U\in\mathcal{U}_{n, G, R}\ |\ [U, V]=0\ \forall V\in\mathcal{U}_{n, G, R}\}$. 
\end{definition}

It is known in Refs.~\cite{marvian2023theory, marvian2024rotationally} that the $(G, R)$-symmetric $2$-local gate sets are semi-universal for $\mathcal{U}_{n, G, R}$ for $\mathbb{Z}_2$, $\mathrm{U}(1)$, and $\mathrm{SU}(2)$ symmetries given by Eqs.~\eqref{SMeq:Z2_representation}, \eqref{SMeq:U1_representation}, and \eqref{SMeq:SU2_representation} as long as $\Gamma$ is \textit{inseparable}.  We say that $\Gamma$ is inseparable in $\{1, 2, ..., n\}$ if there is no pair of nontrivial subsets $C_1$ and $C_2$ of $\{1, 2, ..., n\}$ that satisfy $C_1\cap C_2=\varnothing$, $C_1\cup C_2=\{1, 2, ..., n\}$, and $\gamma\subset C_1$ or $\gamma\subset C_2$ for all $\gamma\in\Gamma$. \\

In order to present the condition for semi-universality more directly, we introduce the decomposition of symmetric operators. 
Every unitary representation $R$ can be decomposed into irreducible representations, i.e., we can take an isomorphism 
\begin{align}
	\mathcal{H}\cong\bigoplus_{\lambda\in\Lambda} \mathbb{C}^{r_\lambda}\otimes\mathbb{C}^{m_\lambda} 
\end{align}
such that 
\begin{align}
	R(g)=\sum_{\lambda\in\Lambda} F_\lambda(R_\lambda(g)\otimes I)F_\lambda^\dag\ \forall g\in G, 
\end{align}
where $\Lambda$ is the set of all labels $\lambda$ for inequivalent irreducible representations appearing in $R$, $R_\lambda$'s are irreducible representations of $G$ on $\mathbb{C}^{r_\lambda}$, $m_\lambda$ is the multiplicity of the representation $R_\lambda$, and $F_\lambda$ is the isometry from $\mathbb{C}^{r_\lambda}\otimes\mathbb{C}^{m_\lambda}$ to $\mathcal{H}$. 
By using Schur's lemma, every $(G, R)$-symmetric operator $A$ can be written as 
\begin{align}
	A=\sum_{\lambda\in\Lambda} F_\lambda(I\otimes A_\lambda)F_\lambda^\dag \label{SMeq:sym_op_decomp}
\end{align} 
with some $A_\lambda$'s acting on $\mathbb{C}^{m_\lambda}$, which are uniquely determined for a $(G, R)$-symmetric operator $A\in\mathcal{L}_{n, G, R}$ on $n$ qudits. 
By using this decomposition, $Z(\mathcal{U}_{n, G, R})$ can be explicitly written as 
\begin{align}
	Z\left(\mathcal{U}_{n, G, R}\right) 
    =\left\{\sum_{\lambda\in\Lambda} F_\lambda(I\otimes e^{i\theta_\lambda}I)F_\lambda^\dag\ \middle|\ \theta_\lambda\in\mathbb{R}\ \forall \lambda\in\Lambda\right\}. \label{SMeq:rel_phase_group}
\end{align}

We note that the semi-universality of $\mathcal{X}$ can be equivalently expressed as 
\begin{align}
    \Braket{\mathcal{X}}
    \supset\left\{\sum_{\lambda\in\Lambda} F_\lambda(I\otimes U_\lambda)F_\lambda^\dag\ \middle|\ U_\lambda\in\mathrm{SU}(m_\lambda)\ \forall \lambda\in\Lambda\right\}. \label{SMeq:semi_universality_equivalence}
\end{align}
It is trivial to see that Eq.~\eqref{SMeq:semi_universality_equivalence} implies Eq.~\eqref{SMeq:semi_universality} by noting Eq.~\eqref{SMeq:rel_phase_group}. 
The proof of the converse is as follows: 
We suppose that $\mathcal{X}$ satisfies Eq.~\eqref{SMeq:semi_universality}. 
We take arbitrary $U\in\mathcal{U}_{n, G, R}$ in the form of $\sum_{\lambda\in\Lambda} F_\lambda(I\otimes U_\lambda)F_\lambda^\dag$ with some $U_\lambda\in\mathrm{SU}(m_\lambda)$. 
For each $\lambda\in\Lambda$, since $\mathrm{SU}(m_\lambda)$ is a simple Lie group, we can take $U'_\lambda, U''_\lambda\in\mathrm{SU}(m_\lambda)$ satisfying ${U'_\lambda}^{-1}{U''_\lambda}^{-1}U'_\lambda U''_\lambda=U_\lambda$. 
We define $U':=\sum_{\lambda\in\Lambda} F_\lambda(I\otimes U'_\lambda)F_\lambda^\dag$ and $U'':=\sum_{\lambda\in\Lambda} F_\lambda(I\otimes U''_\lambda)F_\lambda^\dag$. 
Since $U'$ and $U''$ satisfy $U', U''\in\mathcal{U}_{n, G, R}$, Eq.~\eqref{SMeq:semi_universality} implies that we can take $V', V''\in\braket{\mathcal{X}}$ and $W', W''\in Z(\mathcal{U}_{n, G, R})$ such that $U'=V'W'$ and $U''=V''W''$. 
Then, by noting that $W'$ and $W''$ commute with $V', W', V''$, and $W''$, we have $U={U'}^{-1}{U''}^{-1}U'U''={V'}^{-1}{V''}^{-1}V'V''\in\braket{\mathcal{X}}$.\\

\begin{figure}
    \centering 
    \includegraphics[width=120mm]{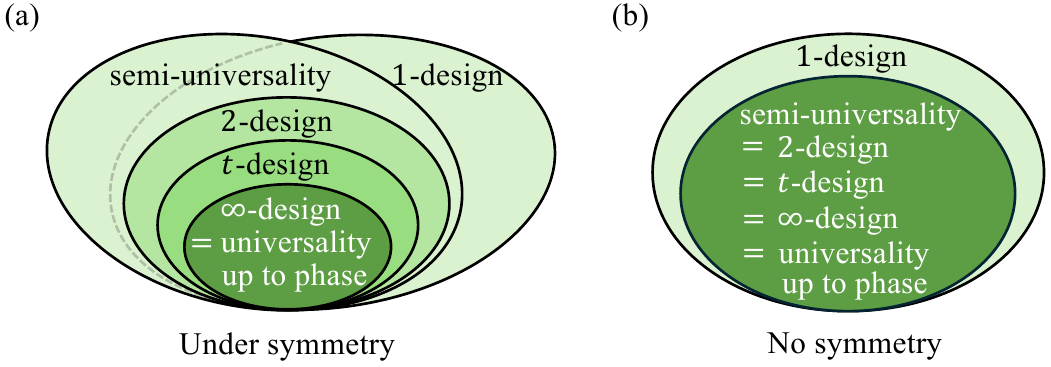}
    \caption{Hierarchy of asymptotic unitary designs of random circuits and its relation with the (semi-)universality of the gate sets consisting of the circuits. 
    (a) In the presence of symmetry, there is a rich structure of the classes of asymptotic unitary designs. 
    Semi-universal gate sets are necessary to construct random circuits with $2$-designs. 
    Our main result is to establish a method to characterize the maximum order of unitary designs of the distribution for symmetric random circuits composed of semi-universal gate sets. 
    (b) The relation between the universality and designs becomes rather trivial without symmetry.}
    \label{fig:Venn_diagram}
\end{figure}

In the following, we explain the relation between the (semi-)universality of the gate sets $\{\mathcal{S}^\gamma\}_{\gamma\in\Gamma}$ and asymptotic unitary designs of the distribution $\zeta_{\{\mathcal{S}^\gamma\}_{\gamma\in\Gamma}}$ generated by the gate set, which is shown in Fig.~\ref{fig:Venn_diagram}. 
First, if the distribution $\zeta_{\{\mathcal{S}^\gamma\}_{\gamma\in\Gamma}}$ defined by Eq.~\eqref{SMeq:zeta_def} is an asymptotic $(G, R)$-symmetric unitary $2$-design, then the gate set $\bigcup_{\gamma\in\Gamma} \mathcal{S}^\gamma$ is semi-universal for $\mathcal{U}_{n, G, R}$. 
This can be proven by the combination of Theorem~16 in Ref.~\cite{zeier2015squares} and Lemma~\ref{SMlem:design_commutant}. 
When $\zeta_{\{\mathcal{S}^\gamma\}_{\gamma\in\Gamma}}$ is an asymptotic $(G, R)$-symmetric unitary $2$-design, the commutant of $\{U^{\otimes 2}\ |\ \exists \gamma\in\Gamma \textrm{ s.t. } U\in\mathcal{S}^\gamma\}$ coincides with that of $\{U^{\otimes 2}\ |\ U\in\mathcal{U}_{n, G, R}\}$ by Lemma~\ref{SMlem:design_commutant}. 
This implies that the commutant of $\{A\otimes I+I\otimes A\ |\ \exists \gamma\in\Gamma \textrm{ s.t. } A\in\mathfrak{s}^\gamma\}$ coincides with that of $\{A\otimes I+I\otimes A\ |\ A\in\mathfrak{u}_{n, G, R}\}$, where $\mathfrak{s}^\gamma$ and $\mathfrak{u}_{n, G, R}$ are the Lie algebras of $\mathcal{S}^\gamma$ and $\mathcal{U}_{n, G, R}$, respectively. 
Then, Theorem~16 in Ref.~\cite{zeier2015squares} implies that $\bigcup_{\gamma\in\Gamma} \mathfrak{s}^\gamma$ generates $\mathfrak{u}_{n, G, R}$ up to $Z(\mathfrak{u}_{n, G, R})$ in the sense of Lie algebra, which implies the semi-universality of $\bigcup_{\gamma\in\Gamma} \mathcal{S}^\gamma$ for $\mathcal{U}_{n, G, R}$. 
We note that this statement holds only for gate sets consisting of connected compact groups, and not for discrete gate sets such as the Clifford group.

Second, if the gate set $\bigcup_{\gamma\in\Gamma} \mathcal{S}^\gamma$ is universal for $\mathcal{U}_{n, G, R}$ up to the global phase, then the distribution $\zeta_{\{\mathcal{S}^\gamma\}_{\gamma\in\Gamma}}$ is an asymptotic $(G, R)$-symmetric unitary $t$-design for all $t\in\mathbb{N}$, which directly follows from the same argument in the non-symmetric case~\cite{brandao2016local}. 
Moreover, the converse is also true, as we explain below Theorem~\ref{SMthm:general}. 
Since the semi-universality reduces to the universality up to the global phase in the non-symmetric case, i.e., $G=\{I\}$, every class of distributions forming unitary $t$-designs coincides to the class of distributions forming unitary $\infty$-designs. 
Thus, the inclusion relation in symmetric cases (Fig.~\ref{fig:Venn_diagram} (a)) becomes much simpler in the non-symmetric case (Fig.~\ref{fig:Venn_diagram} (b)).

Finally, we show that there are no nontrivial inclusion relations other than those stated above. 
Concretely, the semi-universality of the gate set $\bigcup_{\gamma\in\Gamma} \mathcal{S}^\gamma$ does not imply that the distribution $\zeta_{\{\mathcal{S}^\gamma\}_{\gamma\in\Gamma}}$ is an asymptotic $(G, R)$-symmetric unitary $2$-design, and not even a $1$-design. 
For example, when $n=1$, $G=\mathbb{Z}_2$ and $R(g)=\mathrm{Z}^g$ for $g\in\mathbb{Z}_2=\{0, 1\}$, the gate set consisting only of the identity is semi-universal for $\mathcal{U}_{n, G, R}$, but is not an asymptotic unitary $1$-design for $\mathcal{U}_{n, G, R}$. 
We also note that the distribution $\zeta_{\{\mathcal{S}^\gamma\}_{\gamma\in\Gamma}}$ being an asymptotic $(G, R)$-symmetric unitary $1$-design does not imply the semi-universality of the gate set $\bigcup_{\gamma\in\Gamma} \mathcal{S}^\gamma$ for $\mathcal{U}_{n, G, R}$, 
and that the combination of these two conditions does not imply that the distribution $\zeta_{\{\mathcal{S}^\gamma\}_{\gamma\in\Gamma}}$ is an asymptotic $(G, R)$-symmetric unitary $2$-design. 
For the proofs of the two statements above, we set $n=2$, $G=\mathbb{Z}_2$, and $R(g)=(\mathrm{Z}^g)^{\otimes 2}$ for $g\in\{0, 1\}$, and define the following four gate sets: 
\begin{align}
    &\mathcal{S}^1:=\left\{e^{i\theta \mathrm{X}\otimes \mathrm{X}}\right\}_{\theta\in\mathbb{R}}, \\
    &\mathcal{S}^2:=\left\{e^{i\theta \mathrm{Z}\otimes \mathrm{Z}}\right\}_{\theta\in\mathbb{R}}, \\
    &\mathcal{S}^3:=\left\{e^{i\theta \mathrm{Z}\otimes \mathrm{I}}\right\}_{\theta\in\mathbb{R}}, \\
    &\mathcal{S}^4:=\left\{e^{i\theta \mathrm{I}\otimes \mathrm{Z}}\right\}_{\theta\in\mathbb{R}},  
\end{align}
where $\mathrm{I}$ is the identity operator on a single qubit. 
When $\Gamma=\{1, 2, 3\}$, the distribution $\zeta_{\{\mathcal{S}^\gamma\}_{\gamma\in\Gamma}}$ is an asymptotic $(G, R)$-symmetric unitary $1$-design by Lemma~\ref{SMlem:design_commutant}, but the gate set $\bigcup_{\gamma\in\Gamma} \mathcal{S}^\gamma$ is not semi-universal for $\mathcal{U}_{n, G, R}$, which can be confirmed by noting that $\braket{\bigcup_{\gamma\in\Gamma} \mathcal{S}^\gamma}=\{\sum_{\lambda\in\{0, 1\}} F_\lambda e^{i(-1)^\lambda\theta}UF_\lambda^\dag\ |\ \theta\in\mathbb{R}, U\in\mathrm{SU}(2)\}$, where $F_\lambda$ is defined by $F_0\ket{j}:=\ket{j}\otimes\ket{j}$ and $F_1\ket{j}:=\ket{j}\otimes\ket{1-j}$ for $j\in\{0, 1\}$. 
When $\Gamma=\{1, 3, 4\}$, the distribution $\zeta_{\{\mathcal{S}^\gamma\}_{\gamma\in\Gamma}}$ is an asymptotic $(G, R)$-symmetric unitary $1$-design by Lemma~\ref{SMlem:design_commutant}, and $\bigcup_{\gamma\in\Gamma} \mathcal{S}^\gamma$ is semi-universal for $\mathcal{U}_{n, G, R}$, but $\zeta_{\{\mathcal{S}^\gamma\}_{\gamma\in\Gamma}}$ is not an asymptotic unitary $2$-design for $\mathcal{U}_{n, G, R}$ by Theorem~\ref{SMthm:general}. 
For the proof of the semi-universality, it is sufficient to confirm that the Lie algebras of $\mathcal{S}^\gamma$'s generate the Lie algebra $\mathfrak{u}_{n, G, R}$ of $\mathcal{U}_{n, G, R}$ up to $Z(\mathfrak{u}_{n, G, R})$ in the sense of Lie algebra.

\section{Main Results} \label{SMsec:main_results}

First, we present the general result about the maximal order of asymptotic unitary designs, which is applicable to general symmetries. 
The following theorem corresponds to Theorem~2 of Ref.~\cite{mitsuhashi2025unitary}.

\begin{theorem} \label{SMthm:general} 
    (General result.) 
    Let $n, t\in\mathbb{N}$, $R$ be a unitary representation of a group $G$ on the Hilbert space $\mathcal{H}$ of $n$ qudits, $\{\mathcal{S}^\gamma\}_{\gamma\in\Gamma}$ be the set of a finite number of connected compact subgroups of $\mathcal{U}_{n, G, R}$, and $\bigcup_{\gamma\in\Gamma} \mathcal{S}^\gamma$ be semi-universal. 
    Then, the distribution $\zeta_{\{\mathcal{S}^\gamma\}_{\gamma\in\Gamma}}$ defined by Eq.~\eqref{SMeq:zeta_def} is an asymptotic $(G, R)$-symmetric unitary $t$-design if and only if there do not exist nontrivial integer solutions $\bm{x}=(x_\lambda)_{\lambda\in\Lambda}\in\mathbb{Z}^\Lambda$ satisfying 
    \begin{align}
        &\sum_{\lambda\in\Lambda} m_\lambda |x_\lambda|\leq 2t, \label{SMeq:SMthm:general_cond1}\\
        &\sum_{\lambda\in\Lambda} m_\lambda x_\lambda=0, \label{SMeq:SMthm:general_cond2}\\
        &\sum_{\lambda\in\Lambda} v_\lambda x_\lambda=0\ \forall \bm{v}\in\mathcal{V}, \label{SMeq:SMthm:general_cond3}
    \end{align}
    where $\mathcal{V}:=\mathrm{span}_\mathbb{R}(\{\bm{f}(A)\ |\ \exists\gamma\in\Gamma \textrm{ s.t. } A\in\mathfrak{s}^\gamma\})$, $\mathfrak{s}^\gamma$ is the Lie algebra of $\mathcal{S}^\gamma$, and 
    \begin{align}
        \bm{f}(A)
        =(f_\lambda(A))_{\lambda\in\Lambda}
        :=(\mathrm{tr}(A_\lambda))_{\lambda\in\Lambda} \label{SMeq:trace_vec_def}
    \end{align}
    with $A_\lambda$ determined from $A$ by Eq.~\eqref{SMeq:sym_op_decomp}. 
    Especially when $\mathcal{S}^\gamma=\mathcal{U}_{n, G, R}^\gamma$ and $R=T^{\otimes n}$ with some representation $T$ of $G$ on a single qudit, Eqs.~\eqref{SMeq:SMthm:general_cond2} and \eqref{SMeq:SMthm:general_cond3} can be equivalently written as 
    \begin{align}
        \sum_{\lambda\in\Lambda} c_\lambda x_\lambda=0\ \forall \bm{c}\in\mathcal{C}, \label{SMeq:SMthm:general_cond4}
    \end{align}
    where $\mathcal{C}$ is defined by $\mathcal{C}:=\{\bm{f}(A\otimes\mathrm{I}^{\otimes n-k})\ |\ A\in\mathcal{L}_{k, G, T^{\otimes k}}\}$, and $k:=\max_{\gamma\in\Gamma} \#\gamma$. 
\end{theorem}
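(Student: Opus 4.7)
The plan is to prove Theorem~\ref{SMthm:general} by converting the $t$-design property into a commutant equality via Lemma~\ref{SMlem:design_commutant} and then decomposing $\mathcal{H}^{\otimes t}\otimes\mathcal{H}^{*\otimes t}$ into irreducibles of $\mathcal{U}_{n,G,R}\cong\prod_{\lambda\in\Lambda}\mathrm{U}(m_\lambda)$ to classify the extra intertwiners that appear when one restricts to the gate set. By Lemma~\ref{SMlem:design_commutant}, $\zeta_{\{\mathcal{S}^\gamma\}_{\gamma\in\Gamma}}$ is an asymptotic $(G,R)$-symmetric unitary $t$-design if and only if the commutant of $\{U^{\otimes t}\otimes U^{*\otimes t}:U\in\bigcup_\gamma\mathcal{S}^\gamma\}$ equals that of $\{U^{\otimes t}\otimes U^{*\otimes t}:U\in\mathcal{U}_{n,G,R}\}$; the former always contains the latter, so the question reduces to when the inclusion is strict. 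Using semi-universality in the form $\mathcal{U}_{n,G,R}=\langle\bigcup_\gamma\mathcal{S}^\gamma\rangle\cdot Z(\mathcal{U}_{n,G,R})$, equality is equivalent to the property that every operator commuting with the $(t,t^*)$-fold action of the gate set also commutes with the $(t,t^*)$-fold action of $Z(\mathcal{U}_{n,G,R})$.

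I will analyse this obstruction through the $\mathcal{U}_{n,G,R}$-irreducible decomposition of $\mathcal{H}^{\otimes t}\otimes\mathcal{H}^{*\otimes t}$. Each irreducible has the form $\sigma_{\bm{\mu}}=\bigotimes_\lambda\sigma^\lambda_{\mu^\lambda}$ with $\mathrm{U}(m_\lambda)$-dominant weight $\mu^\lambda\in\mathbb{Z}^{m_\lambda}$, and since the diagonal central phase $e^{i\theta}I$ acts trivially on $\mathcal{H}^{\otimes t}\otimes\mathcal{H}^{*\otimes t}$ only those $\bm{\mu}$ with $\sum_\lambda\sum_i\mu^\lambda_i=0$ actually occur. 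Two irreducibles $\sigma_{\bm{\mu}_1}$ and $\sigma_{\bm{\mu}_2}$ are isomorphic as representations of the connected group $\langle\bigcup_\gamma\mathcal{S}^\gamma\rangle$ precisely when their highest weights differ by a tuple of integer column shifts $(x_\lambda(1,\ldots,1))_{\lambda\in\Lambda}$ and the one-dimensional character $\chi_{\bm{x}}(U):=\prod_\lambda\det(U_\lambda)^{x_\lambda}$ of $\mathcal{U}_{n,G,R}$ restricts trivially to the gate set. Being a connected-group condition, this triviality is equivalent to the vanishing of the derivative of $\chi_{\bm{x}}$ on $\bigcup_\gamma\mathfrak{s}^\gamma$, which reads $\sum_\lambda x_\lambda v_\lambda=0$ for all $\bm{v}\in\mathcal{V}$ and is exactly~\eqref{SMeq:SMthm:general_cond3}; preservation of the total-weight constraint $\sum_\lambda\sum_i\mu^\lambda_i=0$ by the shift forces $\sum_\lambda m_\lambda x_\lambda=0$, which is~\eqref{SMeq:SMthm:general_cond2}.

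The box-count inequality~\eqref{SMeq:SMthm:general_cond1} enters as the realisability condition for both $\sigma_{\bm{\mu}_1}$ and $\sigma_{\bm{\mu}_2}$ in $\mathcal{H}^{\otimes t}\otimes\mathcal{H}^{*\otimes t}$. I will take the explicit basepoint $\bm{\mu}_1=0$, so that $\sigma_{\bm{\mu}_1}$ is trivial and $\sigma_{\bm{\mu}_2}\cong\chi_{\bm{x}}$; the mixed-tensor Schur-Weyl decomposition of $(\mathbb{C}^{m_\lambda})^{\otimes n_\lambda^+}\otimes((\mathbb{C}^{m_\lambda})^*)^{\otimes n_\lambda^-}$ shows that $\chi_{\bm{x}}$ is realised inside $\mathcal{H}^{\otimes t}\otimes\mathcal{H}^{*\otimes t}$ exactly when non-negative integers $n_\lambda^\pm$ exist with $\sum_\lambda n_\lambda^+=\sum_\lambda n_\lambda^-=t$ and $n_\lambda^+-n_\lambda^-=m_\lambda x_\lambda$, whose existence combined with~\eqref{SMeq:SMthm:general_cond2} is exactly $\sum_\lambda m_\lambda|x_\lambda|\leq 2t$. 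For the only-if direction I then construct an explicit nonzero intertwiner $\sigma_{\bm{\mu}_1}\to\sigma_{\bm{\mu}_2}$, which lies in the gate-set commutant by the isomorphism but intertwines inequivalent $\mathcal{U}_{n,G,R}$-isotypic components. For the if direction, any operator in the gate-set commutant decomposes into blocks between $\mathcal{U}_{n,G,R}$-isotypic components, and a nonzero block forces the shift-plus-trivial-character structure above, hence a non-trivial $\bm{x}$ satisfying \eqref{SMeq:SMthm:general_cond1}--\eqref{SMeq:SMthm:general_cond3}; their absence collapses every block into the $\mathcal{U}_{n,G,R}$-commutant.

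For the specialisation to $\mathcal{S}^\gamma=\mathcal{U}_{n,G,R}^\gamma$ with $R=T^{\otimes n}$, I will establish $\mathcal{V}\subseteq\mathcal{C}\subseteq\mathrm{span}_\mathbb{C}\mathcal{V}$ together with $(m_\lambda)_{\lambda\in\Lambda}=\bm{f}(I^{\otimes n})\in\mathcal{V}$. The first inclusion is immediate by embedding a $|\gamma|$-local symmetric operator into the $k$-local space via tensoring with identities. The second uses permutation invariance of $\bm{f}$: under $R=T^{\otimes n}$ every qudit permutation is a $G$-intertwiner and thus acts on each multiplicity block of the Schur decomposition as a unitary conjugation, preserving $\mathrm{tr}(A_\lambda)$; this allows any $k$-local symmetric operator to be transported onto a maximal $\gamma_0\in\Gamma$. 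Combined with $I^{\otimes n}\in\mathfrak{s}^\gamma$ supplying $(m_\lambda)_{\lambda\in\Lambda}\in\mathcal{V}$, these inclusions show that~\eqref{SMeq:SMthm:general_cond2} and~\eqref{SMeq:SMthm:general_cond3} are jointly equivalent to~\eqref{SMeq:SMthm:general_cond4}. The main obstacle I anticipate is the explicit, representation-theoretic construction of the nonzero intertwiner between the twisted irreducibles in the only-if direction, since this requires a careful handle on mixed-tensor representations of $\mathrm{U}(m_\lambda)$ at the tight boundary of~\eqref{SMeq:SMthm:general_cond1}.
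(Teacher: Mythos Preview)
Your approach is correct and takes a genuinely different, more structural route than the paper. Both start from Lemma~\ref{SMlem:design_commutant} and use semi-universality to reduce the question to the center $Z(\mathcal{U}_{n,G,R})$, but from there the paper argues \emph{constructively}: for the ``if'' direction (Lemma~\ref{SMlem:trivial_to_comm_eq}) it uses Fourier-type integration over a torus in $\langle\bigcup_\gamma\mathcal{S}^\gamma\rangle$ to show that the symmetrized rank-one projections $\mathscr{S}_t(\bigotimes_s P_{\lambda_s,\alpha_s})$, and hence all of $\Omega_t(Z(\mathcal{U}_{n,G,R}))$, lie in $\mathrm{Alg}(\Omega_t(\langle\bigcup_\gamma\mathcal{S}^\gamma\rangle))$; for the ``only if'' direction (Lemma~\ref{SMlem:nontrivial_to_comm_neq}) it builds an explicit obstruction $O=|\Phi(\bm{y})\rangle\langle\Phi(\bm{y}')|\otimes I^{\otimes u}$ from totally antisymmetric states, which are precisely highest-weight vectors for the determinant characters. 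Your argument instead classifies directly which one-dimensional $\mathcal{U}_{n,G,R}$-characters $\chi_{\bm{x}}$ occur in $\mathcal{H}^{\otimes t}\otimes\mathcal{H}^{*\otimes t}$ and become trivial on the gate set; this is conceptually cleaner and makes the role of \eqref{SMeq:SMthm:general_cond1}--\eqref{SMeq:SMthm:general_cond3} transparent, while the paper's antisymmetric-state construction is exactly a concrete witness in the $\chi_{\bm{x}}$-isotypic that your argument only needs to exist. Your treatment of the specialisation coincides with the paper's Lemma~\ref{SMlem:simplified_condition}.

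One expository point to tighten: you oscillate between the ``invariants in $\mathcal{H}^{\otimes t}\otimes\mathcal{H}^{*\otimes t}$'' picture and the ``off-diagonal blocks between $\mathcal{U}_{n,G,R}$-isotypics of $\mathcal{H}^{\otimes t}$'' picture, and the phrase ``intertwiner $\sigma_{\bm{\mu}_1}\to\sigma_{\bm{\mu}_2}$'' is ambiguous between them. In the first picture the commutant is literally the gate-set-trivial isotypic, so the only question is which $\chi_{\bm{x}}$ with $\bm{x}\neq 0$ occur; your occurrence criterion via $n_\lambda^\pm$ already delivers \eqref{SMeq:SMthm:general_cond1} for both directions, and the anticipated obstacle evaporates since any nonzero vector in the $\chi_{\bm{x}}$-isotypic is the required obstruction. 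If you stay in the second picture, beware that the trivial representation does not occur in $\mathcal{H}^{\otimes t}$ for $t>0$, so your basepoint $\bm{\mu}_1=0$ only makes sense in the first picture; and for the ``if'' direction you must verify that whenever $\sigma_{\bm{\nu}}$ and $\sigma_{\bm{\nu}+\bm{x}}$ both occur in $\mathcal{H}^{\otimes t}$ one has $\sum_\lambda m_\lambda|x_\lambda|\leq 2t$, which does follow from the partition constraints $\nu^\lambda_i\geq 0$, $\nu^\lambda_i+x_\lambda\geq 0$ and $\sum_\lambda|\nu^\lambda|=t$ but is not entirely automatic.
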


\vspace{5mm}

We give three remarks about this theorem. 
First, finding the condition on $t$ for the nonexistence of nontrivial integer solutions of Eqs~\eqref{SMeq:SMthm:general_cond1}, \eqref{SMeq:SMthm:general_cond2}, and \eqref{SMeq:SMthm:general_cond3} is equivalent to a simple integer optimization. 
In fact, the condition is explicitly expressed as 
\begin{align}
    t<\min_{\bm{x}\in(\widetilde{\mathcal{V}}^\perp\cap\mathbb{Z}^\Lambda)\backslash\{\bm{0}\}} \braket{\bm{m}, \bm{x}^+}, \label{SMeq:order_bound_general}
\end{align}
where 
\begin{align}
    &\bm{m}=(m_\lambda)_{\lambda\in\Lambda}, \\
    &\bm{x}^+=(x_\lambda^+)_{\lambda\in\Lambda}:=((|x_\lambda|+x_\lambda)/2)_{\lambda\in\Lambda}, \\
    &\widetilde{\mathcal{V}}:=\mathrm{span}_\mathbb{R}(\{\bm{m}\})+\mathcal{V}, \label{SMeq:relative_phase_space_extended}
\end{align}
and we use the standard inner product $\braket{\bm{a}, \bm{b}}:=\sum_{\lambda\in\Lambda} a_\lambda^* b_\lambda$ for $\bm{a}, \bm{b}\in\mathbb{C}^\Lambda$. 
This can be understood by noting that Eq.~\eqref{SMeq:SMthm:general_cond1}, \eqref{SMeq:SMthm:general_cond2} and \eqref{SMeq:SMthm:general_cond3} are equivalent to $\braket{\bm{m}, \bm{x}^+}\leq t$ and $\bm{x}\in\widetilde{\mathcal{V}}^\perp$. 
When $\mathcal{S}^\gamma=\mathcal{U}_{n, G, R}^\gamma$ and $R=T^{\otimes n}$, by taking a basis of $\mathcal{C}$, Eq.~\eqref{SMeq:SMthm:general_cond4} can be written as the set of $\mathrm{dim}(\mathcal{C})$ equations. 
We note that $\mathrm{dim}(\mathcal{C})$ is upper bounded by $\mathrm{dim}(\mathcal{L}_{k, G, T^{\otimes k}})$, which is independent of the qudit count $n$. 
Similarly to Eq.~\eqref{SMeq:order_bound_general}, the condition on $t$ can be written as 
\begin{align}
    t<\min_{\bm{x}\in(\mathcal{C}^\perp\cap\mathbb{Z}^\Lambda)\backslash\{\bm{0}\}} \braket{\bm{m}, \bm{x}^+}. 
\end{align}

Next, by using this theorem, we can confirm that $\zeta_{\{\mathcal{S}^\gamma\}_{\gamma\in\Gamma}}$ is an asymptotic unitary $t$-design for all $t\in\mathbb{N}$ if and only if $\bigcup_{\gamma\in\Gamma} \mathcal{S}^\gamma$ is universal for $\mathcal{U}_{n, G, R}$ up to the global phase. 
When $\bigcup_{\gamma\in\Gamma} \mathcal{S}^\gamma$ is universal for $\mathcal{U}_{n, G, R}$ up to the global phase, by Lemma~\ref{SMlem:design_universality}, Eq.~\eqref{SMeq:SMthm:general_cond2} and \eqref{SMeq:SMthm:general_cond3} have no nontrivial integer solution $\bm{x}$. 
Theorem~\ref{SMthm:general} thus implies that $\zeta_{\{\mathcal{S}^\gamma\}_{\gamma\in\Gamma}}$ is an asymptotic unitary $t$-design for all $t\in\mathbb{N}$. 
On the other hand, when $\bigcup_{\gamma\in\Gamma} \mathcal{S}^\gamma$ is not universal for $\mathcal{U}_{n, G, R}$ up to the global phase, we can take $\bm{d}\in (\widetilde{\mathcal{V}}^\perp\cap\mathbb{Z}^\Lambda)\backslash\{\bm{0}\}$ by Lemma~\ref{SMlem:design_universality}, and define $t_0:=\braket{\bm{m}, \bm{x}^+}$. 
Since $\bm{x}=\bm{d}$ is a nontrivial integer solution of Eqs.~\eqref{SMeq:SMthm:general_cond2} and \eqref{SMeq:SMthm:general_cond3}, any achievable order $t$ is smaller than $t_0$.

Finally, we can compute the tight upper bound on the achievable order $t$ by enumeration. 
When $\widetilde{\mathcal{V}}=\mathbb{R}^\Lambda$, there does not exist an upper bound on $t$. 
In the following, we consider the case of $\widetilde{\mathcal{V}}\neq\mathbb{R}^\Lambda$. 
In this case, by using the method above, we can take an upper bound $t_0$, which is not necessarily tight. 
Then, Eq.~\eqref{SMeq:order_bound_general} is equivalent to 
\begin{align}
    t<\min\left\{t_0, \min_{\bm{x}\in(\widetilde{\mathcal{V}}^\perp\cap\mathbb{Z}^\Lambda\cap\mathcal{F})\backslash\{\bm{0}\}}\braket{\bm{m}, \bm{x}^+}\right\} \label{SMeq:search_space_restriction}
\end{align}
with a bounded region $\mathcal{F}:=\{\bm{x}\in\mathbb{R}^\Lambda\ |\ |x_\lambda|<t_0/m_\lambda\ \forall\lambda\in\Lambda\}$. 
For the proof of Eq.~\eqref{SMeq:search_space_restriction}, it is sufficient to show that $\min_{\bm{x}\in\mathbb{Z}^\Lambda\backslash\mathcal{F}}\braket{\bm{m}, \bm{x}^+}\geq t_0$ by noting that $[(\widetilde{\mathcal{V}}^\perp\cap\mathbb{Z}^\Lambda\cap\mathcal{F})\backslash\{\bm{0}\}]\cup (\mathbb{Z}^\Lambda\backslash\mathcal{F})\supset(\widetilde{\mathcal{V}}^\perp\cap\mathbb{Z}^\Lambda)\backslash\{\bm{0}\}$. 
For arbitrary $\bm{x}\in\mathbb{Z}^\Lambda\backslash\mathcal{F}$, we can take some $\lambda\in\Lambda$ such that $|x_\lambda|\geq t_0/m_\lambda$. 
We can suppose that $x_\lambda>0$ without loss of generality, since $\bm{x}\in\mathbb{Z}^\Lambda\backslash\mathcal{F}$ implies $-\bm{x}\in\mathbb{Z}^\Lambda\backslash\mathcal{F}$. 
Then, we get $\braket{\bm{m}, \bm{x}^+}\geq m_\lambda x_\lambda\geq t_0$. \\

In the following theorems, we consider the cases when $G=\mathbb{Z}_2$, $\mathrm{U}(1)$ and $\mathrm{SU}(2)$, and $\mathcal{S}^\gamma$ is given by the set $\mathcal{U}_{n, G, R}^\gamma$ of all $(G, R)$-symmetric unitary operators acting on the qubits represented by $\gamma$, and the locality of the gate set satisfies $\max_{\gamma\in\Gamma} \#\gamma=k$.

First, we present the result for the $\mathbb{Z}_2$ symmetry, which corresponds to the first result of Theorem~1 of Ref.~\cite{mitsuhashi2025unitary}.

\begin{theorem} \label{SMthm:Z2}
    (Result for general locality $k$ and general qubit count $n$ under the $\mathbb{Z}_2$ symmetry.) 
	Let $n, k, t\in\mathbb{N}$ satisfy $k\geq 2$ and $n\geq k+1$, and $R$ be a unitary representation of $G=\mathbb{Z}_2$ on $n$ qubits defined by Eq.~\eqref{SMeq:Z2_representation}. 
    Then, the distribution of the $(G, R)$-symmetric $k$-local random circuit is an asymptotic $(G, R)$-symmetric unitary $t$-design if and only if $t<2^{n-1}$. 
\end{theorem}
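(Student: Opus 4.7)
The plan is to invoke the general criterion of Theorem~\ref{SMthm:general}, which reduces the question to whether a certain integer linear system has nontrivial solutions. Semi-universality of the gate set is automatic for $k\geq 2$: the $2$-local symmetric unitaries already suffice by the results quoted in the preliminaries, and for any $k\geq 2$ they embed into the $k$-local symmetric unitaries by padding with identities. For the $\mathbb{Z}_2$ representation $R=\mathrm{Z}^{\otimes n}$, the decomposition is into two one-dimensional irreps labelled by the parity $\lambda\in\{0,1\}$, each with multiplicity $m_\lambda=2^{n-1}$, so $\bm{m}=2^{n-1}(1,1)$.

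Next, I would compute $\mathcal{C}$ explicitly. For any symmetric $A\in\mathcal{L}_{k,G,T^{\otimes k}}$ with parity-sector traces $a_0,a_1$, the operator $B:=A\otimes\mathrm{I}^{\otimes n-k}$ satisfies
\begin{equation}
    \mathrm{tr}(B_p)=\sum_{p_1\oplus p_2=p} a_{p_1}\cdot 2^{n-k-1}=2^{n-k-1}(a_0+a_1),
\end{equation}
where I have used the fact, critical for the hypothesis $n\geq k+1$, that each parity sector on $n-k\geq 1$ qubits has dimension $2^{n-k-1}$. Hence $\mathrm{tr}(B_0)=\mathrm{tr}(B_1)$ for every admissible $A$, and taking $A=\mathrm{I}^{\otimes k}$ shows the common value ranges over all of $\mathbb{R}$. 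Therefore $\mathcal{C}=\mathrm{span}_\mathbb{R}\{(1,1)\}$ and $\mathcal{C}^\perp\cap\mathbb{Z}^2=\{c(1,-1)\mid c\in\mathbb{Z}\}$. Substituting $\bm{x}=c(1,-1)$ into the inequality \eqref{SMeq:SMthm:general_cond1} gives $2^{n-1}(|c|+|c|)=2^n|c|\leq 2t$, i.e.\ $|c|\leq t/2^{n-1}$; a nontrivial integer solution therefore exists if and only if $t\geq 2^{n-1}$. By Theorem~\ref{SMthm:general} the distribution is thus an asymptotic $(G,R)$-symmetric unitary $t$-design precisely when $t<2^{n-1}$, as claimed.

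I expect the main non-routine step to be the explicit evaluation of $\bm{f}$ on the padded operator $A\otimes\mathrm{I}^{\otimes n-k}$. The key combinatorial observation --- that tensoring with at least one identity qubit perfectly equalises the traces on the two parity sectors of $\mathcal{H}$ --- collapses the a priori two-dimensional space $\mathcal{C}$ to a single line, and it is precisely the resulting design-breaking vector $(1,-1)$, rescaled by the multiplicity $2^{n-1}$, that sets the threshold. This also clarifies why the hypothesis $n\geq k+1$ is essential: without at least one spectator qubit the two parity sectors of the local operator can be distinguished, $\mathcal{C}$ becomes full-dimensional, and no nontrivial solution exists.
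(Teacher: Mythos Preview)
Your proof is correct and follows essentially the same route as the paper: invoke Theorem~\ref{SMthm:general}, identify the two parity sectors with equal multiplicity $2^{n-1}$, compute that $f_\lambda(A\otimes\mathrm{I}^{\otimes n-k})$ is independent of $\lambda$ so that the linear constraint reduces to $x_0+x_1=0$, and read off the threshold from the remaining inequality. Your trace factor $2^{n-k-1}$ is in fact the correct one (the paper writes $2^{n-k}$, which appears to be a typo), and your explicit remark on why $n\geq k+1$ is needed is a nice addition.
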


We note that the condition of $t$ does not depend on the locality $k$, which is a feature different from the cases of $\mathrm{U}(1)$ and $\mathrm{SU}(2)$. 
We describe the proof of this theorem in Appendix~\ref{SMsubsec:Z2}. \\

Next, we present the result for the $\mathrm{U}(1)$ symmetry. 
For general locality $k$, we can give the maximal order of asymptotic unitary designs for sufficiently large $n$ in the following theorem, which corresponds to the second part of Theorem~1 of Ref.~\cite{mitsuhashi2025unitary}.

\begin{theorem} \label{SMthm:U1_general_locality}
    (Result for general locality $k$ and sufficiently large qubit count $n$ under the $\mathrm{U}(1)$ symmetry.) 
    Let $n, k, t\in\mathbb{N}$ satisfy $k\geq 2$ and $n\geq 2^k$, and $R$ be a unitary representation of $G=\mathrm{U}(1)$ on $n$ qubits defined by Eq.~\eqref{SMeq:U1_representation}. 
    Then, the distribution of the $(G, R)$-symmetric $k$-local random circuit is an asymptotic $(G, R)$-symmetric unitary $t$-design if and only if 
    \begin{align}
        t<\frac{2^{\lfloor k/2\rfloor}}{\lceil k/2\rceil!}\prod_{\alpha=1}^{\lceil k/2\rceil} (n-k+2\alpha-1). \label{SMeq:SMthm:U1_general_locality1}
    \end{align}
\end{theorem}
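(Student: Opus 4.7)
The proof begins by specializing Theorem~\ref{SMthm:general} with $\mathcal{S}^\gamma = \mathcal{U}^\gamma_{n, G, R}$ and $R = T^{\otimes n}$, where $T(e^{i\theta}) = e^{i\theta \mathrm{Z}}$. Since the irreducible representations of $\mathrm{U}(1)$ in this realization are one-dimensional and labeled by the Hamming weight $\lambda \in \Lambda = \{0, 1, \ldots, n\}$ with multiplicity $m_\lambda = \binom{n}{\lambda}$, the first task is to identify $\mathcal{C}$ explicitly. Decomposing any $(G, T^{\otimes k})$-symmetric operator $A$ on $k$ qubits as $\bigoplus_{j=0}^{k} A_j$ according to weight, the branching of Hamming weight under the tensor product gives $f_\lambda(A \otimes \mathrm{I}^{\otimes n-k}) = \sum_{j=0}^{k} \mathrm{tr}(A_j)\binom{n-k}{\lambda-j}$. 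As the scalars $\mathrm{tr}(A_j)$ range independently, $\mathcal{C}$ is spanned by the $k+1$ vectors $\bm{b}^{(j)} := (\binom{n-k}{\lambda-j})_{\lambda \in \Lambda}$, and the integer equations of Theorem~\ref{SMthm:general} collapse to $\sum_{\ell=0}^{n-k} \binom{n-k}{\ell}\, x_{j+\ell} = 0$ for $j = 0, 1, \ldots, k$. By the remark following Theorem~\ref{SMthm:general}, the task then reduces to proving that the minimum of $\braket{\bm{m}, \bm{x}^+}$ over nontrivial integer $\bm{x} \in \mathcal{C}^\perp$ equals the RHS of Eq.~\eqref{SMeq:SMthm:U1_general_locality1}.

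For the achievability (upper bound on this minimum), I construct an explicit integer vector $\bm{x}^\star$ supported on the boundary set $\mathcal{B} := \{0, 1, \ldots, \lfloor k/2 \rfloor\} \cup \{n - \lceil k/2 \rceil, \ldots, n\}$, which has exactly $k+2$ elements. Because the $k+1$ linear constraints cutting out $\mathcal{C}^\perp$ have rank $k+1$ when restricted to this support, the common solution is one-dimensional, and solving the resulting triangular system yields coefficients that are products of descending factors of the form $(n-k+2\alpha-1)$. A direct Vandermonde-type evaluation of $\braket{\bm{m}, \bm{x}^{\star +}}$ then reproduces $\frac{2^{\lfloor k/2\rfloor}}{\lceil k/2\rceil!}\prod_{\alpha=1}^{\lceil k/2\rceil}(n-k+2\alpha-1)$. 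As small-case consistency checks, $k = 2$ gives $\bm{x}^\star = (-(n-2), 1, 0, \ldots, 0, -1, n-2)$ with value $2(n-1)$, and $k = 3$ gives a five-point analogue with value $n(n-2)$.

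The lower bound is the main difficulty. I translate each $\bm{x} \in \mathcal{C}^\perp$ into the generating function $X(z) = \sum_\lambda x_\lambda z^\lambda$, so that the convolution constraints are equivalent to $X(z)(1+z)^{n-k} = P(z) + z^{n+1} Q(z)$ with $\deg P, \deg Q \leq n-k-1$; polynomiality of $X$ then forces the divisibility $(1+z)^{n-k} \,\big|\, P(z) + z^{n+1} Q(z)$. The plan is to exploit this rigid algebraic structure by showing that the $(n-k)$-fold zero at $z = -1$ amplifies any deviation of $\bm{x}$ from the extremal boundary profile constructed above by integer multiples of polynomial factors in $n - k$, forcing either the boundary coefficients of $\bm{x}$ to saturate the bound or the interior coefficients to contribute with multiplicities $\binom{n}{\lambda}$ that exceed it. The hypothesis $n \geq 2^k$ enters precisely to guarantee that the boundary support $\mathcal{B}$ is properly disjoint and that the multiplicity bookkeeping does not collide. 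The most delicate step will be the Diophantine case analysis ruling out cancellations among integer combinations whose support straddles the interior of $\Lambda$; this is where the bulk of the technical work in Appendix~\ref{SMsec:theorem_specific} is expected to lie.
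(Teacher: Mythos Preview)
Your setup (reducing Theorem~\ref{SMthm:general} to the explicit system in Lemma~\ref{SMlem:U1_cond}) and your achievability construction match the paper's approach: the vector you build on the boundary support is, up to the reflection $\lambda\mapsto n-\lambda$, exactly the $\bm{y}_{n,k}$ of Lemma~\ref{SMlem:U1_nontrivial_existence_general}, and the value $b_{n,k}$ you recover is computed in the paper via Lemma~\ref{SMlem:sequence_property2}.

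The genuine gap is the lower bound. Your generating-function reformulation of $\mathcal{C}^\perp$ is correct, but what follows (``amplification of deviations'', a promised ``Diophantine case analysis'') is not an argument, and you misidentify where the hypothesis $n\geq 2^k$ enters. It does \emph{not} merely keep the two halves of $\mathcal{B}$ disjoint---that already holds for $n\geq k+1$. In the paper, $n\geq 2^k$ is used (via Lemma~\ref{SMlem:U1_sufficient_cond}) to guarantee the numerical inequality $b_{n,k}\leq\binom{n}{j}$ for every interior index $j\in\{\lceil k/2\rceil+1,\ldots,n-\lceil k/2\rceil-1\}$, and for odd $k$ additionally $b_{n,k}\leq b_{n,k+1}$. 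Once this holds, the elementary triangle-inequality bound of Lemma~\ref{SMlem:solution_bound} applied with $\Lambda'=\{\lambda\}$ forces $x_\lambda=0$ for all interior $\lambda$ whenever $t<b_{n,k}$. The remaining constraint system is then supported on only $2\lceil k/2\rceil+2$ indices, and its solution space is one-dimensional (even $k$) or two-dimensional (odd $k$), spanned by $\bm{y}_{n,k}$ and, when $k$ is odd, its mirror $\bm{y}'_{n,k}$. The conclusion follows from a direct norm computation on this explicit basis (Lemma~\ref{SMlem:U1_nontrivial_nonexistence_general}), not from any Diophantine search. Your divisibility-by-$(1+z)^{n-k}$ route might be made to work, but as written it supplies neither this interior-vanishing mechanism nor a substitute, and it misses the actual role of the hypothesis.
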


We note that the condition $n\geq 2^k$ is needed only for the proof of the ``if'' part, i.e., for any $n\geq k+1$, we can show that the distribution is not an asymptotic $(G, R)$-symmetric unitary $t$-design if $t$ does not satisfy the condition above. 
We present the proof in Appendix~\ref{SMsubsec:U1}.

While the theorem above specifies the maximal order of unitary designs of the $\mathrm{U}(1)$-symmetric local random circuits for sufficiently many qubits, it does not guarantee that the bound is the same in the case of few qubits. 
As a result complementary to Theorem~\ref{SMthm:U1_general_locality}, we show the result for small locality $k=2$, $3$, and $4$ in the following theorem.

\begin{theorem} \label{SMthm:U1_small_locality}
    (Result for small locality $k$ and general qubit count $n$ under the $\mathrm{U}(1)$ symmetry.) 
    Let $n, t\in\mathbb{N}$, $k=2$, $3$ or $4$, $n\geq k+1$, and $R$ be a unitary representation of $G=\mathrm{U}(1)$ on $n$ qubits defined by Eq.~\eqref{SMeq:U1_representation}. 
    Then, the distribution of the $(G, R)$-symmetric $k$-local random circuit is an asymptotic unitary $t$-design if and only if 
    \begin{align}
        \begin{cases}
            t<2(n-1) & (\textrm{when } k=2), \\
            t<n(n-2) & (\textrm{when } k=3), \\
            t<2(n-1)(n-3) & (\textrm{when } k=4). 
        \end{cases} \label{SMeq:U1_small_locality_order}
    \end{align}
\end{theorem}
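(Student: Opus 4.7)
The plan is to apply Theorem~\ref{SMthm:general} with $G=\mathrm{U}(1)$ and $R(e^{i\theta})=(e^{i\theta\mathrm{Z}})^{\otimes n}$, reducing the statement to an integer optimization problem for each $k\in\{2,3,4\}$. The charge decomposition of $R$ gives $\Lambda=\{0,1,\ldots,n\}$ with multiplicities $m_\lambda=\binom{n}{\lambda}$, and since the block traces $\mathrm{tr}(A_\alpha)$ of a symmetric operator on $k$ qubits are free parameters, $\mathcal{C}$ is spanned by the $k+1$ vectors $\bm{c}^{(\alpha)}$ with entries $c^{(\alpha)}_\lambda=\binom{n-k}{\lambda-\alpha}$ for $\alpha=0,\ldots,k$. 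Thus the theorem reduces to showing that $\min_{\bm{x}}\braket{\bm{m},\bm{x}^+}$, taken over nonzero $\bm{x}\in\mathcal{C}^\perp\cap\mathbb{Z}^{n+1}$, equals $2(n-1)$, $n(n-2)$, or $2(n-1)(n-3)$, respectively.

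For the ``only if'' direction I would exhibit explicit integer witnesses. For $k=2$ the choice $\bm{x}^*=(n-2,-1,0,\ldots,0,1,-(n-2))$ trivially satisfies the three orthogonality conditions and yields $\braket{\bm{m},\bm{x}^{*+}}=(n-2)+n=2(n-1)$. For $k=3$ and $k=4$, analogous vectors supported on $\{0,1,n-2,n-1,n\}$ and $\{0,1,2,n-2,n-1,n\}$ are pinned down uniquely (up to scaling) by solving the $k+1$ orthogonality equations on the chosen support, after which a short binomial identity shows that the value equals $n(n-2)$ and $2(n-1)(n-3)$ respectively.

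The ``if'' direction is the main work. The plan is a case split on the support of $\bm{x}$ relative to the boundary region $B_k:=\{0,\ldots,\lceil k/2\rceil\}\cup\{n-\lceil k/2\rceil,\ldots,n\}$. When $\mathrm{supp}(\bm{x})\subseteq B_k$, the $k+1$ orthogonality conditions pin $\bm{x}$ to a lattice of low rank, and explicit inspection identifies $\bm{x}^*$ (up to sign) as the minimum-weight generator with respect to $\bm{m}$. Otherwise some $x_\lambda\neq 0$ with $\lambda$ outside $B_k$; integrality gives $|x_\lambda|\geq 1$, contributing $\binom{n}{\lambda}$ to $\sum_\mu\binom{n}{\mu}|x_\mu|$. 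For large $n$ this single binomial already doubles the target bound, but for small $n$ it is insufficient and must be combined with the cascade of further nonzero entries propagated from $\lambda$ by the orthogonality conditions.

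The principal obstacle is this interior case in the small-$n$ regime $k+1\leq n<2^k$, which is precisely what Theorem~\ref{SMthm:U1_general_locality} excludes. There the crude binomial estimate is too weak, and the plan is to track the induced cascade explicitly, summing the binomial weights of the forced nonzero entries. Since the kernel dimension is only $n-k$ and the admissible support patterns are tightly constrained by the $\bm{c}^{(\alpha)}$'s, this remains a finite structural analysis for each $k\in\{2,3,4\}$, although the arithmetic becomes progressively more intricate as $k$ grows.
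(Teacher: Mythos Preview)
Your plan is correct and mirrors the paper's proof: it reduces via Lemma~\ref{SMlem:U1_cond} to exactly your integer problem, exhibits the same witness $\bm{y}_{n,k}$ (your $\bm{x}^*$ is $\pm\bm{y}_{n,2}$), forces interior entries to vanish whenever $\binom{n}{\lambda}\geq b_{n,k}$ (Lemma~\ref{SMlem:U1_nontrivial_nonexistence_general}), and then does the explicit small-$n$ casework you anticipate (only $n=4,6$ for $k=3$ and $n=6,\ldots,10$ for $k=4$), using the symmetrized variables $y_j:=x_j+x_{n-j}$ to streamline the enumeration. One point you gloss over that the paper handles carefully: for odd $k$ the boundary lattice on $B_k$ has rank $2$, not $1$, and ruling out small mixed combinations requires writing $\bm{x}=r\bm{y}_{n,k}+r'\bm{y}'_{n,k}$ (with $y'_{n,k,\lambda}:=y_{n,k,n-\lambda}$) and invoking the extra inequality $b_{n,k}\leq b_{n,k+1}$ to force $|r|+|r'|<2$.
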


This theorem means that for the locality $k=2$, $3$, and $4$, even in the case of few qubits, the maximal order of unitary designs is given by the same function of the number of qubits as in the many-qubit case. 
We note, however, that this does not hold for general locality. 
For example, when $n=7$ and $k=5$, we can confirm that the condition for $t$ is given by $t<64$, not by $t<70$. 
In the proof of this theorem, we first check the range of $n$ that satisfies the assumption in Lemma~\ref{SMlem:U1_nontrivial_nonexistence_general}. 
For other $n$, we check the condition for the existence of nontrivial integer solutions for the equations in Lemma~\ref{SMlem:U1_cond} one by one. 
We present the details in Appendix~\ref{SMsubsec:U1}.

Finally, we show the result for the case of $\mathrm{SU}(2)$ symmetry. 
We present the result for general locality $k$ for sufficiently large $n$ in the following theorem, which corresponds to the third part of Theorem~1 of Ref.~\cite{mitsuhashi2025unitary}.

\begin{theorem} \label{SMthm:SU2_general_locality}
    (Result for general locality $k$ and sufficiently large qubit count $n$ under the $\mathrm{SU}(2)$ symmetry.) 
    Let $n, k, t\in\mathbb{N}$ satisfy $k\geq 2$ and $n\geq 2^{2(\lfloor k/2\rfloor+1)}$, and $R$ be a unitary representation of $G=\mathrm{SU}(2)$ on $n$ qubits defined by Eq.~\eqref{SMeq:SU2_representation}. 
    Then, the distribution of the $(G, R)$-symmetric $k$-local random circuit is an asymptotic $(G, R)$-symmetric unitary $t$-design if and only if 
    \begin{align}
        t<\frac{2^{\lfloor k/2\rfloor}}{(\lfloor k/2\rfloor+1)!}\prod_{\alpha=1}^{\lfloor k/2\rfloor+1} (n-2\alpha+1). \label{SMeq:SMthm:SU2_general_locality1}
    \end{align}
\end{theorem}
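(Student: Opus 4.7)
The plan is to apply Theorem~\ref{SMthm:general} to the case $G=\mathrm{SU}(2)$, $\mathcal{S}^\gamma=\mathcal{U}_{n,G,R}^\gamma$, and $R=T^{\otimes n}$ with $T$ the fundamental representation. The irreducible labels are spins $\Lambda=\{n/2-r:r=0,1,\ldots,\lfloor n/2\rfloor\}$, with multiplicities $m_j=\binom{n}{n/2-j}-\binom{n}{n/2-j-1}$. The theorem reduces the $t$-design question to characterizing the minimum of $\braket{\bm{m},\bm{x}^+}$ over nontrivial $\bm{x}\in\mathcal{C}^\perp\cap\mathbb{Z}^\Lambda$, where $\mathcal{C}$ is the space of trace vectors defined by Eq.~\eqref{SMeq:SMthm:general_cond4}.

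My first step would be to compute $\mathcal{C}$ via the Clebsch--Gordan decomposition of $\mathcal{H}\simeq\mathcal{H}_k\otimes\mathcal{H}_{n-k}$. For $A\in\mathcal{L}_{k,G,T^{\otimes k}}$ decomposed as $A=\sum_{s\in\Lambda_k}F_s(I\otimes A_s)F_s^\dag$ with $\Lambda_k$ the set of allowed spins on $k$ qubits (cardinality $\lfloor k/2\rfloor+1$), a recoupling computation gives $f_j(A\otimes I^{\otimes n-k})=\sum_s\mathrm{tr}(A_s)\,N_{s,j}$, where $N_{s,j}:=\sum_{s'=|j-s|}^{j+s}m_{s'}^{(n-k)}$ with the convention that terms with $s'>(n-k)/2$ vanish. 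Hence $\mathcal{C}=\mathrm{span}_\mathbb{R}\{\bm{N}_s:s\in\Lambda_k\}$, and evaluating $\bm{N}_s$ at $j=n/2-s$ for varying $s$ exhibits a triangular submatrix, so $\dim\mathcal{C}=\lfloor k/2\rfloor+1$. This forces every nontrivial $\bm{x}\in\mathcal{C}^\perp$ to have support of size at least $\lfloor k/2\rfloor+2$.

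For the ``only if'' direction I would construct an explicit witness. Since $m_{n/2-r}=\binom{n}{r}-\binom{n}{r-1}$ grows like $n^r/r!$, the weighted cost is minimized by placing support on the top block $\{n/2-r\}_{r=0}^{\lfloor k/2\rfloor+1}$. On this block, the $(\lfloor k/2\rfloor+1)\times(\lfloor k/2\rfloor+2)$ system $\braket{\bm{x},\bm{N}_s}=0$ has a one-dimensional kernel; after clearing denominators, the resulting integer solution has $\braket{\bm{m},\bm{x}^+}$ equal to the stated formula, as one verifies by an explicit computation using the closed forms of $N_{s,n/2-r}$ and $m_{n/2-r}$ on the top block. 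As a sanity check, the $k=2$ case collapses to $\bm{x}\propto((n-2)(n-3)/2,-(n-3),1)$, yielding $\braket{\bm{m},\bm{x}^+}=(n-1)(n-3)$, in agreement with $\frac{2}{2!}(n-1)(n-3)$.

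The main obstacle is the ``if'' direction: certifying that no other nontrivial integer solution achieves a strictly smaller weighted cost. This is precisely where the hypothesis $n\geq 2^{2(\lfloor k/2\rfloor+1)}$ is used; it is calibrated so that the multiplicity growth of $m_j$ overwhelms any combinatorial gain from shifting the support downward or enlarging it. I would first restrict the search to a finite region via Eq.~\eqref{SMeq:search_space_restriction}, then perform a case analysis on the support location, comparing weighted costs through binomial inequalities that hold once $n$ exceeds the stated threshold. I would encapsulate this minimality step as a technical lemma (mirroring the analogous lemma invoked in the $\mathrm{U}(1)$ analysis of Theorem~\ref{SMthm:U1_general_locality}), whose verification constitutes the bulk of the work.
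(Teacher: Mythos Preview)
Your strategy is correct and parallels the paper's: reduce via Theorem~\ref{SMthm:general}, determine $\mathcal{C}$, exhibit the witness on the top block of spins, and certify minimality. Two points of comparison are worth making.

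For computing $\mathcal{C}$, the paper takes a different route from your Clebsch--Gordan recoupling: it invokes Schur--Weyl duality to write $\mathcal{L}_{k,G,T^{\otimes k}}$ as the span of permutation operators $Q_\sigma$ and computes $\bm{f}(Q_\sigma\otimes I^{\otimes n-k})$ combinatorially (Lemma~\ref{SMlem:permutation_weight}), obtaining after some reduction (Lemmas~\ref{SMlem:SU2_coefficient1}--\ref{SMlem:SU2_coefficient2}) the explicit basis $\bigl(\binom{n-2j}{n/2-j-\lambda}-\binom{n-2j}{n/2-j-\lambda-1}\bigr)_{\lambda\in\Lambda}$, $j=0,\ldots,\lfloor k/2\rfloor$, for $\mathcal{C}$ (Lemma~\ref{SMlem:SU2_cond}). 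Your approach should yield the same space; the permutation route buys closed binomial expressions that make the subsequent analysis cleaner.

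More importantly, your minimality sketch is more complicated than needed. The paper's argument (Lemma~\ref{SMlem:SU2_nontrivial_nonexistence_general}) avoids any case analysis: the hypothesis $n\geq 2^{2(\lfloor k/2\rfloor+1)}$ is used only to guarantee $c_{n,k}\leq m_\lambda$ for every $\lambda$ below the top block (Lemma~\ref{SMlem:SU2_sufficient_cond}), whereupon the single-coordinate bound $m_\lambda|x_\lambda|\leq t<c_{n,k}\leq m_\lambda$ from Lemma~\ref{SMlem:solution_bound} forces $x_\lambda=0$ there in one stroke. The residual solution space on the top $\lfloor k/2\rfloor+2$ labels is then one-dimensional over $\mathbb{R}$, spanned by the explicit witness $\bm{y}$; since one entry of $\bm{y}$ equals $\pm1$, every integer solution is an integer multiple of $\bm{y}$, and the cost constraint $\sum m_\lambda|x_\lambda|<2c_{n,k}$ forces that multiple to vanish. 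No appeal to Eq.~\eqref{SMeq:search_space_restriction} or support enumeration is required.
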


Similarly to Theorem~\ref{SMthm:U1_general_locality}, the condition $n\geq 2^{2(\lfloor k/2\rfloor+1)}$ is needed only for the proof of the ``if'' part, and we can show that for any $k\geq 2(\lfloor k/2\rfloor+1)$, the distribution is not an asymptotic unitary $t$-design if $t$ does not satisfy the condition above without the assumption. 
We present the proof in Appendix~\ref{SMsubsec:SU2}.

While Theorem~\ref{SMthm:SU2_general_locality} gives the result for general locality $k$ and sufficiently large $n$, it does not hold for small $n$. 
As a complementary result, we focus on the small locality $k=2$, $3$, and $4$, and give the result for small $n$ in the following theorem.

\begin{theorem} \label{SMthm:SU2_small_locality}
	(Result for small locality $k$ and general qubit count $n$ under the $\mathrm{SU}(2)$ symmetry.) 
    Let $n, t\in\mathbb{N}$, $k=2$, $3$, or $4$, $n\geq k+1$, and $R$ be a unitary representation of $G=\mathrm{SU}(2)$ on $n$ qubits defined by Eq.~\eqref{SMeq:SU2_representation}. 
    Then, the distribution of the $(G, R)$-symmetric $k$-local random circuits is an asymptotic $(G, R)$-symmetric unitary $t$-designs if and only if 
    \begin{itemize}
    \item
    when $k=2$, 
    \begin{align}
		\begin{cases}
			t<\infty & \textrm{when}\ n=3, \\
			t<10 & \textrm{when}\ n=6, \\
			t<20 & \textrm{when}\ n=7, 8, \\
            t<(n-1)(n-3) & \textrm{when}\ n=4, 5 \textrm{ or } n\geq 9. \\
		\end{cases}
    \end{align}
    \item
    when $k=3$, 
    \begin{align}
		\begin{cases}
			t<10 & \textrm{when}\ n=6, \\
			t<20 & \textrm{when}\ n=7, 8, \\
            t<(n-1)(n-3) & \textrm{when}\ n=4, 5 \textrm{ or } n\geq 9. \\
		\end{cases}
    \end{align}
    \item 
    when $k=4$, 
    \begin{align}
        \begin{cases}
            t<\infty & \textrm{when }n=5, \\
            t<35 & \textrm{when }n=8, \\
            t<90 & \textrm{when }n=9, \\
            t<96 & \textrm{when }n=10, \\
            t<192 & \textrm{when }n=11, \\
            t<330 & \textrm{when }n=12, \\
            t<\displaystyle\frac{2}{3}(n-1)(n-3)(n-5) & \textrm{when }n=6, 7\textrm{ or } n\geq 13. 
        \end{cases}
    \end{align}
    \end{itemize}
\end{theorem}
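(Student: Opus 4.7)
The plan is to apply Theorem~\ref{SMthm:general} in the integer-optimization form of Eq.~\eqref{SMeq:order_bound_general}, so that the task reduces to computing
\begin{align*}
    t_\mathrm{max}(k,n) := \min_{\bm{x}\in(\widetilde{\mathcal{V}}^\perp\cap\mathbb{Z}^\Lambda)\setminus\{\bm{0}\}} \braket{\bm{m}, \bm{x}^+}
\end{align*}
for each pair $(k,n)$ with $k\in\{2,3,4\}$ and $n\geq k+1$. For $G=\mathrm{SU}(2)$ on $n$ qubits with $R=T^{\otimes n}$, the index set $\Lambda$ consists of the half-integer spins $j$ with $n/2-j\in\mathbb{Z}_{\geq 0}$, and the multiplicities are $m_j=\binom{n}{n/2-j}-\binom{n}{n/2-j-1}$. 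Since $\mathcal{S}^\gamma=\mathcal{U}^\gamma_{n,G,R}$, I can replace the constraints \eqref{SMeq:SMthm:general_cond2}-\eqref{SMeq:SMthm:general_cond3} by the single condition \eqref{SMeq:SMthm:general_cond4} involving $\mathcal{C}=\{\bm{f}(A\otimes\mathrm{I}^{\otimes n-k})\mid A\in\mathcal{L}_{k,G,T^{\otimes k}}\}$.

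Next, I would characterize $\mathcal{C}$ explicitly. Because $(G,T^{\otimes k})$-symmetric operators on $k$ qubits are polynomials in the Casimir generators, the trace-vector components $f_j(A\otimes\mathrm{I}^{\otimes n-k})$ turn out to be $\mathbb{R}$-linear combinations of expressions that depend polynomially on $j(j+1)$ with degree controlled by $k$. This should be packaged as a technical lemma analogous to Lemma~\ref{SMlem:U1_cond} (the $\mathrm{U}(1)$ counterpart invoked in Theorem~\ref{SMthm:U1_small_locality}), giving a concrete basis of $\mathcal{C}$ whose size is $\lfloor k/2\rfloor+1$. Together with $\bm{m}$ this determines $\widetilde{\mathcal{V}}$ and reduces the problem to a purely combinatorial question about lattice points in the orthogonal complement of a low-dimensional subspace of $\mathbb{R}^\Lambda$.

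With this setup, I would split into two regimes. For $n$ falling within the range of validity of Theorem~\ref{SMthm:SU2_general_locality}, the stated bound \eqref{SMeq:SMthm:SU2_general_locality1} matches the piecewise formulas $(n-1)(n-3)$ for $k=2,3$ and $\tfrac{2}{3}(n-1)(n-3)(n-5)$ for $k=4$, so nothing new is needed. For the remaining small $n$ (i.e.\ $n=4,5$ and $n=6,7,8$ for $k=2,3$, and $n=5,\dots,12$ for $k=4$, modulo $n\geq k+1$), I would exhibit an explicit candidate $\bm{x}^\star\in\widetilde{\mathcal{V}}^\perp\cap\mathbb{Z}^\Lambda$ giving the claimed bound $t_0$, and then use the bounded-region restriction \eqref{SMeq:search_space_restriction} to reduce the search to the finite set $\widetilde{\mathcal{V}}^\perp\cap\mathbb{Z}^\Lambda\cap\mathcal{F}$, verified by direct enumeration. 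The universal-design cases $n=3,k=2$ and $n=5,k=4$ (where $t<\infty$) correspond to $\widetilde{\mathcal{V}}=\mathbb{R}^\Lambda$; via Lemma~\ref{SMlem:design_universality} this is equivalent to the gate set being universal for $\mathcal{U}_{n,G,R}$ up to the global phase, and is established by checking that $|\Lambda|=\lfloor k/2\rfloor+1$ in those very small cases so that the basis of $\mathcal{C}$ combined with $\bm{m}$ already spans $\mathbb{R}^\Lambda$.

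The main obstacle will be the finite but extensive case-by-case analysis for the irregular small-$n$ entries, especially for $k=4$ where $n\in\{8,9,10,11,12\}$ each produces a different exceptional value of $t$. For each such $(k,n)$, one must identify the specific small integer vector $\bm{x}^\star$ realizing the minimum (these will be low-weight combinations supported on only a few adjacent spins $j$, made possible by a coincidence between the $m_j$ values and the low-degree polynomial constraints defining $\mathcal{C}$), and then rule out every smaller-weight candidate. This rule-out step is what forces the enumeration: the generic large-$n$ argument exploits rapid growth of $m_j$ to exclude small-weight solutions, and this growth genuinely fails for the exceptional $n$, so each such case must be handled individually rather than by a uniform estimate. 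The remainder of the argument is bookkeeping that mirrors the structure of the proof of Theorem~\ref{SMthm:U1_small_locality}.
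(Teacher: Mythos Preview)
Your overall strategy matches the paper's: reduce via Theorem~\ref{SMthm:general} to an integer optimization over $\widetilde{\mathcal{V}}^\perp\cap\mathbb{Z}^\Lambda$, characterize $\mathcal{C}$ explicitly (the paper does this as Lemma~\ref{SMlem:SU2_cond}, obtaining $\lfloor k/2\rfloor+1$ linear constraints expressed through binomial coefficients; the underlying tool is Schur--Weyl duality rather than Casimir polynomials, but the outcome is the same), handle large $n$ by a general argument, and enumerate the small-$n$ exceptions.

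There is, however, a coverage gap in your case split. You propose to invoke Theorem~\ref{SMthm:SU2_general_locality} for the ``large $n$'' regime and then list the remaining cases as $n\leq 8$ for $k=2,3$ and $n\leq 12$ for $k=4$. But Theorem~\ref{SMthm:SU2_general_locality} requires $n\geq 2^{2(\lfloor k/2\rfloor+1)}$, i.e.\ $n\geq 16$ for $k=2,3$ and $n\geq 64$ for $k=4$. As written, your plan therefore leaves $n\in\{9,\dots,15\}$ (for $k=2,3$) and $n\in\{13,\dots,63\}$ (for $k=4$) entirely unaccounted for---over fifty extra cases for $k=4$. The paper avoids this blowup by bypassing Theorem~\ref{SMthm:SU2_general_locality} and instead directly verifying the weaker hypothesis of Lemma~\ref{SMlem:SU2_nontrivial_nonexistence_general} (namely $c_{n,k}\leq\binom{n}{j+1}-\binom{n}{j}$ on the relevant range of $j$), which it shows already holds for $n\geq 11$ when $k=2,3$ and for $n\geq 18$ when $k=4$. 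Even so, this still leaves $n\in\{13,\dots,17\}$ for $k=4$ to be checked explicitly, and the paper does carry out those five additional computations. You would need either to adopt this refinement of the large-$n$ argument or to enumerate the many missing cases by hand.
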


In the proof of this theorem, we check the range of $n$ that satisfies the assumption in Lemma~\ref{SMlem:SU2_nontrivial_nonexistence_general}. 
For $n$ that does not satisfy the assumptions, we check the equations one by one, which we present in  Appendix~\ref{SMsubsec:SU2}.

\section{Proof of the general theorem (Theorem~\ref{SMthm:general})} \label{SMsec:proof_general}

In this section, we present the proof of Theorem~\ref{SMthm:general}. 
This proof consists of three parts. 
First, in Lemma~\ref{SMlem:design_commutant}, we rewrite the condition for forming unitary designs in terms of commutants. 
Next, in Lemma~\ref{SMlem:trivial_to_comm_eq}, we prove that the condition for the commutants is satisfied when Eqs.~\eqref{SMeq:SMthm:general_cond1}, \eqref{SMeq:SMthm:general_cond2}, and \eqref{SMeq:SMthm:general_cond3} have no nontrivial integer solution. 
Finally, in Lemma~\ref{SMlem:nontrivial_to_comm_neq}, we prove the converse part, i.e., we prove that the condition for the commutants is not satisfied  when Eqs.~\eqref{SMeq:SMthm:general_cond1}, \eqref{SMeq:SMthm:general_cond2}, and \eqref{SMeq:SMthm:general_cond3} have a nontrivial integer solution.

First, we show that the necessary and sufficient condition for forming unitary $t$-designs can be described as a property of commutants of $t$-fold operators. 
This is a standard technique to deal with unitary designs.

\begin{lemma} \label{SMlem:design_commutant}
	Let $t, n\in\mathbb{N}$, $R$ be a unitary representation of a group $G$ on $\mathcal{H}$, and $\{\mathcal{S}^\gamma\}_{\gamma\in\Gamma}$ be a finite set of connected compact Lie subgroups of $\mathcal{U}_{n, G, R}$. 
	Then, $\zeta_{\{\mathcal{S}^\gamma\}_{\gamma\in\Gamma}}$ is an asymptotic $(G, R)$-symmetric unitary $t$-design if and only if 
    \begin{align}
        \mathrm{Comm}\left(\Omega_t\left(\bigcup_{\gamma\in\Gamma} \mathcal{S}^\gamma\right)\right)=\mathrm{Comm}(\Omega_t(\mathcal{U}_{n, G, R})), \label{SMeq:SMlem:design_commutant1_2}
    \end{align}
	where $\mathrm{Comm}(\mathcal{X})$ is the set of operators commuting with all operators in $\mathcal{X}$, and 
	\begin{align}
		\Omega_t(U):=U^{\otimes t}. \label{SMeq:Omega_def}
	\end{align}     
\end{lemma}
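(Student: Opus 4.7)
The plan is to reduce the design condition in Definition~\ref{SMdef:approx_sym_design} to an equality of fixed-point subspaces, and then transport that equality to the equality of commutants in Eq.~\eqref{SMeq:SMlem:design_commutant1_2} via vectorization. The structural fact I would exploit is that the moment operator $M_{t,\zeta_{\{\mathcal{S}^\gamma\}_{\gamma\in\Gamma}}}=\sum_{\gamma\in\Gamma}p^\gamma M_{t,\mu_{\mathcal{S}^\gamma}}$ is a convex combination, with strictly positive weights, of orthogonal projectors, so that it is self-adjoint with spectrum contained in $[0,1]$ and its iterates are controlled by the spectral theorem.

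For the analysis, I would invoke the standard twirling identity: for any compact subgroup $\mathcal{S}\subset\mathcal{U}_{n,G,R}$, $M_{t,\mu_{\mathcal{S}}}$ is the orthogonal projector onto the joint fixed-point subspace $V_{\mathcal{S}}:=\{y\in\mathcal{H}^{\otimes t}\otimes\mathcal{H}^{\otimes t}\,:\,(U^{\otimes t}\otimes U^{*\otimes t})y=y\ \forall U\in\mathcal{S}\}$. A short strict-convexity argument then shows that whenever $\Pi_\gamma$ are orthogonal projectors and $p^\gamma>0$ with $\sum_\gamma p^\gamma=1$, the equation $\big(\sum_\gamma p^\gamma\Pi_\gamma\big)y=y$ forces $\Pi_\gamma y=y$ for every $\gamma$. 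This identifies the eigenspace of $M_{t,\zeta_{\{\mathcal{S}^\gamma\}_{\gamma\in\Gamma}}}$ of eigenvalue $1$ with $\bigcap_{\gamma\in\Gamma} V_{\mathcal{S}^\gamma}=V_{\bigcup_{\gamma\in\Gamma}\mathcal{S}^\gamma}$. Since the spectrum lies in $[0,1]$, the spectral theorem yields that $\lim_{D\to\infty}(M_{t,\zeta_{\{\mathcal{S}^\gamma\}_{\gamma\in\Gamma}}})^D$ equals the orthogonal projector onto this eigenspace. Because $M_{t,\mu_{\mathcal{U}_{n,G,R}}}$ is itself the projector onto $V_{\mathcal{U}_{n,G,R}}$, the design condition Eq.~\eqref{SMeq:approx_design_def} is thereby equivalent to the subspace equality $V_{\bigcup_{\gamma\in\Gamma}\mathcal{S}^\gamma}=V_{\mathcal{U}_{n,G,R}}$.

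The last step is to rewrite this equality of fixed subspaces as the equality of commutants appearing in Eq.~\eqref{SMeq:SMlem:design_commutant1_2}. Vectorization sends an operator $A\in\mathcal{L}(\mathcal{H}^{\otimes t})$ to a vector in $\mathcal{H}^{\otimes t}\otimes\mathcal{H}^{\otimes t}$ in such a way that the conjugation $U^{\otimes t}A(U^{\otimes t})^\dag$ corresponds to the linear action of $U^{\otimes t}\otimes U^{*\otimes t}$. Hence the relation $[U^{\otimes t},A]=0$ is equivalent to the vectorization of $A$ lying in the fixed space of $U^{\otimes t}\otimes U^{*\otimes t}$, giving a canonical isomorphism $V_{\mathcal{S}}\cong\mathrm{Comm}(\Omega_t(\mathcal{S}))$ for any subset $\mathcal{S}\subset\mathcal{U}_n$. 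Applying this isomorphism to both sides of the subspace equality derived above yields Eq.~\eqref{SMeq:SMlem:design_commutant1_2}.

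The main delicate point I anticipate is the strict-convexity step: one must argue that in a Hilbert space, the equation $\big(\sum_\gamma p^\gamma\Pi_\gamma\big)y=y$ with orthogonal projectors $\Pi_\gamma$ and strictly positive weights $p^\gamma$ forces $\Pi_\gamma y=y$ for every $\gamma$. This follows by chaining $\|y\|=\big\|\sum_\gamma p^\gamma\Pi_\gamma y\big\|\leq\sum_\gamma p^\gamma\|\Pi_\gamma y\|\leq\sum_\gamma p^\gamma\|y\|=\|y\|$ and using strict convexity of the norm together with $\|\Pi_\gamma y\|\leq\|y\|$, but it is the one place where positivity of the weights $p^\gamma$ is essential and must be used carefully. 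The other ingredients---the twirling identity for $\mu_{\mathcal{S}}$, the spectral-theorem convergence of the iterates of a self-adjoint contraction, and the vectorization correspondence---are textbook.
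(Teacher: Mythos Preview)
Your proposal is correct and follows essentially the same route as the paper: both argue that $M_{t,\zeta}$ is a convex combination of orthogonal projectors with spectrum in $[0,1]$, identify its eigenvalue-$1$ space with the intersection $\bigcap_\gamma V_{\mathcal{S}^\gamma}$ via the strict-convexity/positivity argument, take the $D\to\infty$ limit by the spectral theorem, and then translate the resulting subspace equality into the commutant equality through vectorization (the paper's map $E$). The only cosmetic difference is that the paper runs the convexity step via the inner product ($\sum_\gamma p^\gamma\langle\Psi|\Pi_\gamma|\Psi\rangle=1$ forces each term to be $1$) rather than via norms, but the content is identical.
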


\begin{proof}
    By the definition of the moment operator and the definition of $\zeta_{\{\mathcal{S}^\gamma\}_{\gamma\in\Gamma}}$, we have 
    \begin{align}
        M_{t, \zeta_{\{\mathcal{S}^\gamma\}_{\gamma\in\Gamma}}}
        =\sum_{\gamma\in\Gamma} p^\gamma M_{t, \mathcal{S}^\gamma}. \label{SMeq:SMlem:design_commutant1_3}
    \end{align}
    By Lemma~\ref{SMlem:moment_op_projection}, $M_{t, \mathcal{S}^\gamma}$ is Hermitian and positive for all $\gamma\in\Gamma$. 
    Thus $M_{t, \zeta_{\{\mathcal{S}^\gamma\}_{\gamma\in\Gamma}}}$ is also Hermitian and positive, and has the following spectral decomposition: 
    \begin{align}
        M_{t, \zeta_{\{\mathcal{S}^\gamma\}_{\gamma\in\Gamma}}} 
        =\sum_{h\in H} h\Pi_{\{\ket{\Psi}\in\mathcal{H}^{\otimes 2t}\ |\ M_{t, \zeta_{\{\mathcal{S}^\gamma\}_{\gamma\in\Gamma}}}\ket{\Psi}=h\ket{\Psi}\}}, 
    \end{align}
    where $H$ is the set of eigenvalues of $M_{t, \zeta_{\{\mathcal{S}^\gamma\}_{\gamma\in\Gamma}}}$, and $\Pi_\mathcal{K}$ is the projection operator onto $\mathcal{K}$. 
    Since $M_{t, \zeta_{\{\mathcal{S}^\gamma\}_{\gamma\in\Gamma}}}$ is a convex combination of projections, we have $H\subset [0, 1]$. 
    Then, we have 
    \begin{align}
        \lim_{D\to\infty} \left(M_{t, \zeta_{\{\mathcal{S}^\gamma\}_{\gamma\in\Gamma}}}\right)^D
        =\lim_{D\to\infty} \sum_{h\in H} h^D \Pi_{\{\ket{\Psi}\in\mathcal{H}^{\otimes 2t}\ |\ M_{t, \zeta_{\{\mathcal{S}^\gamma\}_{\gamma\in\Gamma}}}\ket{\Psi}=h\ket{\Psi}\}} 
        =\Pi_{\{\ket{\Psi}\in\mathcal{H}^{\otimes 2t}\ |\ M_{t, \zeta_{\{\mathcal{S}^\gamma\}_{\gamma\in\Gamma}}}\ket{\Psi}=\ket{\Psi}\}}. \label{SMeq:SMlem:design_commutant1_4}
    \end{align}
    We are going to show that 
    \begin{align}
        \{\ket{\Psi}\in\mathcal{H}^{\otimes 2t}\ |\ M_{t, \zeta_{\{\mathcal{S}^\gamma\}_{\gamma\in\Gamma}}}\ket{\Psi}=\ket{\Psi}\}=\bigcap_{\gamma\in\Gamma} E(\mathrm{Comm}(\Omega_t(\mathcal{S}^\gamma))), \label{SMeq:SMlem:design_commutant1_5}
    \end{align}
    where $E: \mathcal{L}(\mathcal{H}^{\otimes t})\to\mathcal{H}^{\otimes 2t}$ is defined by 
    \begin{align}
        &E(K):=(K\otimes I)\ket{\eta}\ \forall K\in\mathcal{L}(\mathcal{H}^{\otimes t}) \label{SMeq:E_def}
    \end{align}
    with 
    \begin{align}
        \ket{\eta}:=\frac{1}{\sqrt{d^{tn}}}\sum_{j=1}^{d^{tn}} \ket{j}\otimes\ket{j} \label{SMeq:eta_def}
    \end{align}
    and an orthonormal basis $\{\ket{j}\}_{j=1}^{d^{tn}}$ of $\mathcal{H}^{\otimes t}$. 
    For the proof of the inclusion relation $\{\ket{\Psi}\in\mathcal{H}^{\otimes 2t}\ |\ M_{t, \zeta_{\{\mathcal{S}^\gamma\}_{\gamma\in\Gamma}}}\ket{\Psi}=\ket{\Psi}\}\supset\bigcap_{\gamma\in\Gamma} E(\mathrm{Comm}(\Omega_t(\mathcal{S}^\gamma)))$, we take arbitrary $\ket{\Psi}\in\bigcap_{\gamma\in\Gamma} E(\mathrm{Comm}(\Omega_t(\mathcal{S}^\gamma)))$. 
    By Lemma~\ref{SMlem:moment_op_projection}, we have $M_{t, \mu_{\mathcal{S}^\gamma}}\ket{\Psi}=\ket{\Psi}$ for all $\gamma\in\Gamma$. 
    By Eq.~\eqref{SMeq:SMlem:design_commutant1_3}, we get $M_{t, \zeta_{\{\mathcal{S}^\gamma\}_{\gamma\in\Gamma}}}\ket{\Psi}=\ket{\Psi}$. 
    For the proof of the inverse inclusion relation, we take arbitrary $\ket{\Psi}\in\mathcal{H}^{\otimes 2t}$ satisfying $M_{t, \zeta_{\{\mathcal{S}^\gamma\}_{\gamma\in\Gamma}}}\ket{\Psi}=\ket{\Psi}$. 
    Then by Eq.~\eqref{SMeq:SMlem:design_commutant1_3}, we have 
    \begin{align}
        \sum_{\gamma\in\Gamma} p^\gamma\bra{\Psi}M_{t, \zeta_{\{\mathcal{S}^\gamma\}_{\gamma\in\Gamma}}}\ket{\Psi} 
        =\bra{\Psi}M_{t, \zeta_{\{\mathcal{S}^\gamma\}_{\gamma\in\Gamma}}}\ket{\Psi} 
        =1. \label{SMeq:SMlem:design_commutant1_6}
    \end{align}
    By noting that $\sum_{\gamma\in\Gamma} p^\gamma=1$, $p^\gamma>0$, and $\bra{\Psi}M_{t, \zeta_{\{\mathcal{S}^\gamma\}_{\gamma\in\Gamma}}}\ket{\Psi}\in [0, 1]$ for all $\gamma\in\Gamma$, Eq.~\eqref{SMeq:SMlem:design_commutant1_6} implies that $\bra{\Psi}M_{t, \zeta_{\{\mathcal{S}^\gamma\}_{\gamma\in\Gamma}}}\ket{\Psi}=1$ for all $\gamma\in\Gamma$, which implies that $\ket{\Psi}\in E(\mathrm{Comm}(\Omega_t(\mathcal{S}^\gamma)))$ by Lemma~\ref{SMlem:moment_op_projection}. 
    Since this holds for $\gamma\in\Gamma$, we get $\ket{\Psi}\in\bigcap_{\gamma\in\Gamma} E(\mathrm{Comm}(\Omega_t(\mathcal{S}^\gamma)))$. 
    Thus we have shown Eq.~\eqref{SMeq:SMlem:design_commutant1_5}. 
    Since $E$ is bijective, we have 
    \begin{align}
        \bigcap_{\gamma\in\Gamma} E(\mathrm{Comm}(\Omega_t(\mathcal{S}^\gamma))) 
        =E\left(\bigcap_{\gamma\in\Gamma} \mathrm{Comm}(\mathcal{S}^\gamma)\right) 
        =E\left(\mathrm{Comm}\left(\bigcup_{\gamma\in\Gamma} \mathcal{S}^\gamma\right)\right). \label{SMeq:SMlem:design_commutant1_7}
    \end{align}
    By Eqs.~\eqref{SMeq:SMlem:design_commutant1_4} and \eqref{SMeq:SMlem:design_commutant1_7}, we have 
    \begin{align}
        \lim_{D\to\infty} \left(M_{t, \zeta_{\{\mathcal{S}^\gamma\}_{\gamma\in\Gamma}}}\right)^D 
        =\Pi_{E(\mathrm{Comm}(\bigcup_{\gamma\in\Gamma} \mathcal{S}^\gamma))}. \label{SMeq:SMlem:design_commutant1_8}
    \end{align}
    By Lemma~\ref{SMlem:moment_op_projection}, we have 
    \begin{align}
        M_{t, \mu_{\mathcal{U}_{n, G, R}}} 
        =\Pi_{E(\mathrm{Comm}(\Omega_t(\mathcal{U}_{n, G, R})))}. \label{SMeq:SMlem:design_commutant1_9}
    \end{align}
    By Eqs.~\eqref{SMeq:SMlem:design_commutant1_8} and \eqref{SMeq:SMlem:design_commutant1_9}, the distribution $\zeta_{\{\mathcal{S}^\gamma\}_{\gamma\in\Gamma}}$ is an asymptotic $(G, R)$-symmetric unitary $t$-design if and only if $E(\mathrm{Comm}(\bigcup_{\gamma\in\Gamma} \mathcal{S}^\gamma))=E(\mathrm{Comm}(\Omega_t(\mathcal{U}_{n, G, R})))$, which is equivalent to Eq.~\eqref{SMeq:SMlem:design_commutant1_2} by the bijectivity of $E$. 
\end{proof}

Next, we show that the nonexistence of a nontrivial integer solution of the equations in Theorem~\ref{SMthm:general} implies the commutant relation presented in Lemma~\ref{SMlem:design_commutant}.

\begin{lemma} \label{SMlem:trivial_to_comm_eq}
    Let $n, t\in\mathbb{N}$, $R$ be a unitary representation of a group $G$, $\{\mathcal{S}^\gamma\}_{\gamma\in\Gamma}$ be a finite set of connected compact Lie subgroup of $\mathcal{U}_{n, G, R}$, $\bigcup_{\gamma\in\Gamma} \mathcal{S}^\gamma$ be semi-universal for $\mathcal{U}_{n, G, R}$ and Eqs.~\eqref{SMeq:SMthm:general_cond1} \eqref{SMeq:SMthm:general_cond2}, and \eqref{SMeq:SMthm:general_cond3} do not have a nontrivial integer solution $(x_\lambda)_{\lambda\in\Lambda}\in\mathbb{Z}^\Lambda$.  
    Then, 
    \begin{align}
        \mathrm{Comm}\left(\Omega_t\left(\bigcup_{\gamma\in\Gamma} \mathcal{S}^\gamma\right)\right)
        =\mathrm{Comm}(\Omega_t(\mathcal{U}_{n, G, R})), \label{SMeq:SMlem:trivial_to_comm_eq0}
    \end{align}
    where $\Omega_t$ is defined by Eq.~\eqref{SMeq:Omega_def}. 
\end{lemma}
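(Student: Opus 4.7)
My plan is to reduce the commutant equality to a Lie-algebra statement, use semi-universality to split $\mathfrak{G}:=\mathrm{Lie}(\bigcup_\gamma\mathfrak{s}^\gamma)$ into a traceless part and a central remainder, and then compare both commutants blockwise on the natural $R^{\otimes t}$-isotypic decomposition of $\mathcal{H}^{\otimes t}$, to show that the absence of an extra off-diagonal block is equivalent to the absence of a nontrivial integer solution of \eqref{SMeq:SMthm:general_cond1}--\eqref{SMeq:SMthm:general_cond3}. The inclusion $\mathrm{Comm}(\Omega_t(\bigcup_\gamma\mathcal{S}^\gamma))\supseteq\mathrm{Comm}(\Omega_t(\mathcal{U}_{n,G,R}))$ is automatic, so only the reverse direction requires work. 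Since each $\mathcal{S}^\gamma$ is a connected compact Lie group, $\mathcal{G}:=\braket{\bigcup_\gamma\mathcal{S}^\gamma}$ is a connected Lie subgroup with Lie algebra $\mathfrak{G}$, and $\mathcal{U}_{n,G,R}\cong\prod_\lambda U(m_\lambda)$ is also connected; commuting with $\Omega_t$ of a connected Lie group is equivalent to commuting with $d\Omega_t(A):=\sum_{j=1}^t I^{\otimes j-1}\otimes A\otimes I^{\otimes t-j}$ for $A$ in its Lie algebra, so I reduce to proving $\mathrm{Comm}(d\Omega_t(\mathfrak{G}))\subseteq\mathrm{Comm}(d\Omega_t(\mathfrak{u}_{n,G,R}))$.

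Via the Schur decomposition \eqref{SMeq:sym_op_decomp}, $\mathfrak{u}_{n,G,R}\cong\bigoplus_\lambda\mathfrak{u}(m_\lambda)$ with commutator subalgebra $\bigoplus_\lambda\mathfrak{su}(m_\lambda)$ and center $Z\cong\bigoplus_\lambda\mathbb{R}I_{m_\lambda}$. The equivalent form of semi-universality in \eqref{SMeq:semi_universality_equivalence} forces $\mathfrak{G}\supseteq\bigoplus_\lambda\mathfrak{su}(m_\lambda)$, so $\mathfrak{G}=\bigoplus_\lambda\mathfrak{su}(m_\lambda)\oplus\mathcal{V}'$ with $\mathcal{V}':=\mathfrak{G}\cap Z$; since the trace map $\bm f$ vanishes on each $\mathfrak{su}(m_\lambda)$ and on all commutators, $\bm f(\mathcal{V}')=\bm f(\mathfrak{G})=\mathcal{V}$, giving the parametrization $\mathcal{V}'=\{\sum_\lambda(v_\lambda/m_\lambda)F_\lambda(I\otimes I_{m_\lambda})F_\lambda^\dag:\bm v\in\mathcal{V}\}$. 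Next, I would decompose $\mathcal{H}^{\otimes t}=\bigoplus_{\vec{\lambda}\in\Lambda^t}V_{\vec{\lambda}}$ with $V_{\vec{\lambda}}=R_{\vec{\lambda}}\otimes M_{\vec{\lambda}}$ from the $t$-fold Schur isomorphism, and regroup tensor factors by value of $\lambda_j$ to identify $M_{\vec{\lambda}}\cong\bigotimes_\lambda(\mathbb{C}^{m_\lambda})^{\otimes t_\lambda(\vec{\lambda})}$ with $t_\lambda(\vec{\lambda}):=\#\{j:\lambda_j=\lambda\}$. Under $d\Omega_t$, the summand $\bigoplus_\lambda\mathfrak{su}(m_\lambda)$ acts on each $M_{\vec{\lambda}}$ as the diagonal $\prod_\lambda SU(m_\lambda)$-representation, and a central element $\sum_\lambda(v_\lambda/m_\lambda)F_\lambda(I\otimes I_{m_\lambda})F_\lambda^\dag\in\mathcal{V}'$ acts as the scalar $\sum_\lambda(v_\lambda/m_\lambda)t_\lambda(\vec{\lambda})$ on $V_{\vec{\lambda}}$.

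Consequently, any $K\in\mathrm{Comm}(d\Omega_t(\mathfrak{G}))$ has $(\vec{\lambda},\vec{\lambda}')$-block lying in $\mathcal{L}(R_{\vec{\lambda}'},R_{\vec{\lambda}})\otimes\mathrm{Hom}_{\prod SU(m_\lambda)}(M_{\vec{\lambda}'},M_{\vec{\lambda}})$ that must vanish unless $\sum_\lambda(v_\lambda/m_\lambda)(t_\lambda(\vec{\lambda})-t_\lambda(\vec{\lambda}'))=0$ for every $\bm v\in\mathcal{V}$. A standard $SU(m)$ representation-theoretic calculation shows $\mathrm{Hom}_{SU(m_\lambda)}((\mathbb{C}^{m_\lambda})^{\otimes t'_\lambda},(\mathbb{C}^{m_\lambda})^{\otimes t_\lambda})\ne 0$ iff $m_\lambda\mid(t_\lambda-t'_\lambda)$ (any common irrep comes from a partition of $t_\lambda$ with $\leq m_\lambda$ parts shifted by a multiple of the column $(1^{m_\lambda})$). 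Setting $x_\lambda:=(t_\lambda(\vec{\lambda})-t_\lambda(\vec{\lambda}'))/m_\lambda$, a nonzero off-diagonal block thus corresponds to a nonzero $\bm x\in\mathbb{Z}^\Lambda\cap\mathcal{V}^\perp$ realized by valid types $\mathbf{t},\mathbf{t}'\in\mathbb{Z}_{\geq 0}^\Lambda$ summing to $t$ with $\mathbf{t}-\mathbf{t}'=(m_\lambda x_\lambda)_\lambda$; padding the positive and negative parts of $\bm x$ by a common nonnegative vector makes this realizability equivalent to $\sum_\lambda m_\lambda x_\lambda=0$ together with $\sum_\lambda m_\lambda|x_\lambda|\leq 2t$. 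In contrast, the same analysis applied to $\mathrm{Comm}(d\Omega_t(\mathfrak{u}_{n,G,R}))$ (with $\mathcal{V}=\mathbb{R}^\Lambda$) admits nonzero blocks only when $\bm x=\bm 0$. These three constraints on $\bm x$ are precisely \eqref{SMeq:SMthm:general_cond1}--\eqref{SMeq:SMthm:general_cond3}, so the no-solution hypothesis kills every off-diagonal block and yields the commutant equality. The hardest step will be the $SU(m_\lambda)$-intertwining analysis giving the integrality $\bm x\in\mathbb{Z}^\Lambda$, since this is what promotes the real scalar $\mathcal{V}^\perp$-constraint from the $\mathcal{V}'$-commutation into an honest integer equation matching the statement; the connectedness reduction and the realizability bookkeeping for \eqref{SMeq:SMthm:general_cond1} are routine by comparison.
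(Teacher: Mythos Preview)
Your argument is correct, and it is a genuinely different route from the paper's proof. The paper proceeds constructively: in its first step it shows that the diagonal phase unitaries $\sum_{(\lambda,\alpha)}e^{iw_{\lambda,\alpha}}P_{\lambda,\alpha}$ with $\bm w\in\bm\Delta^{-1}(\mathcal{V})$ lie in $\braket{\bigcup_\gamma\mathcal{S}^\gamma}$; in its second step it expands $(\sum e^{iw_{\lambda,\alpha}}P_{\lambda,\alpha})^{\otimes t}$ and averages over the phase parameters to extract individual symmetrized projectors $\mathscr{S}_t(\bigotimes P_{\lambda,\alpha}^{\otimes z_{\lambda,\alpha}})$, invoking Lemma~\ref{SMlem:trivial_solution} to convert the no-nontrivial-integer-solution hypothesis into injectivity of the frequency map $\bm z\mapsto(\sum_{\lambda,\alpha}q_{l,\lambda,\alpha}z_{\lambda,\alpha})_l$; in its third step it concludes $\Omega_t(Z(\mathcal{U}_{n,G,R}))\subset\mathrm{Alg}(\Omega_t(\braket{\bigcup_\gamma\mathcal{S}^\gamma}))$ and combines this with semi-universality. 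Your approach instead works entirely at the Lie-algebra level and analyzes the $\vec\lambda$-block structure of a generic $K\in\mathrm{Comm}(d\Omega_t(\mathfrak{G}))$: semi-universality pins down $\mathfrak{G}=\bigoplus_\lambda\mathfrak{su}(m_\lambda)\oplus\mathcal{V}'$, the $\mathfrak{su}$-commutation forces each block into $\bigotimes_\lambda\mathrm{Hom}_{SU(m_\lambda)}((\mathbb{C}^{m_\lambda})^{\otimes t'_\lambda},(\mathbb{C}^{m_\lambda})^{\otimes t_\lambda})$, and the Schur--Weyl fact that this $\mathrm{Hom}$ is nonzero iff $m_\lambda\mid(t_\lambda-t'_\lambda)$ is exactly what produces the integrality $x_\lambda\in\mathbb{Z}$. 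The paper's route is more elementary (no $SU(m)$ representation theory beyond Schur's lemma) and yields as a byproduct explicit elements of the generated algebra; your route is more structural, makes the origin of the integrality condition transparent, and bypasses the auxiliary index set $\Xi$ and Lemma~\ref{SMlem:trivial_solution} entirely. One small point worth making explicit in your write-up: once you show that under the hypothesis every nonzero $(\vec\lambda,\vec\lambda')$-block has equal types, you should spell out that this forces $[K,d\Omega_t(Z)]=0$ (since $d\Omega_t(Z)$ acts by type-dependent scalars), which together with the already-established $[K,d\Omega_t(\bigoplus_\lambda\mathfrak{su}(m_\lambda))]=0$ gives the desired inclusion.
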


\begin{proof}
    We prove this lemma in three steps.

    In the first step, we show that 
    \begin{align}
        \sum_{(\lambda, \alpha)\in\Xi} e^{iw_{\lambda, \alpha}}P_{\lambda, \alpha}\in\Braket{\bigcup_{\gamma\in\Gamma} \mathcal{S}^\gamma}\ \forall\bm{w}=(w_{\lambda, \alpha})_{(\lambda, \alpha)\in\Xi}\in\mathcal{W}, \label{SMeq:SMlem:trivial_to_comm_eq1_1}
    \end{align}
    where a set $\Xi$, a linear subspace $\mathcal{W}$ of $\mathbb{R}^\Xi$, and projections $P_{\lambda, \alpha}$ are defined by 
    \begin{align}
        &\Xi:=\{(\lambda, \alpha)\ |\ \lambda\in\Lambda, \alpha\in\{1, 2, ..., m_\lambda\}\}, \\
        &\mathcal{W}:=\bm{\Delta}^{-1}(\mathcal{V}), \\
        &\bm{\Delta}(\bm{w}):=(\Delta_\lambda(\bm{w}))_{\lambda\in\Lambda}\ \forall\bm{w}\in\mathbb{R}^\Xi, \label{SMeq:Delta_vector_def}\\
        &\Delta_\lambda(\bm{w}):=\sum_{\alpha=1}^{m_\lambda} w_{\lambda, \alpha}\ \forall\bm{w}\in\mathbb{R}^\Xi, \label{SMeq:Delta_element_def}\\
        &\mathcal{V}:=\bm{f}\left(\mathrm{span}\left(\bigcup_{\gamma\in\Gamma} \mathfrak{s}^\gamma\right)\right), \\
        &P_{\lambda, \alpha}:=F_\lambda(I\otimes \ket{\alpha}\bra{\alpha})F_\lambda^\dag, 
    \end{align}
    and $\ket{\alpha}$ is the $\alpha$th basis vector of $\mathbb{C}^{m_\lambda}$. 
    We take arbitrary $\bm{w}\in\mathcal{W}$. 
    By the definition of $\mathcal{W}$, there exists $A\in\mathrm{span}(\bigcup_{\gamma\in\Gamma} \mathfrak{s}^\gamma)$ such that 
    \begin{align}
        \left(\sum_{\alpha=1}^{m_\lambda} w_{\lambda, \alpha}\right)_{\lambda\in\Lambda}=\bm{f}(A). \label{SMeq:SMlem:trivial_to_comm_eq1_2}
    \end{align}
    Since $A\in\mathrm{span}(\bigcup_{\gamma\in\Gamma} \mathfrak{s}^\gamma)$, $A$ can be written as 
    \begin{align}
        A=\sum_{\gamma\in\Gamma} A^\gamma \label{SMeq:SMlem:trivial_to_comm_eq1_3}
    \end{align}
    with some $A^\gamma\in\mathfrak{s}^\gamma$. 
    By noting that $A^\gamma\in\mathfrak{u}_{n, G, R}$, $A^\gamma$ can be expressed as 
    \begin{align}
        A^\gamma=\sum_{\lambda\in\Lambda} F_\lambda(I\otimes A_\lambda^\gamma)F_\lambda^\dag \label{SMeq:SMlem:trivial_to_comm_eq1_4}
    \end{align}
    with some $A_\lambda^\gamma\in\mathcal{L}(\mathbb{C}^{m_\lambda})$, which implies that
    \begin{align}
        e^{-iA^\gamma}=\sum_{\lambda\in\Lambda} F_\lambda\left(I\otimes e^{-iA_\lambda^\gamma}\right)F_\lambda^\dag. \label{SMeq:SMlem:trivial_to_comm_eq1_5}
    \end{align}
    By the definition of $P_{\lambda, \alpha}$, we have 
    \begin{align}
        \sum_{(\lambda, \alpha)\in\Xi} e^{iw_{\lambda, \alpha}}P_{\lambda, \alpha} 
        =\sum_{\lambda\in\Lambda} F_\lambda\left[I\otimes \left(\sum_{\alpha=1}^{m_\lambda} e^{iw_{\lambda, \alpha}}\ket{\alpha}\bra{\alpha}\right)\right]F_\lambda^\dag. \label{SMeq:SMlem:trivial_to_comm_eq1_6}
    \end{align}
    By Eqs.~\eqref{SMeq:SMlem:trivial_to_comm_eq1_5} and \eqref{SMeq:SMlem:trivial_to_comm_eq1_6}, we get 
    \begin{align}
        \left(\sum_{(\lambda, \alpha)\in\Xi} e^{iw_{\lambda, \alpha}}P_{\lambda, \alpha}\right)\left(\prod_{\gamma\in\Gamma} e^{-iA^\gamma}\right)
        =\sum_{\lambda\in\Lambda} F_\lambda\left(I\otimes \left(\sum_{\alpha=1}^{m_\lambda} e^{iw_{\lambda, \alpha}}\ket{\alpha}\bra{\alpha}\right)\left(\prod_{\gamma\in\Gamma} e^{-iA_\lambda^\gamma}\right)\right)F_\lambda^\dag. \label{SMeq:SMlem:trivial_to_comm_eq1_7}
    \end{align}
    By plugging Eq.~\eqref{SMeq:SMlem:trivial_to_comm_eq1_4} into Eq.~\eqref{SMeq:SMlem:trivial_to_comm_eq1_3}, we get 
    \begin{align}
        A=\sum_{\lambda\in\Lambda} F_\lambda\left(I\otimes\sum_{\gamma\in\Gamma} A_\lambda^\gamma\right)F_\lambda^\dag, 
    \end{align}
    which implies that 
    \begin{align}
        f_\lambda(A) 
        =\mathrm{tr}\left(\sum_{\gamma\in\Gamma} A_\lambda^\gamma\right) 
        =\sum_{\gamma\in\Gamma} \mathrm{tr}(A_\lambda^\gamma). 
    \end{align}
    By Eqs.~\eqref{SMeq:SMlem:trivial_to_comm_eq1_2} and \eqref{SMeq:SMlem:trivial_to_comm_eq1_7}, we get 
    \begin{align}
        \sum_{\alpha=1}^{m_\lambda} w_{\lambda, \alpha} 
        =\sum_{\gamma\in\Gamma} \mathrm{tr}(A_\lambda^\gamma), 
    \end{align}
    which implies that 
    \begin{align}
        \mathrm{det}\left(\left(\sum_{\alpha=1}^{m_\lambda} e^{iw_{\lambda, \alpha}}\ket{\alpha}\bra{\alpha}\right)\left(\prod_{\gamma\in\Gamma} e^{-iA_\lambda^\gamma}\right)\right) 
        =e^{i\sum_{\alpha=1}^{m_\lambda} w_{\lambda, \alpha}}\prod_{\gamma\in\Gamma} e^{-i\mathrm{tr}(A_\lambda^\gamma)} 
        =e^{i(\sum_{\alpha=1}^{m_\lambda} w_{\lambda, \alpha}-\sum_{\gamma\in\Gamma} \mathrm{tr}(A_\lambda^\gamma))}
        =1. \label{SMeq:SMlem:trivial_to_comm_eq1_8}
    \end{align}
    This means that the operator of the l.h.s. of Eq.~\eqref{SMeq:SMlem:trivial_to_comm_eq1_7} is in the form of $\sum_{\lambda\in\Lambda} F_\lambda (I\otimes U_\lambda)F_\lambda^\dag$ with some $U_\lambda\in\mathrm{SU}(m_\lambda)$. 
    By using the semi-universality condition shown as Eq.~\eqref{SMeq:semi_universality_equivalence}, we have 
    \begin{align}
        \left(\sum_{(\lambda, \alpha)\in\Xi} e^{iw_{\lambda, \alpha}}P_{\lambda, \alpha}\right)\left(\prod_{\gamma\in\Gamma} e^{-iA^\gamma}\right)\in\Braket{\bigcup_{\gamma\in\Gamma} \mathcal{S}^\gamma}. \label{SMeq:SMlem:trivial_to_comm_eq1_9}
    \end{align}
    Since $e^{-iA^\gamma}\in\mathcal{S}^\gamma$, we have 
    \begin{align}
        \left(\prod_{\gamma\in\Gamma} e^{-iA^\gamma}\right)^{-1} 
        \in\Braket{\bigcup_{\gamma\in\Gamma} \mathcal{S}^\gamma}. \label{SMeq:SMlem:trivial_to_comm_eq1_10}
    \end{align}
    By multiplying Eq.~\eqref{SMeq:SMlem:trivial_to_comm_eq1_9} and Eq.~\eqref{SMeq:SMlem:trivial_to_comm_eq1_10}, we get Eq.~\eqref{SMeq:SMlem:trivial_to_comm_eq1_1}.

    In the second step, we show that 
    \begin{align}
        \mathscr{S}_t\left(\bigotimes_{l=1}^t P_{\lambda_l, \alpha_l}\right)
        \in\mathrm{Alg}\left(\Omega_t\left(\Braket{\bigcup_{\gamma\in\Gamma} \mathcal{S}^\gamma}\right)\right)\ \forall (\lambda_1, \alpha_1), ..., (\lambda_t, \alpha_t)\in\Xi, \label{SMeq:SMlem:trivial_to_comm_eq2_1}
    \end{align}
    where $\mathrm{Alg}(\cdot)$ is the generated algebra over $\mathbb{C}$, and $\mathscr{S}_t$ is defined by 
    \begin{align}
        \mathscr{S}_t(A)
        :=\frac{1}{t!}\sum_{\sigma\in\mathfrak{S}_t} V_\sigma A V_\sigma^\dag. \label{SMeq:symmetrizer_def}
    \end{align}
    with $V_\sigma$ defined by 
    \begin{align}
        V_\sigma\left(\bigotimes_{\alpha=1}^m \ket{\psi_\alpha}\right)
        =\bigotimes_{\alpha=1}^m \ket{\psi_{\sigma^{-1}(\alpha)}} \label{SMeq:inter_system_permutation}
    \end{align}
    for $\sigma\in\mathfrak{S}_m$. 
    We define $z_{\lambda, \alpha}:=\#\{s\in\{1, 2, ..., t\}\ |\ (\lambda_s, \alpha_s)=(\lambda, \alpha)\}$. 
    By using Lemma~\ref{SMlem:perm_symm_proj}, we have 
    \begin{align}
        \mathscr{S}_t\left(\bigotimes_{s=1}^t P_{\lambda_s, \alpha_s}\right)
        =\mathscr{S}_t\left(\bigotimes_{(\lambda, \alpha)\in\Xi} P_{\lambda, \alpha}^{\otimes z_{\lambda, \alpha}}\right), \label{SMeq:SMlem:trivial_to_comm_eq2_2}
    \end{align}
    where we note that we do not have to specify the order in $\Xi$ due to the property of $\mathscr{S}_t$. 
    By Eq.~\eqref{SMeq:SMlem:trivial_to_comm_eq2_2}, it is sufficient to show that $\mathscr{S}_t\left(\bigotimes_{(\lambda, \alpha)\in\Xi} P_{\lambda, \alpha}^{\otimes z_{\lambda, \alpha}}\right)\in\mathrm{Alg}\left(\Omega_t\left(\bigcup_{\gamma\in\Gamma} \mathcal{S}^\gamma\right)\right)$. 
    We take an arbitrary fixed basis of $\{\bm{q}_l\}_{l=1, ..., L}$ of $\mathcal{W}$ and arbitrary $\theta_1, ..., \theta_L\in\mathbb{R}$. 
    Since $\sum_{l=1}^L \theta_l\bm{q}_l\in\mathcal{W}$, by Eq.~\eqref{SMeq:SMlem:trivial_to_comm_eq1_1}, we have 
    \begin{align}
        \sum_{(\lambda, \alpha)\in\Xi} \exp\left(i\sum_{l=1}^L \theta_l q_{l, \lambda, \alpha}\right)P_{\lambda, \alpha}
        \in\Braket{\bigcup_{\gamma\in\Gamma} \mathcal{S}^\gamma}, 
    \end{align}
    which implies that 
    \begin{align}
        \left(\sum_{(\lambda, \alpha)\in\Xi} \exp\left(i\sum_{l=1}^L \theta_l q_{l, \lambda, \alpha}\right)P_{\lambda, \alpha}\right)^{\otimes t} 
        \in\Omega_t\left(\Braket{\bigcup_{\gamma\in\Gamma} \mathcal{S}^\gamma}\right). \label{SMeq:SMlem:trivial_to_comm_eq2_3}
    \end{align}
    By Lemma~\ref{SMlem:perm_symm_proj}, we have 
    \begin{align}
        \left(\sum_{(\lambda, \alpha)\in\Xi} \exp\left(i\sum_{l=1}^L \theta_l q_{l, \lambda, \alpha}\right) P_{\lambda ,\alpha}\right)^{\otimes t}
        =&\sum_{\bm{z}'\in\mathcal{Z}_t} \frac{t!}{\displaystyle\prod_{(\lambda, \alpha)\in\Xi} z'_{\lambda, \alpha}!} \mathscr{S}_t\left(\bigotimes_{(\lambda, \alpha)\in\Xi} \left(\exp\left(i\sum_{l=1}^L \theta_l q_{l, \lambda, \alpha}\right) P_{\lambda, \alpha}\right)^{\otimes z'_{\lambda, \alpha}}\right) \nonumber\\
        =&\sum_{\bm{z}'\in\mathcal{Z}_t} \frac{t!}{\displaystyle\prod_{(\lambda, \alpha)\in\Xi} z'_{\lambda, \alpha}!} \exp\left(i\sum_{l=1}^L \theta_l \sum_{(\lambda, \alpha)\in\Xi} z'_{\lambda, \alpha}q_{l, \lambda, \alpha}\right)\mathscr{S}_t\left(\bigotimes_{(\lambda, \alpha)\in\Xi} P_{\lambda, \alpha}^{\otimes z'_{\lambda, \alpha}}\right), \label{SMeq:SMlem:trivial_to_comm_eq2_4}
    \end{align}
    where $\mathcal{Z}_t$ is defined by 
    \begin{align}
        \mathcal{Z}_t:=\left\{\bm{z}'\in(\mathbb{Z}_{\geq 0})^\Xi\ \middle|\ \sum_{(\lambda, \alpha)\in\Xi} z'_{\lambda, \alpha}=t\right\}. \label{SMeq:Z_t_def}
    \end{align}
    Equation~\eqref{SMeq:SMlem:trivial_to_comm_eq2_4} implies that 
    \begin{align}
        &\lim_{\Theta\to\infty} \frac{1}{(2\Theta)^L}\int_{-\Theta}^\Theta d\theta_L
        \cdots \int_{-\Theta}^\Theta d\theta_1
        \exp\left(-i\sum_{l=1}^L \theta_l\sum_{(\lambda, \alpha)\in\Xi} q_{l, \lambda, \alpha} z_{\lambda, \alpha}\right) 
        \left(\sum_{(\lambda, \alpha)\in\Xi} \exp\left(i\sum_{l=1}^L \theta_l q_{l, \lambda, \alpha}\right) P_{\lambda ,\alpha}\right)^{\otimes t} \nonumber\\
        =&\sum_{\bm{z}'\in\mathcal{Z}_t} \frac{t!}{\displaystyle\prod_{(\lambda, \alpha)\in\Xi} z'_{\lambda, \alpha}!}
        \prod_{l=1}^L \left(\lim_{\Theta\to\infty} \frac{1}{2\Theta}\int_{-\Theta}^\Theta \exp\left(i\theta_l\left(\sum_{(\lambda, \alpha)\in\Xi} q_{l, \lambda, \alpha} z'_{\lambda, \alpha}-\sum_{(\lambda, \alpha)\in\Xi} q_{l, \lambda, \alpha} z_{\lambda, \alpha}\right)\right) d\theta_l\right)
        \mathscr{S}_t\left(\bigotimes_{(\lambda, \alpha)\in\Xi} P_{\lambda, \alpha}^{\otimes z'_{\lambda, \alpha}}\right) \nonumber\\
        =&\sum_{\bm{z}'\in\mathcal{Z}_t} \frac{t!}{\displaystyle\prod_{(\lambda, \alpha)\in\Xi} z'_{\lambda, \alpha}!}
        \prod_{l=1}^L 
        \delta_{\sum_{(\lambda, \alpha)\in\Xi} q_{l, \lambda, \alpha} z'_{\lambda, \alpha}, \sum_{(\lambda, \alpha)\in\Xi} q_{l, \lambda, \alpha}z_{\lambda, \alpha}}
        \mathscr{S}_t\left(\bigotimes_{(\lambda, \alpha)\in\Xi} P_{\lambda, \alpha}^{\otimes z'_{\lambda, \alpha}}\right). \label{SMeq:SMlem:trivial_to_comm_eq2_5}
    \end{align}
    Since there do not exist nontrivial $(x_\lambda)_{\lambda\in\Lambda}\in\mathbb{Z}^\Lambda$ satisfying Eqs.~\eqref{SMeq:SMthm:general_cond1}, \eqref{SMeq:SMthm:general_cond2}, and \eqref{SMeq:SMthm:general_cond3} by assumption, Lemma~\ref{SMlem:trivial_solution} implies that if $\sum_{(\lambda, \alpha)\in\Xi} z_{\lambda, \alpha}=\sum_{(\lambda, \alpha)\in\Xi} z'_{\lambda, \alpha}=t$ and $\sum_{(\lambda, \alpha)\in\Xi} w_{\lambda, \alpha} z_{\lambda, \alpha}=\sum_{(\lambda, \alpha)\in\Xi} w_{\lambda, \alpha} z'_{\lambda, \alpha}$ for all $(w_{\lambda, \alpha})\in\mathcal{W}$, then we have $\bm{z}=\bm{z}'$. 
    This can be rephrased as 
    \begin{align}
        \prod_{l=1}^L 
        \delta_{\sum_{(\lambda, \alpha)\in\Xi} q_{l, \lambda, \alpha} z'_{\lambda, \alpha}, \sum_{(\lambda, \alpha)\in\Xi} q_{l, \lambda, \alpha} z_{\lambda, \alpha}}
        =\prod_{(\lambda, \alpha)\in\Xi} 
        \delta_{z'_{\lambda, \alpha}, z_{\lambda, \alpha}}. \label{SMeq:SMlem:trivial_to_comm_eq2_6}
    \end{align}
    By Eqs.~\eqref{SMeq:SMlem:trivial_to_comm_eq2_5} and \eqref{SMeq:SMlem:trivial_to_comm_eq2_6}, we get 
    \begin{align}
        &\lim_{\Theta\to\infty} \frac{1}{(2\Theta)^L}\int_{-\Theta}^\Theta d\theta_L
        \cdots \int_{-\Theta}^\Theta d\theta_1
        \exp\left(-i\sum_{l=1}^L \theta_l\sum_{(\lambda, \alpha)\in\Xi} q_{l, \lambda, \alpha} z_{\lambda, \alpha}\right) 
        \left(\sum_{(\lambda, \alpha)\in\Xi} \exp\left(i\sum_{l=1}^L \theta_l q_{l, \lambda, \alpha}\right) P_{\lambda ,\alpha}\right)^{\otimes t} \nonumber\\
        =&\frac{t!}{\displaystyle\prod_{(\lambda, \alpha)\in\Xi} z_{\lambda, \alpha}!} \mathscr{S}_t\left(\bigotimes_{(\lambda, \alpha)\in\Xi} P_{\lambda, \alpha}^{\otimes z_{\lambda, \alpha}}\right). \label{SMeq:SMlem:trivial_to_comm_eq2_7}
    \end{align}
    By Eq.~\eqref{SMeq:SMlem:trivial_to_comm_eq2_3}, the l.h.s. of Eq.~\eqref{SMeq:SMlem:trivial_to_comm_eq2_7} is an element of $\mathrm{Alg}(\Omega_t(\braket{\bigcup_{\gamma\in\Gamma} \mathcal{S}^\gamma}))$. 
    Thus Eq.~\eqref{SMeq:SMlem:trivial_to_comm_eq2_7} implies that $\mathscr{S}_t\left(\bigotimes_{\lambda, \alpha} P_{\lambda, \alpha}^{\otimes z_{\lambda, \alpha}}\right)\in\mathrm{Alg}(\Omega_t(\braket{\bigcup_{\gamma\in\Gamma} \mathcal{S}^\gamma}))$. 
    By combining this with Eq.~\eqref{SMeq:SMlem:trivial_to_comm_eq2_2}, we get Eq.~\eqref{SMeq:SMlem:trivial_to_comm_eq2_1}.

    In the final step, we show Eq.~\eqref{SMeq:SMlem:trivial_to_comm_eq0}. 
    We take arbitrary $U\in\mathrm{Z}(\mathcal{U}_{n, G, R})$. 
    Then, $U$ can be written as 
    \begin{align}
        U 
        =\sum_{\lambda\in\Lambda} F_\lambda(I\otimes u_\lambda I)F_\lambda^\dag 
        =\sum_{(\lambda, \alpha)\in\Xi} u_\lambda P_{\lambda, \alpha} \label{SMeq:SMlem:trivial_to_comm_eq3_1}
    \end{align}
    with some $u_\lambda\in\mathbb{C}$. 
    By the definitions of $\Omega_t$ and $\mathscr{S}_t$ and Eq.~\eqref{SMeq:SMlem:trivial_to_comm_eq3_1}, we get 
    \begin{align}
        \Omega_t(U) 
        =&U^{\otimes t} \nonumber\\
        =&\mathscr{S}_t(U^{\otimes t}) \nonumber\\
        =&\mathscr{S}_t\left(\sum_{(\lambda_1, \alpha_1), ..., (\lambda_t, \alpha_t)\in\Xi} \bigotimes_{s=1}^t u_{\lambda_s} P_{\lambda_s, \alpha_s}\right) \nonumber\\
        =&\sum_{(\lambda_1, \alpha_1), ..., (\lambda_t, \alpha_t)\in\Xi} \left(\prod_{s=1}^t u_{\lambda_s}\right) \mathscr{S}_t\left(\bigotimes_{s=1}^t u_{\lambda_s} P_{\lambda_s, \alpha_s}\right) \nonumber\\
        \in&\mathrm{Alg}\left(\Omega_t\left(\Braket{\bigcup_{\gamma\in\Gamma} \mathcal{S}^\gamma}\right)\right). 
    \end{align}
    Since this holds for all $U\in Z(\mathcal{U}_{n, G, R})$, we have 
    \begin{align}
        \Omega_t(Z(\mathcal{U}_{n, G, R})) 
        \subset\mathrm{Alg}\left(\Omega_t\left(\Braket{\bigcup_{\gamma\in\Gamma} \mathcal{S}^\gamma}\right)\right). 
    \end{align}
    By taking the commutant of the both sides, we get 
    \begin{align}
        \mathrm{Comm}(\Omega_t(Z(\mathcal{U}_{n, G, R}))) 
        \subset\mathrm{Comm}\left(\mathrm{Alg}\left(\Omega_t\left(\Braket{\bigcup_{\gamma\in\Gamma} \mathcal{S}^\gamma}\right)\right)\right) 
        =\mathrm{Comm}\left(\Omega_t\left(\Braket{\bigcup_{\gamma\in\Gamma} \mathcal{S}^\gamma}\right)\right). \label{SMeq:SMlem:trivial_to_comm_eq3_2}
    \end{align}
    Since $\bigcup_{\gamma\in\Gamma} \mathcal{S}^\gamma$ is semi-universal for $\mathcal{U}_{n, G, R}$, we have 
    \begin{align}
        \Omega_t\left(\Braket{\bigcup_{\gamma\in\Gamma} \mathcal{S}^\gamma}\right)\cdot\Omega_t(Z(\mathcal{U}_{n, G, R})) 
        =\Omega_t(\mathcal{U}_{n, G, R}). 
    \end{align}
    By taking the commutant of this equation, we get 
    \begin{align}
        \mathrm{Comm}\left(\Omega_t\left(\Braket{\bigcup_{\gamma\in\Gamma} \mathcal{S}^\gamma}\right)\right)\cap\mathrm{Comm}\left(\Omega_t(Z(\mathcal{U}_{n, G, R}))\right) 
        =\mathrm{Comm}\left(\Omega_t(\mathcal{U}_{n, G, R})\right). \label{SMeq:SMlem:trivial_to_comm_eq3_3}
    \end{align}
    By Eqs.~\eqref{SMeq:SMlem:trivial_to_comm_eq3_2} and \eqref{SMeq:SMlem:trivial_to_comm_eq3_3}, we get Eq.~\eqref{SMeq:SMlem:trivial_to_comm_eq0}. 
\end{proof}

Finally, we show the converse of Lemma~\ref{SMlem:trivial_to_comm_eq}.

\begin{lemma} \label{SMlem:nontrivial_to_comm_neq}
	Let $n, t\in\mathbb{N}$, $\{\mathcal{S}^\gamma\}_{\gamma\in\Gamma}$ be a finite set of connected compact unitary subgroups of $\mathcal{U}_{n, G, R}$, and Eqs.~\eqref{SMeq:SMthm:general_cond1}, \eqref{SMeq:SMthm:general_cond2}, and \eqref{SMeq:SMthm:general_cond3} have a nontrivial integer solution $\bm{x}\in\mathbb{Z}^\Lambda$. 
	Then, 
    \begin{align}
        \mathrm{Comm}\left(\Omega_t\left(\bigcup_{\gamma\in\Gamma} \mathcal{S}^\gamma\right)\right) 
        \neq\mathrm{Comm}(\Omega_t(\mathcal{U}_{n, G, R})). \label{SMeq:SMlem:nontrivial_to_comm_neq0}
    \end{align}
\end{lemma}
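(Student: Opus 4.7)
The plan is to exhibit an operator $X\in\mathcal{L}(\mathcal{H}^{\otimes t})$ that commutes with $U^{\otimes t}$ for every $U\in\bigcup_{\gamma\in\Gamma}\mathcal{S}^\gamma$ yet fails to commute with $U^{\otimes t}$ for some $U\in Z(\mathcal{U}_{n,G,R})\subset\mathcal{U}_{n,G,R}$. Since the inclusion $\mathrm{Comm}(\Omega_t(\mathcal{U}_{n,G,R}))\subset\mathrm{Comm}(\Omega_t(\bigcup_\gamma\mathcal{S}^\gamma))$ is automatic, exhibiting such an $X$ establishes Eq.~\eqref{SMeq:SMlem:nontrivial_to_comm_neq0}. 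I take $X$ to be a ``cross-sector'' operator with respect to the isotypic decomposition $\mathcal{H}^{\otimes t}\cong\bigoplus_{\bm{\lambda}\in\Lambda^t}F_{\bm{\lambda}}\bigl((\bigotimes_s\mathbb{C}^{r_{\lambda_s}})\otimes(\bigotimes_s\mathbb{C}^{m_{\lambda_s}})\bigr)F_{\bm{\lambda}}^\dag$, where $F_{\bm{\lambda}}:=\bigotimes_s F_{\lambda_s}$, with the cross-sector structure dictated by $\bm{x}$.

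First, convert $\bm{x}$ into type vectors $\bm{\tau},\bm{\tau}'\in(\mathbb{Z}_{\geq 0})^\Lambda$ with $\sum_\lambda\tau_\lambda=\sum_\lambda\tau'_\lambda=t$ and $\tau_\lambda-\tau'_\lambda=m_\lambda x_\lambda$. Writing $x_\lambda^\pm:=\max(\pm x_\lambda,0)$, set $\tau_\lambda:=m_\lambda x_\lambda^{+}+p_\lambda$ and $\tau'_\lambda:=m_\lambda x_\lambda^{-}+p_\lambda$; Eq.~\eqref{SMeq:SMthm:general_cond2} forces $\sum_\lambda m_\lambda x_\lambda^{+}=\sum_\lambda m_\lambda x_\lambda^{-}=:s$, and Eq.~\eqref{SMeq:SMthm:general_cond1} yields $s\leq t$, so nonnegative integers $p_\lambda$ with $\sum_\lambda p_\lambda=t-s$ exist. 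Fix orderings $\bm{\lambda}_0,\bm{\lambda}_0'\in\Lambda^t$ of types $\bm{\tau},\bm{\tau}'$ that group tensor factors with equal $\lambda$ contiguously, so that $\bigotimes_s U_{(\lambda_0)_s}=\bigotimes_\lambda U_\lambda^{\otimes\tau_\lambda}$ and similarly for $\bm{\lambda}_0'$.

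Next, construct the local intertwiner $M_\lambda\colon(\mathbb{C}^{m_\lambda})^{\otimes\tau_\lambda}\to(\mathbb{C}^{m_\lambda})^{\otimes\tau'_\lambda}$ from the fully antisymmetric tensor $\ket{\Omega_\lambda}\in(\mathbb{C}^{m_\lambda})^{\otimes m_\lambda}$, which satisfies $U_\lambda^{\otimes m_\lambda}\ket{\Omega_\lambda}=\det(U_\lambda)\ket{\Omega_\lambda}$: contract $x_\lambda$ copies of $\bra{\Omega_\lambda}$ when $x_\lambda>0$, insert $|x_\lambda|$ copies of $\ket{\Omega_\lambda}$ when $x_\lambda<0$, and take the identity on the remaining $p_\lambda$ legs. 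This yields the transformation law $M_\lambda U_\lambda^{\otimes\tau_\lambda}=\det(U_\lambda)^{x_\lambda}U_\lambda^{\otimes\tau'_\lambda}M_\lambda$ for all $U_\lambda\in\mathrm{U}(m_\lambda)$. Letting $N$ be any nonzero operator between the irrep-space tensor factors of $\bm{\lambda}_0$ and $\bm{\lambda}_0'$, set $X:=F_{\bm{\lambda}_0'}\bigl(N\otimes\bigotimes_\lambda M_\lambda\bigr)F_{\bm{\lambda}_0}^\dag$. A direct computation then gives $XU^{\otimes t}=\phi(U)\cdot U^{\otimes t}X$ with $\phi(U):=\prod_\lambda\det(U_\lambda)^{x_\lambda}$. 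For $U\in\bigcup_\gamma\mathcal{S}^\gamma$, compactness and connectedness of $\mathcal{S}^\gamma$ yield $U=e^{-iA}$ with $A\in\mathfrak{s}^\gamma$, so $\phi(U)=\exp(-i\braket{\bm{x},\bm{f}(A)})=1$ by Eq.~\eqref{SMeq:SMthm:general_cond3} (since $\bm{f}(A)\in\mathcal{V}$). Conversely, for $U=\sum_\lambda F_\lambda(I\otimes e^{i\theta_\lambda}I)F_\lambda^\dag\in Z(\mathcal{U}_{n,G,R})$ one gets $\phi(U)=\exp(i\sum_\lambda m_\lambda x_\lambda\theta_\lambda)$, which can be made $\neq 1$ by choice of $\bm{\theta}$ because $\bm{x}\neq\bm{0}$ (and $m_\lambda\geq 1$), so $X\notin\mathrm{Comm}(\Omega_t(\mathcal{U}_{n,G,R}))$.

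The main obstacle is the bookkeeping around the intertwiner identity for $M_\lambda$---specifically, tracking tensor-leg orderings so that the partial contractions with $\bra{\Omega_\lambda}$ or insertions of $\ket{\Omega_\lambda}$ align with the grouped $U_\lambda^{\otimes\tau_\lambda}$ action, and handling edge cases such as $m_\lambda=1$ (where $M_\lambda$ reduces to a scalar) or $\tau_\lambda=0$ (empty tensor factor). Once that identity is established, both commutation tests collapse to the single scalar $\phi(U)$, whose triviality on $\bigcup_\gamma\mathcal{S}^\gamma$ is exactly the content of Eq.~\eqref{SMeq:SMthm:general_cond3} and whose nontriviality on the center $Z(\mathcal{U}_{n,G,R})$ witnesses that the two commutants differ.
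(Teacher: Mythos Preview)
Your proof is correct and follows essentially the same approach as the paper: both construct a ``cross-sector'' operator from copies of the totally antisymmetric state $\ket{\Omega_\lambda}=\ket{\chi(\mathbb{C}^{m_\lambda})}$, exploit the determinant transformation law $U_\lambda^{\otimes m_\lambda}\ket{\Omega_\lambda}=\det(U_\lambda)\ket{\Omega_\lambda}$ to obtain the scalar $\phi(U)=\prod_\lambda\det(U_\lambda)^{x_\lambda}$, and then use Eq.~\eqref{SMeq:SMthm:general_cond3} to show $\phi\equiv 1$ on $\bigcup_\gamma\mathcal{S}^\gamma$ while $\bm{x}\neq\bm{0}$ makes $\phi$ nontrivial on $Z(\mathcal{U}_{n,G,R})$. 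The only organizational difference is that the paper works at the Lie-algebra level (with $\omega_t$ and the eigenvalue identity $\omega_{m_\lambda}(A_\lambda)\ket{\chi}=\mathrm{tr}(A_\lambda)\ket{\chi}$) and then invokes Lemma~\ref{SMlem:Lie_group_algebra_comm} to pass to $\Omega_t$, whereas you work directly at the group level---a slightly more direct packaging of the same idea.
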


\begin{proof}
    We prove this lemma in three steps.

    In the first step, we show that 
    \begin{align}
        \omega_t(A)\ket{\Phi(\bm{p})}=\left(\sum_{\lambda\in\Lambda} m_\lambda\mathrm{tr}(A_\lambda)\right)\ket{\Phi(\bm{p})}, \label{SMeq:SMlem:nontrivial_to_comm_neq1_1}
    \end{align}
    for all $A\in\mathfrak{u}_{n, G, R}$ and $\bm{p}\in(\mathbb{Z}_{\geq 0})^\Lambda$ satisfying $\sum_{\lambda\in\Lambda} m_\lambda p_\lambda=t$, where $A_\lambda$ is defined by Eq.~\eqref{SMeq:sym_op_decomp}, and 
    \begin{align}
        \ket{\Phi(\bm{p})}:=\bigotimes_{\lambda\in\Lambda} \left[F_\lambda^{\otimes m_\lambda}(\ket{\psi_\lambda}^{\otimes m_\lambda}\otimes \ket{\chi(\mathbb{C}^{m_\lambda})})\right]^{\otimes p_\lambda}, 
    \end{align}
    with arbitrarily chosen states $\ket{\psi_\lambda}\in\mathbb{C}^{r_\lambda}$ and 
    \begin{align}
        &\ket{\chi(\mathbb{C}^m)}:=\frac{1}{\sqrt{m!}}\sum_{\sigma\in\mathfrak{S}_m} \mathrm{sgn}(\sigma)\bigotimes_{\alpha=1}^m \ket{\sigma(\alpha)}, \label{SMeq:antisymmetric_state_def}\\
		&\omega_t(A):=\sum_{s=1}^t I^{\otimes s-1}\otimes A\otimes I^{\otimes t-s}. \label{SMeq:omega_def}
	\end{align} 
    By noting that $(O_1\otimes I+I\otimes O_2)(\ket{\phi_1}\otimes\ket{\phi_2})=(\alpha_1+\alpha_2)(\ket{\phi_1}\otimes\ket{\phi_2})$ when $O_j\ket{\phi_j}=\alpha_j\ket{\phi_j}$, for the proof of \eqref{SMeq:SMlem:nontrivial_to_comm_neq1_1}, it is sufficient to show that 
    \begin{align}
        \omega_{m_\lambda}(A)\left[F_\lambda^{\otimes m_\lambda}(\ket{\psi_\lambda}^{\otimes m_\lambda}\otimes \ket{\chi(\mathbb{C}^{m_\lambda})})\right]
        =\mathrm{tr}(A_\lambda)\left[F_\lambda^{\otimes m_\lambda}(\ket{\psi_\lambda}^{\otimes m_\lambda}\otimes \ket{\chi(\mathbb{C}^{m_\lambda})})\right] \label{SMeq:SMlem:nontrivial_to_comm_neq1_2}
    \end{align}
    for all $\lambda\in\Lambda$. 
    By the decomposition of $A$, we have 
    \begin{align}
        AF_\lambda
        =\sum_{\mu\in\Lambda} F_\mu(I\otimes A)F_\mu^\dag F_\lambda. \label{SMeq:SMlem:nontrivial_to_comm_neq1_3}
    \end{align}
    By the definition of $F_\lambda$'s, we have 
    \begin{align}
        F_\mu^\dag F_\lambda=
        \begin{cases}
			I & \textrm{if}\ \lambda=\mu \\
			0 & \textrm{if}\ \lambda\neq\mu. 
		\end{cases}
        \label{SMeq:SMlem:nontrivial_to_comm_neq1_4}
    \end{align}
    By plugging Eq.~\eqref{SMeq:SMlem:nontrivial_to_comm_neq1_4} into Eq.~\eqref{SMeq:SMlem:nontrivial_to_comm_neq1_3}, we get 
    \begin{align}
        AF_\lambda=F_\lambda(I\otimes A_\lambda). 
    \end{align}
    By the definition of $\omega_{m_\lambda}$ and this equation, we get 
    \begin{align}
        \omega_{m_\lambda}(A)F_\lambda^{\otimes m_\lambda}
        =&\sum_{s=1}^{m_\lambda} F_\lambda^{\otimes s-1}\otimes AF_\lambda\otimes F_\lambda^{\otimes m_\lambda-s} \nonumber\\
        =&\sum_{s=1}^{m_\lambda} F_\lambda^{\otimes s-1}\otimes F_\lambda(I\otimes A_\lambda)\otimes F_\lambda^{\otimes m_\lambda-s} \nonumber\\
        =&F_\lambda^{\otimes m_\lambda}\omega_{m_\lambda}(I\otimes A_\lambda) \nonumber\\
        =&F_\lambda^{\otimes m_\lambda}(I^{\otimes m_\lambda}\otimes \omega_{m_\lambda}(A_\lambda)), 
    \end{align}
    which implies that 
    \begin{align}
        \omega_{m_\lambda}(A)\left[F_\lambda^{\otimes m_\lambda}(\ket{\psi_\lambda}^{\otimes m_\lambda}\otimes \ket{\chi(\mathbb{C}^{m_\lambda})})\right]
        =F_\lambda^{\otimes m_\lambda}(\ket{\psi_\lambda}^{\otimes m_\lambda}\otimes\omega_{m_\lambda}(A_\lambda)\ket{\chi(\mathbb{C}^{m_\lambda})}). 
    \end{align}
    By applying Lemma~\ref{SMlem:antisymmetric_state} to the r.h.s. of this equation, we get Eq.~\eqref{SMeq:SMlem:nontrivial_to_comm_neq1_2}.

    In the second step, we show that 
    \begin{align}
        \mathrm{Comm}\left(\omega_t\left(\bigcup_{\gamma\in\Gamma} \mathfrak{s}^\gamma\right)\right)
        \neq\mathrm{Comm}(\omega_t(\mathfrak{u}_{n, G, R})). \label{SMeq:SMlem:nontrivial_to_comm_neq2_1}
    \end{align}
    For the proof of this, we construct an operator $O$ such that $O\in\mathrm{Comm}(\bigcup_{\gamma\in\Gamma} \mathfrak{s}^\gamma)$ and $O\not\in\mathrm{Comm}(\mathfrak{u}_{n, G, R})$. 
    By Lemma~\ref{SMlem:trivial_solution}, we can take two different vectors $\bm{y}, \bm{y}'\in(\mathbb{Z}_{\geq 0})^\Lambda$ satisfying $\sum_{\lambda\in\Lambda} m_\lambda y_\lambda=\sum_{\lambda\in\Lambda} m_\lambda y'_\lambda\leq t$ and $\sum_{\lambda\in\Lambda} y_\lambda v_\lambda=\sum_{\lambda\in\Lambda} y'_\lambda v_\lambda$ for all $\bm{v}\in \bm{f}(\mathrm{span}(\bigcup_{\lambda\in\Lambda} \mathfrak{s}^\gamma))$. 
    We define $O$ by 
    \begin{align}
        O:=\ket{\Phi(\bm{y})}\bra{\Phi(\bm{y}')}\otimes I^{\otimes u}, 
    \end{align}
    where $u:=t-\sum_{\lambda\in\Lambda} m_\lambda y_\lambda$. 
    By Eq.~\eqref{SMeq:SMlem:nontrivial_to_comm_neq1_1}, we have 
    \begin{align}
        [\omega_t(A), O]=\left(\sum_{\lambda\in\Lambda} y_\lambda\mathrm{tr}(A_\lambda)-\sum_{\lambda\in\Lambda} y'_\lambda\mathrm{tr}(A_\lambda)\right)O=0\ \forall A\in\bigcup_{\gamma\in\Gamma} \mathfrak{s}^\gamma, 
    \end{align}
    which means that $O\in\mathrm{Comm}(\bigcup_{\gamma\in\Gamma} \mathfrak{s}^\gamma)$. 
    Since we have $\bm{y}$ and $\bm{y}'$ are different, we can take $\kappa\in\Lambda$ such that $y_\kappa\neq y'_\kappa$. 
    We define $\widetilde{P}_\kappa:=\sum_{\alpha=1}^{m_\kappa} P_{\kappa, \alpha}$. 
    Then, we have $\widetilde{P}_\kappa\in\mathfrak{u}_{n, G, R}$, and Eq.~\eqref{SMeq:SMlem:nontrivial_to_comm_neq1_1} implies that 
    \begin{align}
        [\omega_t(\widetilde{P}_\kappa), O]
        =(y_\mu \mathrm{tr}(\widetilde{P}_\kappa)-y'_\mu \mathrm{tr}(\widetilde{P}_\kappa))O 
        =m_\lambda(y_\kappa-y'_\kappa)O
        \neq 0, 
    \end{align}
    which means that $O\not\in\mathrm{Comm}(\mathfrak{u}_{n, G, R})$. 
    Thus we have proven Eq.~\eqref{SMeq:SMlem:nontrivial_to_comm_neq0}.

    Finally, we show Eq.~\eqref{SMeq:SMlem:nontrivial_to_comm_neq0}. 
    We note that  
    \begin{align}
        &\mathrm{Comm}\left(\Omega_t\left(\bigcup_{\gamma\in\Gamma} \mathcal{S}^\gamma\right)\right) 
        =\mathrm{Comm}\left(\bigcup_{\gamma\in\Gamma} \Omega_t(\mathcal{S}^\gamma)\right) 
        =\bigcap_{\gamma\in\Gamma} \mathrm{Comm}\left(\Omega_t(\mathcal{S}^\gamma)\right), \label{SMeq:SMlem:nontrivial_to_comm_neq3_1}\\
        &\mathrm{Comm}\left(\omega_t\left(\bigcup_{\gamma\in\Gamma} \mathfrak{s}^\gamma\right)\right) 
        =\mathrm{Comm}\left(\bigcup_{\gamma\in\Gamma} \omega_t(\mathfrak{s}^\gamma)\right) 
        =\bigcap_{\gamma\in\Gamma} \mathrm{Comm}\left(\omega_t(\mathfrak{s}^\gamma)\right). \label{SMeq:SMlem:nontrivial_to_comm_neq3_2}
    \end{align}
    By Lemma~\ref{SMlem:Lie_group_algebra_comm}, we have 
    \begin{align}
        \mathrm{Comm}\left(\Omega_t(\mathcal{S}^\gamma)\right) 
        =\mathrm{Comm}\left(\omega_t(\mathfrak{s}^\gamma)\right). \label{SMeq:SMlem:nontrivial_to_comm_neq3_3}
    \end{align}
    By Eqs.~\eqref{SMeq:SMlem:nontrivial_to_comm_neq3_1}, \eqref{SMeq:SMlem:nontrivial_to_comm_neq3_2}, and \eqref{SMeq:SMlem:nontrivial_to_comm_neq3_3}, we get 
    \begin{align}
        \mathrm{Comm}\left(\Omega_t\left(\bigcup_{\gamma\in\Gamma} \mathcal{S}^\gamma\right)\right) 
        =\mathrm{Comm}\left(\omega_t\left(\bigcup_{\gamma\in\Gamma} \mathfrak{s}^\gamma\right)\right). \label{SMeq:SMlem:nontrivial_to_comm_neq3_4}
    \end{align}
    By using Lemma~\ref{SMlem:Lie_group_algebra_comm} again, we have 
    \begin{align}
        \mathrm{Comm}\left(\omega_t(\mathfrak{u}_{n, G, R})\right)
        =\mathrm{Comm}\left(\Omega_t(\mathcal{U}_{n, G, R})\right). \label{SMeq:SMlem:nontrivial_to_comm_neq3_5}
    \end{align}
    Equations~\eqref{SMeq:SMlem:nontrivial_to_comm_neq3_4} and \eqref{SMeq:SMlem:nontrivial_to_comm_neq3_5} imply the equivalence between Eq.~\eqref{SMeq:SMlem:nontrivial_to_comm_neq2_1} and Eq.~\eqref{SMeq:SMlem:nontrivial_to_comm_neq0}. 
\end{proof}

By combining the lemmas above, we get the proof of Theorem~\ref{SMthm:general} as follows: \\

\noindent
\textit{Proof of Theorem~\ref{SMthm:general}.}
The ``if'' part follows from the combination of Lemmas~\ref{SMlem:design_commutant} and \ref{SMlem:trivial_to_comm_eq}, and the ``only if'' part follows from the combination of Lemmas~\ref{SMlem:design_commutant} and \ref{SMlem:nontrivial_to_comm_neq}. 
When $R$ can be written as $T^{\otimes n}$ with a single-qudit representation $T$, Eq.~\eqref{SMeq:SMthm:general_cond2} is implied by Eq.~\eqref{SMeq:SMthm:general_cond3}, because $I\in\mathfrak{u}_{n, G, R}^\gamma$ with some $\gamma\in\Gamma$ (actually for all $\gamma\in\Gamma$) and $f_\lambda(I)=m_\lambda$. 
By Lemma~\ref{SMlem:simplified_condition}, Eq.~\eqref{SMeq:SMthm:general_cond3} is equivalent to Eq.~\eqref{SMeq:SMthm:general_cond4}. 
By combining these two statements, we can confirm that Eqs.~\eqref{SMeq:SMthm:general_cond2} and \eqref{SMeq:SMthm:general_cond3} are equivalent to Eq.~\eqref{SMeq:SMthm:general_cond4}. 
\hfill $\square$

\section{Conclusion and Discussion} \label{sec:conclusion}
In this work, we have proposed a general method for calculating the maximal $t$ such that the random circuits with a gate set of connected compact unitary subgroups form asymptotic symmetric unitary $t$-designs. 
In particular, we have explicitly identified the tight bound on the maximal achievable order of unitary designs of symmetric local random circuits in the cases of $\mathbb{Z}_2$, $\mathrm{U}(1)$, and $\mathrm{SU}(2)$ symmetries.
Although we have focused on the above symmetries, our method is general and useful for calculating the maximal order of design for other symmetries as long as the gate set satisfies the semi-universality. 
On the other hand, symmetric random circuits that do not satisfy the semi-university do not generate asymptotic symmetric unitary $2$-designs. 
We can therefore show the maximal order of designs of arbitrary symmetric random circuits, once we know if a given gate set satisfies the semi-universality. 
In this sense, we have fully characterized the randomness of symmetric local random circuits.

Although we have only considered the local random circuit where we apply one gate at each time step, the maximal order of design is the same for a random circuit with other architectures, such as the brick-wall architecture, as long as the circuit cannot be separated into two independent parts and the representation is the tensor product of a single-qudit representation.

It is an important open problem to derive the rate to generate an asymptotic symmetric unitary $t$-design in symmetric local random circuits. 
Without any symmetry, it has been shown recently that local random circuits are unitary $t$-designs if the circuit depth is linear in $t$ \cite{chen2024incompressibilityspectralgapsrandom}. 
It would be interesting to ask if the $t$-dependence on the convergence rate is the same under a symmetry. 
Moreover, while $n$-qubit local random circuits without any symmetry have been shown to form unitary $t$-designs with a logarithmic depth in $n$ \cite{schuster2024random}, the situation is completely different under a symmetry: it is observed that symmetric circuits require superlinear depth in the case of $\mathrm{U}(1)$ symmetry~\cite{hearth2025unitary} and $\mathrm{SU}(2)$ symmetry~\cite{li2023designs}. 
Therefore, it is desirable to characterize how the convergence rate depends on the qubit count $n$ under general symmetry.
In addition, we believe that our work will open up new directions for future research. 
In the proof of Lemma~\ref{SMlem:nontrivial_to_comm_neq}, we have found a conserved quantity on $t$-copy states which evolve under symmetric and local dynamics. 
To investigate the consequence of such conservation law for physical properties, such as thermalization and entanglement dynamics, would also be interesting.

\section*{Acknowledgements}
The authors wish to thank 
Iman Marvian, 
Hiroyasu Tajima, 
Janek Denzler,
and Zongping Gong
for insightful discussions. 
Y.M. is supported by JSPS KAKENHI Grant No. JP23KJ0421.
R.S. is supported by the BMBF (PhoQuant, Grant No. 13N16103).
This research is funded in part by the
Gordon and Betty Moore Foundation's EPiQS Initiative,
Grant GBMF8683 to T.S.
N.Y. wishes to thank JST PRESTO No. JPMJPR2119, JST ASPIRE Grant Number JPMJAP2316,
and the support from IBM Quantum.
This work was supported by JST Grant Number JPMJPF2221, JST ERATO Grant Number JPMJER2302, and JST CREST Grant Number JPMJCR23I4, Japan.

Note Added: During the preparation of this article, we became aware of independent work by 
Austin Hulse, Hanqing Liu, and Iman Marvian ~\cite{hulse2024unitary}, which studies similar questions and was posted on arXiv concurrently with the present paper.
Both have arrived at the same result on the maximal order of unitary designs under the $\mathrm{U}(1)$ and $\mathrm{SU}(2)$ symmetries. 
Reference~\cite{hulse2024unitary} has assumed conjectures about combinatorial identities, which are introduced as Eqs.~(86) and (120) of the version~1 of their manuscript for the proof of general $k$-local cases.
In our work, we have provided a proof that is independent of any conjectures.

\bibliography{bib.bib}

%apsrev4-2.bst 2019-01-14 (MD) hand-edited version of apsrev4-1.bst
%Control: key (0)
%Control: author (8) initials jnrlst
%Control: editor formatted (1) identically to author
%Control: production of article title (0) allowed
%Control: page (0) single
%Control: year (1) truncated
%Control: production of eprint (0) enabled
\begin{thebibliography}{53}%
\makeatletter
\providecommand \@ifxundefined [1]{%
 \@ifx{#1\undefined}
}%
\providecommand \@ifnum [1]{%
 \ifnum #1\expandafter \@firstoftwo
 \else \expandafter \@secondoftwo
 \fi
}%
\providecommand \@ifx [1]{%
 \ifx #1\expandafter \@firstoftwo
 \else \expandafter \@secondoftwo
 \fi
}%
\providecommand \natexlab [1]{#1}%
\providecommand \enquote  [1]{``#1''}%
\providecommand \bibnamefont  [1]{#1}%
\providecommand \bibfnamefont [1]{#1}%
\providecommand \citenamefont [1]{#1}%
\providecommand \href@noop [0]{\@secondoftwo}%
\providecommand \href [0]{\begingroup \@sanitize@url \@href}%
\providecommand \@href[1]{\@@startlink{#1}\@@href}%
\providecommand \@@href[1]{\endgroup#1\@@endlink}%
\providecommand \@sanitize@url [0]{\catcode `\\12\catcode `\$12\catcode `\&12\catcode `\#12\catcode `\^12\catcode `\_12\catcode `\%12\relax}%
\providecommand \@@startlink[1]{}%
\providecommand \@@endlink[0]{}%
\providecommand \url  [0]{\begingroup\@sanitize@url \@url }%
\providecommand \@url [1]{\endgroup\@href {#1}{\urlprefix }}%
\providecommand \urlprefix  [0]{URL }%
\providecommand \Eprint [0]{\href }%
\providecommand \doibase [0]{https://doi.org/}%
\providecommand \selectlanguage [0]{\@gobble}%
\providecommand \bibinfo  [0]{\@secondoftwo}%
\providecommand \bibfield  [0]{\@secondoftwo}%
\providecommand \translation [1]{[#1]}%
\providecommand \BibitemOpen [0]{}%
\providecommand \bibitemStop [0]{}%
\providecommand \bibitemNoStop [0]{.\EOS\space}%
\providecommand \EOS [0]{\spacefactor3000\relax}%
\providecommand \BibitemShut  [1]{\csname bibitem#1\endcsname}%
\let\auto@bib@innerbib\@empty
%</preamble>
\bibitem [{\citenamefont {Noether}(1918)}]{Noether1918}%
  \BibitemOpen
  \bibfield  {author} {\bibinfo {author} {\bibfnamefont {E.}~\bibnamefont {Noether}},\ }\bibfield  {title} {\bibinfo {title} {Invariante variationsprobleme},\ }\href {http://eudml.org/doc/59024} {\bibfield  {journal} {\bibinfo  {journal} {Nachr. Ges. Wiss. Gottingen}\ }\textbf {\bibinfo {volume} {1918}},\ \bibinfo {pages} {235} (\bibinfo {year} {1918})}\BibitemShut {NoStop}%
\bibitem [{\citenamefont {Nambu}(1960)}]{nambu1960axial}%
  \BibitemOpen
  \bibfield  {author} {\bibinfo {author} {\bibfnamefont {Y.}~\bibnamefont {Nambu}},\ }\bibfield  {title} {\bibinfo {title} {Axial vector current conservation in weak interactions},\ }\href {https://doi.org/10.1103/PhysRevLett.4.380} {\bibfield  {journal} {\bibinfo  {journal} {Phys. Rev. Lett.}\ }\textbf {\bibinfo {volume} {4}},\ \bibinfo {pages} {380} (\bibinfo {year} {1960})}\BibitemShut {NoStop}%
\bibitem [{\citenamefont {Nambu}\ and\ \citenamefont {Jona-Lasinio}(1961)}]{nambu1961dynamical}%
  \BibitemOpen
  \bibfield  {author} {\bibinfo {author} {\bibfnamefont {Y.}~\bibnamefont {Nambu}}\ and\ \bibinfo {author} {\bibfnamefont {G.}~\bibnamefont {Jona-Lasinio}},\ }\bibfield  {title} {\bibinfo {title} {Dynamical model of elementary particles based on an analogy with superconductivity. i},\ }\href {https://doi.org/10.1103/PhysRev.122.345} {\bibfield  {journal} {\bibinfo  {journal} {Phys. Rev.}\ }\textbf {\bibinfo {volume} {122}},\ \bibinfo {pages} {345} (\bibinfo {year} {1961})}\BibitemShut {NoStop}%
\bibitem [{\citenamefont {Goldstone}(1961)}]{goldstone1961field}%
  \BibitemOpen
  \bibfield  {author} {\bibinfo {author} {\bibfnamefont {J.}~\bibnamefont {Goldstone}},\ }\bibfield  {title} {\bibinfo {title} {Field theories with superconductor solutions},\ }\href {https://doi.org/10.1007/BF02812722} {\bibfield  {journal} {\bibinfo  {journal} {Nuovo Cim.}\ }\textbf {\bibinfo {volume} {19}},\ \bibinfo {pages} {154} (\bibinfo {year} {1961})}\BibitemShut {NoStop}%
\bibitem [{\citenamefont {Nambu}(2009)}]{nambu2009nobel}%
  \BibitemOpen
  \bibfield  {author} {\bibinfo {author} {\bibfnamefont {Y.}~\bibnamefont {Nambu}},\ }\bibfield  {title} {\bibinfo {title} {Nobel lecture: Spontaneous symmetry breaking in particle physics: A case of cross fertilization},\ }\href {https://doi.org/10.1103/RevModPhys.81.1015} {\bibfield  {journal} {\bibinfo  {journal} {Rev. Mod. Phys.}\ }\textbf {\bibinfo {volume} {81}},\ \bibinfo {pages} {1015} (\bibinfo {year} {2009})}\BibitemShut {NoStop}%
\bibitem [{\citenamefont {Senthil}\ \emph {et~al.}(2004{\natexlab{a}})\citenamefont {Senthil}, \citenamefont {Vishwanath}, \citenamefont {Balents}, \citenamefont {Sachdev},\ and\ \citenamefont {Fisher}}]{senthil2004deconfined}%
  \BibitemOpen
  \bibfield  {author} {\bibinfo {author} {\bibfnamefont {T.}~\bibnamefont {Senthil}}, \bibinfo {author} {\bibfnamefont {A.}~\bibnamefont {Vishwanath}}, \bibinfo {author} {\bibfnamefont {L.}~\bibnamefont {Balents}}, \bibinfo {author} {\bibfnamefont {S.}~\bibnamefont {Sachdev}},\ and\ \bibinfo {author} {\bibfnamefont {M.~P.~A.}\ \bibnamefont {Fisher}},\ }\bibfield  {title} {\bibinfo {title} {Deconfined quantum critical points},\ }\href {https://doi.org/10.1126/science.1091806} {\bibfield  {journal} {\bibinfo  {journal} {Science}\ }\textbf {\bibinfo {volume} {303}},\ \bibinfo {pages} {1490} (\bibinfo {year} {2004}{\natexlab{a}})}\BibitemShut {NoStop}%
\bibitem [{\citenamefont {Senthil}\ \emph {et~al.}(2004{\natexlab{b}})\citenamefont {Senthil}, \citenamefont {Balents}, \citenamefont {Sachdev}, \citenamefont {Vishwanath},\ and\ \citenamefont {Fisher}}]{senthil2004quantum}%
  \BibitemOpen
  \bibfield  {author} {\bibinfo {author} {\bibfnamefont {T.}~\bibnamefont {Senthil}}, \bibinfo {author} {\bibfnamefont {L.}~\bibnamefont {Balents}}, \bibinfo {author} {\bibfnamefont {S.}~\bibnamefont {Sachdev}}, \bibinfo {author} {\bibfnamefont {A.}~\bibnamefont {Vishwanath}},\ and\ \bibinfo {author} {\bibfnamefont {M.~P.~A.}\ \bibnamefont {Fisher}},\ }\bibfield  {title} {\bibinfo {title} {Quantum criticality beyond the landau-ginzburg-wilson paradigm},\ }\href {https://doi.org/10.1103/PhysRevB.70.144407} {\bibfield  {journal} {\bibinfo  {journal} {Phys. Rev. B}\ }\textbf {\bibinfo {volume} {70}},\ \bibinfo {pages} {144407} (\bibinfo {year} {2004}{\natexlab{b}})}\BibitemShut {NoStop}%
\bibitem [{\citenamefont {Sandvik}(2007)}]{sandvik2007evidence}%
  \BibitemOpen
  \bibfield  {author} {\bibinfo {author} {\bibfnamefont {A.~W.}\ \bibnamefont {Sandvik}},\ }\bibfield  {title} {\bibinfo {title} {Evidence for deconfined quantum criticality in a two-dimensional heisenberg model with four-spin interactions},\ }\href {https://doi.org/10.1103/PhysRevLett.98.227202} {\bibfield  {journal} {\bibinfo  {journal} {Phys. Rev. Lett.}\ }\textbf {\bibinfo {volume} {98}},\ \bibinfo {pages} {227202} (\bibinfo {year} {2007})}\BibitemShut {NoStop}%
\bibitem [{\citenamefont {Shor}(1995)}]{shor1995scheme}%
  \BibitemOpen
  \bibfield  {author} {\bibinfo {author} {\bibfnamefont {P.~W.}\ \bibnamefont {Shor}},\ }\bibfield  {title} {\bibinfo {title} {Scheme for reducing decoherence in quantum computer memory},\ }\href {https://doi.org/10.1103/PhysRevA.52.R2493} {\bibfield  {journal} {\bibinfo  {journal} {Phys. Rev. A}\ }\textbf {\bibinfo {volume} {52}},\ \bibinfo {pages} {R2493} (\bibinfo {year} {1995})}\BibitemShut {NoStop}%
\bibitem [{\citenamefont {Steane}(1996)}]{steane1996error}%
  \BibitemOpen
  \bibfield  {author} {\bibinfo {author} {\bibfnamefont {A.~M.}\ \bibnamefont {Steane}},\ }\bibfield  {title} {\bibinfo {title} {Error correcting codes in quantum theory},\ }\href {https://doi.org/10.1103/PhysRevLett.77.793} {\bibfield  {journal} {\bibinfo  {journal} {Phys. Rev. Lett.}\ }\textbf {\bibinfo {volume} {77}},\ \bibinfo {pages} {793} (\bibinfo {year} {1996})}\BibitemShut {NoStop}%
\bibitem [{\citenamefont {Calderbank}\ and\ \citenamefont {Shor}(1996)}]{calderbank1996good}%
  \BibitemOpen
  \bibfield  {author} {\bibinfo {author} {\bibfnamefont {A.~R.}\ \bibnamefont {Calderbank}}\ and\ \bibinfo {author} {\bibfnamefont {P.~W.}\ \bibnamefont {Shor}},\ }\bibfield  {title} {\bibinfo {title} {Good quantum error-correcting codes exist},\ }\href {https://doi.org/10.1103/PhysRevA.54.1098} {\bibfield  {journal} {\bibinfo  {journal} {Phys. Rev. A}\ }\textbf {\bibinfo {volume} {54}},\ \bibinfo {pages} {1098} (\bibinfo {year} {1996})}\BibitemShut {NoStop}%
\bibitem [{nielsen2000quantum()}]{nielsen2000quantum}%
  \BibitemOpen
  \bibinfo {note} {M. A. Nielsen and I. Chuang, {\it Quantum Computation and Quantum Information}. (Cambridge University Press, Cambridge, 2000)}\BibitemShut {NoStop}%
\bibitem [{\citenamefont {Eastin}\ and\ \citenamefont {Knill}(2009)}]{eastin2009restrictions}%
  \BibitemOpen
  \bibfield  {author} {\bibinfo {author} {\bibfnamefont {B.}~\bibnamefont {Eastin}}\ and\ \bibinfo {author} {\bibfnamefont {E.}~\bibnamefont {Knill}},\ }\bibfield  {title} {\bibinfo {title} {Restrictions on transversal encoded quantum gate sets},\ }\href {https://doi.org/10.1103/PhysRevLett.102.110502} {\bibfield  {journal} {\bibinfo  {journal} {Phys. Rev. Lett.}\ }\textbf {\bibinfo {volume} {102}},\ \bibinfo {pages} {110502} (\bibinfo {year} {2009})}\BibitemShut {NoStop}%
\bibitem [{\citenamefont {Wen}(1995)}]{wen1995topological}%
  \BibitemOpen
  \bibfield  {author} {\bibinfo {author} {\bibfnamefont {X.-G.}\ \bibnamefont {Wen}},\ }\bibfield  {title} {\bibinfo {title} {Topological orders and edge excitations in fractional quantum hall states},\ }\href {https://doi.org/10.1080/00018739500101566} {\bibfield  {journal} {\bibinfo  {journal} {Advances in Physics}\ }\textbf {\bibinfo {volume} {44}},\ \bibinfo {pages} {405} (\bibinfo {year} {1995})}\BibitemShut {NoStop}%
\bibitem [{\citenamefont {Kane}\ and\ \citenamefont {Mele}(2005)}]{kane2005quantum}%
  \BibitemOpen
  \bibfield  {author} {\bibinfo {author} {\bibfnamefont {C.~L.}\ \bibnamefont {Kane}}\ and\ \bibinfo {author} {\bibfnamefont {E.~J.}\ \bibnamefont {Mele}},\ }\bibfield  {title} {\bibinfo {title} {Quantum spin hall effect in graphene},\ }\href {https://doi.org/10.1103/PhysRevLett.95.226801} {\bibfield  {journal} {\bibinfo  {journal} {Phys. Rev. Lett.}\ }\textbf {\bibinfo {volume} {95}},\ \bibinfo {pages} {226801} (\bibinfo {year} {2005})}\BibitemShut {NoStop}%
\bibitem [{\citenamefont {Hasan}\ and\ \citenamefont {Kane}(2010)}]{hasan2010colloquium}%
  \BibitemOpen
  \bibfield  {author} {\bibinfo {author} {\bibfnamefont {M.~Z.}\ \bibnamefont {Hasan}}\ and\ \bibinfo {author} {\bibfnamefont {C.~L.}\ \bibnamefont {Kane}},\ }\bibfield  {title} {\bibinfo {title} {Colloquium: Topological insulators},\ }\href {https://doi.org/10.1103/RevModPhys.82.3045} {\bibfield  {journal} {\bibinfo  {journal} {Rev. Mod. Phys.}\ }\textbf {\bibinfo {volume} {82}},\ \bibinfo {pages} {3045} (\bibinfo {year} {2010})}\BibitemShut {NoStop}%
\bibitem [{\citenamefont {Sato}\ and\ \citenamefont {Ando}(2017)}]{sato2017topological}%
  \BibitemOpen
  \bibfield  {author} {\bibinfo {author} {\bibfnamefont {M.}~\bibnamefont {Sato}}\ and\ \bibinfo {author} {\bibfnamefont {Y.}~\bibnamefont {Ando}},\ }\bibfield  {title} {\bibinfo {title} {Topological superconductors: a review},\ }\href {https://doi.org/10.1088/1361-6633/aa6ac7} {\bibfield  {journal} {\bibinfo  {journal} {Rep. Prog. Phys.}\ }\textbf {\bibinfo {volume} {80}},\ \bibinfo {pages} {076501} (\bibinfo {year} {2017})}\BibitemShut {NoStop}%
\bibitem [{\citenamefont {Gu}\ and\ \citenamefont {Wen}(2009)}]{gu209tensor}%
  \BibitemOpen
  \bibfield  {author} {\bibinfo {author} {\bibfnamefont {Z.-C.}\ \bibnamefont {Gu}}\ and\ \bibinfo {author} {\bibfnamefont {X.-G.}\ \bibnamefont {Wen}},\ }\bibfield  {title} {\bibinfo {title} {Tensor-entanglement-filtering renormalization approach and symmetry-protected topological order},\ }\href {https://doi.org/10.1103/PhysRevB.80.155131} {\bibfield  {journal} {\bibinfo  {journal} {Phys. Rev. B}\ }\textbf {\bibinfo {volume} {80}},\ \bibinfo {pages} {155131} (\bibinfo {year} {2009})}\BibitemShut {NoStop}%
\bibitem [{\citenamefont {Pollmann}\ \emph {et~al.}(2010)\citenamefont {Pollmann}, \citenamefont {Turner}, \citenamefont {Berg},\ and\ \citenamefont {Oshikawa}}]{pollman2010entanglement}%
  \BibitemOpen
  \bibfield  {author} {\bibinfo {author} {\bibfnamefont {F.}~\bibnamefont {Pollmann}}, \bibinfo {author} {\bibfnamefont {A.~M.}\ \bibnamefont {Turner}}, \bibinfo {author} {\bibfnamefont {E.}~\bibnamefont {Berg}},\ and\ \bibinfo {author} {\bibfnamefont {M.}~\bibnamefont {Oshikawa}},\ }\bibfield  {title} {\bibinfo {title} {Entanglement spectrum of a topological phase in one dimension},\ }\href {https://doi.org/10.1103/PhysRevB.81.064439} {\bibfield  {journal} {\bibinfo  {journal} {Phys. Rev. B}\ }\textbf {\bibinfo {volume} {81}},\ \bibinfo {pages} {064439} (\bibinfo {year} {2010})}\BibitemShut {NoStop}%
\bibitem [{\citenamefont {Chen}\ \emph {et~al.}(2012)\citenamefont {Chen}, \citenamefont {Gu}, \citenamefont {Liu},\ and\ \citenamefont {Wen}}]{chen2012science}%
  \BibitemOpen
  \bibfield  {author} {\bibinfo {author} {\bibfnamefont {X.}~\bibnamefont {Chen}}, \bibinfo {author} {\bibfnamefont {Z.-C.}\ \bibnamefont {Gu}}, \bibinfo {author} {\bibfnamefont {Z.-X.}\ \bibnamefont {Liu}},\ and\ \bibinfo {author} {\bibfnamefont {X.-G.}\ \bibnamefont {Wen}},\ }\bibfield  {title} {\bibinfo {title} {Symmetry-protected topological orders in interacting bosonic systems},\ }\href {https://doi.org/10.1126/science.1227224} {\bibfield  {journal} {\bibinfo  {journal} {Science}\ }\textbf {\bibinfo {volume} {338}},\ \bibinfo {pages} {1604} (\bibinfo {year} {2012})}\BibitemShut {NoStop}%
\bibitem [{\citenamefont {Bao}\ \emph {et~al.}(2021)\citenamefont {Bao}, \citenamefont {Choi},\ and\ \citenamefont {Altman}}]{bao2021symmetry}%
  \BibitemOpen
  \bibfield  {author} {\bibinfo {author} {\bibfnamefont {Y.}~\bibnamefont {Bao}}, \bibinfo {author} {\bibfnamefont {S.}~\bibnamefont {Choi}},\ and\ \bibinfo {author} {\bibfnamefont {E.}~\bibnamefont {Altman}},\ }\bibfield  {title} {\bibinfo {title} {Symmetry enriched phases of quantum circuits},\ }\href {https://doi.org/10.1016/j.aop.2021.168618} {\bibfield  {journal} {\bibinfo  {journal} {Annals of Physics}\ }\textbf {\bibinfo {volume} {435}},\ \bibinfo {pages} {168618} (\bibinfo {year} {2021})}\BibitemShut {NoStop}%
\bibitem [{\citenamefont {Lavasani}\ \emph {et~al.}(2021)\citenamefont {Lavasani}, \citenamefont {Alavirad},\ and\ \citenamefont {Barkeshli}}]{lavasani2021measurement}%
  \BibitemOpen
  \bibfield  {author} {\bibinfo {author} {\bibfnamefont {A.}~\bibnamefont {Lavasani}}, \bibinfo {author} {\bibfnamefont {Y.}~\bibnamefont {Alavirad}},\ and\ \bibinfo {author} {\bibfnamefont {M.}~\bibnamefont {Barkeshli}},\ }\bibfield  {title} {\bibinfo {title} {Measurement-induced topological entanglement transitions in symmetric random quantum circuits},\ }\href {https://doi.org/10.1038/s41567-020-01112-z} {\bibfield  {journal} {\bibinfo  {journal} {Nature Phys.}\ }\textbf {\bibinfo {volume} {17}},\ \bibinfo {pages} {342} (\bibinfo {year} {2021})}\BibitemShut {NoStop}%
\bibitem [{\citenamefont {Morral-Yepes}\ \emph {et~al.}(2023)\citenamefont {Morral-Yepes}, \citenamefont {Pollmann},\ and\ \citenamefont {Lovas}}]{morral-yepes2023detecting}%
  \BibitemOpen
  \bibfield  {author} {\bibinfo {author} {\bibfnamefont {R.}~\bibnamefont {Morral-Yepes}}, \bibinfo {author} {\bibfnamefont {F.}~\bibnamefont {Pollmann}},\ and\ \bibinfo {author} {\bibfnamefont {I.}~\bibnamefont {Lovas}},\ }\bibfield  {title} {\bibinfo {title} {Detecting and stabilizing measurement-induced symmetry-protected topological phases in generalized cluster models},\ }\href {https://doi.org/10.1103/PhysRevB.108.224304} {\bibfield  {journal} {\bibinfo  {journal} {Phys. Rev. B}\ }\textbf {\bibinfo {volume} {108}},\ \bibinfo {pages} {224304} (\bibinfo {year} {2023})}\BibitemShut {NoStop}%
\bibitem [{\citenamefont {Hauser}\ \emph {et~al.}(2024)\citenamefont {Hauser}, \citenamefont {Li}, \citenamefont {Vijay},\ and\ \citenamefont {Fisher}}]{hauser2024continuous}%
  \BibitemOpen
  \bibfield  {author} {\bibinfo {author} {\bibfnamefont {J.}~\bibnamefont {Hauser}}, \bibinfo {author} {\bibfnamefont {Y.}~\bibnamefont {Li}}, \bibinfo {author} {\bibfnamefont {S.}~\bibnamefont {Vijay}},\ and\ \bibinfo {author} {\bibfnamefont {M.~P.~A.}\ \bibnamefont {Fisher}},\ }\bibfield  {title} {\bibinfo {title} {Continuous symmetry breaking in adaptive quantum dynamics},\ }\href {https://doi.org/10.1103/PhysRevB.109.214305} {\bibfield  {journal} {\bibinfo  {journal} {Phys. Rev. B}\ }\textbf {\bibinfo {volume} {109}},\ \bibinfo {pages} {214305} (\bibinfo {year} {2024})}\BibitemShut {NoStop}%
\bibitem [{\citenamefont {Kitaev}(1997)}]{kitaev1997quantum}%
  \BibitemOpen
  \bibfield  {author} {\bibinfo {author} {\bibfnamefont {A.~Y.}\ \bibnamefont {Kitaev}},\ }\bibfield  {title} {\bibinfo {title} {Quantum computations: algorithms and error correction},\ }\href {https://doi.org/10.1070/RM1997v052n06ABEH002155} {\bibfield  {journal} {\bibinfo  {journal} {Russ. Math. Surv.}\ }\textbf {\bibinfo {volume} {52}},\ \bibinfo {pages} {1191} (\bibinfo {year} {1997})}\BibitemShut {NoStop}%
\bibitem [{\citenamefont {Dawson}\ and\ \citenamefont {Nielsen}(2006)}]{dawson2006solovay}%
  \BibitemOpen
  \bibfield  {author} {\bibinfo {author} {\bibfnamefont {C.~M.}\ \bibnamefont {Dawson}}\ and\ \bibinfo {author} {\bibfnamefont {M.~A.}\ \bibnamefont {Nielsen}},\ }\bibfield  {title} {\bibinfo {title} {The solovay-kitaev algorithm},\ }\href {https://doi.org/10.5555/2011679.2011685} {\bibfield  {journal} {\bibinfo  {journal} {Quantum Info. Comput.}\ }\textbf {\bibinfo {volume} {6}},\ \bibinfo {pages} {81} (\bibinfo {year} {2006})}\BibitemShut {NoStop}%
\bibitem [{\citenamefont {DiVincenzo}(1995)}]{divincenzo1995universal}%
  \BibitemOpen
  \bibfield  {author} {\bibinfo {author} {\bibfnamefont {D.~P.}\ \bibnamefont {DiVincenzo}},\ }\bibfield  {title} {\bibinfo {title} {Two-bit gates are universal for quantum computation},\ }\href {https://doi.org/10.1103/PhysRevA.51.1015} {\bibfield  {journal} {\bibinfo  {journal} {Phys. Rev. A}\ }\textbf {\bibinfo {volume} {51}},\ \bibinfo {pages} {1015} (\bibinfo {year} {1995})}\BibitemShut {NoStop}%
\bibitem [{\citenamefont {Lloyd}(1995)}]{lloyd1995almost}%
  \BibitemOpen
  \bibfield  {author} {\bibinfo {author} {\bibfnamefont {S.}~\bibnamefont {Lloyd}},\ }\bibfield  {title} {\bibinfo {title} {Almost any quantum logic gate is universal},\ }\href {https://doi.org/10.1103/PhysRevLett.75.346} {\bibfield  {journal} {\bibinfo  {journal} {Phys. Rev. Lett.}\ }\textbf {\bibinfo {volume} {75}},\ \bibinfo {pages} {346} (\bibinfo {year} {1995})}\BibitemShut {NoStop}%
\bibitem [{\citenamefont {Marvian}(2022)}]{marvian2022restriction}%
  \BibitemOpen
  \bibfield  {author} {\bibinfo {author} {\bibfnamefont {I.}~\bibnamefont {Marvian}},\ }\bibfield  {title} {\bibinfo {title} {Restrictions on realizable unitary operations imposed by symmetry and locality},\ }\href {https://doi.org/10.1038/s41567-021-01464-0} {\bibfield  {journal} {\bibinfo  {journal} {Nature Phys.}\ }\textbf {\bibinfo {volume} {18}},\ \bibinfo {pages} {283} (\bibinfo {year} {2022})}\BibitemShut {NoStop}%
\bibitem [{\citenamefont {Marvian}(2024)}]{marvian2023theory}%
  \BibitemOpen
  \bibfield  {author} {\bibinfo {author} {\bibfnamefont {I.}~\bibnamefont {Marvian}},\ }\bibfield  {title} {\bibinfo {title} {Theory of quantum circuits with abelian symmetries},\ }\href {https://doi.org/10.1103/PhysRevResearch.6.043292} {\bibfield  {journal} {\bibinfo  {journal} {Phys. Rev. Res.}\ }\textbf {\bibinfo {volume} {6}},\ \bibinfo {pages} {043292} (\bibinfo {year} {2024})}\BibitemShut {NoStop}%
\bibitem [{\citenamefont {Marvian}\ \emph {et~al.}(2024)\citenamefont {Marvian}, \citenamefont {Liu},\ and\ \citenamefont {Hulse}}]{marvian2024rotationally}%
  \BibitemOpen
  \bibfield  {author} {\bibinfo {author} {\bibfnamefont {I.}~\bibnamefont {Marvian}}, \bibinfo {author} {\bibfnamefont {H.}~\bibnamefont {Liu}},\ and\ \bibinfo {author} {\bibfnamefont {A.}~\bibnamefont {Hulse}},\ }\bibfield  {title} {\bibinfo {title} {Rotationally invariant circuits: Universality with the exchange interaction and two ancilla qubits},\ }\href {https://doi.org/10.1103/PhysRevLett.132.130201} {\bibfield  {journal} {\bibinfo  {journal} {Phys. Rev. Lett.}\ }\textbf {\bibinfo {volume} {132}},\ \bibinfo {pages} {130201} (\bibinfo {year} {2024})}\BibitemShut {NoStop}%
\bibitem [{\citenamefont {Hulse}\ \emph {et~al.}(2024{\natexlab{a}})\citenamefont {Hulse}, \citenamefont {Liu},\ and\ \citenamefont {Marvian}}]{hulse2024framework}%
  \BibitemOpen
  \bibfield  {author} {\bibinfo {author} {\bibfnamefont {A.}~\bibnamefont {Hulse}}, \bibinfo {author} {\bibfnamefont {H.}~\bibnamefont {Liu}},\ and\ \bibinfo {author} {\bibfnamefont {I.}~\bibnamefont {Marvian}},\ }\bibfield  {title} {\bibinfo {title} {A framework for semi-universality: Semi-universality of 3-qudit su(d)-invariant gates},\ }\href {https://doi.org/10.48550/arXiv.2407.21249} {\bibfield  {journal} {\bibinfo  {journal} {arXiv preprint arXiv:2407.21249}\ } (\bibinfo {year} {2024}{\natexlab{a}})}\BibitemShut {NoStop}%
\bibitem [{\citenamefont {Kempe}\ \emph {et~al.}(2001)\citenamefont {Kempe}, \citenamefont {Bacon}, \citenamefont {DiVincenzo},\ and\ \citenamefont {Whaley}}]{kempe2001encoded}%
  \BibitemOpen
  \bibfield  {author} {\bibinfo {author} {\bibfnamefont {J.}~\bibnamefont {Kempe}}, \bibinfo {author} {\bibfnamefont {D.}~\bibnamefont {Bacon}}, \bibinfo {author} {\bibfnamefont {D.~P.}\ \bibnamefont {DiVincenzo}},\ and\ \bibinfo {author} {\bibfnamefont {K.~B.}\ \bibnamefont {Whaley}},\ }\bibfield  {title} {\bibinfo {title} {Encoded universality from a single physical interaction},\ }\href {https://dl.acm.org/doi/10.5555/2016994.2017000} {\bibfield  {journal} {\bibinfo  {journal} {Quantum Info. Comput.}\ }\textbf {\bibinfo {volume} {1}},\ \bibinfo {pages} {33} (\bibinfo {year} {2001})}\BibitemShut {NoStop}%
\bibitem [{\citenamefont {Dankert}\ \emph {et~al.}(2009)\citenamefont {Dankert}, \citenamefont {Cleve}, \citenamefont {Emerson},\ and\ \citenamefont {Livine}}]{dankert2009exact}%
  \BibitemOpen
  \bibfield  {author} {\bibinfo {author} {\bibfnamefont {C.}~\bibnamefont {Dankert}}, \bibinfo {author} {\bibfnamefont {R.}~\bibnamefont {Cleve}}, \bibinfo {author} {\bibfnamefont {J.}~\bibnamefont {Emerson}},\ and\ \bibinfo {author} {\bibfnamefont {E.}~\bibnamefont {Livine}},\ }\bibfield  {title} {\bibinfo {title} {Exact and approximate unitary 2-designs and their application to fidelity estimation},\ }\href {https://doi.org/10.1103/PhysRevA.80.012304} {\bibfield  {journal} {\bibinfo  {journal} {Phys. Rev. A}\ }\textbf {\bibinfo {volume} {80}},\ \bibinfo {pages} {012304} (\bibinfo {year} {2009})}\BibitemShut {NoStop}%
\bibitem [{\citenamefont {Boixo}\ \emph {et~al.}(2018)\citenamefont {Boixo}, \citenamefont {Isakov}, \citenamefont {Smelyanskiy}, \citenamefont {Babbush}, \citenamefont {Ding}, \citenamefont {Jiang}, \citenamefont {Bremner}, \citenamefont {Martinis},\ and\ \citenamefont {Neven}}]{boixo2018characterizing}%
  \BibitemOpen
  \bibfield  {author} {\bibinfo {author} {\bibfnamefont {S.}~\bibnamefont {Boixo}}, \bibinfo {author} {\bibfnamefont {S.~V.}\ \bibnamefont {Isakov}}, \bibinfo {author} {\bibfnamefont {V.~N.}\ \bibnamefont {Smelyanskiy}}, \bibinfo {author} {\bibfnamefont {R.}~\bibnamefont {Babbush}}, \bibinfo {author} {\bibfnamefont {N.}~\bibnamefont {Ding}}, \bibinfo {author} {\bibfnamefont {Z.}~\bibnamefont {Jiang}}, \bibinfo {author} {\bibfnamefont {M.~J.}\ \bibnamefont {Bremner}}, \bibinfo {author} {\bibfnamefont {J.~M.}\ \bibnamefont {Martinis}},\ and\ \bibinfo {author} {\bibfnamefont {H.}~\bibnamefont {Neven}},\ }\bibfield  {title} {\bibinfo {title} {Characterizing quantum supremacy in near-term devices},\ }\href {https://doi.org/10.1038/s41567-018-0124-x} {\bibfield  {journal} {\bibinfo  {journal} {Nature Phys.}\ }\textbf {\bibinfo {volume} {14}},\ \bibinfo {pages} {595} (\bibinfo {year} {2018})}\BibitemShut {NoStop}%
\bibitem [{\citenamefont {Arute}\ \emph {et~al.}(2019)\citenamefont {Arute}, \citenamefont {Arya}, \citenamefont {Babbush}, \citenamefont {Bacon}, \citenamefont {Bardin}, \citenamefont {Barends}, \citenamefont {Biswas}, \citenamefont {Boixo}, \citenamefont {Brandao}, \citenamefont {Buell}, \citenamefont {Burkett}, \citenamefont {Chen}, \citenamefont {Chen}, \citenamefont {Chiaro}, \citenamefont {Collins}, \citenamefont {Courtney}, \citenamefont {Dunsworth}, \citenamefont {Farhi}, \citenamefont {Foxen}, \citenamefont {Fowler}, \citenamefont {Gidney}, \citenamefont {Giustina}, \citenamefont {Graff}, \citenamefont {Guerin}, \citenamefont {Habegger}, \citenamefont {Harrigan}, \citenamefont {Hartmann}, \citenamefont {Ho}, \citenamefont {Hoffmann}, \citenamefont {Huang}, \citenamefont {Humble}, \citenamefont {Isakov}, \citenamefont {Jeffrey}, \citenamefont {Jiang}, \citenamefont {Kafri}, \citenamefont {Kechedzhi}, \citenamefont {Kelly}, \citenamefont {Klimov}, \citenamefont {Knysh}, \citenamefont {Korotkov},
  \citenamefont {Kostritsa}, \citenamefont {Landhuis}, \citenamefont {Lindmark}, \citenamefont {Lucero}, \citenamefont {Lyakh}, \citenamefont {Mandr^^c3^^a0}, \citenamefont {McClean}, \citenamefont {McEwen}, \citenamefont {Megrant}, \citenamefont {Mi}, \citenamefont {Michielsen}, \citenamefont {Mohseni}, \citenamefont {Mutus}, \citenamefont {Naaman}, \citenamefont {Neeley}, \citenamefont {Neill}, \citenamefont {Niu}, \citenamefont {Ostby}, \citenamefont {Petukhov}, \citenamefont {Platt}, \citenamefont {Quintana}, \citenamefont {Rieffel}, \citenamefont {Roushan}, \citenamefont {Rubin}, \citenamefont {Sank}, \citenamefont {Satzinger}, \citenamefont {Smelyanskiy}, \citenamefont {Sung}, \citenamefont {Trevithick}, \citenamefont {Vainsencher}, \citenamefont {Villalonga}, \citenamefont {White}, \citenamefont {Yao}, \citenamefont {Yeh}, \citenamefont {Zalcman}, \citenamefont {Neven},\ and\ \citenamefont {Martinis}}]{arute2019quantuma}%
  \BibitemOpen
  \bibfield  {author} {\bibinfo {author} {\bibfnamefont {F.}~\bibnamefont {Arute}}, \bibinfo {author} {\bibfnamefont {K.}~\bibnamefont {Arya}}, \bibinfo {author} {\bibfnamefont {R.}~\bibnamefont {Babbush}}, \bibinfo {author} {\bibfnamefont {D.}~\bibnamefont {Bacon}}, \bibinfo {author} {\bibfnamefont {J.~C.}\ \bibnamefont {Bardin}}, \bibinfo {author} {\bibfnamefont {R.}~\bibnamefont {Barends}}, \bibinfo {author} {\bibfnamefont {R.}~\bibnamefont {Biswas}}, \bibinfo {author} {\bibfnamefont {S.}~\bibnamefont {Boixo}}, \bibinfo {author} {\bibfnamefont {F.~G. S.~L.}\ \bibnamefont {Brandao}}, \bibinfo {author} {\bibfnamefont {D.~A.}\ \bibnamefont {Buell}}, \bibinfo {author} {\bibfnamefont {B.}~\bibnamefont {Burkett}}, \bibinfo {author} {\bibfnamefont {Y.}~\bibnamefont {Chen}}, \bibinfo {author} {\bibfnamefont {Z.}~\bibnamefont {Chen}}, \bibinfo {author} {\bibfnamefont {B.}~\bibnamefont {Chiaro}}, \bibinfo {author} {\bibfnamefont {R.}~\bibnamefont {Collins}}, \bibinfo {author} {\bibfnamefont {W.}~\bibnamefont
  {Courtney}}, \bibinfo {author} {\bibfnamefont {A.}~\bibnamefont {Dunsworth}}, \bibinfo {author} {\bibfnamefont {E.}~\bibnamefont {Farhi}}, \bibinfo {author} {\bibfnamefont {B.}~\bibnamefont {Foxen}}, \bibinfo {author} {\bibfnamefont {A.}~\bibnamefont {Fowler}}, \bibinfo {author} {\bibfnamefont {C.}~\bibnamefont {Gidney}}, \bibinfo {author} {\bibfnamefont {M.}~\bibnamefont {Giustina}}, \bibinfo {author} {\bibfnamefont {R.}~\bibnamefont {Graff}}, \bibinfo {author} {\bibfnamefont {K.}~\bibnamefont {Guerin}}, \bibinfo {author} {\bibfnamefont {S.}~\bibnamefont {Habegger}}, \bibinfo {author} {\bibfnamefont {M.~P.}\ \bibnamefont {Harrigan}}, \bibinfo {author} {\bibfnamefont {M.~J.}\ \bibnamefont {Hartmann}}, \bibinfo {author} {\bibfnamefont {A.}~\bibnamefont {Ho}}, \bibinfo {author} {\bibfnamefont {M.}~\bibnamefont {Hoffmann}}, \bibinfo {author} {\bibfnamefont {T.}~\bibnamefont {Huang}}, \bibinfo {author} {\bibfnamefont {T.~S.}\ \bibnamefont {Humble}}, \bibinfo {author} {\bibfnamefont {S.~V.}\ \bibnamefont
  {Isakov}}, \bibinfo {author} {\bibfnamefont {E.}~\bibnamefont {Jeffrey}}, \bibinfo {author} {\bibfnamefont {Z.}~\bibnamefont {Jiang}}, \bibinfo {author} {\bibfnamefont {D.}~\bibnamefont {Kafri}}, \bibinfo {author} {\bibfnamefont {K.}~\bibnamefont {Kechedzhi}}, \bibinfo {author} {\bibfnamefont {J.}~\bibnamefont {Kelly}}, \bibinfo {author} {\bibfnamefont {P.~V.}\ \bibnamefont {Klimov}}, \bibinfo {author} {\bibfnamefont {S.}~\bibnamefont {Knysh}}, \bibinfo {author} {\bibfnamefont {A.}~\bibnamefont {Korotkov}}, \bibinfo {author} {\bibfnamefont {F.}~\bibnamefont {Kostritsa}}, \bibinfo {author} {\bibfnamefont {D.}~\bibnamefont {Landhuis}}, \bibinfo {author} {\bibfnamefont {M.}~\bibnamefont {Lindmark}}, \bibinfo {author} {\bibfnamefont {E.}~\bibnamefont {Lucero}}, \bibinfo {author} {\bibfnamefont {D.}~\bibnamefont {Lyakh}}, \bibinfo {author} {\bibfnamefont {S.}~\bibnamefont {Mandr^^c3^^a0}}, \bibinfo {author} {\bibfnamefont {J.~R.}\ \bibnamefont {McClean}}, \bibinfo {author} {\bibfnamefont {M.}~\bibnamefont
  {McEwen}}, \bibinfo {author} {\bibfnamefont {A.}~\bibnamefont {Megrant}}, \bibinfo {author} {\bibfnamefont {X.}~\bibnamefont {Mi}}, \bibinfo {author} {\bibfnamefont {K.}~\bibnamefont {Michielsen}}, \bibinfo {author} {\bibfnamefont {M.}~\bibnamefont {Mohseni}}, \bibinfo {author} {\bibfnamefont {J.}~\bibnamefont {Mutus}}, \bibinfo {author} {\bibfnamefont {O.}~\bibnamefont {Naaman}}, \bibinfo {author} {\bibfnamefont {M.}~\bibnamefont {Neeley}}, \bibinfo {author} {\bibfnamefont {C.}~\bibnamefont {Neill}}, \bibinfo {author} {\bibfnamefont {M.~Y.}\ \bibnamefont {Niu}}, \bibinfo {author} {\bibfnamefont {E.}~\bibnamefont {Ostby}}, \bibinfo {author} {\bibfnamefont {A.}~\bibnamefont {Petukhov}}, \bibinfo {author} {\bibfnamefont {J.~C.}\ \bibnamefont {Platt}}, \bibinfo {author} {\bibfnamefont {C.}~\bibnamefont {Quintana}}, \bibinfo {author} {\bibfnamefont {E.~G.}\ \bibnamefont {Rieffel}}, \bibinfo {author} {\bibfnamefont {P.}~\bibnamefont {Roushan}}, \bibinfo {author} {\bibfnamefont {N.~C.}\ \bibnamefont {Rubin}},
  \bibinfo {author} {\bibfnamefont {D.}~\bibnamefont {Sank}}, \bibinfo {author} {\bibfnamefont {K.~J.}\ \bibnamefont {Satzinger}}, \bibinfo {author} {\bibfnamefont {V.}~\bibnamefont {Smelyanskiy}}, \bibinfo {author} {\bibfnamefont {K.~J.}\ \bibnamefont {Sung}}, \bibinfo {author} {\bibfnamefont {M.~D.}\ \bibnamefont {Trevithick}}, \bibinfo {author} {\bibfnamefont {A.}~\bibnamefont {Vainsencher}}, \bibinfo {author} {\bibfnamefont {B.}~\bibnamefont {Villalonga}}, \bibinfo {author} {\bibfnamefont {T.}~\bibnamefont {White}}, \bibinfo {author} {\bibfnamefont {Z.~J.}\ \bibnamefont {Yao}}, \bibinfo {author} {\bibfnamefont {P.}~\bibnamefont {Yeh}}, \bibinfo {author} {\bibfnamefont {A.}~\bibnamefont {Zalcman}}, \bibinfo {author} {\bibfnamefont {H.}~\bibnamefont {Neven}},\ and\ \bibinfo {author} {\bibfnamefont {J.~M.}\ \bibnamefont {Martinis}},\ }\bibfield  {title} {\bibinfo {title} {Quantum supremacy using a programmable superconducting processor},\ }\href {https://doi.org/10.1038/s41586-019-1666-5} {\bibfield
  {journal} {\bibinfo  {journal} {Nature}\ }\textbf {\bibinfo {volume} {574}},\ \bibinfo {pages} {505} (\bibinfo {year} {2019})}\BibitemShut {NoStop}%
\bibitem [{\citenamefont {Huang}\ \emph {et~al.}(2020)\citenamefont {Huang}, \citenamefont {Kueng},\ and\ \citenamefont {Preskill}}]{huang2020predicting}%
  \BibitemOpen
  \bibfield  {author} {\bibinfo {author} {\bibfnamefont {H.-Y.}\ \bibnamefont {Huang}}, \bibinfo {author} {\bibfnamefont {R.}~\bibnamefont {Kueng}},\ and\ \bibinfo {author} {\bibfnamefont {J.}~\bibnamefont {Preskill}},\ }\bibfield  {title} {\bibinfo {title} {Predicting many properties of a quantum system from very few measurements},\ }\href {https://doi.org/10.1038/s41567-020-0932-7} {\bibfield  {journal} {\bibinfo  {journal} {Nature Phys.}\ }\textbf {\bibinfo {volume} {16}},\ \bibinfo {pages} {1050} (\bibinfo {year} {2020})}\BibitemShut {NoStop}%
\bibitem [{\citenamefont {Emerson}\ \emph {et~al.}(2005)\citenamefont {Emerson}, \citenamefont {Alicki},\ and\ \citenamefont {{\.Z}yczkowski}}]{emerson2005scalable}%
  \BibitemOpen
  \bibfield  {author} {\bibinfo {author} {\bibfnamefont {J.}~\bibnamefont {Emerson}}, \bibinfo {author} {\bibfnamefont {R.}~\bibnamefont {Alicki}},\ and\ \bibinfo {author} {\bibfnamefont {K.}~\bibnamefont {{\.Z}yczkowski}},\ }\bibfield  {title} {\bibinfo {title} {Scalable noise estimation with random unitary operators},\ }\href {https://doi.org/10.1088/1464-4266/7/10/021} {\bibfield  {journal} {\bibinfo  {journal} {J. Opt. B: Quantum Semiclass. Opt.}\ }\textbf {\bibinfo {volume} {7}},\ \bibinfo {pages} {S347} (\bibinfo {year} {2005})}\BibitemShut {NoStop}%
\bibitem [{\citenamefont {Dupuis}\ \emph {et~al.}(2014)\citenamefont {Dupuis}, \citenamefont {Berta}, \citenamefont {Wullschleger},\ and\ \citenamefont {Renner}}]{dupuis2014one}%
  \BibitemOpen
  \bibfield  {author} {\bibinfo {author} {\bibfnamefont {F.}~\bibnamefont {Dupuis}}, \bibinfo {author} {\bibfnamefont {M.}~\bibnamefont {Berta}}, \bibinfo {author} {\bibfnamefont {J.}~\bibnamefont {Wullschleger}},\ and\ \bibinfo {author} {\bibfnamefont {R.}~\bibnamefont {Renner}},\ }\bibfield  {title} {\bibinfo {title} {One-shot decoupling},\ }\href {https://doi.org/10.1007/s00220-014-1990-4} {\bibfield  {journal} {\bibinfo  {journal} {Commun. Math. Phys.}\ }\textbf {\bibinfo {volume} {328}},\ \bibinfo {pages} {251} (\bibinfo {year} {2014})}\BibitemShut {NoStop}%
\bibitem [{\citenamefont {Roberts}\ and\ \citenamefont {Yoshida}(2017)}]{roberts2017chaos}%
  \BibitemOpen
  \bibfield  {author} {\bibinfo {author} {\bibfnamefont {D.~A.}\ \bibnamefont {Roberts}}\ and\ \bibinfo {author} {\bibfnamefont {B.}~\bibnamefont {Yoshida}},\ }\bibfield  {title} {\bibinfo {title} {Chaos and complexity by design},\ }\href {https://doi.org/10.1007/JHEP04(2017)121} {\bibfield  {journal} {\bibinfo  {journal} {J. High Energ. Phys.}\ }\textbf {\bibinfo {volume} {2017}},\ \bibinfo {pages} {121}}\BibitemShut {NoStop}%
\bibitem [{\citenamefont {Brand\~{a}o}\ \emph {et~al.}(2016)\citenamefont {Brand\~{a}o}, \citenamefont {Harrow},\ and\ \citenamefont {Horodecki}}]{brandao2016local}%
  \BibitemOpen
  \bibfield  {author} {\bibinfo {author} {\bibfnamefont {F.~G. S.~L.}\ \bibnamefont {Brand\~{a}o}}, \bibinfo {author} {\bibfnamefont {A.~W.}\ \bibnamefont {Harrow}},\ and\ \bibinfo {author} {\bibfnamefont {M.}~\bibnamefont {Horodecki}},\ }\bibfield  {title} {\bibinfo {title} {Local random quantum circuits are approximate polynomial-designs},\ }\href {https://doi.org/10.1007/s00220-016-2706-8} {\bibfield  {journal} {\bibinfo  {journal} {Commun. Math. Phys.}\ }\textbf {\bibinfo {volume} {346}},\ \bibinfo {pages} {397} (\bibinfo {year} {2016})}\BibitemShut {NoStop}%
\bibitem [{\citenamefont {Haferkamp}(2022)}]{haferkamp2022random}%
  \BibitemOpen
  \bibfield  {author} {\bibinfo {author} {\bibfnamefont {J.}~\bibnamefont {Haferkamp}},\ }\bibfield  {title} {\bibinfo {title} {Random quantum circuits are approximate unitary {$t$}-designs in depth {$O\left(nt^{5+o(1)}\right)$}},\ }\href {https://doi.org/10.22331/q-2022-09-08-795} {\bibfield  {journal} {\bibinfo  {journal} {{Quantum}}\ }\textbf {\bibinfo {volume} {6}},\ \bibinfo {pages} {795} (\bibinfo {year} {2022})}\BibitemShut {NoStop}%
\bibitem [{\citenamefont {Mittal}\ and\ \citenamefont {Hunter-Jones}(2023)}]{mittal2023local}%
  \BibitemOpen
  \bibfield  {author} {\bibinfo {author} {\bibfnamefont {S.}~\bibnamefont {Mittal}}\ and\ \bibinfo {author} {\bibfnamefont {N.}~\bibnamefont {Hunter-Jones}},\ }\bibfield  {title} {\bibinfo {title} {Local random quantum circuits form approximate designs on arbitrary architectures},\ }\href {https://doi.org/10.48550/arXiv.2310.19355} {\bibfield  {journal} {\bibinfo  {journal} {arXiv preprint arXiv:2310.19355}\ } (\bibinfo {year} {2023})}\BibitemShut {NoStop}%
\bibitem [{\citenamefont {Belkin}\ \emph {et~al.}(2024)\citenamefont {Belkin}, \citenamefont {Allen}, \citenamefont {Ghosh}, \citenamefont {Kang}, \citenamefont {Lin}, \citenamefont {Sud}, \citenamefont {Chong}, \citenamefont {Fefferman},\ and\ \citenamefont {Clark}}]{belkin2024approximate}%
  \BibitemOpen
  \bibfield  {author} {\bibinfo {author} {\bibfnamefont {D.}~\bibnamefont {Belkin}}, \bibinfo {author} {\bibfnamefont {J.}~\bibnamefont {Allen}}, \bibinfo {author} {\bibfnamefont {S.}~\bibnamefont {Ghosh}}, \bibinfo {author} {\bibfnamefont {C.}~\bibnamefont {Kang}}, \bibinfo {author} {\bibfnamefont {S.}~\bibnamefont {Lin}}, \bibinfo {author} {\bibfnamefont {J.}~\bibnamefont {Sud}}, \bibinfo {author} {\bibfnamefont {F.}~\bibnamefont {Chong}}, \bibinfo {author} {\bibfnamefont {B.}~\bibnamefont {Fefferman}},\ and\ \bibinfo {author} {\bibfnamefont {B.~K.}\ \bibnamefont {Clark}},\ }\bibfield  {title} {\bibinfo {title} {Approximate t-designs in generic circuit architectures},\ }\href {https://doi.org/10.48550/arXiv.2310.19783} {\bibfield  {journal} {\bibinfo  {journal} {arXiv preprint arXiv:2310.19783}\ } (\bibinfo {year} {2024})}\BibitemShut {NoStop}%
\bibitem [{\citenamefont {Schuster}\ \emph {et~al.}(2024)\citenamefont {Schuster}, \citenamefont {Haferkamp},\ and\ \citenamefont {Huang}}]{schuster2024random}%
  \BibitemOpen
  \bibfield  {author} {\bibinfo {author} {\bibfnamefont {T.}~\bibnamefont {Schuster}}, \bibinfo {author} {\bibfnamefont {J.}~\bibnamefont {Haferkamp}},\ and\ \bibinfo {author} {\bibfnamefont {H.-Y.}\ \bibnamefont {Huang}},\ }\bibfield  {title} {\bibinfo {title} {Random unitaries in extremely low depth},\ }\href {http://arxiv.org/abs/2407.07754} {\bibfield  {journal} {\bibinfo  {journal} {arXiv preprint arXiv:2407.07754}\ } (\bibinfo {year} {2024})}\BibitemShut {NoStop}%
\bibitem [{\citenamefont {LaRacuente}\ and\ \citenamefont {Leditzky}(2024)}]{laracuente2024approximate}%
  \BibitemOpen
  \bibfield  {author} {\bibinfo {author} {\bibfnamefont {N.}~\bibnamefont {LaRacuente}}\ and\ \bibinfo {author} {\bibfnamefont {F.}~\bibnamefont {Leditzky}},\ }\bibfield  {title} {\bibinfo {title} {Approximate $t$-designs in generic circuit architectures},\ }\href {https://doi.org/10.48550/arXiv.2407.07876} {\bibfield  {journal} {\bibinfo  {journal} {arXiv preprint arXiv.2407.07876}\ } (\bibinfo {year} {2024})}\BibitemShut {NoStop}%
\bibitem [{\citenamefont {Li}\ \emph {et~al.}(2024)\citenamefont {Li}, \citenamefont {Zheng}, \citenamefont {Liu}, \citenamefont {Jiang},\ and\ \citenamefont {Liu}}]{li2023designs}%
  \BibitemOpen
  \bibfield  {author} {\bibinfo {author} {\bibfnamefont {Z.}~\bibnamefont {Li}}, \bibinfo {author} {\bibfnamefont {H.}~\bibnamefont {Zheng}}, \bibinfo {author} {\bibfnamefont {J.}~\bibnamefont {Liu}}, \bibinfo {author} {\bibfnamefont {L.}~\bibnamefont {Jiang}},\ and\ \bibinfo {author} {\bibfnamefont {Z.-W.}\ \bibnamefont {Liu}},\ }\bibfield  {title} {\bibinfo {title} {Designs from local random quantum circuits with $\mathrm{SU}(d)$ symmetry},\ }\href {https://doi.org/10.1103/PRXQuantum.5.040349} {\bibfield  {journal} {\bibinfo  {journal} {PRX Quantum}\ }\textbf {\bibinfo {volume} {5}},\ \bibinfo {pages} {040349} (\bibinfo {year} {2024})}\BibitemShut {NoStop}%
\bibitem [{\citenamefont {Hearth}\ \emph {et~al.}(2025)\citenamefont {Hearth}, \citenamefont {Flynn}, \citenamefont {Chandran},\ and\ \citenamefont {Laumann}}]{hearth2025unitary}%
  \BibitemOpen
  \bibfield  {author} {\bibinfo {author} {\bibfnamefont {S.~N.}\ \bibnamefont {Hearth}}, \bibinfo {author} {\bibfnamefont {M.~O.}\ \bibnamefont {Flynn}}, \bibinfo {author} {\bibfnamefont {A.}~\bibnamefont {Chandran}},\ and\ \bibinfo {author} {\bibfnamefont {C.~R.}\ \bibnamefont {Laumann}},\ }\bibfield  {title} {\bibinfo {title} {Unitary $k$-designs from random number-conserving quantum circuits},\ }\href {https://doi.org/10.1103/PhysRevX.15.021022} {\bibfield  {journal} {\bibinfo  {journal} {Phys. Rev. X}\ }\textbf {\bibinfo {volume} {15}},\ \bibinfo {pages} {021022} (\bibinfo {year} {2025})}\BibitemShut {NoStop}%
\bibitem [{\citenamefont {Mitsuhashi}\ \emph {et~al.}(2025)\citenamefont {Mitsuhashi}, \citenamefont {Suzuki}, \citenamefont {Soejima},\ and\ \citenamefont {Yoshioka}}]{mitsuhashi2025unitary}%
  \BibitemOpen
  \bibfield  {author} {\bibinfo {author} {\bibfnamefont {Y.}~\bibnamefont {Mitsuhashi}}, \bibinfo {author} {\bibfnamefont {R.}~\bibnamefont {Suzuki}}, \bibinfo {author} {\bibfnamefont {T.}~\bibnamefont {Soejima}},\ and\ \bibinfo {author} {\bibfnamefont {N.}~\bibnamefont {Yoshioka}},\ }\bibfield  {title} {\bibinfo {title} {Unitary designs of symmetric local random circuits},\ }\href {https://doi.org/10.1103/PhysRevLett.134.180404} {\bibfield  {journal} {\bibinfo  {journal} {Phys. Rev. Lett.}\ }\textbf {\bibinfo {volume} {134}},\ \bibinfo {pages} {180404} (\bibinfo {year} {2025})}\BibitemShut {NoStop}%
\bibitem [{\citenamefont {Zeier}\ and\ \citenamefont {Zimbor\'{a}s}(2015)}]{zeier2015squares}%
  \BibitemOpen
  \bibfield  {author} {\bibinfo {author} {\bibfnamefont {R.}~\bibnamefont {Zeier}}\ and\ \bibinfo {author} {\bibfnamefont {Z.}~\bibnamefont {Zimbor\'{a}s}},\ }\bibfield  {title} {\bibinfo {title} {On squares of representations of compact lie algebras},\ }\href {https://doi.org/10.1063/1.4928410} {\bibfield  {journal} {\bibinfo  {journal} {J. Math. Phys.}\ }\textbf {\bibinfo {volume} {56}},\ \bibinfo {pages} {081702} (\bibinfo {year} {2015})}\BibitemShut {NoStop}%
\bibitem [{\citenamefont {Chen}\ \emph {et~al.}(2024)\citenamefont {Chen}, \citenamefont {Haah}, \citenamefont {Haferkamp}, \citenamefont {Liu}, \citenamefont {Metger},\ and\ \citenamefont {Tan}}]{chen2024incompressibilityspectralgapsrandom}%
  \BibitemOpen
  \bibfield  {author} {\bibinfo {author} {\bibfnamefont {C.-F.}\ \bibnamefont {Chen}}, \bibinfo {author} {\bibfnamefont {J.}~\bibnamefont {Haah}}, \bibinfo {author} {\bibfnamefont {J.}~\bibnamefont {Haferkamp}}, \bibinfo {author} {\bibfnamefont {Y.}~\bibnamefont {Liu}}, \bibinfo {author} {\bibfnamefont {T.}~\bibnamefont {Metger}},\ and\ \bibinfo {author} {\bibfnamefont {X.}~\bibnamefont {Tan}},\ }\bibfield  {title} {\bibinfo {title} {Incompressibility and spectral gaps of random circuits},\ }\href {https://doi.org/10.48550/arXiv.2406.07478} {\bibfield  {journal} {\bibinfo  {journal} {arXiv preprint arXiv:2406.07478}\ } (\bibinfo {year} {2024})}\BibitemShut {NoStop}%
\bibitem [{\citenamefont {Hulse}\ \emph {et~al.}(2024{\natexlab{b}})\citenamefont {Hulse}, \citenamefont {Liu},\ and\ \citenamefont {Marvian}}]{hulse2024unitary}%
  \BibitemOpen
  \bibfield  {author} {\bibinfo {author} {\bibfnamefont {A.}~\bibnamefont {Hulse}}, \bibinfo {author} {\bibfnamefont {H.}~\bibnamefont {Liu}},\ and\ \bibinfo {author} {\bibfnamefont {I.}~\bibnamefont {Marvian}},\ }\bibfield  {title} {\bibinfo {title} {Unitary designs from random symmetric quantum circuits},\ }\href {https://doi.org/10.48550/arXiv.2408.14463} {\bibfield  {journal} {\bibinfo  {journal} {arXiv preprint arXiv:2408.14463}\ } (\bibinfo {year} {2024}{\natexlab{b}})}\BibitemShut {NoStop}%
\bibitem [{knapp2002lie()}]{knapp2002lie}%
  \BibitemOpen
  \bibinfo {note} {A. W. Knapp, {\it Lie Groups Beyond an Introduction}. 2nd ed. (Birkh{\"a}user, Boston, 2002)}\BibitemShut {NoStop}%
\end{thebibliography}%

\appendix

\section{Proofs of the theorems for the concrete symmetries (Theorems~\ref{SMthm:Z2}, \ref{SMthm:U1_general_locality}, \ref{SMthm:U1_small_locality}, \ref{SMthm:SU2_general_locality}, and \ref{SMthm:SU2_small_locality})} \label{SMsec:theorem_specific}

Before going into the concrete cases of symmetries, we prepare a simple useful lemma for general symmetries, which we use many times.

\begin{lemma} \label{SMlem:solution_bound}
    Let $n, t\in\mathbb{N}$, $R$ be a unitary representation of a group $G$, $\Lambda$ be the set of labels of irreducible representations appearing in the decomposition of $R$, $(x_\lambda)_{\lambda\in\Lambda}$ satisfy Eqs.~\eqref{SMeq:SMthm:general_cond1} and \eqref{SMeq:SMthm:general_cond2}, and $\Lambda'$ be an arbitrary subset of $\Lambda$. 
    Then, $|\sum_{\lambda\in\Lambda'} m_\lambda x_\lambda|\leq t$. 
\end{lemma}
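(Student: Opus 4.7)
The plan is straightforward: split $\Lambda$ according to the sign of $x_\lambda$ and use Eq.~\eqref{SMeq:SMthm:general_cond2} to show that the ``positive part'' and the ``negative part'' of the sum $\sum_{\lambda\in\Lambda} m_\lambda x_\lambda$ are equal, then invoke Eq.~\eqref{SMeq:SMthm:general_cond1} to bound each by $t$.

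Concretely, I would define $\Lambda^+:=\{\lambda\in\Lambda\mid x_\lambda\geq 0\}$ and $\Lambda^-:=\{\lambda\in\Lambda\mid x_\lambda<0\}$, and set $S^+:=\sum_{\lambda\in\Lambda^+} m_\lambda x_\lambda$ and $S^-:=\sum_{\lambda\in\Lambda^-} m_\lambda |x_\lambda|$. Equation~\eqref{SMeq:SMthm:general_cond2} gives $S^+=S^-$, while Eq.~\eqref{SMeq:SMthm:general_cond1} gives $S^++S^-\leq 2t$. Combining these yields $S^+=S^-\leq t$.

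Finally, for any subset $\Lambda'\subset\Lambda$ I would write
\begin{align}
\sum_{\lambda\in\Lambda'} m_\lambda x_\lambda
=\sum_{\lambda\in\Lambda'\cap\Lambda^+} m_\lambda x_\lambda
-\sum_{\lambda\in\Lambda'\cap\Lambda^-} m_\lambda |x_\lambda|,
\end{align}
where the first term lies in $[0,S^+]$ and the second in $[0,S^-]$, so the whole expression lies in $[-S^-,S^+]\subset[-t,t]$, giving the claimed bound.

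There is essentially no obstacle here; the statement is a purely elementary consequence of splitting by sign, and the entire argument should fit in a few lines. The only thing to be careful about is the sign convention when restricting to $\Lambda'\cap\Lambda^-$, so that the bound $-t$ (not $-2t$) is obtained.
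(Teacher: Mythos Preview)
Your proof is correct. The paper's argument is organized a little differently: instead of first splitting $\Lambda$ by the sign of $x_\lambda$, it splits directly into $\Lambda'$ and $\Lambda\setminus\Lambda'$, applies the triangle inequality to get
\[
\sum_{\lambda\in\Lambda} m_\lambda |x_\lambda|\ \geq\ \Bigl|\sum_{\lambda\in\Lambda'} m_\lambda x_\lambda\Bigr|+\Bigl|\sum_{\lambda\in\Lambda\setminus\Lambda'} m_\lambda x_\lambda\Bigr|,
\]
and then uses Eq.~\eqref{SMeq:SMthm:general_cond2} to identify the two absolute values, yielding $2\bigl|\sum_{\lambda\in\Lambda'} m_\lambda x_\lambda\bigr|\leq 2t$. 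Your sign-split reaches the same bound with one extra layer (introducing $S^\pm$ and then intersecting with $\Lambda'$), whereas the paper's $\Lambda'$-versus-complement split gets there in a single stroke; conversely, your $S^+=S^-\leq t$ is a reusable fact if one needs several such bounds. Either way the content is identical and the length is comparable.
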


\begin{proof}
    By the triangle inequality, we have 
    \begin{align}
        \sum_{\lambda\in\Lambda} m_\lambda |x_\lambda| 
        =\sum_{\lambda\in\Lambda'} m_\lambda |x_\lambda|+\sum_{\lambda\in\Lambda\backslash\Lambda'} m_\lambda |x_\lambda| 
        \geq \left|\sum_{\lambda\in\Lambda'} m_\lambda x_\lambda\right|+\left|\sum_{\lambda\in\Lambda\backslash\Lambda'} m_\lambda x_\lambda\right|. \label{SMeq:SMlem:solution_bound1}
    \end{align}
    By Eq.~\eqref{SMeq:SMthm:general_cond2}, we have 
    \begin{align}
        \sum_{\lambda\in\Lambda\backslash\Lambda'} m_\lambda x_\lambda 
        =-\sum_{\lambda\in\Lambda'} m_\lambda x_\lambda. \label{SMeq:SMlem:solution_bound2}
    \end{align}
    By plugging Eq.~\eqref{SMeq:SMlem:solution_bound2} into Eq.~\eqref{SMeq:SMlem:solution_bound1}, we get 
    \begin{align}
        \sum_{\lambda\in\Lambda} m_\lambda |x_\lambda| 
        \geq 2\left|\sum_{\lambda\in\Lambda'} m_\lambda x_\lambda\right|. \label{SMeq:SMlem:solution_bound3}
    \end{align}
    By Eqs.~\eqref{SMeq:SMthm:general_cond1} and \eqref{SMeq:SMlem:solution_bound3}, we get $\left|\sum_{\lambda\in\Lambda'} m_\lambda x_\lambda\right|\leq t$. 
\end{proof}

\subsection{$\mathbb{Z}_2$ symmetry} \label{SMsubsec:Z2}

In this section, we consider the representation $R$ of $\mathbb{Z}_2$ defined by Eq.~\eqref{SMeq:Z2_representation}. 
Since $\bigcup_{\gamma\in\Gamma} \mathcal{U}_{n, G, R}^\gamma$ is semi-universal for $\mathcal{U}_{n, G, R}$~\cite{marvian2023theory}, we can use Theorem~\ref{SMthm:general}. 
We note that the representation $R$ can be decomposed into two inequivalent irreducible representations $R_\lambda(g):=(-1)^{\lambda g}$ with $\lambda\in\Lambda:=\{0, 1\}$ with multiplicity $m_\lambda=2^{n-1}$, which corresponds to the spectral decomposition of $\mathrm{Z}^{\otimes n}$. 
By using these results, we can easily derive the maximal order of asymptotic unitary designs under the $\mathbb{Z}_2$ symmetry presented as Theorem~\ref{SMthm:Z2}. \\

\noindent
\textit{Proof of Theorem~\ref{SMthm:Z2}.} 
    Since $R$ can be written as $T^{\otimes n}$ with a single-qubit representation $T$, by using Theorem~\ref{SMthm:general}, we consider the condition on $t$ such that Eqs.~\eqref{SMeq:SMthm:general_cond1} and \eqref{SMeq:SMthm:general_cond4} do not have a nontrivial integer solution. 
    Since $m_\lambda$ is given by $2^{n-1}$, Eq.~\eqref{SMeq:SMthm:general_cond1} is equivalent to 
    \begin{align}
        2^{n-1}(|x_0|+|x_1|)\leq 2t. \label{SMeq:SMthm:Z2_1}
    \end{align}
    Since every $A'\in\mathfrak{u}_{n, G, R}^\gamma$ can be decomposed into the direct sum $A_0\oplus A_1$ with $A_\lambda$ acting on the eigenspace of $\mathrm{Z}^{\otimes k}$ with eigenvalue $(-1)^\lambda$, we have $f_\lambda(A)=2^{n-k}(\mathrm{tr}(A_0)+\mathrm{tr}(A_1))$, which does not depend in $\lambda$. 
    Thus Eq.~\eqref{SMeq:SMthm:general_cond4} is equivalent to 
    \begin{align}
        x_0+x_1=0. \label{SMeq:SMthm:Z2_2}
    \end{align}
    By plugging Eq.~\eqref{SMeq:SMthm:Z2_2} into Eq.~\eqref{SMeq:SMthm:Z2_1}, we get $2^n |x_0|\leq 2t$, which yields $2^{n-1}|x_0|\leq t$, which implies that Eqs.~\eqref{SMeq:SMthm:general_cond1} and \eqref{SMeq:SMthm:general_cond4} have no nontrivial integer solution if and only if $t<2^{n-1}$. 
\hfill $\square$

\subsection{U(1) symmetry} \label{SMsubsec:U1}

In this section, we consider the representation $R$ of $\mathrm{U}(1)$ defined by Eq.~\eqref{SMeq:U1_representation}. 
Similarly to the $\mathrm{Z}_2$ symmetry, $\bigcup_{\gamma\in\Gamma} \mathcal{U}_{n, G, R}^\gamma$ is semi-universal for $\mathcal{U}_{n, G, R}$~\cite{marvian2023theory}, and thus we can use Theorem~\ref{SMthm:general}.

First, we explicitly present the conditions of Eqs.~\eqref{SMeq:SMthm:general_cond1} and \eqref{SMeq:SMthm:general_cond4} in Theorem~\ref{SMthm:general}.

\begin{lemma} \label{SMlem:U1_cond}
    Let $n, k, t\in\mathbb{N}$, $2\leq k\leq n-1$, and $R$ be a unitary representation of $G=\mathrm{U}(1)$ defined by Eq.~\eqref{SMeq:U1_representation}. 
    Then, the distribution of the $(G, R)$-symmetric $k$-local random circuit is an asymptotic $(G, R)$-symmetric unitary $t$-design if and only if there exists no nontrivial integer solution $\bm{x}=(x_\lambda)_{\lambda=0, ..., n}\in\mathbb{Z}^{n+1}$ satisfying 
    \begin{align}
        &\sum_{\lambda=0}^n \binom{n}{\lambda}|x_\lambda|\leq 2t, \label{SMeq:SMlem:U1_cond1}\\
        &\sum_{\lambda=j}^{n-k+j} \binom{n-k}{\lambda-j} x_\lambda=0\ \forall j\in\{0, 1, ..., k\}. \label{SMeq:SMlem:U1_cond2}
    \end{align}
\end{lemma}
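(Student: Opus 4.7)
The plan is to apply Theorem~\ref{SMthm:general} directly to the $\mathrm{U}(1)$-symmetric setting and then to identify the abstract condition~\eqref{SMeq:SMthm:general_cond4} with the concrete binomial conditions~\eqref{SMeq:SMlem:U1_cond2}. First, I would set up the representation theory for $G=\mathrm{U}(1)$ with $R=T^{\otimes n}$ where $T(e^{i\theta})=e^{i\theta\mathrm{Z}}$. Since every irrep of $\mathrm{U}(1)$ is one-dimensional, the decomposition of $R$ is naturally indexed by the total charge, and it is convenient to label irreps by $\lambda \in \Lambda = \{0,1,\dots,n\}$ (the number of $\ket{1}$'s), with $r_\lambda=1$ and multiplicity $m_\lambda = \binom{n}{\lambda}$. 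This immediately rewrites condition~\eqref{SMeq:SMthm:general_cond1} as~\eqref{SMeq:SMlem:U1_cond1}. Semi-universality of $\bigcup_{\gamma\in\Gamma}\mathcal{U}_{n,G,R}^{\gamma}$ for $n\geq k+1\geq 3$ is supplied by the results of Marvian cited in the main text, so the hypotheses of Theorem~\ref{SMthm:general} are met.

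Next, since $R=T^{\otimes n}$, I can use the simplified form~\eqref{SMeq:SMthm:general_cond4} of the condition, which requires understanding the set $\mathcal{C}=\{\bm{f}(A\otimes \mathrm{I}^{\otimes n-k})\mid A\in\mathcal{L}_{k,G,T^{\otimes k}}\}$. The key step is to compute $f_\lambda(A\otimes \mathrm{I}^{\otimes n-k})$ explicitly in terms of the charge-components of $A$. The charge-$\lambda$ multiplicity space on $n$ qubits decomposes, via the tensor-product structure, as $\bigoplus_{\mu} \mathcal{H}_{k,\mu}\otimes \mathcal{H}_{n-k,\lambda-\mu}$ where $\mu$ runs over charges attainable on $k$ qubits. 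Writing $A=\bigoplus_\mu A_\mu$ on the $k$-qubit side, we obtain
\begin{align}
(A\otimes \mathrm{I}^{\otimes n-k})_\lambda
= \bigoplus_{\mu} A_\mu \otimes I_{\binom{n-k}{\lambda-\mu}},
\end{align}
and taking the trace yields
\begin{align}
f_\lambda(A\otimes \mathrm{I}^{\otimes n-k})
= \sum_{\mu=0}^{k} \mathrm{tr}(A_\mu)\binom{n-k}{\lambda-\mu}.
\end{align}

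Plugging this into~\eqref{SMeq:SMthm:general_cond4} and exchanging the order of summation gives $\sum_{\mu=0}^{k}\mathrm{tr}(A_\mu)\bigl(\sum_{\lambda}\binom{n-k}{\lambda-\mu}x_\lambda\bigr)=0$ for all admissible $A$. Since the traces $\mathrm{tr}(A_\mu)$ can independently take arbitrary real values across $\mu\in\{0,1,\dots,k\}$, this is equivalent to the vanishing of each coefficient, which is precisely~\eqref{SMeq:SMlem:U1_cond2} (after renaming $\mu=j$ and restricting the range of $\lambda$ to those indices where $\binom{n-k}{\lambda-j}\neq 0$). Combined with~\eqref{SMeq:SMlem:U1_cond1}, this completes the translation.

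The only mildly delicate point is justifying that $\mathrm{tr}(A_\mu)$ can indeed be chosen independently over $\mu$, i.e., that the map $A\mapsto(\mathrm{tr}(A_0),\dots,\mathrm{tr}(A_k))$ is surjective onto $\mathbb{R}^{k+1}$; this follows because each charge-$\mu$ subspace of $k$ qubits is nontrivial (of dimension $\binom{k}{\mu}\geq 1$), so choosing diagonal $A_\mu$'s proportional to the identity realizes any tuple. I would expect no further obstacle, so the bulk of the writeup is bookkeeping of the binomial multiplicities and a clean invocation of Theorem~\ref{SMthm:general}.
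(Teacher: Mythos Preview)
Your proposal is correct and follows essentially the same approach as the paper: apply Theorem~\ref{SMthm:general} in its simplified form~\eqref{SMeq:SMthm:general_cond4}, identify $\Lambda=\{0,\dots,n\}$ with $m_\lambda=\binom{n}{\lambda}$, and compute $f_\lambda(A\otimes\mathrm{I}^{\otimes n-k})=\sum_{j}\binom{n-k}{\lambda-j}\mathrm{tr}(A_j)$ via the charge-sector decomposition. The paper phrases the last step slightly more directly by taking $A=A_j$ supported on a single charge sector rather than arguing surjectivity of the trace map, but this is the same computation.
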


\begin{proof}
    Since the representation $R$ defined by Eq.~\eqref{SMeq:U1_representation} is the tensor product of representation of a single-qudit representation, Theorem~\ref{SMthm:general} implies that the condition for the distribution of $(G, R)$-symmetric $k$-local random circuits forming an asymptotic unitary $t$-design if and only if Eqs.~\eqref{SMeq:SMthm:general_cond1} and \eqref{SMeq:SMthm:general_cond4} have no nontrivial integer solution. 
    In the following, it is sufficient to show that Eqs.~\eqref{SMeq:SMthm:general_cond1} and \eqref{SMeq:SMthm:general_cond4} are equivalent to Eqs.~\eqref{SMeq:SMlem:U1_cond1} and \eqref{SMeq:SMlem:U1_cond2}, respectively. 
    The representation $R$ can be decomposed into $n+1$ inequivalent irreducible representations, which are given by $R_\lambda(e^{i\theta})=e^{i(n-2\lambda)\theta}$ for $\lambda\in\Lambda=\{0, 1, ..., n\}$. 
    Since each of the representations $R_\lambda$ corresponds to the eigenvalue of $\sum_{j=1}^n \mathrm{I}^{\otimes j-1}\otimes \mathrm{Z}\otimes \mathrm{I}^{\otimes n-j}$, we find that the multiplicity $m_\lambda$ of $R_\lambda$ is given by $\binom{n}{\lambda}$. 
    Thus Eq.~\eqref{SMeq:SMthm:general_cond1} is rewritten as Eq.~\eqref{SMeq:SMlem:U1_cond1}. 
    We fix $\gamma\in\Gamma$ and consider $(G, R)$-symmetric $k$-local operators that act nontrivially on some fixed $k$ qubits. 
    We note that such operators can be written as a linear combination of operators in the form of $A_j\otimes \mathrm{I}^{\otimes n-k}$ with $A_j$ acting only on the eigenspaces of the sum of the Pauli-Z operators on $k$ qubits with eigenvalues $k-2j$ for $j=0, 1, ..., k$ and $I$ acting on the rest $n-k$ qubits. 
    Then, we have 
    \begin{align}
        f_\lambda(A_j\otimes \mathrm{I}^{\otimes n-k})=\binom{n-k}{\lambda-j}\mathrm{tr}(A_j), 
    \end{align}
    and Eq.~\eqref{SMeq:SMthm:general_cond4} is rewritten as Eq.~\eqref{SMeq:SMlem:U1_cond2}. 
\end{proof}

In the following, we present the explicit condition on $t$ such that Eqs.~\eqref{SMeq:SMlem:U1_cond1} and \eqref{SMeq:SMlem:U1_cond2} in Lemma~\ref{SMlem:U1_cond} have no nontrivial integer solution. 
First, we give a sufficient condition in the following lemma.

\begin{lemma} \label{SMlem:U1_nontrivial_existence_general}
    Let $n, k\in\mathbb{N}$, $2\leq k\leq n-1$, and $R$ be a unitary representation of $G=\mathrm{U}(1)$ defined by Eq.~\eqref{SMeq:U1_representation}. 
    Then, Eqs.~\eqref{SMeq:SMlem:U1_cond1} and \eqref{SMeq:SMlem:U1_cond2} have a nontrivial integer solution $\bm{x}=(x_\lambda)_{\lambda=0, ..., n}\in\mathbb{Z}^{n+1}$ if 
    \begin{align}
        t\geq\frac{2^{\lfloor k/2\rfloor}}{\left\lceil \frac{k}{2}\right\rceil!}\prod_{\alpha=1}^{\lceil k/2\rceil} (n-k+2\alpha-1). \label{SMeq:SMlem:U1_nontrivial_existence_general0}
    \end{align}
\end{lemma}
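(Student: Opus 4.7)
The plan is to exhibit an explicit nontrivial integer tuple $\bm{x}=(x_\lambda)_{\lambda=0,\ldots,n}$ satisfying Eq.~\eqref{SMeq:SMlem:U1_cond2} whose weighted $\ell_1$ norm equals exactly $2t_{\max}$, where $t_{\max}$ denotes the right-hand side of Eq.~\eqref{SMeq:SMlem:U1_nontrivial_existence_general0}; then Eq.~\eqref{SMeq:SMlem:U1_cond1} holds whenever $t\ge t_{\max}$, proving the lemma. Setting $a:=\lfloor k/2\rfloor+1$ and $d:=n-k-1$, the proposed ansatz is
\[
x_\lambda:=(-1)^\lambda\binom{\lambda-a}{d},\qquad \lambda=0,\ldots,n,
\]
where $\binom{\lambda-a}{d}$ is the polynomial $(\lambda-a)(\lambda-a-1)\cdots(\lambda-a-d+1)/d!$ evaluated at (possibly negative) integer arguments. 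Since $P(\lambda):=\binom{\lambda-a}{d}$ is a polynomial in $\lambda$ of degree $d<n-k$, its $(n-k)$-th forward difference vanishes identically, and the identity $\sum_{\mu=0}^{n-k}\binom{n-k}{\mu}x_{j+\mu}=(-1)^{j+n-k}\Delta^{n-k}P(j)=0$ gives Eq.~\eqref{SMeq:SMlem:U1_cond2}; nontriviality follows from $x_0=(-1)^d\binom{a+d-1}{d}\neq 0$.

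The central computation is to evaluate $F(n):=\sum_\lambda\binom{n}{\lambda}|P(\lambda)|$ and verify $F(n)=2t_{\max}$. I would split the sum according to the three sign regions of $P$, namely $\lambda<a$, $\lambda\in[a,a+d-1]$ (where $P$ vanishes), and $\lambda\ge a+d$, and apply the reflection identity $\binom{-m}{d}=(-1)^d\binom{m+d-1}{d}$ to handle the left region. Combining with the series $(1-z)^{-(d+1)}=\sum_{\mu\ge 0}\binom{d+\mu}{d}z^\mu$, a short manipulation converts the two nonzero pieces $S_1,S_2$ into coefficient extractions from $G(z):=(1+z)^n/(1-z)^{d+1}$: one obtains $S_1=[z^{a-1}]G$ and $S_2=[z^U]G$ with $U:=\lceil k/2\rceil$. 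When $k$ is even, $a-1=U$, so $S_1=S_2$ and $F=2[z^U]G$; when $k$ is odd, $a-1=U-1$ and the identity $[z^U]G+[z^{U-1}]G=[z^U](1+z)G$ yields $F=[z^U](1+z)^{n+1}/(1-z)^{d+1}$. In both cases $d+1$ matches $N-2U$ with $N=n$ (even) or $N=n+1$ (odd), so the computation reduces to evaluating the universal quantity
\[
\Phi(N,m):=\bigl[z^m\bigr]\frac{(1+z)^N}{(1-z)^{N-2m}}.
\]

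The crux is the polynomial identity $\Phi(N,m)=(2^m/m!)(N-1)(N-3)\cdots(N-2m+1)$ for every integer $m\ge 0$ and indeterminate $N$. Both sides are polynomials in $N$ of degree $m$ with matching leading coefficient $2^m/m!$, so it suffices to verify $\Phi(N,m)=0$ at the $m$ roots $N\in\{1,3,\ldots,2m-1\}$. At $N=2j+1$ with $0\le j\le m-1$, the exponent $N-2m$ is a negative odd integer and $(1-z)^{N-2m}$ migrates into the numerator as $(1-z)^{2m-2j-1}$, giving $\Phi(2j+1,m)=[z^m](1+z)^{2j+1}(1-z)^{2m-2j-1}$. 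With $p=2j+1$ and $q=2m-2j-1$, both odd positive integers satisfying $p+q=2m$, the Laurent polynomial $g(y):=y^{-m}(1+y)^p(1-y)^q$ satisfies $g(1/y)=(-1)^q g(y)=-g(y)$, i.e., it is anti-symmetric under $y\mapsto 1/y$; hence its constant term $[y^0]g=[y^m](1+y)^p(1-y)^q=\Phi(2j+1,m)$ must vanish. Substituting this identity back through the two parity cases gives $F(n)=2t_{\max}$ exactly.

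The main obstacle is the bookkeeping through the parity split between $k$ even and $k$ odd, which introduces two slightly different coefficient extractions related by a shift $N\mapsto N+1$ but unified by the polynomial identity above. The genuinely nontrivial step is the anti-symmetry argument establishing $[y^m](1+y)^p(1-y)^q=0$ when $p+q=2m$ and both $p,q$ are odd.
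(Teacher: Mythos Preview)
Your proof is correct and follows essentially the same route as the paper. Your ansatz $x_\lambda=(-1)^\lambda\binom{\lambda-a}{d}$ is (up to a global sign and, for odd $k$, the reflection $\lambda\mapsto n-\lambda$) the paper's explicit solution $y_{n,k,\lambda}$; your key identity $\Phi(N,m)=\frac{2^m}{m!}\prod_{\alpha=1}^m(N-2\alpha+1)$ is exactly the paper's Lemma~\ref{SMlem:sequence_property1} (indeed $\Phi(N,m)=a_{N,2m,m}$), and your Laurent anti-symmetry argument $g(1/y)=-g(y)$ is the generating-function counterpart of the paper's index-swap $p\mapsto 2k-2\alpha+1-p$ in that lemma's proof.
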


\begin{proof}
    We define a vector $\bm{y}_{n, k}=(y_{n, k, \lambda})_{\lambda\in\Lambda}\in\mathbb{Z}^\Lambda$ by 
    \begin{align}
        y_{n, k, \lambda}:=\frac{(-1)^\lambda}{(n-k-1)!}\prod_{\alpha=1}^{n-k-1} \left(\lambda-\left\lceil \frac{k}{2}\right\rceil-\alpha\right), 
    \end{align}
    and we show that $\bm{y}_{n, k}$ is a nontrivial integer solution of Eqs.~\eqref{SMeq:SMlem:U1_cond1} and \eqref{SMeq:SMlem:U1_cond2}. 
    By the definition of $\bm{y}_{n, k}$, we have 
    \begin{align}
        \sum_{\lambda=0}^n \binom{n-k}{\lambda-j} y_{n, k, \lambda} 
        =&\sum_{\lambda=0}^n \binom{n-k}{\lambda-j} \frac{(-1)^\lambda}{(n-k-1)!}\prod_{\alpha=1}^{n-k-1} \left(\lambda-\left\lceil \frac{k}{2}\right\rceil-\alpha\right) \nonumber\\
        =&\sum_{\lambda=0}^n \binom{n-k}{\lambda-j} \frac{(-1)^\lambda}{(n-k-1)!}\left.\left(\frac{d}{dz}\right)^{n-k-1} z^{\lambda-\lceil k/2\rceil-1}\right|_{z=1} \nonumber\\
        =&\frac{(-1)^j}{(n-k-1)!}\left.\left(\frac{d}{dz}\right)^{n-k-1}z^{j-\lceil k/2\rceil-1} \sum_{\lambda=0}^n \binom{n-k}{\lambda-j}(-z)^{\lambda-j} \right|_{z=1} \nonumber\\
        =&\frac{(-1)^j}{(n-k-1)!}\left.\left(\frac{d}{dz}\right)^{n-k-1}z^{j-\lceil k/2\rceil-1}(1-z)^{n-k} \right|_{z=1} \nonumber\\
        =&0, \label{SMeq:SMlem:U1_nontrivial_existence_general1}
    \end{align}
    which implies that $\bm{y}_{n, k}$ is a nontrivial integer solution of Eq.~\eqref{SMeq:SMlem:U1_cond2}. 
    The definition of $\bm{y}_{n, k}$ also implies that 
    \begin{align}
        &\sum_{\lambda=0}^n \binom{n}{\lambda}|y_{n, k, \lambda}| \nonumber\\
        =&\sum_{\lambda=0}^n \binom{n}{\lambda}\frac{1}{(n-k-1)!}\prod_{\alpha=1}^{n-k-1} \left|\lambda-\left\lceil \frac{k}{2}\right\rceil-\alpha\right| \nonumber\\
        =&\sum_{\lambda=0}^{\lceil k/2\rceil} \binom{n}{\lambda}\frac{1}{(n-k-1)!}\prod_{\alpha=1}^{n-k-1} \left(\alpha+\left\lceil \frac{k}{2}\right\rceil-\lambda\right) 
        +\sum_{\lambda=n-k+\lceil k/2\rceil}^n \binom{n}{\lambda}\frac{1}{(n-k-1)!}\prod_{\alpha=1}^{n-k-1} \left(\lambda-\left\lceil \frac{k}{2}\right\rceil-\alpha\right) \nonumber\\
        =&\sum_{\lambda=0}^{\lceil k/2\rceil} \binom{n}{\lambda}\binom{n-k-1+\left\lceil \frac{k}{2}\right\rceil-\lambda}{n-k-1}+\sum_{\lambda=n-k+\lceil k/2\rceil}^n \binom{n}{\lambda}\binom{\lambda-\left\lceil \frac{k}{2}\right\rceil-1}{n-k-1}. \label{SMeq:SMlem:U1_nontrivial_existence_general2}
    \end{align}
    We note that 
    \begin{align}
        \sum_{\lambda=n-k+\lceil k/2\rceil}^n \binom{n}{\lambda}\binom{\lambda-\left\lceil \frac{k}{2}\right\rceil-1}{n-k-1}
        =\sum_{\lambda=n-\lfloor k/2\rfloor}^n \binom{n}{\lambda}\binom{\lambda-k+\left\lfloor \frac{k}{2}\right\rfloor-1}{n-k-1}
        =\sum_{\lambda=0}^{\lfloor k/2\rfloor} \binom{n}{\lambda}\binom{n-k-1+\left\lfloor \frac{k}{2}\right\rfloor-\lambda}{n-k-1}. \label{SMeq:SMlem:U1_nontrivial_existence_general3}
    \end{align}
    By plugging Eq.~\eqref{SMeq:SMlem:U1_nontrivial_existence_general3} into Eq.~\eqref{SMeq:SMlem:U1_nontrivial_existence_general2}, we get 
    \begin{align}
        \sum_{\lambda=0}^n \binom{n}{\lambda}|y_{n, k, \lambda}|
        =\sum_{\lambda=0}^{\lceil k/2\rceil} \binom{n}{\lambda}\binom{n-k-1+\left\lceil \frac{k}{2}\right\rceil-\lambda}{n-k-1}+\sum_{\lambda=0}^{\lfloor k/2\rfloor} \binom{n}{\lambda}\binom{n-k-1+\left\lfloor \frac{k}{2}\right\rfloor-\lambda}{n-k-1}. \label{SMeq:SMlem:U1_nontrivial_existence_general4}
    \end{align}
    We note that for any $j\geq 0$, 
    \begin{align}
        \sum_{\lambda=0}^j \binom{n}{\lambda}\binom{n-k-1+j-\lambda}{n-k-1}
        =\sum_{\lambda=0}^j \binom{n}{\lambda}\binom{n-k-1+j-\lambda}{j-\lambda} 
        =a_{n, k, j}, \label{SMeq:SMlem:U1_nontrivial_existence_general5}
    \end{align}
    where we define 
    \begin{align}
        a_{n, k, j}:=\sum_{\lambda=0}^j \binom{n}{j-\lambda}\binom{n-k+\lambda-1}{\lambda} \label{SMeq:a_nkj_def}
    \end{align}
    for $n, k, j\in\mathbb{Z}$ satisfying $0\leq k\leq n-1$ and $j\geq 0$.
    By applying Eq.~\eqref{SMeq:SMlem:U1_nontrivial_existence_general5} to Eq.~\eqref{SMeq:SMlem:U1_nontrivial_existence_general4}, we get 
    \begin{align}
         \sum_{\lambda=0}^n \binom{n}{\lambda}|y_{n, k, \lambda}| 
         =a_{n, k, \lceil k/2\rceil}+a_{n, k, \lfloor k/2\rfloor}
         =2\cdot\frac{2^{\lfloor k/2\rfloor}}{\left\lceil \frac{k}{2}\right\rceil!}\prod_{\alpha=1}^{\lceil k/2\rceil} (n-k+2\alpha-1), \label{SMeq:SMlem:U1_nontrivial_existence_general6}
    \end{align}
    where we used Lemma~\ref{SMlem:sequence_property2}. 
    Thus $\bm{y}_{n, k}$ is a nontrivial integer solution of Eqs.~\eqref{SMeq:SMlem:U1_cond1} and \eqref{SMeq:SMlem:U1_cond2} when $t$ satisfies Eq.~\eqref{SMeq:SMlem:U1_nontrivial_existence_general0}. 
\end{proof}

Next, we show that for sufficiently large $n$, the condition on $t$ in Lemma~\ref{SMlem:U1_nontrivial_existence_general} is a necessary condition for the equations in Lemma~\ref{SMlem:U1_cond} having no nontrivial integer solution.

\begin{lemma} \label{SMlem:U1_nontrivial_nonexistence_general}
    Let $n, k, t\in\mathbb{N}$, $2\leq k\leq n-1$, $R$ be a unitary representation of $G=\mathrm{U}(1)$ defined by Eq.~\eqref{SMeq:U1_representation}, and $b_{n, j}$ defined for $j\in\mathbb{N}$ by 
    \begin{align}
        b_{n, j}:=\frac{2^{\lfloor j/2\rfloor}}{\left\lceil \frac{j}{2}\right\rceil!}\prod_{\alpha=1}^{\lceil j/2\rceil} (n-j+2\alpha-1) \label{SMeq:U1_bound_def}
    \end{align}
    satisfy $b_{n, k}\leq\binom{n}{j}$ for all $j\in\{\lceil k/2\rceil+1, \lceil k/2\rceil+2, ..., n-\lceil k/2\rceil-1\}$, and when $k$ is odd, $b_{n, j}$ also satisfy $b_{n, k}\leq b_{n, k+1}$. 
    Then, Eqs.~\eqref{SMeq:SMlem:U1_cond1} and \eqref{SMeq:SMlem:U1_cond2} have no nontrivial integer solution $\bm{x}=(x_\lambda)_{\lambda=0, ..., n}\in\mathbb{Z}^{n+1}$ if and only if $t<b_{n, k}$. 
\end{lemma}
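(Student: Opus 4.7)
The ``only if'' direction follows directly from Lemma~\ref{SMlem:U1_nontrivial_existence_general}, which exhibits the nontrivial integer solution $\bm{y}_{n,k}$ with $\sum_\lambda\binom{n}{\lambda}|y_{n,k,\lambda}|=2b_{n,k}$. For the ``if'' direction I assume $t<b_{n,k}$ and a nontrivial integer solution $\bm{x}$ of Eqs.~\eqref{SMeq:SMlem:U1_cond1}--\eqref{SMeq:SMlem:U1_cond2}, and derive a contradiction by showing $\sum_\lambda\binom{n}{\lambda}|x_\lambda|\geq 2b_{n,k}$. Set $r:=\lceil k/2\rceil$ and note that Eq.~\eqref{SMeq:SMlem:U1_cond2} implies $\sum_\lambda\binom{n}{\lambda}x_\lambda=0$ by Vandermonde's convolution identity, so Lemma~\ref{SMlem:solution_bound} applies. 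If some $x_{\lambda^*}\neq 0$ with $\lambda^*\in\{r+1,\ldots,n-r-1\}$, then Lemma~\ref{SMlem:solution_bound} with $\Lambda'=\{\lambda^*\}$ gives $\binom{n}{\lambda^*}|x_{\lambda^*}|\leq t$, so $\binom{n}{\lambda^*}\leq t<b_{n,k}\leq\binom{n}{\lambda^*}$, a contradiction.

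Otherwise $\mathrm{supp}(\bm{x})\subset\{0,\ldots,r\}\cup\{n-r,\ldots,n\}$. Restricting Eq.~\eqref{SMeq:SMlem:U1_cond2} to this support yields a block-triangular system of full rank $k+1$, so the real solution subspace has dimension $2r+1-k$, equal to $1$ for $k$ even and $2$ for $k$ odd. The explicit formula of Lemma~\ref{SMlem:U1_nontrivial_existence_general} yields $|y_{n,k,r}|=1$, so $\bm{y}_{n,k}$ is primitive in the integer lattice. For $k$ even this forces $\bm{x}=c\bm{y}_{n,k}$ with $c\in\mathbb{Z}\setminus\{0\}$ and hence $\sum_\lambda\binom{n}{\lambda}|x_\lambda|=|c|\cdot 2b_{n,k}\geq 2b_{n,k}$, yielding the required contradiction. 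For $k$ odd, the vector $\bm{y}_{n,k+1}$ is also a locality-$k$ solution (any locality-$(k+1)$ solution satisfies the locality-$k$ equations via Pascal's identity) and is linearly independent from $\bm{y}_{n,k}$, and the primitivity conditions $|y_{n,k,r}|=|y_{n,k+1,n-r}|=1$ identify the full integer lattice as $\mathbb{Z}\bm{y}_{n,k}\oplus\mathbb{Z}\bm{y}_{n,k+1}$.

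For the odd-$k$ subcase I introduce a dual vector $\bm{w}$ with $w_\lambda:=\binom{n}{\lambda}\,\mathrm{sgn}(y_{n,k,\lambda})$ on $\mathrm{supp}(\bm{y}_{n,k})$, $w_{n-r}:=\binom{n}{r}\,\mathrm{sgn}(y_{n,k+1,n-r})$, and $w_\lambda=0$ elsewhere, so $|w_\lambda|\leq\binom{n}{\lambda}$ everywhere. By construction $\bm{w}\cdot\bm{y}_{n,k}=2b_{n,k}$. A sign analysis of the product formula $y_{n,k',\lambda}\propto(-1)^\lambda\prod_\alpha(\lambda-r-\alpha)$ shows that $\mathrm{sgn}(y_{n,k+1,\lambda})$ is opposite to $\mathrm{sgn}(y_{n,k,\lambda})$ on $\{0,\ldots,r\}$ and agrees with it on $\{n-r+1,\ldots,n\}$; combined with the reflection symmetry $|y_{n,k+1,\lambda}|=|y_{n,k+1,n-\lambda}|$ (valid because $k+1$ is even), the contributions to $\bm{w}\cdot\bm{y}_{n,k+1}$ from the low indices, from $\lambda=n-r$, and from the high indices cancel to yield $\bm{w}\cdot\bm{y}_{n,k+1}=0$. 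For any $\bm{x}=\alpha\bm{y}_{n,k}+\beta\bm{y}_{n,k+1}\in\mathbb{Z}^{n+1}$ this gives $2|\alpha|b_{n,k}=|\bm{w}\cdot\bm{x}|\leq\sum_\lambda\binom{n}{\lambda}|x_\lambda|$, which handles $\alpha\neq 0$; the residual case $\alpha=0$ follows from $\sum_\lambda\binom{n}{\lambda}|x_\lambda|=|\beta|\cdot 2b_{n,k+1}\geq 2b_{n,k+1}\geq 2b_{n,k}$, where the final inequality invokes the hypothesis $b_{n,k}\leq b_{n,k+1}$. The main technical hurdle is the cancellation identity $\bm{w}\cdot\bm{y}_{n,k+1}=0$, which requires careful bookkeeping of the product-formula signs against the reflection symmetry of $\bm{y}_{n,k+1}$; the rank computation and the identification of the primitive integer lattice are more routine once the explicit formulas of Lemma~\ref{SMlem:U1_nontrivial_existence_general} are in hand.
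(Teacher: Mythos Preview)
Your proof is correct and shares the overall scaffolding with the paper (the ``only if'' via Lemma~\ref{SMlem:U1_nontrivial_existence_general}, Lemma~\ref{SMlem:solution_bound} to kill the middle coordinates, and the one-dimensional argument for even $k$), but for odd $k$ you take a genuinely different route. The paper uses as its second basis vector the \emph{reflection} $\bm{y}'_{n,k}$ with $y'_{n,k,\lambda}=y_{n,k,n-\lambda}$, writes $\bm{x}=r\bm{y}_{n,k}+r'\bm{y}'_{n,k}$, and then exploits the pointwise identity $|y_{n,k,\lambda}|-|y'_{n,k,\lambda}|=|y_{n,k+1,\lambda}|$ together with the triangle inequality to obtain $\sum_\lambda\binom{n}{\lambda}|x_\lambda|\ge(|r|+|r'|)\,b_{n,k+1}\ge(|r|+|r'|)\,b_{n,k}$, which forces $|r|+|r'|<2$ and then bootstraps to $r=r'=0$. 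You instead pick $\bm{y}_{n,k+1}$ itself as the second basis vector and separate the coefficients via a dual vector $\bm{w}$ satisfying $\bm{w}\cdot\bm{y}_{n,k}=2b_{n,k}$ and $\bm{w}\cdot\bm{y}_{n,k+1}=0$. Your approach is arguably cleaner once the cancellation identity is established, and it makes transparent why $b_{n,k+1}$ enters only in the residual $\alpha=0$ case; the paper's approach avoids the dual-vector construction at the price of a two-stage argument. One small point worth making explicit in your write-up: the lattice identification $\mathbb{Z}\bm{y}_{n,k}\oplus\mathbb{Z}\bm{y}_{n,k+1}$ relies not just on $|y_{n,k,r}|=|y_{n,k+1,n-r}|=1$ but also on $y_{n,k,n-r}=0$, so that the $2\times 2$ minor at coordinates $r,n-r$ is unit-triangular; and you implicitly use $n\ge k+2$ (which the paper observes follows from $b_{n,k}\le b_{n,k+1}$) to have $\bm{y}_{n,k+1}$ defined at all.
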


We note that the assumptions are satisfied when $n\geq 2^k$, as we show in Lemma~\ref{SMlem:U1_sufficient_cond}.

\begin{proof}
    We take arbitrary integer solution $\bm{x}$ of Eqs.~\eqref{SMeq:SMlem:U1_cond1} and \eqref{SMeq:SMlem:U1_cond2}, and show that $\bm{x}=\bm{0}$. 
    When $t<b_{n, k}$, Eq.~\eqref{SMeq:SMlem:U1_cond2} implies that 
    \begin{align}
        \sum_{\lambda=0}^n \binom{n}{\lambda}|x_\lambda|<2b_{n, k}. \label{SMeq:SMlem:U1_nontrivial_nonexistence_general1}
    \end{align}
    Before going into the detailed proof process, we note that the assumption of $b_{n, k}\leq b_{n, k+1}$ for odd $k$ implies that $n\geq k+2$, because when $k$ is odd and $n=k+1$, we have 
    \begin{align}
        b_{n, k}
        =\frac{2^{(k-1)/2}}{(\frac{k+1}{2})!}\prod_{\alpha=1}^{(k+1)/2} 2\alpha 
        =\frac{1}{2}\cdot\frac{2^{(k+1)/2}}{(\frac{k+1}{2})!}\left(\prod_{\alpha=2}^{(k+1)/2} 2\alpha\right)\cdot 2
        >\frac{2^{(k+1)/2}}{(\frac{k+1}{2})!}\left(\prod_{\alpha=2}^{(k+1)/2} (2\alpha-1)\right) 
        =b_{n, k+1}, 
    \end{align}
    which contradicts with $b_{n, k}\leq b_{n, k+1}$. 
    By using Lemma~\ref{SMlem:solution_bound} with $\Lambda'=\{ \lambda \}$ for $\lambda\in\{\lceil k/2\rceil+1, \lceil k/2\rceil+2, ..., n-\lceil k/2\rceil-1\}$, and the assumption of $b_{n, k}\leq\binom{n}{\lceil k/2\rceil+1}$, we have 
    \begin{align}
        x_\lambda=0\ \forall \lambda\in\{\lceil k/2\rceil+1, \lceil k/2\rceil+2, ..., n-\lceil k/2\rceil-1\}, \label{SMeq:SMlem:U1_nontrivial_nonexistence_general2}
    \end{align}
    which means that $\bm{x}$ is in the space orthogonal to the space spanned by $\{\bm{w}_j\}_{j=0}^k$ and $\{\bm{v}_j\}_{j=\lceil k/2\rceil+1}^{n-\lceil k/2\rceil-1}$, where $\bm{w}_j:=(\binom{n-k}{\lambda-j})_{\lambda\in\Lambda}$ for $j\in\{0, 1, ..., k\}$ and $\bm{v}_j:=(\delta_{\lambda, j})_{\lambda\in\Lambda}$ for $j\in\{\lceil k/2\rceil+1, \lceil k/2\rceil+2, ..., n-\lceil k/2\rceil-1\}$. 
    We note that these vectors are linearly independent, because we can show that $\sum_{j=0}^k p_j\bm{w}_j+\sum_{j=\lceil k/2\rceil+1}^{n-\lceil k/2\rceil-1} q_j\bm{v}_j=0$ implies that $p_j=0$ and $q_j=0$ for all $j$ by looking at the elements for $\lambda=0, 1, ..., \lceil k/2\rceil, n, n-1, ..., n-\lfloor k/2\rfloor+1, \lceil k/2\rceil+1, \lceil k/2\rceil+2, ..., n-\lceil k/2\rceil-1$ in order. 
    Since $n\geq k+1$ when $k$ is even and from $n\geq k+2$ when $k$ is odd, we have $\lceil k/2\rceil\leq n-\lceil k/2\rceil-1$, which implies that the size of the set $\{\lceil k/2\rceil+1, \lceil k/2\rceil+2, ..., n-\lceil k/2\rceil-1\}$ is given by $(n-\lceil k/2\rceil-1)-\lceil k/2\rceil=n-2\lceil k/2\rceil-1$. 
    Thus the dimension of the linear space of $\bm{x}$ satisfying Eqs.~\eqref{SMeq:SMlem:U1_cond1} and \eqref{SMeq:SMlem:U1_nontrivial_nonexistence_general2} is given by $(n+1)-[(k+1)+(n-2\lceil k/2\rceil-1)]=2\lceil k/2\rceil-k+1$, which is $1$ when $k$ is even and $2$ when $k$ is odd.

    When $k$ is even, by noting that $\bm{y}_{n, k}$ satisfies Eqs.~\eqref{SMeq:SMlem:U1_cond1} and \eqref{SMeq:SMlem:U1_nontrivial_nonexistence_general2}, $\bm{x}$ can be written as $\bm{x}=r\bm{y}_{n, k}$ with some $r\in\mathbb{R}$. 
    Since $\bm{x}$ is an integer vector and $y_{n, k, \lceil k/2\rceil}=(-1)^{n-k+\lceil k/2\rceil-1}$, we have $r=x_{\lceil k/2\rceil}/y_{n, k, \lceil k/2\rceil}\in\mathbb{Z}$. 
    By Eq.~\eqref{SMeq:SMlem:U1_nontrivial_existence_general6}, we have 
    \begin{align}
        \sum_{\lambda=0}^n \binom{n}{\lambda} |x_\lambda| 
        =\sum_{\lambda=0}^n \binom{n}{\lambda} |ry_{n, k, \lambda}| 
        =2|r|b_{n, k}. \label{SMeq:SMlem:U1_nontrivial_nonexistence_general3}
    \end{align}
    By plugging this into Eq.~\eqref{SMeq:SMlem:U1_nontrivial_nonexistence_general1}, we get $|r|<1$, which implies that $r=0$. 
    Thus Eqs.~\eqref{SMeq:SMlem:U1_cond1} and \eqref{SMeq:SMlem:U1_cond2} have no nontrivial integer solution.

    When $k$ is odd, to prepare a basis of the linear space of $\bm{x}$ satisfying Eqs.~\eqref{SMeq:SMlem:U1_cond1} and \eqref{SMeq:SMlem:U1_nontrivial_nonexistence_general2}, we define $\bm{y}'_{n, k}=(y'_{n, k, \lambda})_{\lambda\in\Lambda}$ by $y'_{n, k, \lambda}:=y_{n, k, n-\lambda}$. 
    We can see that $\bm{y}'_{n, k}$ is also a nontrivial integer solution of Eq.~\eqref{SMeq:SMlem:U1_cond2} by considering the transformation of $\lambda\mapsto n-\lambda$ in Eq.~\eqref{SMeq:SMlem:U1_nontrivial_existence_general1}. 
    We note that $\bm{y}_{n, k}$ and $\bm{y}'_{n, k}$ are linearly independent, which can be confirmed by $y_{n, k, \lceil k/2\rceil}=(-1)^{\lceil k/2\rceil+n-k-1}$, $y_{n, k, \lceil k/2\rceil+n-k-1}=0$, $y'_{n, k, \lceil k/2\rceil}=0$, and $y'_{n, k, \lceil k/2\rceil+n-k-1}=(-1)^{\lceil k/2\rceil}$. 
    Thus, $\bm{x}$ can be written as $\bm{x}=r\bm{y}_{n, k}+r'\bm{y}'_{n, k}$ with some $r, r'\in\mathbb{R}$. 
    Since $\bm{x}$ is an integer vector, we have $r=x_{\lceil k/2\rceil}/y_{n, k, \lceil k/2\rceil}\in\mathbb{Z}$ and $r'=x_{\lceil k/2\rceil+n-k-1}/y_{n, k, \lceil k/2\rceil+n-k-1}\in\mathbb{Z}$. 
    By Eq.~\eqref{SMeq:SMlem:U1_nontrivial_nonexistence_general2}, we have 
    \begin{align}
        \sum_{\lambda=0}^n \binom{n}{\lambda}|x_\lambda|
        =\sum_{\lambda=0}^{\lceil k/2\rceil} \binom{n}{\lambda}(|x_\lambda|+|x_{n-\lambda}|) 
        =\sum_{\lambda=0}^{\lceil k/2\rceil} \binom{n}{\lambda}(|r y_{n, k, \lambda}+r' y'_{n, k, \lambda}|+|r y'_{n, k, \lambda}+r' y_{n, k, \lambda}|), 
    \end{align}
    where we used $x_{n-\lambda}=r y_{n, k, n-\lambda}+r' y'_{n, k, n-\lambda}=r y'_{n, k, \lambda}+r' y_{n, k, \lambda}$. 
    By the triangle inequality, we have 
    \begin{align}
        \sum_{\lambda=0}^n \binom{n}{\lambda}|x_\lambda| 
        \geq&\sum_{\lambda=0}^{\lceil k/2\rceil} \binom{n}{\lambda}[(|r| |y_{n, k, \lambda}|-|r'| |y'_{n, k, \lambda}|)+(|r'| |y_{n, k, \lambda}|-|r| |y'_{n, k, \lambda}|)] \nonumber\\
        =&(|r|+|r'|)\sum_{\lambda=0}^{\lceil k/2\rceil} \binom{n}{\lambda}(|y_{n, k, \lambda}|-|y'_{n, k, \lambda}|). \label{SMeq:SMlem:U1_nontrivial_nonexistence_general4}
    \end{align}
    For $\lambda\in\{0, 1, ..., \lceil k/2\rceil\}$, we have 
    \begin{align}
        &|y_{n, k, \lambda}|
        =\frac{1}{(n-k-1)!}\prod_{\alpha=1}^{n-k-1} \left(\alpha+\left\lceil \frac{k}{2}\right\rceil-\lambda\right), \nonumber\\
        &|y'_{n, k, \lambda}|
        =\frac{1}{(n-k-1)!}\prod_{\alpha=1}^{n-k-1} \left(\alpha-1+\left\lceil \frac{k}{2}\right\rceil-\lambda\right) 
        =\frac{1}{(n-k-1)!}\prod_{\alpha=0}^{n-k-2} \left(\alpha+\left\lceil \frac{k}{2}\right\rceil-\lambda\right), 
    \end{align}
    which imply that 
    \begin{align}
        |y_{n, k, \lambda}|-|y'_{n, k, \lambda}|
        =&\frac{1}{(n-k-2)!}\prod_{\alpha=1}^{n-k-2} \left(\alpha+\left\lceil \frac{k}{2}\right\rceil-\lambda\right) \nonumber\\
        =&\frac{1}{[n-(k+1)-1]!}\prod_{\alpha=1}^{n-(k+1)-1} \left(\alpha+\left\lceil \frac{k+1}{2}\right\rceil-\lambda\right) \nonumber\\
        =&|y_{n, k+1, \lambda}|. \label{SMeq:SMlem:U1_nontrivial_nonexistence_general5}
    \end{align}
    By plugging Eq.~\eqref{SMeq:SMlem:U1_nontrivial_nonexistence_general5} into Eq.~\eqref{SMeq:SMlem:U1_nontrivial_nonexistence_general4}, we get 
    \begin{align}
        \sum_{\lambda=0}^n \binom{n}{\lambda}|x_\lambda| 
        \geq (|r|+|r'|)\sum_{\lambda=0}^{\lceil k/2\rceil} \binom{n}{\lambda} |y_{n, k+1, \lambda}| 
        =(|r|+|r'|)\cdot\frac{1}{2}\sum_{\lambda=0}^n \binom{n}{\lambda} |y_{n, k+1, \lambda}|, \label{SMeq:SMlem:U1_nontrivial_nonexistence_general6}
    \end{align}
    where we used $y_{n, k+1, \lambda}=0$ for all $\lambda\in\{\lceil k/2\rceil+1, \lceil k/2\rceil+2, ..., n-\lceil k/2\rceil-1\}$ and $|y_{n, k+1, \lambda}|=|y_{n, k+1, n-\lambda}|$ for all $\lambda\in\{0, 1, ..., \lceil k/2\rceil\}$ in the equality. 
    By plugging Eq.~\eqref{SMeq:SMlem:U1_nontrivial_existence_general6} into Eq.~\eqref{SMeq:SMlem:U1_nontrivial_nonexistence_general6}, we get 
    \begin{align}
        \sum_{\lambda=0}^n \binom{n}{\lambda}|x_\lambda| 
        =(|r|+|r'|) b_{n, k+1} 
        \geq (|r|+|r'|) b_{n, k}, \label{SMeq:SMlem:U1_nontrivial_nonexistence_general7}
    \end{align}
    where we used the assumption of $b_{n, k}\leq b_{n, k+1}$ in the second inequality. 
    By Eqs.~\eqref{SMeq:SMlem:U1_nontrivial_nonexistence_general1} and \eqref{SMeq:SMlem:U1_nontrivial_nonexistence_general7}, we have $|r|+|r'|<2$, which means $r=0$ or $r'=0$. 
    In both cases, we have 
    \begin{align}
        \sum_{\lambda=0}^n \binom{n}{\lambda}|x_\lambda|
        =\sum_{\lambda=0}^n \binom{n}{\lambda}|ry_{n, k, \lambda}+r'y_{n, k, n-\lambda}| 
        =\sum_{\lambda=0}^n \binom{n}{\lambda}(|r||y_{n, k, \lambda}|+|r'||y_{n, k, n-\lambda}|) 
        =(|r|+|r'|)\cdot 2b_{n, k}, \label{SMeq:SMlem:U1_nontrivial_nonexistence_general8}
    \end{align}
    where we used Eq.~\eqref{SMeq:SMlem:U1_nontrivial_existence_general6} in the last equality. 
    By Eqs.~\eqref{SMeq:SMlem:U1_nontrivial_nonexistence_general1} and \eqref{SMeq:SMlem:U1_nontrivial_nonexistence_general8}, we get $|r|+|r'|<1$, which implies that $r=r'=0$. 
    We can therefore conclude that Eqs.~\eqref{SMeq:SMlem:U1_cond1} and \eqref{SMeq:SMlem:U1_cond2} have no nontrivial integer solution when $t<b_{n, k}$. 
\end{proof}

By using the lemmas above, we can prove Theorem~\ref{SMthm:U1_general_locality} as follows: \\

\noindent
\textit{Proof of Theorem~\ref{SMthm:U1_general_locality}.}
    In Lemma~\ref{SMlem:U1_cond}, we have explicitly rewritten the equations in Theorem~\ref{SMthm:general} in the $\mathrm{U}(1)$ case. 
    When $t$ does not satisfy Eq.~\eqref{SMeq:SMthm:U1_general_locality1}, by Lemma~\ref{SMlem:U1_nontrivial_existence_general}, there exists a nontrivial integer solution for all $n\geq k+1$. 
    When $t$ satisfies Eq.~\eqref{SMeq:SMthm:U1_general_locality1}, by Lemma~\ref{SMlem:U1_nontrivial_nonexistence_general}, there exists no nontrivial integer solution under a certain assumption about $n$ and $k$, which are guaranteed when $n\geq 2^k$ by Lemma~\ref{SMlem:U1_sufficient_cond}. 
\hfill $\square$\\

By directly considering the condition on $t$ such that Eqs.~\eqref{SMeq:SMlem:U1_cond1} and \eqref{SMeq:SMlem:U1_cond2} in Lemma~\ref{SMlem:U1_cond} have no nontrivial integer solution for the region of $n$ that does not satisfy the assumption in Lemma~\ref{SMlem:U1_nontrivial_nonexistence_general}, we can prove Theorem~\ref{SMthm:U1_small_locality}. \\

\noindent
\textit{Proof of Theorem~\ref{SMthm:U1_small_locality}.}
    By Lemma~\ref{SMlem:U1_cond}, the distribution of the $(G, R)$-symmetric $k$-local random circuit forming an asymptotic unitary $t$-design if and only if Eqs.~\eqref{SMeq:SMlem:U1_cond1} and \eqref{SMeq:SMlem:U1_cond2} have no nontrivial integer solution. 
    We note that $b_{n, k}$ defined by Eq.~\eqref{SMeq:U1_bound_def} satisfies $b_{n, 2}=2(n-1)$ $b_{n, 3}=n(n-2)$, and $b_{n, 4}=2(n-1)(n-3)$. 
    Since we have proven the existence of a nontrivial integer solution when $t$ satisfies Eq.~\eqref{SMeq:U1_small_locality_order} by Lemma~\ref{SMlem:U1_nontrivial_existence_general}, it is sufficient to show the nonexistence of a nontrivial integer solution of Eqs.~\eqref{SMeq:SMlem:U1_cond1} and \eqref{SMeq:SMlem:U1_cond2} when $t$ does not satisfy Eq.~\eqref{SMeq:U1_small_locality_order}.

    First, we consider the case when $k=2$. 
    We can confirm that the assumption in Lemma~\ref{SMlem:U1_nontrivial_nonexistence_general} holds for all $n\geq 3$. 
    When $n\geq 4$, for any $j\in\{2, 3, ..., n-2\}$, we have 
    \begin{align}
        \binom{n}{j} 
        \geq\binom{n}{2} 
        =\frac{n}{4}b_{n, 2} 
        \geq b_{n, 2},  
    \end{align}
    and this trivially holds when $n=3$, because the set $\{2, 3, ..., n-2\}$ is empty. 
    By Lemma~\ref{SMlem:U1_nontrivial_nonexistence_general}, Eqs.~\eqref{SMeq:SMlem:U1_cond1} and \eqref{SMeq:SMlem:U1_cond2} have no nontrivial integer solution for all $n\geq 3$.

    Next, we consider the case when $k=3$. 
    We can confirm that the assumptions in Lemma~\ref{SMlem:U1_nontrivial_nonexistence_general} hold for $n=5$ and $n\geq 7$. 
    As for the first assumption, when $n\geq 7$, we have for any $j\in\{3, 4, ..., n-3\}$, 
    \begin{align}
        \binom{n}{j} 
        \geq\binom{n}{3} 
        =\frac{n-1}{6}b_{n, 3} 
        \geq b_{n, 3}, 
    \end{align}
    and this trivially holds when $n=5$, because the set $\{3, 4, ..., n-3\}$ is empty. 
    As for the second assumption, we have 
    \begin{align}
        b_{n, 4} 
        =b_{n, 3}+(n-1)(n-5)+1 
        \geq b_{n, 3}. 
    \end{align}
    By Lemma~\ref{SMlem:U1_nontrivial_nonexistence_general}, Eqs.~\eqref{SMeq:SMlem:U1_cond1} and \eqref{SMeq:SMlem:U1_cond2} have no nontrivial integer solution when $n=5$ or $n\geq 7$. 
    Thus, we have only to check the cases of $n=4$ and $6$ in the following.

    \begin{itemize}
    \item
    When $n=4$, Eqs.~\eqref{SMeq:SMlem:U1_cond1} and \eqref{SMeq:SMlem:U1_cond2} are explicitly written as 
    \begin{align}
        &|x_0|+4|x_1|+6|x_2|+4|x_3|+|x_4|\leq 2t, \\
        &x_j+x_{j+1}=0\ \forall j\in\{0, 1, 2, 3\}. 
    \end{align}
    We show that these equations have no nontrivial integer solution for $t<8$. 
    We take an arbitrary integer solution $\bm{x}$. 
    Equation~\eqref{SMeq:SMlem:U1_cond2} implies $x_j=(-1)^j x_0$ for all $j\in\{1, 2, 3, 4\}$. 
    By plugging this into Eq.~\eqref{SMeq:SMlem:U1_cond1}, we get $16|x_0|\leq 2t<16$, which implies $x_0=0$. 
    We thus have $\bm{x}=\bm{0}$, which implies that Eqs.~\eqref{SMeq:SMlem:U1_cond1} and \eqref{SMeq:SMlem:U1_cond2} have no nontrivial integer solution.

    \item
    When $n=6$, Eqs.~\eqref{SMeq:SMlem:U1_cond1} and \eqref{SMeq:SMlem:U1_cond2} are explicitly written as 
    \begin{align}
        &|x_0|+6|x_1|+15|x_2|+20|x_3|+15|x_4|+6|x_5|+|x_6|\leq 2t, \\
        &x_j+3x_{j+1}+3x_{j+2}+x_{j+3}=0\ \forall j\in\{0, 1, 2, 3\}. 
    \end{align}
    We show that these equations have no nontrivial integer solution for $t<24$. 
    We take an arbitrary integer solution $\bm{x}$. 
    We define $y_j:=x_j+x_{6-j}$ for $j=0, 1$, and $2$. 
    By Eq.~\eqref{SMeq:SMlem:U1_cond2}, we have $y_0=9y_2+16x_3$ and $y_1=-4y_2-6x_3$. 
    By Lemma~\ref{SMlem:solution_bound}, we have $|y_2|\leq 1$ and $|x_3|\leq 1$. 
    We thus have $(y_2, x_3)=\pm(1, 1)$, $\pm(0, 1)$, $\pm(-1, 1)$, $\pm(1, 0)$, or $(0, 0)$, which implies $\sum_{j=0}^6 \binom{6}{j}|x_j|\geq |y_0|+6|y_1|+15|y_2|+20|x_3|=120$, $72$, $54$, $48$, or $0$, respectively, where we used the triangle inequality. 
    By combining this with Eq.~\eqref{SMeq:SMlem:U1_cond1}, we get $\bm{x}=\bm{0}$. 
    We can therefore conclude that Eqs.~\eqref{SMeq:SMlem:U1_cond1} and \eqref{SMeq:SMlem:U1_cond2} have no nontrivial integer solution.     
    \end{itemize}

    Finally, we consider the case when $k=4$. 
    We can confirm that the assumption in Lemma~\ref{SMlem:U1_nontrivial_nonexistence_general} holds for $n=5$ and $n\geq 11$. 
    When $n\geq 11$, for any $j\in\{3, 4, ..., n-3\}$, we have 
    \begin{align}
        \binom{n}{j} 
        \geq\binom{n}{3} 
        =b_{n, 3}+\frac{(n-1)[(n-3)(n-11)+3]}{6} 
        \geq b_{n, 3}, 
    \end{align}
    and this trivially holds when $n=5$, because the set $\{3, 4, ..., n-3\}$ is empty. 
    By Lemma~\ref{SMlem:U1_nontrivial_nonexistence_general}, Eqs.~\eqref{SMeq:SMlem:U1_cond1} and \eqref{SMeq:SMlem:U1_cond2} have no nontrivial integer solution when $n=5$ or $n\geq 11$. 
    Thus, we have only to check the cases of $n=6, 7, 8, 9$, and $10$ in the following.

    \begin{itemize}
    \item
    When $n=6$, Eqs.~\eqref{SMeq:SMlem:U1_cond1} and \eqref{SMeq:SMlem:U1_cond2} are explicitly written as   
    \begin{align}
        &|x_0|+6|x_1|+15|x_2|+20|x_3|+15|x_4|+6|x_5|+|x_6|\leq 2t, \\
        &x_j+2x_{j+1}+x_{j+2}=0\ \forall j\in\{0, 1, 2, 3, 4\}. 
    \end{align}
    We show that these equations have no nontrivial integer solution for $t<30$. 
    We take an arbitrary integer solution $\bm{x}$. 
    We define $y_j:=x_j+x_{n-j}$ for $j=0$, $1$, and $2$. 
    By Eq.~\eqref{SMeq:SMlem:U1_cond2}, we have $y_0=-2x_3$, $y_1=2x_3$, and $y_2=-2x_3$, which imply that $64|x_3|=|y_0|+6|y_1|+15|y_2|+20|x_3|\leq\sum_{j=0}^6 \binom{6}{j}|x_j|\leq 2t<60$. 
    We thus get $y_0=y_1=y_2=x_3=0$. 
    By Eq.~\eqref{SMeq:SMlem:U1_cond2}, we have $x_0=-x_6=3x_2$ and $x_1=-x_5=-2x_2$. 
    By noting that $60|x_2|=\sum_{j=0}^6 \binom{6}{j}|x_j|\leq 2t<60$, we get $\bm{x}=\bm{0}$. 
    We can therefore conclude that Eqs.~\eqref{SMeq:SMlem:U1_cond1} and \eqref{SMeq:SMlem:U1_cond2} have no nontrivial integer solution.

    \item
    When $n=7$, Eqs.~\eqref{SMeq:SMlem:U1_cond1} and \eqref{SMeq:SMlem:U1_cond2} are explicitly written as 
    \begin{align}
        &|x_0|+7|x_1|+21|x_2|+35|x_3|+35|x_4|+21|x_5|+7|x_6|+|x_7|\leq 2t, \\
        &x_{j+0}+3x_{j+1}+3x_{j+2}+x_{j+3}=0\ \forall j\in\{0, 1, 2, 3, 4\}. 
    \end{align}
    We show that these equations have no nontrivial integer solution for $t<48$. 
    We take an arbitrary integer solution $\bm{x}$. 
    We define $y_j:=x_j+x_{n-j}$ for $j=0, 1, 2$, and $3$. 
    By Eq.~\eqref{SMeq:SMlem:U1_cond2}, we have $y_0=-7y_3$, $y_1=5y_3$, and $y_2=-3y_3$, which imply that $140|y_3|=|y_0|+7|y_1|+21|y_2|+35|y_3|\leq\sum_{j=0}^7 \binom{7}{j}|x_j|\leq 2t<96$, where we used the triangle inequality. 
    We thus get $y_0=y_1=y_2=y_3=0$. 
    By noting that $21|x_2|+35|x_3|=(21|x_2|+35|x_3|+35|x_4|+21|x_5|)/2<48$, we have $(x_2, x_3)=\pm(0, 1)$, $\pm(1, 0)$, $\pm(2, 0)$, or $(0, 0)$, which implies  $\sum_{j=0}^7 \binom{7}{j}|x_j|=108$, $96$, $192$, or $0$, respectively. 
    By combining this with Eq.~\eqref{SMeq:SMlem:U1_cond1}, we get $\bm{x}=\bm{0}$. 
    We can therefore conclude that Eqs.~\eqref{SMeq:SMlem:U1_cond1} and \eqref{SMeq:SMlem:U1_cond2} have no nontrivial integer solution.

    \item
    When $n=8$, Eqs.~\eqref{SMeq:SMlem:U1_cond1} and \eqref{SMeq:SMlem:U1_cond2} are explicitly written as 
    \begin{align}
        &|x_0|+8|x_1|+28|x_2|+56|x_3|+70|x_4|+56|x_5|+28|x_6|+8|x_7|+|x_8|\leq 2t, \\
        &x_{j+0}+4x_{j+1}+6x_{j+2}+4x_{j+3}+x_{j+4}=0\ \forall j\in\{0, 1, 2, 3, 4\}.  
    \end{align}
    We show that these equations have no nontrivial integer solution for $t<70$. 
    We take an arbitrary integer solution $\bm{x}$. 
    By Lemma~\ref{SMlem:solution_bound}, we have $x_4=0$. 
    We define $y_j:=x_j+x_{n-j}$ for $j=0, 1, 2$, and $3$. 
    By Eq.~\eqref{SMeq:SMlem:U1_cond2}, we have $y_0=-16y_3$, $y_1=9y_3$, and $y_2=-4y_3$, which imply that $256|y_3|=\sum_{j=0}^3 \binom{8}{j} |y_j|=\sum_{j=0}^8 \binom{8}{j}|x_j|\leq 2t<140$. 
    We thus get $y_0=y_1=y_2=y_3=0$. 
    By noting that $28|x_2|+56|x_3|=(28|x_2|+56|x_3|+56|x_5|+28|x_6|)/2\leq t<70$, we have $(x_2, x_3)=\pm(0, 1)$, $\pm(1, 0)$, $\pm(2, 0)$, $(0, 0)$, which implies $\sum_{j=0}^8 \binom{8}{j}|x_j|=224$, $140$, $280$, or $0$, respectively. 
    By combining this with Eq.~\eqref{SMeq:SMlem:U1_cond1}, we get $\bm{x}=\bm{0}$. 
    We can therefore conclude that Eqs.~\eqref{SMeq:SMlem:U1_cond1} and \eqref{SMeq:SMlem:U1_cond2} have no nontrivial integer solution.

    \item
    When $n=9$, Eqs.~\eqref{SMeq:SMlem:U1_cond1} and \eqref{SMeq:SMlem:U1_cond2} are explicitly written as 
    \begin{align}
        &|x_0|+9|x_1|+36|x_2|+84|x_3|+126|x_4|
        +126|x_5|+84|x_6|+36|x_7|+9|x_8|+|x_9|\leq 2t, \\
        &x_{j+0}+5x_{j+1}+10x_{j+2}+10x_{j+3}+5x_{j+4}+x_{j+5}=0\ \forall j\in\{0, 1, 2, 3, 4\}. 
    \end{align}
    We show that these equations have no nontrivial integer solution for $t<96$. 
    We take an arbitrary integer solution $\bm{x}$. 
    By Lemma~\ref{SMlem:solution_bound}, we have $x_4=x_5=0$. 
    We define $y_j:=x_j+x_{n-j}$ for $j=0, 1, 2$, and $3$. 
    By Eq.~\eqref{SMeq:SMlem:U1_cond2}, we have $y_0=-30y_3$, $y_1=14y_3$, and $y_2=-5y_3$, which imply $420|y_3|
    =\sum_{j=0}^3 \binom{9}{j}|y_j|
    \leq \sum_{j=0}^9 \binom{9}{j}|x_j|\leq 2t
    <192$. 
    We thus get $y_0=y_1=y_2=y_3=0$. 
    By noting that $36|x_2|+84|x_3|=(36|x_2|+84|x_3|+84|x_6|+36|x_7|)/2\leq t$, we have $(x_2, x_3)=\pm(0, 1)$, $\pm(1, 0)$, $\pm(2, 0)$, or $(0, 0)$, which implies $\sum_{j=0}^9 \binom{9}{j}|x_j|=400$, $192$, $384$, or $0$, respectively. 
    By combining this with Eq.~\eqref{SMeq:SMlem:U1_cond1}, we get $\bm{x}=\bm{0}$. 
    We can therefore conclude that Eqs.~\eqref{SMeq:SMlem:U1_cond1} and \eqref{SMeq:SMlem:U1_cond2} have no nontrivial integer solution.

    \item
    When $n=10$, Eqs.~\eqref{SMeq:SMlem:U1_cond1} and \eqref{SMeq:SMlem:U1_cond2} are explicitly written as 
    \begin{align}
        &|x_0|+10|x_1|+45|x_2|+120|x_3|+210|x_4|+252|x_5|
        +210|x_6|+120|x_7|+45|x_8|+10|x_9|+|x_{10}|\leq 2t, \\
        &x_{j+0}+6x_{j+1}+15x_{j+2}+20x_{j+3}+15x_{j+4}+6x_{j+5}+x_{j+6}=0\ \forall j\in\{0, 1, 2, 3, 4\}. 
    \end{align}
    We show that these equations have no nontrivial integer solution for $t<126$. 
    We take an arbitrary integer solution $\bm{x}$. 
    By Lemma~\ref{SMlem:solution_bound}, we have $x_4=x_5=x_6=0$. 
    We define $y_j:=x_j+x_{n-j}$ for $j=0, 1, 2$, and $3$. 
    By Eq.~\eqref{SMeq:SMlem:U1_cond2}, we have $y_0=-50y_3$, $y_1=20y_3$, and $y_2=-6y_3$, which imply $640|y_3|=\sum_{j=0}^3 \binom{10}{j} |y_j|\leq\sum_{j=0}^{10} \binom{10}{j} |x_j|\leq 2t<252$. 
    We thus get $y_0=y_1=y_2=y_3=0$. 
    By noting that $45|x_2|+120|x_3|=(45|x_2|+120|x_3|+120|x_7|+45|x_8|)/2\leq t$, we have $(x_2, x_3)=\pm(0, 1)$, $\pm(1, 0)$, $\pm(2, 0)$, $(0, 0)$, which implies $\sum_{j=0}^{10} \binom{10}{j}|x_j|=648$, $252$, $504$, or $0$, respectively. 
    By combining this with Eq.~\eqref{SMeq:SMlem:U1_cond1}, we get $\bm{x}=\bm{0}$. 
    We can therefore conclude that Eqs.~\eqref{SMeq:SMlem:U1_cond1} and \eqref{SMeq:SMlem:U1_cond2} have no nontrivial integer solution. 
    \end{itemize}

\hfill $\square$

\subsection{SU(2) symmetry} \label{SMsubsec:SU2}

In this section, we consider the representation $R$ of $\mathrm{SU}(2)$ defined by Eq.~\eqref{SMeq:SU2_representation}. 
Since $\bigcup_{\gamma\in\Gamma} \mathcal{U}_{n, G, R}^\gamma$ is semi-universal for $\mathcal{U}_{n, G, R}$~\cite{marvian2024rotationally}, we can use the result of Theorem~\ref{SMthm:general}. 
By noting that $R$ can be written as $T^{\otimes n}$ with a representation $T$ on a single qubit, we consider the condition for Eqs.~\eqref{SMeq:SMthm:general_cond1} and \eqref{SMeq:SMthm:general_cond4} having no nontrivial integer solutions. 
Since $R$ can be irreducibly decomposed into spin-$\lambda$ representations with $\lambda\in\{n/2, n/2-1, ..., n/2-\lfloor n/2\rfloor\}$, we use this $\lambda$ as the index for the irreducible representation appearing in $R$, i.e., $\Lambda$ in Theorem~\ref{SMthm:general} is given by 
\begin{align}
    \Lambda=\{n/2, n/2-1, ..., n/2-\lfloor n/2\rfloor\}. \label{SMeq:SU2_irrep_def}
\end{align}

As a preparation, we derive a property about $f_\lambda(Q_\sigma)$, where $Q_\sigma$ is the permutation operator that brings the $j$th qubit to the $\sigma(j)$th qubit.

\begin{lemma} \label{SMlem:permutation_weight}
    Let $n\in\mathbb{N}$, $R$ be a unitary representation of $G=\mathrm{SU}(2)$ on $n$ qubits defined by Eq.~\eqref{SMeq:SU2_representation}, $\Lambda$ be given by Eq.~\eqref{SMeq:SU2_irrep_def}, $\lambda\in\Lambda$, $\sigma\in\mathfrak{S}_n$ be decomposed as $\sigma=\sigma_1\sigma_2\cdots\sigma_L$ with $p_l$-cycles $\sigma_l\in\mathfrak{S}_n$ nontrivially acting on disjoint subsets of $\{1, 2, ..., n\}$, and $\widetilde{f}_\lambda$ be defined by 
    \begin{align}
        \widetilde{f}_\lambda(A):=\sum_{\kappa\in\Lambda, \kappa\geq\lambda} f_\kappa(A)\ \forall A\in\mathcal{L}_{n, G, R}. 
    \end{align} 
    Then, 
    \begin{align}
        &\widetilde{f}_\lambda(I)
        =\binom{n}{\frac{n}{2}-\lambda}, \label{SMeq:SMlem:permutation_weight1}\\
        &\widetilde{f}_\lambda(Q_\sigma)
        =\sum_{q_1, q_2, ..., q_L\in\{0, 1\}} \binom{n-\sum_{l=1}^L p_l}{\frac{n}{2}-\lambda-\sum_{l=1}^L q_l p_l}. \label{SMeq:SMlem:permutation_weight2}
    \end{align}
\end{lemma}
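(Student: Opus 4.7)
My plan is to reinterpret $\widetilde{f}_\lambda(A)$ as the trace of $A$ restricted to a weight subspace of $\mathcal{H}$, and then evaluate this trace combinatorially in the computational basis. Using the Schur-type decomposition $A = \sum_\kappa F_\kappa(I \otimes A_\kappa)F_\kappa^\dag$ from Eq.~\eqref{SMeq:sym_op_decomp}, together with the standard fact that the spin-$\kappa$ irrep has a one-dimensional $S_z = \lambda$ weight subspace when $\kappa \geq \lambda$ and none otherwise (where $S_z := \sum_j \mathrm{Z}_j / 2$), we let $\Pi_\lambda$ denote the projector onto the global $S_z = \lambda$ subspace of $\mathcal{H}$ and obtain
\begin{align}
    \mathrm{tr}(\Pi_\lambda A \Pi_\lambda)
    = \sum_{\kappa \in \Lambda,\, \kappa \geq \lambda} \mathrm{tr}(A_\kappa)
    = \widetilde{f}_\lambda(A).
\end{align}

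For Eq.~\eqref{SMeq:SMlem:permutation_weight1}, setting $A = I$ reduces the left-hand side to $\dim(\mathrm{Im}\, \Pi_\lambda)$, which equals the number of computational basis strings having exactly $n/2 - \lambda$ qubits in the $-1$ eigenstate of $\mathrm{Z}$, namely $\binom{n}{n/2-\lambda}$.

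For Eq.~\eqref{SMeq:SMlem:permutation_weight2}, setting $A = Q_\sigma$ and noting that $Q_\sigma$ is a permutation matrix in the computational basis, $\mathrm{tr}(\Pi_\lambda Q_\sigma \Pi_\lambda)$ equals the number of computational basis states with $S_z = \lambda$ that are fixed by $Q_\sigma$. A basis state is fixed iff its bitstring is constant on each nontrivial cycle of $\sigma$, while the $n - \sum_l p_l$ qubits outside $\bigcup_l \mathrm{supp}(\sigma_l)$ are automatically fixed and may take arbitrary values. Assigning to each cycle $\sigma_l$ a label $q_l \in \{0, 1\}$ (with $q_l = 1$ meaning that the cycle is filled with $\mathrm{Z} = -1$ qubits), the cycles contribute $\sum_l q_l p_l$ down-spins, and the remaining $n - \sum_l p_l$ free qubits must contribute $n/2 - \lambda - \sum_l q_l p_l$ additional down-spins. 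Summing the resulting binomial coefficients over the $2^L$ cycle-labelings yields the claimed formula.

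The only substantive input is the identification in the first step, which rests on standard $\mathrm{SU}(2)$ representation theory; once it is secured, both formulas reduce to elementary combinatorial counting, and no real obstacle remains beyond careful bookkeeping of the cycle structure.
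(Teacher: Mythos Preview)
Your proposal is correct and follows essentially the same approach as the paper: the paper likewise identifies $\widetilde{f}_\lambda(A)$ with the trace of $A$ over the $\mathrm{Z}^{\mathrm{tot}}=2\lambda$ eigenspace (your $S_z=\lambda$ subspace) via the block decomposition and the one-dimensionality of the weight-$\lambda$ subspace in each spin-$\kappa$ irrep with $\kappa\geq\lambda$, and then switches to the computational basis to count fixed strings under $Q_\sigma$ by the same cycle-by-cycle argument. The only difference is presentational---the paper writes out the basis $\{\ket{\kappa,\lambda,\alpha}\}$ explicitly where you invoke the weight-space fact directly.
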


\vspace{5mm}

We note that a $p$-cycle means a permutation $\sigma$ that nontrivially acts only on $p$ elements $j_1, j_2, ..., j_p$ and satisfies $\sigma(j_1)=j_2$, $\sigma(j_2)=j_3$, ..., $\sigma(j_p)=j_1$.

\begin{proof}
    We take an orthonormal basis $\{\ket{\lambda, \mu, \alpha}\}_{\mu\in\{\lambda, \lambda-1, ..., -\lambda\}, \alpha\in\{1, 2, ..., m_\lambda\}}$ of the spin-$\lambda$ representation space such that 
    \begin{align}
        &[(\mathrm{X}^\mathrm{tot})^2+(\mathrm{Y}^\mathrm{tot})^2+(\mathrm{Z}^\mathrm{tot})^2]\ket{\lambda, \mu, \alpha}
        =4\lambda(\lambda+1)\ket{\lambda, \mu, \alpha}, \\
        &\mathrm{Z}^\mathrm{tot}\ket{\lambda, \mu, \alpha}=2\mu\ket{\lambda, \mu, \alpha}, 
    \end{align}
    where $\mathrm{X}^\mathrm{tot}$, $\mathrm{Y}^\mathrm{tot}$, and $\mathrm{Z}^\mathrm{tot}$ are the sum of all the Pauli operators on the $n$ qubits, and $\alpha$ is the index for degeneracy. 
    We can take orthonormal bases $\{\ket{\mu}\}$ and $\{\ket{\alpha}\}$ of the representation space and the multiplicity space such that 
    \begin{align}
        F_\lambda(\ket{\mu}\otimes \ket{\alpha})
        =\ket{\lambda, \mu, \alpha}. 
    \end{align} 
    Thus, for any $\lambda, \kappa\in\Lambda$ satisfying $\lambda\leq\kappa$, we have 
    \begin{align}
        \sum_{\alpha\in\{1, 2, ..., m_\kappa\}} \bra{\kappa, \lambda, \alpha}A\ket{\kappa, \lambda, \alpha} 
        =&\sum_{\alpha\in\{1, 2, ..., m_\kappa\}} \sum_{\kappa'\in\Lambda} (\bra{\lambda}\otimes \bra{\alpha})F_\kappa^\dag F_{\kappa'}(I\otimes A_{\kappa'})F_{\kappa'}^\dag F_\kappa(\ket{\lambda}\otimes \ket{\alpha}) \nonumber\\
        =&\sum_{\alpha\in\{1, 2, ..., m_\kappa\}} (\bra{\lambda}\otimes \bra{\alpha})(I\otimes A_\kappa)(\ket{\lambda}\otimes \ket{\alpha}) \nonumber\\
        =&\mathrm{tr}(A_\kappa) \nonumber\\
        =&f_\kappa(A), 
    \end{align}
    where we used $F_\kappa^\dag F_{\kappa'}$ is the identity when $\kappa=\kappa'$ and otherwise $0$ in the second equality. 
    By the definition of $\widetilde{f}_\lambda$, we get 
    \begin{align}
        \widetilde{f}_\lambda(Q_\sigma)
        =\sum_{\kappa\in\Lambda, \kappa\geq\lambda} \sum_{\alpha\in\{1, 2, ..., m_\kappa\}} \bra{\kappa, \lambda, \alpha}Q_\sigma \ket{\kappa, \lambda, \alpha}. \label{SMeq:SMlem:permutation_weight3}
    \end{align}
    We note that $\{\ket{\kappa, \lambda, \alpha}\}_{\kappa\geq\lambda, \alpha\in\{1, 2, ..., m_\kappa\}}$ is an orthonormal basis of the eigenspace of $\mathrm{Z}^\mathrm{tot}$ with eigenvalue $2\lambda$, and we can also take another orthonormal basis $\{\ket{a_1 a_2 \cdots a_n}\}_{(a_1, a_2, ..., a_n)\in S_\lambda}$, where $S_\lambda:=\{(a_1, a_2, ..., a_n)\in\{0, 1\}^n\ |\ \#\{j\in\{1, 2, ..., n\}\ |\ a_j=1\}=n/2-\lambda\}$, and $\ket{a_1 a_2 \cdots a_n}$ is the tensor product of the eigenvectors $\ket{a}$ of the single-qubit Pauli-Z operators satisfying $\mathrm{Z}\ket{0}=(-1)^a\ket{a}$. 
    By the basis transformation, we can rewrite Eq.~\eqref{SMeq:SMlem:permutation_weight3} as 
    \begin{align}
        \widetilde{f}_\lambda(Q_\sigma) 
        =&\sum_{(a_1, a_2, ..., a_n)\in S_\lambda} \bra{a_1 a_2 \cdots a_n}Q_\sigma \ket{a_1 a_2 \cdots a_n} \nonumber\\
        =&\sum_{(a_1, a_2, ..., a_n)\in S_\lambda} \braket{a_{\sigma(1)} a_{\sigma(2)} \cdots a_{\sigma(n)} | a_1 a_2 \cdots a_n} \nonumber\\
        =&\#\{(a_1, a_2, ..., a_n)\in S_\lambda\ |\ a_{\sigma(j)}=a_j\ \forall j\in\{1, 2, ..., n\}\}. \label{SMeq:SMlem:permutation_weight4}
    \end{align}
    When $\sigma$ is the identity, Eq.~\eqref{SMeq:SMlem:permutation_weight4} implies that 
    \begin{align}
        \widetilde{f}_\lambda(I) 
        =\# S_\lambda 
        =\binom{n}{\frac{n}{2}-\lambda}. 
    \end{align}
    When $\sigma$ is decomposed into disjoint cycles $\{\sigma_l\}_{l=1}^L$, we take disjoint subsets $D_1, D_2, ..., D_L$ of $\{1, 2, ..., L\}$ such that $\sigma_l$ nontrivially acts on $D_l$ for all $l\in\{1, 2, ..., L\}$. 
    Since $\sigma_l$ is a $p_l$-cycle, $\# D_l=p_l$. 
    The condition $a_{\sigma(j)}=a_j\ \forall j\in\{1, 2, ..., n\}\}$ means that $a_j$ must be identical for every element $j$ in $D_l$. 
    When $a_j=q_l$ for all $j\in D_l$ with some $q_l\in\{0, 1\}$, the number of strings $(a_1, a_2, ..., a_n)\in S_\lambda$ is $\binom{n-\sum_{l=1}^L p_l}{n/2-\lambda-\sum_{l=1}^L q_l p_l}$. 
    By summing them up over all $q_1, q_2, ..., q_L\in\{0, 1\}$, we get Eq.~\eqref{SMeq:SMlem:permutation_weight2}. 
\end{proof}

By using Lemma~\ref{SMlem:permutation_weight}, we give the explicit expression of Eqs.~\eqref{SMeq:SMthm:general_cond1} and \eqref{SMeq:SMthm:general_cond4} in Theorem~\ref{SMthm:general}. 
The following lemma is the counterpart of Lemma~\ref{SMlem:U1_cond} in the $\mathrm{SU}(2)$ case.

\begin{lemma} \label{SMlem:SU2_cond}
    Let $n, k, t\in\mathbb{N}$, $R$ be a unitary representation of $G=\mathrm{SU}(2)$ defined by Eq.~\eqref{SMeq:SU2_representation}, and $\Lambda$ be given by Eq.~\eqref{SMeq:SU2_irrep_def}. 
    Then, the distribution of the $(G, R)$-symmetric $k$-local random circuit is an asymptotic $(G, R)$-symmetric unitary $t$-design if and only if there exists no nontrivial integer solution $\bm{x}=(x_\lambda)_{\lambda\in\Lambda}\in\mathbb{Z}^\Lambda$ satisfying 
    \begin{align}
        &\sum_{\lambda\in\Lambda} \left(\binom{n}{\frac{n}{2}-\lambda}-\binom{n}{\frac{n}{2}-\lambda-1}\right)|x_\lambda|\leq 2t, \label{SMeq:SMlem:SU2_cond1}\\
        &\sum_{\lambda\in\Lambda} \left(\binom{n-2j}{\frac{n}{2}-j-\lambda}-\binom{n-2j}{\frac{n}{2}-j-\lambda-1}\right)x_\lambda=0\ \forall j\in\left\{0, 1, 2, ..., \left\lfloor \frac{k}{2}\right\rfloor\right\}. \label{SMeq:SMlem:SU2_cond2}
    \end{align}
\end{lemma}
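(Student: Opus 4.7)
The plan is to invoke Theorem~\ref{SMthm:general}. Since $R=T^{\otimes n}$ and $\bigcup_{\gamma\in\Gamma}\mathcal{U}_{n,G,R}^\gamma$ is semi-universal under $\mathrm{SU}(2)$ symmetry, the theorem reduces the design property to the nonexistence of a nontrivial integer solution of (SMeq:SMthm:general\_cond1) and (SMeq:SMthm:general\_cond4). It thus suffices to rewrite these two as (SMeq:SMlem:SU2\_cond1) and (SMeq:SMlem:SU2\_cond2), respectively. Applying the identity case of Lemma~\ref{SMlem:permutation_weight} gives $\widetilde{f}_\lambda(I)=\binom{n}{n/2-\lambda}$, so the multiplicity is $m_\lambda=f_\lambda(I)=\widetilde{f}_\lambda(I)-\widetilde{f}_{\lambda+1}(I)=\binom{n}{n/2-\lambda}-\binom{n}{n/2-\lambda-1}$, and substituting this into (SMeq:SMthm:general\_cond1) yields (SMeq:SMlem:SU2\_cond1).

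For (SMeq:SMthm:general\_cond4), the goal is to show that $\mathcal{C}$ is spanned by the $\lfloor k/2\rfloor+1$ vectors $\bm{c}_j$ with components $c_{j,\lambda}:=\binom{n-2j}{n/2-j-\lambda}-\binom{n-2j}{n/2-j-\lambda-1}$ for $j\in\{0,1,\ldots,\lfloor k/2\rfloor\}$. First I would establish the upper bound $\dim\mathcal{C}\leq\lfloor k/2\rfloor+1$: any $A\in\mathcal{L}_{k,G,T^{\otimes k}}$ decomposes as in Eq.~\eqref{SMeq:sym_op_decomp}, with $k$-qubit labels $\kappa$ ranging over the $\lfloor k/2\rfloor+1$ spins appearing in $k$ qubits, and Clebsch--Gordan applied to $V_\kappa^{(k)}\otimes V_\mu^{(n-k)}$ (writing $\mathcal{H}=\mathcal{H}^{(k)}\otimes\mathcal{H}^{(n-k)}$) yields
\begin{align}
f_\lambda(A\otimes I^{\otimes n-k})=\sum_\kappa \mathrm{tr}(A_\kappa)\,\alpha_{\kappa,\lambda}^{(n-k)},\quad \alpha_{\kappa,\lambda}^{(n-k)}:=\sum_{\mu=|\kappa-\lambda|}^{\kappa+\lambda}m_\mu^{(n-k)},
\end{align}
where $m_\mu^{(n-k)}$ denotes the spin-$\mu$ multiplicity in $n-k$ qubits. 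Hence $A\mapsto\bm{f}(A\otimes I^{\otimes n-k})$ factors through the trace vector $(\mathrm{tr}(A_\kappa))_\kappa$, whose image has dimension at most $\lfloor k/2\rfloor+1$. To realize each $\bm{c}_j$ explicitly, I would take $A_j:=\Pi_s^{(1,2)}\otimes\cdots\otimes\Pi_s^{(2j-1,2j)}\otimes I^{\otimes k-2j}$, where $\Pi_s^{(i,i+1)}$ is the two-qubit singlet projector; this is manifestly $(G,T^{\otimes k})$-symmetric, and because the singlets freeze the first $2j$ qubits into total spin zero, the spin-$\lambda$ sector of $A_j\otimes I^{\otimes n-k}$ is isomorphic to the spin-$\lambda$ multiplicity space of the remaining $n-2j$ qubits, giving $f_\lambda(A_j\otimes I^{\otimes n-k})=m_\lambda^{(n-2j)}=c_{j,\lambda}$. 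Linear independence of $\{\bm{c}_j\}_{j=0}^{\lfloor k/2\rfloor}$ follows from upper triangularity in the ordering $\lambda=n/2,n/2-1,\ldots$, since $c_{j,\lambda}=0$ whenever $\lambda>n/2-j$ while $c_{j,n/2-j}=1$. Combined with the upper bound, this shows $\{\bm{c}_j\}$ is a basis of $\mathcal{C}$, so (SMeq:SMthm:general\_cond4) is equivalent to (SMeq:SMlem:SU2\_cond2).

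The main obstacle will be the Clebsch--Gordan bookkeeping needed to justify the block-wise formula for $f_\lambda(A\otimes I^{\otimes n-k})$ and to identify the multiplicity factor $m_\lambda^{(n-2j)}$ in the singlet-projector calculation; once this representation-theoretic input is in place, the remainder reduces to elementary substitution and upper-triangular linear algebra.
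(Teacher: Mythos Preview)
Your proposal is correct and takes a genuinely different route from the paper for the key step of identifying $\mathcal{C}$. Both arguments begin identically: reduce via Theorem~\ref{SMthm:general}, and compute $m_\lambda=\binom{n}{n/2-\lambda}-\binom{n}{n/2-\lambda-1}$ from Lemma~\ref{SMlem:permutation_weight} to obtain Eq.~\eqref{SMeq:SMlem:SU2_cond1}. For Eq.~\eqref{SMeq:SMlem:SU2_cond2}, however, the paper works through Schur--Weyl duality: it spans $\mathcal{L}_{k,G,T^{\otimes k}}$ by permutation operators $Q_\sigma$, evaluates $\widetilde{f}_\lambda(Q_\sigma)$ via Lemma~\ref{SMlem:permutation_weight}, and then uses two dedicated combinatorial lemmas (Lemmas~\ref{SMlem:SU2_coefficient1} and~\ref{SMlem:SU2_coefficient2}, each proved by induction on $k$) to show that the span of the cycle-type vectors coincides with that of the $\bm{c}_j$. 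Your approach instead bounds $\dim\mathcal{C}\leq\lfloor k/2\rfloor+1$ by factoring $\bm{f}(A\otimes I^{\otimes n-k})$ through the trace vector $(\mathrm{tr}(A_\kappa))_\kappa$ via the Clebsch--Gordan decomposition of the multiplicity space, and then realizes each $\bm{c}_j$ explicitly using the singlet-projector operators $A_j$, with linear independence following from the upper-triangular pattern $c_{j,n/2-j}=1$, $c_{j,\lambda}=0$ for $\lambda>n/2-j$. Your route is shorter and more representation-theoretic, bypassing the binomial-coefficient inductions entirely; the paper's route is more computational but stays self-contained from the explicit trace formula of Lemma~\ref{SMlem:permutation_weight} without needing the block decomposition of the $n$-qubit multiplicity space.
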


\begin{proof}
    Since $R$ is a tensor product of representation on a single qubit, Theorem~\ref{SMthm:general} implies that the distribution of $(G, R)$-symmetric $k$-local random circuit is an asymptotic unitary $t$-design if and only if Eqs.~\eqref{SMeq:SMthm:general_cond1} and \eqref{SMeq:SMthm:general_cond4} have no nontrivial integer solution. 
    Thus it is sufficient to show that Eqs.~\eqref{SMeq:SMthm:general_cond1} and \eqref{SMeq:SMthm:general_cond4} are equivalent to Eqs.~\eqref{SMeq:SMlem:SU2_cond1} and \eqref{SMeq:SMlem:SU2_cond2}, respectively. 
    By Lemma~\ref{SMlem:permutation_weight}, we have 
    \begin{align}
        m_\lambda
        =f_\lambda(I) 
        =\widetilde{f}_\lambda(I)-\widetilde{f}_{\lambda+1}(I) 
        =\binom{n}{\frac{n}{2}-\lambda}-\binom{n}{\frac{n}{2}-\lambda-1}. 
    \end{align}
    Thus Eq.~\eqref{SMeq:SMthm:general_cond1} is equivalent to Eq.~\eqref{SMeq:SMlem:SU2_cond1}.

    In the following, we show the equivalence between Eq.~\eqref{SMeq:SMthm:general_cond4} and Eq.~\eqref{SMeq:SMlem:SU2_cond2}. 
    By the Schur-Weyl duality, every $(G, R)$-symmetric operator $A\in\mathcal{L}_{k, G, R}$ can be written as a linear combination of the permutation operators $Q_\sigma$'s with permutations $\sigma\in\mathfrak{S}_k$, where $Q_\sigma$ is the operator that brings the $j$th qubit to $\sigma(j)$th qubit. 
    Thus Eq.~\eqref{SMeq:SMthm:general_cond4} is equivalent to 
    \begin{align}
        \sum_{\lambda\in\Lambda} f_\lambda(Q_\sigma\otimes \mathrm{I}^{\otimes n-k})x_\lambda=0\ \forall\sigma\in\mathfrak{S}_k. \label{SMeq:SMlem:SU2_cond4}
    \end{align}

    First, we show that Eq.~\eqref{SMeq:SMlem:SU2_cond2} implies Eq.~\eqref{SMeq:SMlem:SU2_cond4}. 
    We note that the permutation $\sigma$ nontrivially acting on at most $k$ elements can be written as $\sigma=\sigma_1\sigma_2\cdots\sigma_L$ with some disjoint $p_l$-cycles satisfying $\sum_{l=1}^L p_l=k$. 
    By Lemma~\ref{SMlem:permutation_weight}, we have 
    \begin{align}
        \widetilde{f}_\lambda(Q_\sigma\otimes \mathrm{I}^{\otimes n-k}) 
        =&\sum_{q_1, q_2, ..., q_L\in\{0, 1\}} \binom{n-k}{\frac{n}{2}-\lambda-\sum_{l=1}^L q_l p_l} \nonumber\\
        =&\frac{1}{2}\sum_{q_1, q_2, ..., q_L\in\{0, 1\}} \left(\binom{n-k}{\frac{n}{2}-\lambda-\sum_{l=1}^L q_l p_l}+\binom{n-k}{\frac{n}{2}-\lambda-\sum_{l=1}^L (1-q_l) p_l}\right) \nonumber\\
        =&\frac{1}{2}\sum_{q_1, q_2, ..., q_L\in\{0, 1\}} \left(\binom{n-k}{\frac{n}{2}-\lambda-\sum_{l=1}^L q_l p_l}+\binom{n-k}{\frac{n}{2}-\lambda-k+\sum_{l=1}^L q_l p_l}\right) \nonumber\\
        =&\sum_{j=0}^k u_j\left(\binom{n-k}{\frac{n}{2}-\lambda-j}+\binom{n-k}{\frac{n}{2}-\lambda-k+j}\right) \nonumber\\
        =&\sum_{j=0}^{\lfloor k/2\rfloor} \widetilde{u}_j\left(\binom{n-k}{\frac{n}{2}-\lambda-j}+\binom{n-k}{\frac{n}{2}-\lambda-k+j}\right), \label{SMeq:SMlem:SU2_cond7}
    \end{align}
    where $u_j$ is defined by 
    \begin{align}
        u_j:=\frac{1}{2}\#\left\{(q_1, q_2, ..., q_L)\in\{0, 1\}^L \ \middle|\ \sum_{l=1}^L q_l p_l=j\right\} 
    \end{align}
    for $j\in\{0, 1, ..., k\}$, and $\widetilde{u}_j$ is defined by $\widetilde{u}_j:=u_j+u_{k-j}$ for $j\in\{0, 1, ..., (k-1)/2\}$ when $k$ is odd, and $\widetilde{u}_j:=u_j+u_{k-j}$ for $j\in\{0, 1, ..., k/2-1\}$ and $\widetilde{u}_{k/2}:=u_{k/2}$ when $k$ is even. 
    By Lemma~\ref{SMlem:SU2_coefficient1}, for any $\lambda\in\Lambda$, we can take $(v_{j, l})_{j, l\in\{0, 1, ..., \lfloor k/2\rfloor\}}\in\mathbb{R}^{(\lfloor k/2\rfloor+1)^2}$ such that 
    \begin{align}
        \binom{n-k}{\frac{n}{2}-\lambda-j}+\binom{n-k}{\frac{n}{2}-\lambda-k+j}
        =\sum_{l=0}^{\lfloor k/2\rfloor} v_{j, l}\binom{n-2l}{\frac{n}{2}-\lambda-l}. \label{SMeq:SMlem:SU2_cond8}
    \end{align}
    By plugging Eq.~\eqref{SMeq:SMlem:SU2_cond8} into Eq.~\eqref{SMeq:SMlem:SU2_cond7}, we get 
    \begin{align}
        \widetilde{f}_\lambda(Q_\sigma\otimes \mathrm{I}^{\otimes n-k})
        =\sum_{l=0}^{\lfloor k/2\rfloor} \sum_{j=0}^{\lfloor k/2\rfloor} \widetilde{u}_j v_{j, l}\binom{n-2l}{\frac{n}{2}-\lambda-l}, 
    \end{align}
    which implies 
    \begin{align}
        f_\lambda(Q_\sigma\otimes \mathrm{I}^{\otimes n-k})
        =\widetilde{f}_\lambda(Q_\sigma\otimes \mathrm{I}^{\otimes n-k})-\widetilde{f}_{\lambda+1}(Q_\sigma\otimes \mathrm{I}^{\otimes n-k}) 
        =\sum_{l=0}^{\lfloor k/2\rfloor} \sum_{j=0}^{\lfloor k/2\rfloor} \widetilde{u}_j v_{j, l}\left(\binom{n-2l}{\frac{n}{2}-l-\lambda}-\binom{n-2l}{\frac{n}{2}-l-\lambda-1}\right). 
    \end{align}
    Thus Eq.~\eqref{SMeq:SMlem:SU2_cond2} implies Eq.~\eqref{SMeq:SMlem:SU2_cond4}.

    Next, we show that Eq.~\eqref{SMeq:SMlem:SU2_cond4} implies Eq.~\eqref{SMeq:SMlem:SU2_cond2}. 
    By Lemma~\ref{SMlem:SU2_coefficient2}, for any $\lambda\in\Lambda$, we can take $(w_{j, l})_{j, l\in\{0, 1, ..., \lfloor k/2\rfloor\}}\in\mathbb{R}^{(\lfloor k/2\rfloor+1)^2}$ such that for any $j\in\{0, 1, ..., \lfloor k/2\rfloor\}$, 
    \begin{align}
        \binom{n-2j}{\frac{n}{2}-\lambda-j} 
        =\sum_{l=0}^{\lfloor k/2\rfloor} w_{j, l}\left(\binom{n-l}{\frac{n}{2}-\lambda}+\binom{n-l}{\frac{n}{2}-\lambda-l}\right). \label{SMeq:SMlem:SU2_cond9}
    \end{align}
    For each $l\in\{1, 2, ..., \lfloor k/2\rfloor\}$, we take some $l$-cycle $\zeta_l$. 
    By Lemma~\ref{SMlem:permutation_weight}, we have 
    \begin{align}
        \widetilde{f}_\lambda(I)=\binom{n}{\frac{n}{2}-\lambda},\ 
        \widetilde{f}_\lambda(Q_{\zeta_l}\otimes \mathrm{I}^{\otimes n-k}) 
        =\binom{n-l}{\frac{n}{2}-\lambda}+\binom{n-j}{\frac{n}{2}-\lambda-l}. \label{SMeq:SMlem:SU2_cond10}
    \end{align}
    By using Eq.~\eqref{SMeq:SMlem:SU2_cond10}, we can rewrite Eq.~\eqref{SMeq:SMlem:SU2_cond9} as 
    \begin{align}
        \binom{n-2j}{\frac{n}{2}-\lambda-j} 
        =2w_{j, 0}\widetilde{f}_\lambda(I)+\sum_{l=1}^{\lfloor k/2\rfloor} w_{j, l}\widetilde{f}_\lambda(Q_{\zeta_l}\otimes \mathrm{I}^{\otimes n-k}), 
    \end{align}
    which implies that 
    \begin{align}
        \binom{n-2j}{\frac{n}{2}-\lambda-j}-\binom{n-2j}{\frac{n}{2}-\lambda-j-1}
        =&2w_{j, 0}(\widetilde{f}_\lambda(I)-\widetilde{f}_{\lambda+1}(I))+\sum_{l=1}^{\lfloor k/2\rfloor} w_{j, l}(\widetilde{f}_\lambda(Q_{\zeta_l}\otimes \mathrm{I}^{\otimes n-k})-\widetilde{f}_{\lambda+1}(Q_{\zeta_l}\otimes \mathrm{I}^{\otimes n-k})) 
        \nonumber\\
        =&2w_{j, 0}f_\lambda(I)+\sum_{l=1}^{\lfloor k/2\rfloor} w_{j, l}f_\lambda(Q_{\zeta_l}\otimes \mathrm{I}^{\otimes n-k}). 
    \end{align}
    Thus Eq.~\eqref{SMeq:SMlem:SU2_cond4} implies Eq.~\eqref{SMeq:SMlem:SU2_cond2}. 
\end{proof}

In the following, we consider the condition on $t$ for Eqs.~\eqref{SMeq:SMlem:SU2_cond1} and \eqref{SMeq:SMlem:SU2_cond2} in Lemma~\ref{SMlem:SU2_cond} having no nontrivial integer solution. 
First, we present a sufficient condition for the existence of a nontrivial integer solution. 
The following lemma is the counterpart of Lemma~\ref{SMlem:U1_nontrivial_existence_general} in the $\mathrm{SU}(2)$ case.

\begin{lemma} \label{SMlem:SU2_nontrivial_existence_general}
	Let $n, k\in\mathbb{N}$, $k\geq 2$, $n\geq 2(\lfloor k/2\rfloor+1)$, and $R$ be a unitary representation of $G=\mathrm{SU}(2)$ defined by Eq.~\eqref{SMeq:SU2_representation}. 
    Then, Eqs.~\eqref{SMeq:SMlem:SU2_cond1} and \eqref{SMeq:SMlem:SU2_cond2} have a nontrivial integer solution if 
    \begin{align}
        t\geq\frac{2^{\lfloor k/2\rfloor}}{(\left\lfloor \frac{k}{2}\right\rfloor+1)!}\prod_{\alpha=1}^{\lfloor k/2\rfloor+1} (n-2\alpha+1). \label{SMeq:SMlem:SU2_nontrivial_existence_general0}
    \end{align} 
\end{lemma}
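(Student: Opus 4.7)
My plan is to mirror the strategy of Lemma \ref{SMlem:U1_nontrivial_existence_general} in the $\mathrm{U}(1)$ case by exhibiting an explicit nontrivial integer solution of the equations of Lemma \ref{SMlem:SU2_cond}. Writing $M:=\lfloor k/2\rfloor$ and $S:=n/2-\lambda$, I will define, for $S\in\{0,1,\ldots,M+1\}$,
\[
y_S \;:=\; (-1)^{M+1-S}\binom{n-M-1-S}{M+1-S},
\]
set $y_S:=0$ otherwise, and put $x_\lambda:=y_{n/2-\lambda}$. Under the hypothesis $n\geq 2(M+1)$ we have $\lfloor n/2\rfloor\geq M+1$, so the support lies inside $\Lambda$ and every binomial coefficient is a nonnegative integer; moreover $y_{M+1}=1$, so $\bm{x}$ is nontrivial and integer-valued. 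It remains to verify that it satisfies the $M+1$ linear equations \eqref{SMeq:SMlem:SU2_cond2} and that its weighted $\ell^1$-norm matches $2B$, where $B$ denotes the right-hand side of \eqref{SMeq:SMlem:SU2_nontrivial_existence_general0}.

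To verify \eqref{SMeq:SMlem:SU2_cond2}, I reindex with $r=M+1-S$, turning each summand into $(-1)^r\binom{n-2M-2+r}{r}\binom{n-2j}{M+1-j-r}$ (or the analogous expression with $M-j-r$). The elementary identity
\[
\sum_{r\geq 0}(-1)^r\binom{U+r}{r}\binom{V}{W-r}=\binom{V-U-1}{W},
\]
obtained by extracting the coefficient of $z^W$ from $(1+z)^{-(U+1)}(1+z)^V=(1+z)^{V-U-1}$, applied with $U=n-2M-2$ and $V=n-2j$ yields
\[
\sum_S y_S\!\left[\binom{n-2j}{S-j}-\binom{n-2j}{S-j-1}\right] \;=\; \binom{2(M-j)+1}{M+1-j}-\binom{2(M-j)+1}{M-j}=0,
\]
the final equality being the symmetry $\binom{2m+1}{m+1}=\binom{2m+1}{m}$ with $m=M-j$. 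This works uniformly for every $j\in\{0,1,\ldots,M\}$, so \eqref{SMeq:SMlem:SU2_cond2} holds.

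For the cost bound \eqref{SMeq:SMlem:SU2_cond1}, I will use Pascal's rule in the form $|y_S|-|y_{S+1}|=\binom{n-M-2-S}{M+1-S}$ together with $m_S=\binom{n}{S}-\binom{n}{S-1}$ to telescope the weighted sum into the generating-function coefficient
\[
\sum_S m_S|y_S| \;=\; [z^{M+1}]\,\frac{(1+z)^n}{(1-z)^{n-2M-2}}.
\]
The substitution $u=(1+z)/(1-z)$ converts this coefficient into the residue $2^{2M+3}\operatorname{Res}_{u=1} u^n/(u^2-1)^{M+2}$; by Leibniz's formula this residue equals $\frac{1}{2^{M+2}}\prod_{\alpha=1}^{M+1}(n-2\alpha+1)$, giving $\sum_S m_S|y_S|=2B$. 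Hence whenever $t\geq B$ the vector $\bm{x}$ satisfies \eqref{SMeq:SMlem:SU2_cond1}, proving the lemma.

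The main obstacle is the closed-form evaluation of the residue (equivalently, the alternating sum produced by Leibniz's rule); this is the analogue of the identity $a_{n,k,\lceil k/2\rceil}+a_{n,k,\lfloor k/2\rfloor}=2B$ used in the $\mathrm{U}(1)$ case, and I would either (i) push the residue computation through by combining the binomial expansion of $(v-1)^n$ at $v=u+1=2$ with the factorial identity $\prod_{\alpha=1}^{M+1}(n-2\alpha+1)=2^{M+1}(M+1)!\binom{(n-1)/2}{M+1}$ (when $n$ is odd; even $n$ treated similarly), or (ii) note that both sides are polynomials in $n$ of degree $M+1$ and verify equality at the $M+2$ base points $n=2M+2,2M+3,\ldots,3M+3$, where the telescoping form $\sum_{S=0}^{M+1}m_S=\binom{n}{M+1}-\binom{n}{M}+\cdots$ simplifies the computation.
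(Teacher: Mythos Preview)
Your candidate vector is (up to the harmless global sign $(-1)^{M+1}$) exactly the paper's $y_\lambda=(-1)^{n/2-\lambda}\binom{\lambda+n/2-M-1}{\lambda-n/2+M+1}$, and your verification of Eq.~\eqref{SMeq:SMlem:SU2_cond2} via the coefficient identity for $(1+z)^{-(n-2M-1)}(1+z)^{n-2j}=(1+z)^{2M-2j+1}$ is the same argument the paper gives. Your Abel-summation step is just the absorption of the factor $(1-z)$ that the paper does in Eq.~\eqref{SMeq:SMlem:SU2_nontrivial_existence_general3}, so you arrive at the same quantity $[z^{M+1}]\,(1+z)^n(1-z)^{-(n-2M-2)}=a_{n,2(M+1),M+1}$. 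The only genuine difference is the final closed-form evaluation: the paper isolates this as Lemma~\ref{SMlem:sequence_property1} and proves it by checking a degree-$(M+1)$ polynomial identity at the $M+2$ points $n=2(M+1)$ and $n=2\alpha-1$, $\alpha=1,\ldots,M+1$; your residue route via $u=(1+z)/(1-z)$ and then $s=u^2$ is a clean alternative, since for odd $n$ one gets directly $\operatorname{Res}_{u=1}u^n/(u^2-1)^{M+2}=\tfrac{1}{2}\operatorname{Res}_{s=1}s^{(n-1)/2}/(s-1)^{M+2}=\tfrac{1}{2(M+1)!}\prod_{\alpha=1}^{M+1}\tfrac{n-2\alpha+1}{2}$, and the polynomial-in-$n$ argument then covers even $n$. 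Note one slip: the residue equals $\tfrac{1}{2^{M+2}(M+1)!}\prod_{\alpha=1}^{M+1}(n-2\alpha+1)$, not $\tfrac{1}{2^{M+2}}\prod_{\alpha}(n-2\alpha+1)$ as you wrote; with the missing $(M+1)!$ restored, $2^{2M+3}$ times the residue is indeed $2B$.
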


We exclude the case $n=k+1$ with even $k$ in this lemma. 
In that case, the equations have no nontrivial integer solution for all $t\in\mathbb{N}$.

\begin{proof}
    We define $\bm{y}=(y_\lambda)_{\lambda\in\Lambda}$ by 
    \begin{align}
        y_\lambda:=(-1)^{n/2-\lambda}\binom{\lambda+\frac{n}{2}-\left\lfloor \frac{k}{2}\right\rfloor-1}{\lambda-\frac{n}{2}+\left\lfloor \frac{k}{2}\right\rfloor+1}, \label{SMeq:SMlem:SU2_nontrivial_existence_general1}
    \end{align}
    and show that $\bm{y}$ is a nontrivial integer solution of Eqs.~\eqref{SMeq:SMlem:SU2_cond1} and \eqref{SMeq:SMlem:SU2_cond2}. 
    By the definition of $\bm{y}$, we can show that $\bm{y}$ satisfies Eqs.~\eqref{SMeq:SMlem:SU2_cond1} as follows: 
    \begin{align}
        &\sum_{\lambda\in\{n/2, n/2-1, ..., n/2-\lfloor k/2\rfloor-1\}} \left(\binom{n-2j}{\frac{n}{2}-j-\lambda}-\binom{n-2j}{\frac{n}{2}-j-\lambda-1}\right)y_\lambda \nonumber\\
        =&(-1)^{\lfloor k/2\rfloor+1}\sum_{\kappa=0}^{\lfloor k/2\rfloor+1} \left(\binom{n-2j}{\left\lfloor \frac{k}{2}\right\rfloor-j+1-\kappa}-\binom{n-2j}{\left\lfloor \frac{k}{2}\right\rfloor-j-\kappa}\right)\cdot (-1)^\kappa\binom{n-2\left\lfloor \frac{k}{2}\right\rfloor-2+\kappa}{\kappa} \nonumber\\
        =&(-1)^{\lfloor k/2\rfloor+1}\sum_{\kappa=0}^{\lfloor k/2\rfloor+1} \left(\binom{n-2j}{\left\lfloor \frac{k}{2}\right\rfloor-j+1-\kappa}-\binom{n-2j}{\left\lfloor \frac{k}{2}\right\rfloor-j-\kappa}\right)\cdot (-1)^\kappa\binom{n-2\left\lfloor \frac{k}{2}\right\rfloor-2+\kappa}{\kappa} \nonumber\\
        =&(-1)^{\lfloor k/2\rfloor+1}\left(\binom{2\left\lfloor \frac{k}{2}\right\rfloor-2j+1}{\left\lfloor \frac{k}{2}\right\rfloor-j+1}-\binom{2\left\lfloor \frac{k}{2}\right\rfloor-2j+1}{\left\lfloor \frac{k}{2}\right\rfloor-j}\right) \nonumber\\
        =&0, \label{SMeq:SMlem:SU2_nontrivial_existence_general2}
    \end{align}
    where the second equality can be confirmed by comparing the coefficients of $z^{\lfloor k/2\rfloor-j+1}$ in both sides of $(1-z)(1+z)^{n-2j}\cdot(1+z)^{-(n-2\lfloor k/2\rfloor-1)}=(1-z)(1+z)^{2\lfloor k/2\rfloor-2j+1}$. 
    The definition of $\bm{y}$ (Eq.~\eqref{SMeq:SMlem:SU2_nontrivial_existence_general1}) also implies that 
    \begin{align}
        &\sum_{\lambda\in\{n/2, n/2-1, ..., n/2-\lfloor k/2\rfloor-1\}} \left(\binom{n}{\frac{n}{2}-\lambda}-\binom{n}{\frac{n}{2}-\lambda-1}\right)|y_\lambda| \nonumber\\
        =&\sum_{\kappa=0}^{\lfloor k/2\rfloor+1} \left(\binom{n}{\left\lfloor \frac{k}{2}\right\rfloor+1-\kappa}-\binom{n}{\left\lfloor \frac{k}{2}\right\rfloor-\kappa}\right)\binom{n-2\left\lfloor \frac{k}{2}\right\rfloor-2+\kappa}{\kappa} \nonumber\\
        =&\sum_{\kappa=0}^{\lfloor k/2\rfloor+1} \binom{n}{\left\lfloor \frac{k}{2}\right\rfloor+1-\kappa}\binom{n-2\left\lfloor \frac{k}{2}\right\rfloor-1+\kappa}{\kappa} \nonumber\\
        =&a_{n, 2(\lfloor k/2\rfloor+1), \lfloor k/2\rfloor+1} \nonumber\\
        =&\frac{2^{\lfloor k/2\rfloor+1}}{(\left\lfloor \frac{k}{2}\right\rfloor+1)!}\prod_{\alpha=1}^{\lfloor k/2\rfloor+1} (n-2\alpha+1) \label{SMeq:SMlem:SU2_nontrivial_existence_general3}
    \end{align}
    where the second equality can be confirmed by comparing the coefficients of $z^{\lfloor k/2\rfloor+1}$ in both sides of $(1-z)(1+z)^n\cdot (1-z)^{-(n-2\lfloor k/2\rfloor-1)}=(1+z)^n/(1-z)^{n-2\lfloor k/2\rfloor-2}$, and the third and fourth equalities follow from the definition of $a_{n, k, j}$ (Eq.~\eqref{SMeq:a_nkj_def}) and Lemma~\ref{SMlem:sequence_property1}, respectively. 
    Thus $\bm{y}$ is a nontrivial integer solution of Eqs.~\eqref{SMeq:SMlem:SU2_cond1} and \eqref{SMeq:SMlem:SU2_cond2} if $t$ satisfies Eq.~\eqref{SMeq:SMlem:SU2_nontrivial_existence_general0}. 
\end{proof}

Next, we show that for sufficiently large $n$, the condition on $t$ presented in Lemma~\ref{SMlem:SU2_nontrivial_existence_general} is necessary for Eqs.~\eqref{SMeq:SMlem:SU2_cond1} and \eqref{SMeq:SMlem:SU2_cond2} in Lemma~\ref{SMlem:SU2_cond} having no nontrivial integer solution. 
The following lemma is the counterpart of Lemma~\ref{SMlem:SU2_nontrivial_nonexistence_general} in the $\mathrm{SU}(2)$ case.

\begin{lemma} \label{SMlem:SU2_nontrivial_nonexistence_general}
	Let $n, k\in\mathbb{N}$, $k\geq 2$, $n\geq 2(\lfloor k/2\rfloor+1)$, $R$ be a unitary representation of $G=\mathrm{SU}(2)$ defined by Eq.~\eqref{SMeq:SU2_representation}, and $c_{n, k}$ defined by 
    \begin{align}
        c_{n, k}:=\frac{2^{\lfloor k/2\rfloor}}{\left(\left\lfloor \frac{k}{2}\right\rfloor+1\right)!}\prod_{\alpha=1}^{\lfloor k/2\rfloor+1} (n-2\alpha+1) \label{SMeq:SU2_bound_def}
    \end{align}
    satisfy $c_{n, k}\leq \binom{n}{j+1}-\binom{n}{j}$ for all $j\in\{\lfloor k/2\rfloor+1, \lfloor k/2\rfloor+2, ..., \lfloor n/2\rfloor-1\}$. 
    Then, Eqs.~\eqref{SMeq:SMlem:SU2_cond1} and \eqref{SMeq:SMlem:SU2_cond2} have no nontrivial integer solution if and only if $t<c_{n, k}$. 
\end{lemma}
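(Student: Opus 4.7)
The ``only if'' direction is Lemma~\ref{SMlem:SU2_nontrivial_existence_general}, so the task is to prove that $t<c_{n,k}$ forces every integer solution $\bm{x}=(x_\lambda)_{\lambda\in\Lambda}$ of Eqs.~\eqref{SMeq:SMlem:SU2_cond1} and \eqref{SMeq:SMlem:SU2_cond2} to be trivial. The plan mirrors the $\mathrm{U}(1)$ argument in Lemma~\ref{SMlem:U1_nontrivial_nonexistence_general}. First, I apply Lemma~\ref{SMlem:solution_bound} with $\Lambda'=\{\lambda\}$ to get $m_\lambda|x_\lambda|\leq t<c_{n,k}$. After the reindexing $l=n/2-\lambda-1$, the standing hypothesis $c_{n,k}\leq\binom{n}{l+1}-\binom{n}{l}$ on $l\in\{\lfloor k/2\rfloor+1,\ldots,\lfloor n/2\rfloor-1\}$ says exactly that $c_{n,k}\leq m_\lambda$ for every $\lambda\in\Lambda_1:=\{n/2-\lfloor k/2\rfloor-2,\ldots,n/2-\lfloor n/2\rfloor\}$. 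Hence $x_\lambda=0$ on $\Lambda_1$, and $\bm{x}$ is supported on $\Lambda_0:=\{n/2,n/2-1,\ldots,n/2-\lfloor k/2\rfloor-1\}$, a set of $\lfloor k/2\rfloor+2$ elements.

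Next, I show that the $\lfloor k/2\rfloor+1$ equations \eqref{SMeq:SMlem:SU2_cond2}, restricted to the support $\Lambda_0$, have a one-dimensional kernel. Writing $\mu:=n/2-\lambda\in\{0,1,\ldots,\lfloor k/2\rfloor+1\}$ and using the identity $\binom{n-2j}{m}-\binom{n-2j}{m-1}=[z^m](1-z)(1+z)^{n-2j}$, the equation indexed by $j$ becomes $\sum_{\mu=0}^{\lfloor k/2\rfloor+1}\bigl([z^\mu]f_j(z)\bigr)\,x_{n/2-\mu}=0$ with $f_j(z):=z^j(1-z)(1+z)^{n-2j}$. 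Since $(1-z)(1+z)^{n-2j}$ has constant term $1$, the polynomial $f_j$ has lowest-degree term $z^j$ with coefficient $1$; consequently the submatrix $\bigl([z^\mu]f_j\bigr)_{0\leq j,\mu\leq\lfloor k/2\rfloor}$ is upper triangular with unit diagonal, hence invertible. The full $(\lfloor k/2\rfloor+1)\times(\lfloor k/2\rfloor+2)$ coefficient matrix therefore has rank $\lfloor k/2\rfloor+1$, so its kernel is one-dimensional.

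Finally, the vector $\bm{y}$ constructed in Lemma~\ref{SMlem:SU2_nontrivial_existence_general} lies in this one-dimensional kernel, and evaluating \eqref{SMeq:SMlem:SU2_nontrivial_existence_general1} at $\lambda=n/2-\lfloor k/2\rfloor-1$ gives $|y_{n/2-\lfloor k/2\rfloor-1}|=\binom{n-2\lfloor k/2\rfloor-2}{0}=1$. So any integer solution has the form $\bm{x}=r\bm{y}$ with $r=x_{n/2-\lfloor k/2\rfloor-1}/y_{n/2-\lfloor k/2\rfloor-1}\in\mathbb{Z}$. Using the identity $\sum_{\lambda}m_\lambda|y_\lambda|=2c_{n,k}$ already established in the proof of Lemma~\ref{SMlem:SU2_nontrivial_existence_general} (Eq.~\eqref{SMeq:SMlem:SU2_nontrivial_existence_general3}), Eq.~\eqref{SMeq:SMlem:SU2_cond1} yields $2|r|c_{n,k}\leq 2t<2c_{n,k}$, so $|r|<1$ and $r=0$. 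The main technical step is the rank computation in the middle paragraph; once the generating-function identity for the binomial differences is spotted, the upper-triangular structure with unit diagonal makes the linear independence immediate, so the real obstacle is packaging this observation cleanly alongside the reindexing $l=n/2-\lambda-1$ that converts the lemma's hypothesis into the bound $c_{n,k}\leq m_\lambda$ on $\Lambda_1$.
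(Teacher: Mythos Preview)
Your proof is correct and takes essentially the same approach as the paper's: use Lemma~\ref{SMlem:solution_bound} together with the hypothesis to kill the components outside $\Lambda_0$, show the remaining system has a one-dimensional kernel, identify it with the span of $\bm{y}$ from Lemma~\ref{SMlem:SU2_nontrivial_existence_general}, and conclude $r=0$ via Eq.~\eqref{SMeq:SMlem:SU2_nontrivial_existence_general3}. Your generating-function argument for the rank computation makes the triangular structure more explicit than the paper's one-line observation that $\binom{n-2j}{n/2-j-\lambda}-\binom{n-2j}{n/2-j-\lambda-1}=0$ when $\lambda>n/2-j$, but it encodes the same fact.
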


We note that the condition on $n$ and $k$ is satisfied when $n\geq 2^{2(\lfloor k/2\rfloor+1)}$, as we show in Lemma~\ref{SMlem:SU2_sufficient_cond}.

\begin{proof}
    We take arbitrary integer solution $\bm{x}$ of Eqs.~\eqref{SMeq:SMlem:SU2_cond1} and \eqref{SMeq:SMlem:SU2_cond2}, and show that $\bm{x}=\bm{0}$ when $t<c_{n, k}$. 
    By Lemma~\ref{SMlem:solution_bound} and the assumption that $c_{n, k}\leq\binom{n}{\lceil k/2\rceil+1}$ for $j=\lfloor k/2\rfloor+1$ and $\lfloor n/2\rfloor-1$, we have 
    \begin{align}
        x_\lambda=0\ \forall \lambda\in\left\{\frac{n}{2}-\left\lfloor \frac{k}{2}\right\rfloor-2, \frac{n}{2}-\left\lfloor \frac{k}{2}\right\rfloor-3, ..., \frac{n}{2}-\left\lfloor \frac{n}{2}\right\rfloor\right\}, \label{SMeq:SMlem:SU2_nontrivial_nonexistence_general1}
    \end{align}
    By noting that $\binom{n-2j}{n/2-j-\lambda}-\binom{n-2j}{n/2-j-\lambda-1}=0$ when $\lambda>n/2-j$, we can see that the linear space of $\bm{x}$ satisfying Eqs.~\eqref{SMeq:SMlem:SU2_cond2} and \eqref{SMeq:SMlem:SU2_nontrivial_nonexistence_general1} is $1$-dimensional. 
    By Eq.~\eqref{SMeq:SMlem:SU2_nontrivial_existence_general2}, $\bm{y}$ defined by Eq.~\eqref{SMeq:SMlem:SU2_nontrivial_existence_general1} is a nontrivial solution of Eq.~\eqref{SMeq:SMlem:SU2_cond2} and we can directly confirm that $\bm{y}$ also satisfies Eq.~\eqref{SMeq:SMlem:SU2_nontrivial_nonexistence_general1}. 
    Thus $\bm{x}$ can be written as $\bm{x}=r\bm{y}$ with some $r\in\mathbb{R}$. 
    Since $\bm{x}$ is an integer vector and $y_{n/2-\lfloor k/2\rfloor-1}=(-1)^{\lfloor k/2\rfloor+1}$, we have $r=x_{n/2-\lfloor k/2\rfloor-1}/y_{n/2-\lfloor k/2\rfloor-1}\in\mathbb{Z}$. 
    By Eq.~\eqref{SMeq:SMlem:SU2_nontrivial_existence_general3}, we have 
    \begin{align}
        \sum_{\lambda\in\{n/2, n/2-1, ..., n/2-\lfloor k/2\rfloor-1\}} \left(\binom{n}{\frac{n}{2}-\lambda}-\binom{n}{\frac{n}{2}-\lambda-1}\right)|x_\lambda| 
        =2|r|c_{n, k}. \label{SMeq:SMlem:SU2_nontrivial_nonexistence_general2}
    \end{align}
    When $t<c_{n, k}$, Eq.~\eqref{SMeq:SMlem:SU2_cond2} implies that 
    \begin{align}
        \sum_{\lambda\in\{n/2, n/2-1, ..., n/2-\lfloor k/2\rfloor-1\}} \left(\binom{n}{\frac{n}{2}-\lambda}-\binom{n}{\frac{n}{2}-\lambda-1}\right)|x_\lambda|<2c_{n, k}. \label{SMeq:SMlem:SU2_nontrivial_nonexistence_general3}
    \end{align}
    By plugging Eq.~\eqref{SMeq:SMlem:SU2_nontrivial_nonexistence_general2} into Eq.~\eqref{SMeq:SMlem:SU2_nontrivial_nonexistence_general3}, we get $r=0$, which implies that Eqs.~\eqref{SMeq:SMlem:SU2_cond1} and \eqref{SMeq:SMlem:SU2_cond2} have no nontrivial integer solution when $t<c_{n, k}$. 
\end{proof}

By using the lemmas above, we can prove Theorem~\ref{SMthm:SU2_general_locality} as follows: \\

\noindent
\textit{Proof of Theorem~\ref{SMthm:SU2_general_locality}.}
    In Lemma~\ref{SMlem:SU2_cond}, we have explicitly rewritten the equations in Theorem~\ref{SMthm:general} in the $\mathrm{SU}(2)$ case. 
    When $t$ does not satisfy Eq.~\eqref{SMeq:SMthm:SU2_general_locality1}, by Lemma~\ref{SMlem:SU2_nontrivial_existence_general}, there exists a nontrivial integer solution for all $k\geq 2(\lfloor k/2\rfloor+1)$. 
    When $t$ satisfies Eq.~\eqref{SMeq:SMthm:SU2_general_locality1}, by Lemma~\ref{SMlem:SU2_nontrivial_nonexistence_general}, there exists no nontrivial integer solution under a certain assumption about $n$ and $k$, which is guaranteed when $n\geq 2^{2(\lfloor k/2\rfloor+1)}$ by Lemma~\ref{SMlem:SU2_sufficient_cond}. 
\hfill $\square$\\

For the proof of Theorem~\ref{SMthm:SU2_small_locality}, we directly consider the condition on $t$ such that Eqs.~\eqref{SMeq:SMlem:SU2_cond1} and \eqref{SMeq:SMlem:SU2_cond2} have no nontrivial integer solution for the region of $n$ where we cannot use Lemma~\ref{SMlem:SU2_nontrivial_nonexistence_general}. \\

\noindent
\textit{Proof of Theorem~\ref{SMthm:SU2_small_locality}.}
    By Lemma~\ref{SMlem:SU2_cond}, the distribution of the $(G, R)$-symmetric $k$-local random circuit forming an asymptotic unitary $t$-design if and only if Eqs.~\eqref{SMeq:SMlem:SU2_cond1} and \eqref{SMeq:SMlem:SU2_cond2} have no nontrivial integer solution.

    First, we consider the case when $k=2$ or $k=3$. 
    Since Eqs.~\eqref{SMeq:SMlem:SU2_cond1} and \eqref{SMeq:SMlem:SU2_cond2} are the same when $k=2$ and when $k=3$, it is sufficient to consider the case when $k=2$. 
    We note that the common assumption $n\leq 2(\lfloor k/2\rfloor+1)$ in Lemmas~\ref{SMlem:SU2_nontrivial_existence_general} and \ref{SMlem:SU2_nontrivial_nonexistence_general} is satisfied when $n\geq 4$. 
    We also note that when $n=4$, $5$, or $n\geq 11$, the other assumption in Lemma~\ref{SMlem:SU2_nontrivial_existence_general} holds, i.e., 
    \begin{align}
        (n-1)(n-3)
        \leq\binom{n}{j+1}-\binom{n}{j}\ \forall j\in\left\{2, 3, ..., \left\lfloor \frac{n}{2}\right\rfloor-1\right\}. \label{SMeq:SMthm:SU2_small_locality1}
    \end{align} 
    By combining Lemmas~\ref{SMlem:SU2_nontrivial_existence_general} and \ref{SMlem:SU2_nontrivial_nonexistence_general}, we get the conclusion. 
    The proof of Eq.~\eqref{SMeq:SMthm:SU2_small_locality1} is as follows: 
    When $n=4$ or $5$, since the set $\{2, 3, ..., \lfloor n/2\rfloor-1\}$ is empty, Eq.~\eqref{SMeq:SMthm:SU2_small_locality1} trivially holds. 
    When $n\geq 11$, it is sufficient to show that $\binom{n}{j+1}-\binom{n}{j}\geq (n-1)(n-3)$ only for $j=2$ and $j=\lfloor n/2\rfloor-1$ by Lemma~\ref{SMlem:binomial_SU2}. 
    For the proof of the case of $j=2$, we have 
    \begin{align}
        \binom{n}{3}-\binom{n}{2}
        =n(n-1)\cdot\frac{n-5}{6} 
        \geq (n-1)(n-3)\cdot 1 
        =(n-1)(n-3). 
    \end{align}
    For the proof of the case of $j=\lfloor n/2\rfloor-1$, when $n=11$, we can directly confirm that 
    \begin{align}
        \binom{n}{\left\lfloor \frac{n}{2}\right\rfloor}-\binom{n}{\left\lfloor \frac{n}{2}\right\rfloor-1}
        =\binom{11}{5}-\binom{11}{4}=132
        \geq 80
        =(11-1)(11-3)
        =(n-1)(n-3). 
    \end{align}
    When $n\geq 12$, we have 
    \begin{align}
        \binom{n}{\left\lfloor \frac{n}{2}\right\rfloor}-\binom{n}{\left\lfloor \frac{n}{2}\right\rfloor-1}
        =&\frac{n(n-1)(n-2)(n-3)}{5!} \left(\prod_{\alpha=6}^{\lfloor n/2\rfloor} \frac{n-2\left\lfloor \frac{n}{2}\right\rfloor+2+\alpha}{\alpha}\right)\left(n-2\left\lfloor \frac{n}{2}\right\rfloor+1\right) \nonumber\\
        \geq&\frac{12\cdot (n-1)\cdot 10\cdot (n-3)}{120} \nonumber\\
        =&(n-1)(n-3). 
    \end{align}

    For $n=3$, $6$, $7$, $8$, $9$, and $10$, we get the conclusion by explicitly writing down the equations in Lemma~\ref{SMlem:SU2_cond}.   
    \begin{itemize}
    \item 
    When $n=3$, Eq.~\eqref{SMeq:SMlem:SU2_cond2} is explicitly rewritten as 
    \begin{align}
        &x_{3/2}+2x_{1/2}=0, \\
        &x_{3/2}=0,  
    \end{align}
    which implies that $x_{3/2}=x_{1/2}=0$. 
    Therefore, Eqs.~\eqref{SMeq:SMlem:SU2_cond1} and \eqref{SMeq:SMlem:SU2_cond2} do not have a nontrivial solution for all $t\in\mathbb{N}$.

    \item
	When $n=6$. 
    Eqs.~\eqref{SMeq:SMlem:SU2_cond1} and \eqref{SMeq:SMlem:SU2_cond2} are explicitly written as 
	\begin{align}
        &|x_3|+5|x_2|+9|x_1|+5|x_0|\leq 2t, \label{SMeq:SMlem:SU2_sym_n6_nontrivial1}\\
		&x_3+5x_2+9x_1+5x_0=0, \label{SMeq:SMlem:SU2_sym_n6_nontrivial2}\\
        &x_2+3x_1+2x_0=0. \label{SMeq:SMlem:SU2_sym_n6_nontrivial3}
	\end{align}
    If $t\geq 10$, Eqs.~\eqref{SMeq:SMlem:SU2_cond1} and \eqref{SMeq:SMlem:SU2_cond2} have a nontrivial integer solution $(x_3, x_2, x_1, x_0)=(1, -1, 1, -1)$. 
    If $t<10$, any integer solution $\bm{x}$ satisfies $|x_1|\leq 1$ and $|x_0|\leq 1$ by Lemma~\ref{SMlem:solution_bound}. 
    We thus have $(x_1, x_0)=\pm (1, 1), \pm (1, 0), \pm (0, 1), \pm (1, -1)$, or $(0, 0)$, which implies $|x_3|+5|x_2|+9|x_1|+5|x_0|=50, 30, 20, 20$, or $0$, respectively. 
    By combining this with Eq.~\eqref{SMeq:SMlem:SU2_cond1}, we get $\bm{x}=\bm{0}$. 
    We can therefore conclude that Eqs.~\eqref{SMeq:SMlem:SU2_cond1} and \eqref{SMeq:SMlem:SU2_cond2} have no nontrivial integer solution.

    \item
	When $n=7$. 
    Eqs.~\eqref{SMeq:SMlem:SU2_cond1} and \eqref{SMeq:SMlem:SU2_cond2} are explicitly written as 
	\begin{align}
        &|x_{7/2}|+6|x_{5/2}|+14|x_{3/2}|+14|x_{1/2}|\leq 2t, \label{SMeq:SMlem:SU2_sym_n7_nontrivial1}\\
		&x_{7/2}+6x_{5/2}+14x_{3/2}+14x_{1/2}=0, \label{SMeq:SMlem:SU2_sym_n7_nontrivial2}\\
		&x_{5/2}+4x_{3/2}+5x_{1/2}=0. \label{SMeq:SMlem:SU2_sym_n7_nontrivial3}
	\end{align}
    If $t\geq 20$, Eqs.~\eqref{SMeq:SMlem:SU2_cond1} and \eqref{SMeq:SMlem:SU2_cond2} have a nontrivial integer solution $(x_{7/2}, x_{5/2}, x_{3/2}, x_{1/2})=(6, -1, -1, 1)$. 
    If $t<20$, any integer solution $\bm{x}$ satisfies $|x_{3/2}|\leq 1$ and $|x_{1/2}|\leq 1$ by Lemma~\ref{SMlem:solution_bound}. 
    We thus have $(x_{3/2}, x_{1/2})=\pm (1, 1), \pm (1, 0), \pm (0, 1), \pm (1, -1)$, or $(0, 0)$, which implies $|x_{7/2}|+6|x_{5/2}|+14|x_{3/2}|+14|x_{1/2}|=108, 48, 60, 40$, or $0$, respectively. 
    By combining this with Eq.~\eqref{SMeq:SMlem:SU2_cond1}, we get $\bm{x}=\bm{0}$.

    \item
	When $n=8$. 
    Eqs.~\eqref{SMeq:SMlem:SU2_cond1} and \eqref{SMeq:SMlem:SU2_cond2} are explicitly written as 
	\begin{align}
        &|x_4|+7|x_3|+20|x_2|+28|x_1|+14|x_0|\leq 2t, \label{SMeq:SMlem:SU2_sym_n8_nontrivial1}\\
		&x_4+7x_3+20x_2+28x_1+14x_0=0, \label{SMeq:SMlem:SU2_sym_n8_nontrivial2}\\
		&x_3+5x_2+9x_1+5x_0=0. \label{SMeq:SMlem:SU2_sym_n8_nontrivial3}
	\end{align}
    If $t\geq 20$, Eqs.~\eqref{SMeq:SMlem:SU2_cond1} and \eqref{SMeq:SMlem:SU2_cond2} have a nontrivial integer solution $(x_4, x_3, x_2, x_1, x_0)=(6, 0, -1, 0, 1)$. 
    If $t<20$, any integer solution $\bm{x}$ satisfies $x_2=x_1=0$ and $|x_0|\leq 1$ by Lemma~\ref{SMlem:solution_bound}. 
    We thus have $x_4=21x_0$ and $x_3=-5x_0$, which implies $|x_4|+7|x_3|+14|x_0|=70|x_0|$. 
    By combining this with Eq.~\eqref{SMeq:SMlem:SU2_cond1}, we get $\bm{x}=\bm{0}$.

    \item
	When $n=9$. 
    Eqs.~\eqref{SMeq:SMlem:SU2_cond1} and \eqref{SMeq:SMlem:SU2_cond2} are explicitly written as 
    \begin{align}
        &|x_{9/2}|+8|x_{7/2}|+27|x_{5/2}|+48|x_{3/2}|+42|x_{1/2}|\leq 2t, \label{SMeq:SMlem:SU2_sym_n9_nontrivial1}\\
		&x_{9/2}+8x_{7/2}+27x_{5/2}+48x_{3/2}+42x_{1/2}=0, \label{SMeq:SMlem:SU2_sym_n9_nontrivial2}\\
		&x_{7/2}+6x_{5/2}+14x_{3/2}+14x_{1/2}=0. \label{SMeq:SMlem:SU2_sym_n9_nontrivial3}
	\end{align}    
    If $t\geq 48$, the existence of a nontrivial integer solution of Eqs.~\eqref{SMeq:SMlem:SU2_cond1} and \eqref{SMeq:SMlem:SU2_cond2} has been proven in Lemma~\ref{SMlem:SU2_nontrivial_existence_general}. 
    If $t<48$, any integer solution $\bm{x}$ satisfies $x_{3/2}=0$, $|x_{5/2}|\leq 1$, and $|x_{1/2}|\leq 1$ by Lemma~\ref{SMlem:solution_bound}. 
    We thus have $(x_{5/2}, x_{1/2})=\pm (1, 1), \pm (1, 0), \pm (0, 1), \pm (1, -1)$, or $(0, 0)$, which implies $|x_{9/2}|+8|x_{7/2}|+27|x_{5/2}|+48|x_{3/2}|+42|x_{1/2}|=320, 96, 224, 182$, or $0$, respectively. 
    By combining this with Eq.~\eqref{SMeq:SMlem:SU2_cond1}, we get $\bm{x}=\bm{0}$.

    \item
	When $n=10$. 
    Eqs.~\eqref{SMeq:SMlem:SU2_cond1} and \eqref{SMeq:SMlem:SU2_cond2} are explicitly written as 
	\begin{align}
        &|x_5|+9|x_4|+35|x_3|+75|x_2|+90|x_1|+42|x_0|\leq 2t, \label{SMeq:SMlem:SU2_sym_n10_nontrivial1}\\
		&x_5+9x_4+35x_3+75x_2+90x_1+42x_0=0, \label{SMeq:SMlem:SU2_sym_n10_nontrivial2}\\
		&x_4+7x_3+20x_2+28x_1+14x_0=0. \label{SMeq:SMlem:SU2_sym_n10_nontrivial3}
	\end{align}
    If $t\geq 63$, the existence of a nontrivial integer solution of Eqs.~\eqref{SMeq:SMlem:SU2_cond1} and \eqref{SMeq:SMlem:SU2_cond2} has been proven in Lemma~\ref{SMlem:SU2_nontrivial_existence_general}. 
    If $t<63$, any integer solution $\bm{x}$ satisfies $x_2=x_1=0$, $|x_3|\leq 1$, and $|x_3|\leq 1$ by Lemma~\ref{SMlem:solution_bound}. 
    We thus have $(x_3, x_1)=\pm (1, 1), \pm (1, 0), \pm (0, 1), \pm (1, -1)$, or $(0, 0)$, which implies $|x_5|+9|x_4|+35|x_3|+75|x_2|+90|x_1|+42|x_0|=378, 126, 252, 196$, or $0$, respectively. 
    By combining this with Eq.~\eqref{SMeq:SMlem:SU2_cond1}, we get $\bm{x}=\bm{0}$. 
    \end{itemize}

    Next, we consider the case when $k=4$. 
    We note that when $n=6$, $7$, or $n\geq 18$, the assumption in Lemma~\ref{SMlem:SU2_nontrivial_nonexistence_general} holds, i.e., 
    \begin{align}
        \frac{2}{3}(n-1)(n-3)(n-5) 
        \leq\binom{n}{j+1}-\binom{n}{j}\ \forall j\in\left\{3, 4, ..., \left\lfloor \frac{n}{2}\right\rfloor-1\right\}. \label{SMeq:SMthm:SU2_small_locality2}
    \end{align} 
    Thus we can use Lemma~\ref{SMlem:SU2_nontrivial_nonexistence_general}, and by combining it with lemma~\ref{SMlem:SU2_nontrivial_existence_general}, we get the conclusion. 
    The proof of Eq.~\eqref{SMeq:SMthm:SU2_small_locality2} is as follows: 
    When $n=6$ or $7$, since the set $\{3, 4, ..., \lfloor n/2\rfloor-1\}$ is empty, Eq.~\eqref{SMeq:SMthm:SU2_small_locality2} trivially holds. 
    When $n\geq 18$, by Lemma~\ref{SMlem:binomial_SU2}, it is sufficient to show that $\binom{n}{j+1}-\binom{n}{j}\geq 2(n-1)(n-3)(n-5)/3$ only for $j=3$ and $j=\lfloor n/2\rfloor-1$. 
    For the proof of the case of $j=3$, we have 
    \begin{align}
        \binom{n}{4}-\binom{n}{3}
        =&\frac{n(n-1)(n-2)(n-7)}{24} \nonumber\\
        =&\frac{2}{3}(n-1)\cdot\frac{n-2}{16}\cdot n(n-7) \nonumber\\
        \geq&\frac{2}{3}(n-1)\cdot 1\cdot (n-3)(n-5) \nonumber\\
        =&\frac{2}{3}(n-1)(n-3)(n-5). 
    \end{align}
    For the proof of $j=\lfloor n/2\rfloor-1$, we have 
    \begin{align}
        \binom{n}{\left\lfloor \frac{n}{2}\right\rfloor}-\binom{n}{\left\lfloor \frac{n}{2}\right\rfloor-1}
        =&\frac{n(n-1)(n-2)(n-3)(n-4)(n-5)}{7!} \left(\prod_{\alpha=8}^{\lfloor n/2\rfloor} \frac{n-2\left\lfloor \frac{n}{2}\right\rfloor+2+\alpha}{\alpha}\right)\left(n-2\left\lfloor \frac{n}{2}\right\rfloor+1\right) \nonumber\\
        \geq&\frac{18\cdot (n-1)\cdot 16\cdot (n-3)\cdot 14\cdot (n-5)}{7!} \nonumber\\
        =&\frac{4}{5}(n-1)(n-3)(n-5) \nonumber\\
        \geq&\frac{2}{3}(n-1)(n-3)(n-5). 
    \end{align}

    In the following, we confirm the results for $n=5$, $8$, $9$, ..., $17$ by explicitly writing down the equations in Lemma~\ref{SMlem:SU2_cond}. 
    \begin{itemize}
    \item 
    When $n=5$, Eq.~\eqref{SMeq:SMlem:SU2_cond2} is explicitly written as 
    \begin{align}
        &x_{5/2}+4x_{3/2}+5x_{1/2}=0, \\
        &x_{3/2}+2x_{1/2}=0, \\
        &x_{1/2}=0, 
    \end{align}
    which implies that $x_{5/2}=x_{3/2}=x_{1/2}=0$. 
    Therefore, Eqs.~\eqref{SMeq:SMlem:SU2_cond1} and \eqref{SMeq:SMlem:SU2_cond2} do not have a nontrivial solution for all $t\in\mathbb{N}$.

    \item
    When $n=8$, Eqs.~\eqref{SMeq:SMlem:SU2_cond1} and \eqref{SMeq:SMlem:SU2_cond2} are explicitly written as 
    \begin{align}
        &|x_4|+7|x_3|+20|x_2|+28|x_1|+14|x_0|\leq 2t, \\
        &x_4+7x_3+20x_2+28x_1+14x_0=0, \\
        &x_3+5x_2+9x_1+5x_0=0, \\
        &x_2+3x_1+2x_0=0. 
    \end{align}
    If $t\geq 35$, Eqs.~\eqref{SMeq:SMlem:SU2_cond1} and \eqref{SMeq:SMlem:SU2_cond2} have a nontrivial integer solution $(x_4, x_3, x_2, x_1, x_0)=(1, -1, 1, -1, 1)$. 
    If $t<35$, any integer solution $\bm{x}$ satisfies $|x_2|\leq 1$, $|x_1|\leq 1$, and $|x_0|\leq 2$ by Lemma~\ref{SMlem:solution_bound}. 
    We thus have $(x_2, x_1, x_0)=\pm(1, 1, -2)$, $\pm(1, -1, 1)$, or $(0, 0, 0)$, which implies $|x_4|+7|x_3|+20|x_2|+28|x_1|+14|x_0|=112$, $70$, or $0$, respectively. 
    By combining this with Eq.~\eqref{SMeq:SMlem:SU2_cond1}, we get $\bm{x}=\bm{0}$. 
    We can therefore conclude that Eqs.~\eqref{SMeq:SMlem:SU2_cond1} and \eqref{SMeq:SMlem:SU2_cond2} have no nontrivial integer solution.

    \item
    When $n=9$, Eqs.~\eqref{SMeq:SMlem:SU2_cond1} and \eqref{SMeq:SMlem:SU2_cond2} are explicitly written as 
    \begin{align}
        &|x_{9/2}|+8|x_{7/2}|+27|x_{5/2}|+48|x_{3/2}|+42|x_{1/2}|\leq 2t, \\
        &x_{9/2}+8x_{7/2}+27x_{5/2}+48x_{3/2}+42x_{1/2}=0, \\
        &x_{7/2}+6x_{5/2}+14x_{3/2}+14x_{1/2}=0, \\
        &x_{5/2}+4x_{3/2}+5x_{1/2}=0. 
    \end{align}    
    If $t\geq 90$, Eqs.~\eqref{SMeq:SMlem:SU2_cond1} and \eqref{SMeq:SMlem:SU2_cond2} have a nontrivial integer solution $(x_{9/2}, x_{7/2}, x_{5/2}, x_{3/2}, x_{1/2})=(-15, 6, -1, -1, 1)$. 
    If $t<90$, any integer solution $\bm{x}$ satisfies $|x_{5/2}|\leq 3$, $|x_{3/2}|\leq 1$, and $|x_{1/2}|\leq 2$ by Lemma~\ref{SMlem:solution_bound}. 
    We thus have $(x_{5/2}, x_{3/2}, x_{1/2})=\pm(-1, 1, -1)$ or $(0, 0, 0)$, which implies $|x_{9/2}|+8|x_{7/2}|+27|x_{5/2}|+48|x_{3/2}|+42|x_{1/2}|=180$ or $0$, respectively. 
    By combining this with Eq.~\eqref{SMeq:SMlem:SU2_cond1}, we get $\bm{x}=\bm{0}$.

    \item
    When $n=10$, Eqs.~\eqref{SMeq:SMlem:SU2_cond1} and \eqref{SMeq:SMlem:SU2_cond2} are explicitly written as 
    \begin{align}
        &|x_5|+9|x_4|+35|x_3|+75|x_2|+90|x_1|+42|x_0|\leq 2t, \\
        &x_5+9x_4+35x_3+75x_2+90x_1+42x_0=0, \\
        &x_4+7x_3+20x_2+28x_1+14x_0=0, \\
        &x_3+5x_2+9x_1+5x_0=0. 
    \end{align}
    If $t\geq 96$, Eqs.~\eqref{SMeq:SMlem:SU2_cond1} and \eqref{SMeq:SMlem:SU2_cond2} have a nontrivial integer solution $(x_5, x_4, x_3, x_2, x_1, x_0)=(-21, 6, 0, -1, 0, 1)$. 
    If $t<96$, any integer solution $\bm{x}$ satisfies $|x_3|\leq 2, |x_2|\leq 1, |x_1|\leq 1, |x_0|\leq 2$ by Lemma~\ref{SMlem:solution_bound}. 
    We thus have $(x_3, x_2, x_1, x_0)=\pm(1, -1, 1, -1), \pm(1, 0, 1, -2), \pm(0, 1, 0, -1), \textrm{ or } (0, 0, 0, 0)$, which implies $|x_5|+9|x_4|+35|x_3|+75|x_2|+90|x_1|+42|x_0|=256, 294, 192, \textrm{ or } 0$, respectively. 
    By combining this with Eq.~\eqref{SMeq:SMlem:SU2_cond1}, we get $\bm{x}=\bm{0}$.

    \item
    When $n=11$, Eqs.~\eqref{SMeq:SMlem:SU2_cond1} and \eqref{SMeq:SMlem:SU2_cond2} are explicitly written as 
    \begin{align}
        &|x_{11/2}|+10|x_{9/2}|+44|x_{7/2}|+110|x_{5/2}|
        +165|x_{3/2}|+132|x_{1/2}|\leq 2t, \\
        &x_{11/2}+10x_{9/2}+44x_{7/2}+110x_{5/2}+165x_{3/2}+132x_{1/2}=0, \\
        &x_{9/2}+8x_{7/2}+27x_{5/2}+48x_{3/2}+42x_{1/2}=0, \\
        &x_{7/2}+6x_{5/2}+14x_{3/2}+14x_{1/2}=0. 
    \end{align}
    If $t\geq 192$, Eqs.~\eqref{SMeq:SMlem:SU2_cond1} and \eqref{SMeq:SMlem:SU2_cond2} have a nontrivial integer solution $(x_{11/2}, x_{9/2}, x_{7/2}, x_{5/2}, x_{3/2}, x_{1/2})=(27, -6, 0, 0, 1, -1)$. 
    If $t<192$, any integer solution $\bm{x}$ satisfies $|x_{7/2}|\leq 4$, $|x_{5/2}|\leq 1$, $|x_{3/2}|\leq 1$, and $|x_{1/2}|\leq 1$ by Lemma~\ref{SMlem:solution_bound}. 
    We thus have $(x_{7/2}, x_{5/2}, x_{3/2}, x_{1/2})=\pm(0, 0, 1, -1) \textrm{ or } (0, 0, 0, 0)$, which implies $|x_{11/2}|+10|x_{9/2}|+44|x_{7/2}|+110|x_{5/2}|+165|x_{3/2}|+132|x_{1/2}|=384$ or $0$, respectively. 
    By combining this with Eq.~\eqref{SMeq:SMlem:SU2_cond1}, we get $\bm{x}=\bm{0}$.

    \item
    When $n=12$, Eqs.~\eqref{SMeq:SMlem:SU2_cond1} and \eqref{SMeq:SMlem:SU2_cond2} are explicitly written as 
    \begin{align}
        &|x_6|+11|x_5|+54|x_4|+154|x_3|+275|x_2|+297|x_1|+132|x_0|\leq 2t, \\ 
        &x_6+11x_5+54x_4+154x_3+275x_2+297x_1+132x_0=0, \\
        &x_5+9x_4+35x_3+75x_2+90x_1+42x_0=0, \\
        &x_4+7x_3+20x_2+28x_1+14x_0=0. 
    \end{align}
    If $t\geq 330$, Eqs.~\eqref{SMeq:SMlem:SU2_cond1} and \eqref{SMeq:SMlem:SU2_cond2} have a nontrivial integer solution $(x_6, x_5, x_4, x_3, x_2, x_1, x_0)=(33, -6, 0, 0, 0, 1, -2)$. 
    If $t<330$, any integer solution $\bm{x}$ satisfies $m_\lambda |x_\lambda|<330$ for all $\lambda\in\{0, 1, 2, 3, 4\}$ and $|m_\lambda x_\lambda+m_\kappa x_\kappa|<330$ for all $\lambda, \kappa\in\{0, 1, 2, 3, 4\}$ satisfying $\lambda\neq\kappa$ by Lemma~\ref{SMlem:solution_bound}. 
    We thus have $(x_6, x_5, x_4, x_3, x_2, x_1, x_0)=\pm(132, -28, 0, 2, 0, 0, -1)$, $\pm(33, -6, 0, 0, 0, 1, -2)$, $\pm(34, -7, 1, -1, 1, 0, -1)$, or $(0, 0, 0, 0, 0, 0, 0)$, which implies $|x_6|+11|x_5|+54|x_4|+154|x_3|+275|x_2|+297|x_1|+132|x_0|=880$, $660$, $726$, or $0$, respectively. 
    By combining this with Eq.~\eqref{SMeq:SMlem:SU2_cond1}, we get $\bm{x}=\bm{0}$.

    \item
    When $n=13$, Eqs.~\eqref{SMeq:SMlem:SU2_cond1} and \eqref{SMeq:SMlem:SU2_cond2} are explicitly written as 
    \begin{align}
        &|x_{13/2}|+12|x_{11/2}|+65|x_{9/2}|+208|x_{7/2}|
        +429|x_{5/2}|+572|x_{3/2}|+429|x_{1/2}|\leq 2t, \\
        &x_{13/2}+12x_{11/2}+65x_{9/2}+208x_{7/2}
        +429x_{5/2}+572x_{3/2}+429x_{1/2}=0, \\
        &x_{11/2}+10x_{9/2}+44x_{7/2}+110x_{5/2}+165x_{3/2}+132x_{1/2}=0, \\
        &x_{9/2}+8x_{7/2}+27x_{5/2}+48x_{3/2}+42x_{1/2}=0. 
    \end{align}
    If $t\geq 640$, Lemma~\ref{SMlem:SU2_nontrivial_existence_general} implies that Eqs.~\eqref{SMeq:SMlem:SU2_cond1} and \eqref{SMeq:SMlem:SU2_cond2} have a nontrivial integer solution. 
    If $t<640$, any integer solution $\bm{x}$ satisfies $m_\lambda |x_\lambda|<640$ for all $\lambda\in\{1/2, 3/2, 5/2, 7/2, 9/2\}$ and $|m_\lambda x_\lambda+m_\kappa x_\kappa|<640$ for all $\lambda, \kappa\in\{1/2, 3/2, 5/2, 7/2, 9/2\}$ satisfying $\lambda\neq\kappa$ by Lemma~\ref{SMlem:solution_bound}. 
    By these conditions and Eq.~\eqref{SMeq:SMlem:SU2_cond2} in the case of $j=2$, we get $x_{1/2}=x_{3/2}=x_{5/2}=x_{7/2}=x_{9/2}=0$. 
    By plugging this into Eq.~\eqref{SMeq:SMlem:SU2_cond2} in the cases of $j=0$ and $1$, we get $\bm{x}=\bm{0}$.

    \item
    When $n=14$, Eqs.~\eqref{SMeq:SMlem:SU2_cond1} and \eqref{SMeq:SMlem:SU2_cond2} are explicitly written as 
    \begin{align}
        &|x_7|+13|x_6|+77|x_5|+273|x_4|+637|x_3|+1001|x_2|+1001|x_1|+429|x_0|\leq 2t, \\
        &x_7+13x_6+77x_5+273x_4+637x_3+1001x_2+1001x_1+429x_0=0, \\
        &x_6+11x_5+54x_4+154x_3+275x_2+297x_1+132x_0=0, \\
        &x_5+9x_4+35x_3+75x_2+90x_1+42x_0=0. 
    \end{align}
    If $t\geq 858$, Lemma~\ref{SMlem:SU2_nontrivial_existence_general} implies that Eqs.~\eqref{SMeq:SMlem:SU2_cond1} and \eqref{SMeq:SMlem:SU2_cond2} have a nontrivial integer solution. 
    If $t<858$, any integer solution $\bm{x}$ satisfies $m_\lambda |x_\lambda|<640$ for all $\lambda\in\{0, 1, 2, 3, 4, 5\}$ and $|m_\lambda x_\lambda+m_\kappa x_\kappa|<640$ for all $\lambda, \kappa\in\{0, 1, 2, 3, 4, 5\}$ satisfying $\lambda\neq\kappa$ by Lemma~\ref{SMlem:solution_bound}. 
    By these conditions and Eq.~\eqref{SMeq:SMlem:SU2_cond2} in the case of $j=2$, we get $x_0=x_1=x_2=x_3=x_4=x_5=0$. 
    By plugging this into Eq.~\eqref{SMeq:SMlem:SU2_cond2} in the cases of $j=0$ and $1$, we get $\bm{x}=\bm{0}$.

    \item
    When $n=15$, Eqs.~\eqref{SMeq:SMlem:SU2_cond1} and \eqref{SMeq:SMlem:SU2_cond2} are explicitly written as 
    \begin{align}
        &|x_{15/2}|+14|x_{13/2}|+90|x_{11/2}|+350|x_{9/2}|+910|x_{7/2}|
        +1638|x_{5/2}|+2002|x_{3/2}|+1430|x_{1/2}|\leq 2t, \\
        &x_{15/2}+14x_{13/2}+90x_{11/2}+350x_{9/2}+910x_{7/2}
        +1638x_{5/2}+2002x_{3/2}+1430x_{1/2}=0, \\
        &x_{13/2}+12x_{11/2}+65x_{9/2}+208x_{7/2}
        +429x_{5/2}+572x_{3/2}+429x_{1/2}=0, \\
        &x_{11/2}+10x_{9/2}+44x_{7/2}+110x_{5/2}+165x_{3/2}+132x_{1/2}=0. 
    \end{align}
    If $t\geq 1120$, Lemma~\ref{SMlem:SU2_nontrivial_existence_general} implies that Eqs.~\eqref{SMeq:SMlem:SU2_cond1} and \eqref{SMeq:SMlem:SU2_cond2} have a nontrivial integer solution. 
    If $t<1120$, any integer solution $\bm{x}$ satisfies $x_{5/2}=x_{3/2}=x_{1/2}=0$, $|x_{11/2}|\leq 12$, $|x_{9/2}|\leq 3$, and $|x_{7/2}|\leq 1$ by Lemma~\ref{SMlem:solution_bound}. 
    By these conditions and Eq.~\eqref{SMeq:SMlem:SU2_cond2}, we get $(x_{11/2}, x_{9/2}, x_{7/2})=\pm(-10, 1, 0)$ or $(0, 0, 0)$, which implies $|x_{15/2}|+14|x_{13/2}|+90|x_{11/2}|+350|x_{9/2}|+910|x_{7/2}|+1638|x_{5/2}|+2002|x_{3/2}|+1430|x_{1/2}|=2240$ or $0$, respectively. 
    By combining this with Eq.~\eqref{SMeq:SMlem:SU2_cond1}, we get $\bm{x}=\bm{0}$.

    \item
    When $n=16$, Eqs.~\eqref{SMeq:SMlem:SU2_cond1} and \eqref{SMeq:SMlem:SU2_cond2} are explicitly written as 
    \begin{align}
        &|x_8|+15|x_7|+104|x_6|+440|x_5|+1260|x_4|+2548|x_3|
        +3640|x_2|+3432|x_1|+1430|x_0|\leq 2t, \\
        &x_8+15x_7+104x_6+440x_5+1260x_4+2548x_3+3640x_2+3432x_1+1430x_0=0, \\
        &x_7+13x_6+77x_5+273x_4+637x_3+1001x_2+1001x_1+429x_0=0, \\
        &x_6+11x_5+54x_4+154x_3+275x_2+297x_1+132x_0=0. 
    \end{align}
    If $t\geq 1430$, Lemma~\ref{SMlem:SU2_nontrivial_existence_general} implies that Eqs.~\eqref{SMeq:SMlem:SU2_cond1} and \eqref{SMeq:SMlem:SU2_cond2} have a nontrivial integer solution. 
    If $t<1430$, any integer solution $\bm{x}$ satisfies $x_3=x_2=x_1=x_0=0$, $|x_6|\leq 13$, $|x_5|\leq 3$, and $|x_4|\leq 1$ by Lemma~\ref{SMlem:solution_bound}. 
    By these conditions and Eq.~\eqref{SMeq:SMlem:SU2_cond2}, we get $(x_6, x_5, x_4)=\pm(-11, 1, 0)$ or $(0, 0, 0)$, which implies $|x_8|+15|x_7|+104|x_6|+440|x_5|+1260|x_4|+2548|x_3|+3640|x_2|+3432|x_1|+1430|x_0|=2860$ or $0$, respectively. 
    By combining this with Eq.~\eqref{SMeq:SMlem:SU2_cond1}, we get $\bm{x}=\bm{0}$.

    \item
    When $n=17$, Eqs.~\eqref{SMeq:SMlem:SU2_cond1} and \eqref{SMeq:SMlem:SU2_cond2} are explicitly written as 
    \begin{align}
        &|x_{17/2}|+16|x_{15/2}|+119|x_{13/2}|+544|x_{11/2}|+1700|x_{9/2}|
        +3808|x_{7/2}|+6188|x_{5/2}|+7072|x_{3/2}|+4862|x_{1/2}|\leq 2t, \\
        &x_{17/2}+16x_{15/2}+119x_{13/2}+544x_{11/2}+1700x_{9/2}
        +3808x_{7/2}+6188x_{5/2}+7072x_{3/2}+4862x_{1/2}=0, \\
        &x_{15/2}+14x_{13/2}+90x_{11/2}+350x_{9/2}+910x_{7/2}
        +1638x_{5/2}+2002x_{3/2}+1430x_{1/2}=0, \\
        &x_{13/2}+12x_{11/2}+65x_{9/2}+208x_{7/2}
        +429x_{5/2}+572x_{3/2}+429x_{1/2}=0. 
    \end{align}
    If $t\geq 1792$, Lemma~\ref{SMlem:SU2_nontrivial_existence_general} implies that Eqs.~\eqref{SMeq:SMlem:SU2_cond1} and \eqref{SMeq:SMlem:SU2_cond2} have a nontrivial integer solution. 
    If $t<1792$, any integer solution $\bm{x}$ satisfies $x_{7/2}=x_{5/2}=x_{3/2}=x_{1/2}=0$, $|x_{13/2}|\leq 15$, $|x_{11/2}|\leq 3$, and $|x_{9/2}|\leq 1$ by Lemma~\ref{SMlem:solution_bound}. 
    By these conditions and Eq.~\eqref{SMeq:SMlem:SU2_cond2}, we get $(x_{13/2}, x_{11/2}, x_{9/2})=\pm(-12, 1, 0)$ or $(0, 0, 0)$, which implies $|x_{17/2}|+16|x_{15/2}|+119|x_{13/2}|+544|x_{11/2}|+1700|x_{9/2}|+3808|x_{7/2}|+6188|x_{5/2}|+7072|x_{3/2}|+4862|x_{1/2}|=3584$ or $0$, respectively. 
    By combining this with Eq.~\eqref{SMeq:SMlem:SU2_cond1}, we get $\bm{x}=\bm{0}$. 
    \end{itemize}
\hfill $\square$

\section{Technical lemmas} \label{SMsec:technical}

In this appendix, we show several lemmas used in the proof of the main statements.

For the proofs of Lemmas~\ref{SMlem:relative_phase_space} and \ref{SMlem:design_universality}, we prepare a basic property of a compact abelian matrix Lie group.

\begin{lemma} \label{SMlem:Lie_algebra}
    Let $l\in\mathbb{N}$, $\mathcal{Y}$ be a linear subspace of $\mathbb{R}^l$, $H:=\exp(i\cdot\mathrm{diag}(\mathcal{Y}))$ be compact. 
    Then, the Lie algebra $\mathfrak{h}$ of $H$ is $\mathrm{diag}(\mathcal{Y})$. 
\end{lemma}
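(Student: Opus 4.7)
The plan is as follows. Since $H$ is compact in the Hausdorff group $\mathrm{U}(l)$, it is closed, hence a closed Lie subgroup, and with the paper's convention its Lie algebra admits the standard characterization $\mathfrak{h} = \{A = A^\dag \in \mathcal{L}(\mathbb{C}^l) : e^{itA} \in H \ \forall t \in \mathbb{R}\}$. The inclusion $\mathrm{diag}(\mathcal{Y}) \subseteq \mathfrak{h}$ is immediate: for any $y \in \mathcal{Y}$, linearity of $\mathcal{Y}$ gives $ty \in \mathcal{Y}$ for all $t$, and hence $e^{it\,\mathrm{diag}(y)} = \exp(i\,\mathrm{diag}(ty)) \in H$ by the definition of $H$.

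For the reverse inclusion, I would first observe that every element of $H$ is a diagonal unitary, so any one-parameter subgroup contained in $H$ has a diagonal generator; consequently every $A \in \mathfrak{h}$ can be written as $A = \mathrm{diag}(a)$ with $a \in \mathbb{R}^l$. The condition $e^{itA} \in H$ for all $t$ then translates, using the definition of $H$, to the requirement that for every $t \in \mathbb{R}$ there exists $y(t) \in \mathcal{Y}$ with $ta - y(t) \in 2\pi \mathbb{Z}^l$; equivalently, $ta \in \mathcal{Y} + 2\pi \mathbb{Z}^l$ for all $t \in \mathbb{R}$.

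The decisive step is to pass to the quotient $V := \mathbb{R}^l / \mathcal{Y}$ via the canonical projection $\pi$. The displayed condition becomes $t\,\pi(a) \in 2\pi\,\pi(\mathbb{Z}^l)$ for all $t \in \mathbb{R}$. Since $\pi(\mathbb{Z}^l)$ is the image of a countable set and is therefore countable in $V$, while $\{t\,\pi(a) : t \in \mathbb{R}\}$ is uncountable whenever $\pi(a) \neq 0$, one is forced to conclude $\pi(a) = 0$, i.e., $a \in \mathcal{Y}$. This yields $A \in \mathrm{diag}(\mathcal{Y})$ and closes the argument.

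The only delicate point is the role of compactness: without it, $\exp(i \cdot \mathrm{diag}(\mathcal{Y}))$ need not be closed in $\mathrm{U}(l)$ (e.g., irrational windings of a torus), so $H$ would fail to be a Lie subgroup in the usual sense and the standard Lie-algebra description invoked above would be unavailable. Compactness bypasses this pathology cleanly, and the rest of the argument is a straightforward countability-versus-continuum dichotomy in $\mathbb{R}^l / \mathcal{Y}$.
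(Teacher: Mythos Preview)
Your proof is correct and follows essentially the same approach as the paper's: both reduce to the observation that $ta \in \mathcal{Y} + 2\pi\mathbb{Z}^l$ for all $t \in \mathbb{R}$ forces $a \in \mathcal{Y}$ by a countability-versus-continuum argument. The only cosmetic difference is that the paper projects onto $\mathcal{Y}^\perp$ via the inner product (obtaining $\theta\|\bm{b}^\perp\|^2 = 2\pi\langle\bm{c},\bm{b}^\perp\rangle$), whereas you pass to the quotient $\mathbb{R}^l/\mathcal{Y}$; these are equivalent ways of modding out $\mathcal{Y}$.
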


\begin{proof}
    Since $H$ is a Lie subgroup of $\exp(i\cdot\mathrm{diag}(\mathbb{R}^l))$, and the Lie algebra of $\exp(i\cdot\mathrm{diag}(\mathbb{R}^l))$ is $\mathrm{diag}(\mathbb{R}^l)$, the Lie algebra $\mathfrak{h}$ of $H$ is a subset of $\mathrm{diag}(\mathbb{R}^l)$. 
    Thus, $\mathfrak{h}$ is given by $\mathfrak{h}=\{A\in\mathrm{diag}(\mathbb{R}^l)\ |\ \forall \theta\in\mathbb{R}\ \exp(i\theta A)\in\exp(i\cdot\mathrm{diag}(\mathcal{Y}))\}$, and it is sufficient to show that $\mathfrak{h}=\mathrm{diag}(\mathcal{Y})$. 
    Since $\mathfrak{h}\supset\mathrm{diag}(\mathcal{Y})$ is trivial, we show that $\mathfrak{h}\subset\mathrm{diag}(\mathcal{Y})$ in the following. 
    We take arbitrary $A\in\mathfrak{h}$. 
    Then, for any $\theta\in\mathbb{R}$, $\exp(i\theta A)\in\exp(i\cdot\mathrm{diag}(\mathcal{Y}))$ . 
    When we define $\bm{a}:=\mathrm{diag}^{-1}(A)$, it can be equivalently expressed as $\theta\bm{a}=\bm{y}+2\pi\bm{c}$ with some $\bm{y}\in\mathcal{Y}$ and $\bm{c}\in\mathbb{Z}^l$. 
    We decompose $\bm{a}$ as $\bm{a}=\bm{b}+\bm{b}^\perp$ with some $\bm{b}\in\mathcal{Y}$ and $\bm{b}^\perp\in\mathcal{Y}^\perp$. 
    By taking the inner product of $\theta\bm{a}$ and $\bm{b}^\perp$, we have $\theta\|\bm{b}^\perp\|^2=2\pi\braket{\bm{c}, \bm{b}^\perp}$. 
    Since we can take such $\bm{c}\in\mathbb{Z}^l$ for all $\theta\in\mathbb{R}$, we get $\{\theta\|\bm{b}^\perp\|^2\ |\ \theta\in\mathbb{R}\}\subset\{2\pi\braket{\bm{c}, \bm{b}^\perp}\ |\ \bm{c}\in\mathbb{Z}^l\}$. 
    Since the r.h.s. of this is countable, the l.h.s. is also countable. 
    We therefore get $\|\bm{b}^\perp\|=0$, which implies that $\bm{a}\in\mathcal{Y}$, i.e., $A\in\mathrm{diag}(\mathcal{Y})$. 
\end{proof}

By using the lemma above, we show properties of the group of relative phases for the proof of Lemma~\ref{SMlem:design_universality}.

\begin{lemma} \label{SMlem:relative_phase_space}
    Let $n\in\mathbb{N}$, $R$ be a unitary representation of a group $G$, $\Lambda$ be the set of the labels of the inequivalent irreducible representations appearing in $R$, 
    $\Gamma$ be a finite set, 
    $\mathcal{S}^\gamma$ be a connected compact subgroup of $\mathcal{U}_{n, G, R}$ for all $\gamma\in\Gamma$,  
    $h$ be a function $\mathcal{L}_{n, G, R}$ to $\mathbb{C}^{\Lambda\times\Lambda}$ defined by 
    \begin{align}
        h(A):=\mathrm{diag}((\mathrm{det}(A_\lambda))_{\lambda\in\Lambda})\ \forall A\in\mathcal{L}_{n, G, R} 
    \end{align}
    with $A_\lambda$ determined by Eq.~\eqref{SMeq:sym_op_decomp}, and 
    $\widetilde{\mathcal{V}}$ be defined by Eq.~\eqref{SMeq:relative_phase_space_extended}. 
    Then, $h(\braket{\bigcup_{\gamma\in\Gamma} \mathcal{S}^\gamma})=\exp(i\cdot\mathrm{diag}(\widetilde{\mathcal{V}}))$, and there exist some $J\in\mathbb{N}$ and some orthogonal basis $\{\bm{v}_j\}_{j\in\{1, 2, .., J\}}$ of $\widetilde{\mathcal{V}}$ such that $\bm{v}_j\in\mathbb{Q}^\Lambda$ for all $j\in\{1,2, ..., J\}$. 
\end{lemma}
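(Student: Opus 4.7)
The plan is to prove the identity $h(\braket{\bigcup_{\gamma\in\Gamma}\mathcal{S}^\gamma}) = \exp(i\cdot\mathrm{diag}(\widetilde{\mathcal{V}}))$ by computing $h(\mathcal{S}^\gamma)$ for each $\gamma$ and combining the images via the abelianness of the target torus, and then to extract a rational orthogonal basis of $\widetilde{\mathcal{V}}$ using compactness together with Gram--Schmidt.

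For the first part, since each $\mathcal{S}^\gamma$ is a connected compact Lie group, the exponential map $\mathfrak{s}^\gamma\to\mathcal{S}^\gamma$ is surjective, so every element of $\mathcal{S}^\gamma$ is of the form $e^{iA}$ with $A\in\mathfrak{s}^\gamma$. Using the Schur-type decomposition (Eq.~\eqref{SMeq:sym_op_decomp}) one gets $e^{iA} = \sum_\lambda F_\lambda(I\otimes e^{iA_\lambda})F_\lambda^\dag$, so $(e^{iA})_\lambda = e^{iA_\lambda}$ and hence $\det((e^{iA})_\lambda) = e^{i\mathrm{tr}(A_\lambda)}$. This yields the clean formula $h(e^{iA}) = \exp(i\cdot\mathrm{diag}(\bm{f}(A)))$, and linearity of $\bm{f}$ then implies $h(\mathcal{S}^\gamma) = \exp(i\cdot\mathrm{diag}(\bm{f}(\mathfrak{s}^\gamma)))$. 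Because the target torus $\exp(i\cdot\mathrm{diag}(\mathbb{R}^\Lambda))$ is abelian, the subgroup generated by $\bigcup_\gamma h(\mathcal{S}^\gamma)$ equals the pointwise product of the individual subgroups, which equals $\exp(i\cdot\mathrm{diag}(\sum_\gamma \bm{f}(\mathfrak{s}^\gamma)))$. To identify the full span $\widetilde{\mathcal{V}}$, one invokes $\bm{f}(I) = \bm{m}$: the global-phase direction $\mathrm{span}_\mathbb{R}(\bm{m})$ is adjoined to $\mathcal{V}$, yielding $\exp(i\cdot\mathrm{diag}(\widetilde{\mathcal{V}}))$.

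For the second part, the image $\exp(i\cdot\mathrm{diag}(\widetilde{\mathcal{V}}))$ is compact as the continuous image of a compact Lie group. By Lemma~\ref{SMlem:Lie_algebra}, its Lie algebra equals $\mathrm{diag}(\widetilde{\mathcal{V}})$. The exponential restricted to $\mathrm{diag}(\widetilde{\mathcal{V}})$ has kernel $2\pi\mathbb{Z}^\Lambda\cap\mathrm{diag}(\widetilde{\mathcal{V}})$, and compactness of the quotient $\widetilde{\mathcal{V}}/(2\pi\mathbb{Z}^\Lambda\cap\widetilde{\mathcal{V}})$ forces this kernel to be a full-rank lattice in $\mathrm{diag}(\widetilde{\mathcal{V}})$. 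Choosing a $\mathbb{Z}$-basis of this lattice yields a basis of $\widetilde{\mathcal{V}}$ whose components lie in $\mathbb{Z}^\Lambda\subset\mathbb{Q}^\Lambda$. Applying Gram--Schmidt orthogonalization, which preserves rationality because it only involves rational linear combinations, produces the required orthogonal basis $\{\bm{v}_j\}_{j=1}^J\subset\mathbb{Q}^\Lambda$.

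The main obstacle is the first step: rigorously justifying that the full space $\widetilde{\mathcal{V}}$, rather than merely $\mathcal{V}$, arises as the logarithm of the generated image. This requires careful bookkeeping of how the $\mathrm{span}_\mathbb{R}(\bm{m})$ direction enters through the identity contribution to the symmetric Lie algebras. Once the equality of groups is established, the lattice-theoretic extraction of a rational orthogonal basis follows routinely from compactness and elementary linear algebra.
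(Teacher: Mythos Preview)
Your approach is essentially the paper's: compute $h$ on each $\mathcal{S}^\gamma$ via the exponential, use abelianness of the target torus to pass to the generated group, then extract an integer basis from compactness and orthogonalize by Gram--Schmidt. The second part of your argument (kernel of $\exp$ is a full-rank lattice, hence a $\mathbb{Z}$-basis exists, hence Gram--Schmidt gives a rational orthogonal basis) is exactly what the paper does, phrased via an explicit torus isomorphism rather than the lattice directly.

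The obstacle you flag is genuine, and your suggested resolution does not work: the identity operator $I$ lies in $\mathfrak{u}_{n,G,R}$ but need not lie in any $\mathfrak{s}^\gamma$, so $\bm{f}(I)=\bm{m}$ is not guaranteed to be in $\mathcal{V}$. What your argument (and the paper's) actually establishes is $h(\braket{\bigcup_\gamma\mathcal{S}^\gamma})=\exp(i\cdot\mathrm{diag}(\mathcal{V}))$, not $\exp(i\cdot\mathrm{diag}(\widetilde{\mathcal{V}}))$. The paper's proof has the same gap: it asserts $h(\braket{\bigcup_\gamma\mathcal{S}^\gamma})=h(\mathcal{T})$ with $\mathcal{T}:=\{e^{i\theta}\prod_\gamma U^\gamma\}$, but the global-phase factor $e^{i\theta}$ is inserted without justification; the commutativity $h(AB)=h(BA)$ alone does not produce it.

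The resolution is contextual. In the paper's sole application of this lemma (the proof of Lemma~\ref{SMlem:design_universality}), the global-phase subgroup $\{e^{i\theta}I\}_{\theta\in\mathbb{R}}$ is explicitly adjoined to the gate set before the lemma is invoked. For that enlarged family the corresponding $\mathcal{V}$ already contains $\bm{m}$ and equals $\widetilde{\mathcal{V}}$, so the stated conclusion holds. You should read the lemma with that implicit hypothesis, or equivalently prove the cleaner identity $h(\braket{\bigcup_\gamma\mathcal{S}^\gamma})=\exp(i\cdot\mathrm{diag}(\mathcal{V}))$ and note that the second claim about the rational orthogonal basis then follows for $\mathcal{V}$ (hence for $\widetilde{\mathcal{V}}$ once $\{e^{i\theta}I\}$ is adjoined).
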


\begin{proof}
    First, we show that $h(\braket{\bigcup_{\gamma\in\Gamma} \mathcal{S}^\gamma})=\exp(i\cdot\mathrm{diag}(\widetilde{\mathcal{V}}))$. 
    Since $h$ satisfies $h(AB)=h(BA)$ for all $A, B\in\mathcal{L}_{n, G, R}$, we have $h(\braket{\bigcup_{\gamma\in\Gamma} \mathcal{S}^\gamma})=h(\mathcal{T})$, where $\mathcal{T}:=\{e^{i\theta}\prod_{\gamma\in\Gamma} U^\gamma\ |\ \theta\in\mathbb{R}, U^\gamma\in\mathcal{S}^\gamma\ \forall\gamma\in\Gamma\}$. 
    Thus, it is sufficient to show that 
    \begin{align}
        h(\mathcal{T})=\exp(i\cdot\mathrm{diag}(\widetilde{\mathcal{V}})). \label{SMeq:SMlem:relative_phase_space1}
    \end{align}
    We note that for any $\theta\in\mathbb{R}$ and $A^\gamma\in\mathfrak{s}^\gamma$, we have 
    \begin{align}
        h\left(e^{i\theta}\prod_{\gamma\in\Gamma} e^{iA^\gamma}\right)
        =&\mathrm{diag}\left(\left(\mathrm{det}\left(e^{i\theta}\prod_{\gamma\in\Gamma} e^{iA_\lambda^\gamma}\right)\right)_{\lambda\in\Lambda}\right) \nonumber\\
        =&\mathrm{diag}\left(\left(e^{i\theta m_\lambda}\prod_{\gamma\in\Gamma} e^{i\mathrm{tr}(A_\lambda^\gamma)}\right)_{\lambda\in\Lambda}\right) \nonumber\\
        =&\exp\left(i\cdot\mathrm{diag}\left(\left(\theta m_\lambda+\sum_{\gamma\in\Gamma} \mathrm{tr}(A_\lambda^\gamma)\right)_{\lambda\in\Lambda}\right)\right) \nonumber\\
        =&\exp\left(i\cdot\mathrm{diag}\left(\theta\bm{m}+\sum_{\gamma\in\Gamma} \bm{f}(A^\gamma)\right)\right), \label{SMeq:SMlem:relative_phase_space2}
    \end{align}
    where $A_\lambda^\gamma$ is defined by $A^\gamma=\sum_{\lambda\in\Lambda} F_\lambda(\mathrm{id}(\mathbb{C}^{r_\lambda})\otimes A_\lambda^\gamma)F_\lambda^\dag$. 
    For the proof of $h(\mathcal{T})\subset\exp(i\cdot\mathrm{diag}(\widetilde{\mathcal{V}}))$, for any $U\in\mathcal{T}$, we can take some $\theta\in\mathbb{R}$ and $U^\gamma\in\mathcal{S}^\gamma$ such that $U=e^{i\theta}\prod_{\gamma\in\Gamma} U^\gamma$. 
    For each $\gamma\in\Gamma$, since $\mathcal{S}^\gamma$ is connected and compact, we can take $A^\gamma$ such that $e^{iA^\gamma}=U^\gamma$. 
    By Eq.~\eqref{SMeq:SMlem:relative_phase_space2}, we have 
    \begin{align}
        h(U)
        =h\left(e^{i\theta}\prod_{\gamma\in\Gamma} e^{iA^\gamma}\right)
        =\exp\left(i\cdot\mathrm{diag}\left(\theta\bm{m}+\sum_{\gamma\in\Gamma} \bm{f}(A^\gamma)\right)\right) 
        \in&\exp(i\cdot\mathrm{diag}(\widetilde{\mathcal{V}})). \label{SMeq:SMlem:relative_phase_space3}
    \end{align}
    For the proof of $h(\mathcal{T})\supset\exp(i\cdot\mathrm{diag}(\widetilde{\mathcal{V}}))$, we take arbitrary $\bm{v}\in\widetilde{\mathcal{V}}$. 
    By the definition of $\widetilde{\mathcal{V}}$, we can take some $\theta\in\mathbb{R}$ and $A^\gamma\in\mathfrak{s}^\gamma$ such that $\bm{v}=\theta\bm{m}+\sum_{\gamma\in\Gamma} \bm{f}(A^\gamma)$. 
    By Eq.~\eqref{SMeq:SMlem:relative_phase_space2}, we have 
    \begin{align}
        \exp(i\cdot\mathrm{diag}(\bm{v})) 
        =\exp\left(i\cdot\mathrm{diag}\left(\theta\bm{m}+\sum_{\gamma\in\Gamma} \bm{f}(A^\gamma)\right)\right) 
        =h\left(e^{i\theta}\prod_{\gamma\in\Gamma} e^{iA^\gamma}\right) 
        \in h(\mathcal{T}). \label{SMeq:SMlem:relative_phase_space4}
    \end{align}
    By Eqs.~\eqref{SMeq:SMlem:relative_phase_space3} and \eqref{SMeq:SMlem:relative_phase_space4}, we get Eq.~\eqref{SMeq:SMlem:relative_phase_space1}.

    Next, we show that there exist some $J\in\mathbb{N}$ and some orthogonal basis $\{\bm{v}_j\}_{j=1}^J$ of $\widetilde{\mathcal{V}}$ such that $\bm{v}_j\in\mathbb{Q}^\Lambda$. 
    Since $\mathcal{T}$ is a finite product of compact set, $\mathcal{T}$ is compact. 
    The continuity of $h$ implies that $h(\mathcal{T})$ is compact. 
    By the construction of $h$ and $\mathcal{T}$, $h(\mathcal{T})$ is an abelian Lie group. 
    By Corollary~1.103 of Ref.~\cite{knapp2002lie}, $h(\mathcal{T})$ is isomorphic to a torus $\exp(i\cdot\mathrm{diag}(\mathbb{R}^J))$ with some $J\in\mathbb{N}$, where $J$ is the dimension of a torus. 
    We take an isomorphism $\phi$ from $\exp(i\cdot\mathrm{diag}(\mathbb{R}^J))$ to $h(\mathcal{T})$. 
    By Eq.~\eqref{SMeq:SMlem:relative_phase_space1}, $\phi$ gives an isomorphism from $\exp(i\cdot\mathrm{diag}(\mathbb{R}^J))$ to $\exp(i\cdot\mathrm{diag}(\widetilde{\mathcal{V}}))$. 
    By Lemma~\ref{SMlem:Lie_algebra}, the Lie algebras of these two Lie groups are $\mathrm{diag}(\mathbb{R}^J)$ and $\mathrm{diag}(\widetilde{\mathcal{V}})$, respectively. 
    Thus, the derivative $d\phi$ of $\phi$ at the identity gives an isomorphism from $\mathrm{diag}(\mathbb{R}^J)$ to $\mathrm{diag}(\widetilde{\mathcal{V}})$. 
    We denote the standard basis of $\mathbb{R}^J$ by $\{\bm{u}_j\}_{j=1}^J$, and define $\widetilde{\bm{u}}_j\in\mathbb{R}^\Lambda$ by $\mathrm{diag}(\widetilde{\bm{u}}_j):=d\phi(\mathrm{diag}(\bm{u}_j))$ for all $j\in\{1, 2, ..., J\}$. 
    Then, $\{\widetilde{\bm{u}}_j\}_{j=1}^J$ is a basis of $\widetilde{\mathcal{V}}$. 
    By noting that $\exp(i2\pi\cdot\mathrm{diag}(\widetilde{\bm{u}}_j))=\phi(\exp(i2\pi\cdot\mathrm{diag}(\bm{u}_j)))=I$, we have $\widetilde{\bm{u}}_j\in\mathbb{Z}^\Lambda$. 
    We get an orthogonal basis $\{\bm{v}_j\}_{j\in\{1, 2, ..., J\}}$ of $\widetilde{\mathcal{V}}$ by the Gram-Schmidt orthogonalization, i.e., $\bm{v}_j:=\widetilde{\bm{u}}_j-\sum_{j'=1}^{j-1} (\braket{\bm{v}_{j'}, \widetilde{\bm{u}}_j}/\|\bm{v}_{j'}\|^2)\bm{v}_{j'}$, and the basis vectors satisfy $\bm{v}\in\mathbb{Q}^\Lambda$ for all $j\in\{1, 2, ..., J\}$. 
\end{proof}

By using the two lemmas above, we show the equivalent conditions to the universality of the gate set. 
We use the following lemma in the explanation below Theorem~\ref{SMthm:general}.

\begin{lemma} \label{SMlem:design_universality}
    Let $n\in\mathbb{N}$, 
    $R$ be a unitary representation of a group $G$ on $n$ qudits, 
    $\Lambda$ be the set of the labels of the inequivalent irreducible representations appearing in $R$, 
    $\Gamma$ be a finite set, 
    $\mathcal{S}^\gamma$ be a connected compact subgroup of $\mathcal{U}_{n, G, R}$ for all $\gamma\in\Gamma$, 
    $\bigcup_{\gamma\in\Gamma} \mathcal{S}^\gamma$ be semi-universal for $\mathcal{U}_{n, G, R}$, and 
    $\widetilde{\mathcal{V}}$ be defined by Eq.~\eqref{SMeq:relative_phase_space_extended}. 
    Then, the following three statements are equivalent: 
    \begin{align}
        (i)\ & \bigcup_{\gamma\in\Gamma} \mathcal{S}^\gamma \textrm{ is universal for } \mathcal{U}_{n, G, R} \textrm{ up to the global phase}, \\
        (ii)\ & \widetilde{\mathcal{V}}=\mathbb{R}^\Lambda, \\
        (iii)\ & \widetilde{\mathcal{V}}^\perp\cap\mathbb{Z}^\Lambda=\{\bm{0}\}.
    \end{align}
\end{lemma}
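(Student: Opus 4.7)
The plan is to prove the chain (i) $\Leftrightarrow$ (ii) $\Leftrightarrow$ (iii), using the determinant-type map $h$ and the rational-basis conclusion of Lemma~\ref{SMlem:relative_phase_space} as the two key tools.

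For (ii) $\Leftrightarrow$ (iii), the implication (ii) $\Rightarrow$ (iii) is immediate. For (iii) $\Rightarrow$ (ii), I would argue the contrapositive: assuming $\widetilde{\mathcal{V}} \neq \mathbb{R}^\Lambda$, Lemma~\ref{SMlem:relative_phase_space} provides a rational (in fact integer) basis of $\widetilde{\mathcal{V}}$; extending to a rational basis of $\mathbb{R}^\Lambda$ and performing Gram-Schmidt over $\mathbb{Q}$ produces a nonzero integer vector in $\widetilde{\mathcal{V}}^\perp$, contradicting (iii).

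For (i) $\Leftrightarrow$ (ii), set $H:=\langle\bigcup_{\gamma\in\Gamma}\mathcal{S}^\gamma\rangle$ and $Z_0:=\{e^{i\theta}I : \theta\in\mathbb{R}\}$. The semi-universality condition, in the equivalent form \eqref{SMeq:semi_universality_equivalence}, says that every element of $\ker(h|_{\mathcal{U}_{n,G,R}})=\{\sum_\lambda F_\lambda(I\otimes U_\lambda)F_\lambda^\dag : U_\lambda\in\mathrm{SU}(m_\lambda)\}$ lies in $H$. Consequently, for any $U\in\mathcal{U}_{n,G,R}$, membership in $H\cdot Z_0$ is determined by $h(U)\in h(H\cdot Z_0)$, so (i) is equivalent to $h(H)\cdot h(Z_0)=h(\mathcal{U}_{n,G,R})=\exp(i\cdot\mathrm{diag}(\mathbb{R}^\Lambda))$. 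By Lemma~\ref{SMlem:relative_phase_space}, $h(H)=\exp(i\cdot\mathrm{diag}(\widetilde{\mathcal{V}}))$, and a direct computation gives $h(Z_0)=\exp(i\cdot\mathrm{diag}(\mathrm{span}_\mathbb{R}\bm{m}))$. Since $\mathrm{span}_\mathbb{R}\bm{m}\subset\widetilde{\mathcal{V}}$ by \eqref{SMeq:relative_phase_space_extended}, the product $h(H)\cdot h(Z_0)$ is just $\exp(i\cdot\mathrm{diag}(\widetilde{\mathcal{V}}))$. Hence (i) reduces to the identity
\begin{align}
\exp(i\cdot\mathrm{diag}(\widetilde{\mathcal{V}}))=\exp(i\cdot\mathrm{diag}(\mathbb{R}^\Lambda)). \label{SMeq:proposal_torus}
\end{align}

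The final and most delicate step will be to show that \eqref{SMeq:proposal_torus} is equivalent to $\widetilde{\mathcal{V}}=\mathbb{R}^\Lambda$. The forward direction is trivial. For the converse, I would use the rationality of $\widetilde{\mathcal{V}}$: if $\widetilde{\mathcal{V}}\subsetneq\mathbb{R}^\Lambda$, step (iii)$\Rightarrow$(ii) yields a nonzero $\bm{x}\in\widetilde{\mathcal{V}}^\perp\cap\mathbb{Z}^\Lambda$, and then $\bm{y}:=\pi\bm{x}/\|\bm{x}\|^2$ satisfies $\bm{x}\cdot\bm{y}=\pi\notin 2\pi\mathbb{Z}$, whereas every element of $\widetilde{\mathcal{V}}+2\pi\mathbb{Z}^\Lambda$ has integer inner product (times $2\pi$) with $\bm{x}$, so $\exp(i\cdot\mathrm{diag}(\bm{y}))$ is absent from $\exp(i\cdot\mathrm{diag}(\widetilde{\mathcal{V}}))$. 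The main obstacle here is that without the rational basis of Lemma~\ref{SMlem:relative_phase_space}, $\exp(i\cdot\mathrm{diag}(\widetilde{\mathcal{V}}))$ could be dense in the torus without being all of it (irrational winding line), so the whole argument rests on exploiting rationality carefully. Everything else is bookkeeping in the commutative diagram furnished by $h$ and the semi-universality kernel identification.
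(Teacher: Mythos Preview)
Your proposal is correct and follows essentially the same architecture as the paper's proof: reduce (i) via $h$ and semi-universality to the torus identity $\exp(i\cdot\mathrm{diag}(\widetilde{\mathcal{V}}))=\exp(i\cdot\mathrm{diag}(\mathbb{R}^\Lambda))$, and handle (ii)$\Leftrightarrow$(iii) via the rational orthogonal basis supplied by Lemma~\ref{SMlem:relative_phase_space}. The only noteworthy difference is the step deducing $\widetilde{\mathcal{V}}=\mathbb{R}^\Lambda$ from the torus identity: the paper simply takes Lie algebras of both sides using Lemma~\ref{SMlem:Lie_algebra} (which exploits that $\exp(i\cdot\mathrm{diag}(\widetilde{\mathcal{V}}))$ is compact), whereas you reuse the integer orthogonal vector from your (iii)$\Rightarrow$(ii) argument to exhibit an explicit missing point of the torus---both routes are valid, with the paper's being slightly more direct and yours making the role of rationality more explicit.
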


\begin{proof}
    First, we prove that $(i)\iff (ii)$. 
    We note that $(i)$ is equivalent to 
    \begin{align}
        \Braket{\left\{e^{i\theta} I\right\}_{\theta\in\mathbb{R}}\cup\bigcup_{\gamma\in\Gamma} \mathcal{S}^\gamma} 
        \supset\left\{\sum_{\lambda\in\Lambda} F_\lambda(I\otimes U_\lambda)F_\lambda^\dag\ \middle|\ U_\lambda\in\mathrm{U}(m_\lambda)\ \forall \lambda\in\Lambda\right\}. \label{SMeq:SMlem:design_universality1}
    \end{align}
    Since $\bigcup_{\gamma\in\mathcal{S}^\gamma}$ is semi-universal for $\mathcal{U}_{n, G, R}$, the group $\{e^{i\theta}I\}_{\theta\in\mathbb{R}}\cup\bigcup_{\gamma\in\Gamma} \mathcal{S}^\gamma$ is also semi-universal for $\mathcal{U}_{n, G, R}$, which can be expressed as 
    \begin{align}
        \Braket{\left\{e^{i\theta}I\right\}_{\theta\in\mathbb{R}}\cup\bigcup_{\gamma\in\Gamma} \mathcal{S}^\gamma} 
        \supset\left\{\sum_{\lambda\in\Lambda} F_\lambda(I\otimes U_\lambda)F_\lambda^\dag\ \middle|\ U_\lambda\in\mathrm{SU}(m_\lambda)\ \forall \lambda\in\Lambda\right\}, \label{SMeq:SMlem:design_universality2}
    \end{align}
    as we have shown in Eq.~\eqref{SMeq:semi_universality_equivalence}.     
    By comparing Eqs.~\eqref{SMeq:SMlem:design_universality1} and \eqref{SMeq:SMlem:design_universality2}, $(i)$ is equivalent to $h(\braket{\{e^{i\theta} I\}_{\theta\in\mathbb{R}}\cup\bigcup_{\gamma\in\Gamma} \mathcal{S}^\gamma})=\exp(i\cdot\mathrm{diag}(\mathbb{R}^\Lambda))$, which can be equivalently written as 
    \begin{align}
        \exp(i\cdot\mathrm{diag}(\widetilde{\mathcal{V}}))=\exp(i\cdot\mathrm{diag}(\mathbb{R}^\Lambda)). \label{SMeq:SMlem:design_universality3}
    \end{align} 
    by Lemma~\ref{SMlem:relative_phase_space}. 
    It is thus sufficient to show that Eq.~\eqref{SMeq:SMlem:design_universality3}$\iff (ii)$. 
    Since the Lie algebras of $\exp(i\cdot\mathrm{diag}(\widetilde{\mathcal{V}}))$ and $\exp(i\cdot\mathrm{diag}(\mathbb{R}^\Lambda))$ are $\mathrm{diag}(\widetilde{\mathcal{V}})$ and $\mathrm{diag}(\mathbb{R}^\Lambda)$, respectively by Lemma~\ref{SMlem:Lie_algebra}, taking the Lie algebras of the both sides of Eq.~\eqref{SMeq:SMlem:design_universality3} gives the proof of Eq.~\eqref{SMeq:SMlem:design_universality3}$\implies (ii)$. 
    The converse $(ii)\implies$Eq.~\eqref{SMeq:SMlem:design_universality3} is trivial.

    Next, we prove that $(ii)\iff (iii)$. 
    The proof of $(ii)\implies (iii)$ is trivial, because $(ii)$ implies $\widetilde{\mathcal{V}}^\perp=\{\bm{0}\}$. 
    In the following, we show that $(iii)\implies (ii)$. 
    We suppose that $(ii)$ does not hold. 
    Then, we can take some $\bm{e}_l\not\in\widetilde{\mathcal{V}}$ from the standard basis $\{\bm{e}_j\}_{j=1}^J$ of $\widetilde{\mathcal{V}}$. 
    By Lemma~\ref{SMlem:relative_phase_space}, we can take an orthogonal basis $\{\bm{v}_j\}_{j=1}^J$ of $\widetilde{\mathcal{V}}$ such that $\bm{v}_j\in\mathbb{Q}^\Lambda$ for all $j\in\{1, 2, ..., J\}$. 
    By using this basis, we define $\widetilde{\bm{d}}:=\bm{e}_l-\sum_{j=1}^J (\braket{\bm{v}_j, \bm{e}_l}/\|\bm{v}_j\|^2)\bm{v}_j\in\mathbb{Q}^\Lambda\backslash\{\bm{0}\}$. 
    We can take some $\alpha\in\mathbb{Q}\backslash\{0\}$ such that $\alpha\widetilde{\bm{d}}\in\mathbb{Z}^\Lambda$. 
    Then, $\bm{d}:=\alpha\widetilde{\bm{d}}$ satisfies $\bm{d}\in (\widetilde{\mathcal{V}}^\perp\cap\mathbb{Z}^\Lambda)\backslash\{\bm{0}\}$. 
\end{proof}

For the proof of Lemma~\ref{SMlem:design_commutant}, we show that the moment operator defined by Eq.~\eqref{SMeq:moment_operator_def} is a projection.

\begin{lemma} \label{SMlem:moment_op_projection}
    Let $n, t\in\mathbb{N}$ and $\mathcal{X}$ be a compact unitary subgroup of $\mathcal{U}_n$. 
    Then, 
    \begin{align}
        M_{t, \mu_{\mathcal{X}}}=\Pi_{E(\mathrm{Comm}(\Omega_t(\mathcal{X})))}. 
    \end{align}
    where $E$ is defined by Eq.~\eqref{SMeq:E_def}, i.e., 
    \begin{align}
        &E(K):=(K\otimes I)\ket{\eta}\ \forall K\in\mathcal{L}(\mathcal{H}^{\otimes n}), \\
        &\ket{\eta}:=\frac{1}{\sqrt{d^{tn}}}\sum_{j=1}^{d^{tn}} \ket{j}\otimes\ket{j}, 
    \end{align}
    and $\{\ket{j}\}_{j=1}^{d^{tn}}$ is an orthonormal basis of $\mathcal{H}^{\otimes t}$. 
\end{lemma}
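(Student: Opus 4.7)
The plan is to establish that $M_{t,\mu_{\mathcal{X}}}$ is an orthogonal projection and then identify its range with $E(\mathrm{Comm}(\Omega_t(\mathcal{X})))$. For the first part, I would check two algebraic properties directly from the integral definition~\eqref{SMeq:moment_operator_def}. Hermiticity follows from the invariance of the Haar measure under inversion on the compact group $\mathcal{X}$: the adjoint $(U^{\otimes t}\otimes U^{*\otimes t})^\dag$ equals $(U^{-1})^{\otimes t}\otimes (U^{-1})^{*\otimes t}$, and the change of variable $U\mapsto U^{-1}$ leaves $\mu_{\mathcal{X}}$ invariant. Idempotency follows from left-invariance: $M_{t,\mu_{\mathcal{X}}}^2$ is a double integral over $(U,V)$ of $(UV)^{\otimes t}\otimes (UV)^{*\otimes t}$, and setting $W=UV$ reduces it to $M_{t,\mu_{\mathcal{X}}}$. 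Together these show $M_{t,\mu_{\mathcal{X}}}$ is an orthogonal projection, so it suffices to characterize the subspace of fixed vectors.

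For the second part I would exploit the standard entangled-state identity $(A\otimes B)\ket{\eta}=(AB^T\otimes I)\ket{\eta}$, which follows by direct computation in the basis $\{\ket{j}\}$. Applying this with $A=U^{\otimes t}K$ and $B=U^{*\otimes t}$, and using $(U^{*\otimes t})^T=U^{\dag\otimes t}$, yields
\begin{align}
\bigl(U^{\otimes t}\otimes U^{*\otimes t}\bigr)(K\otimes I)\ket{\eta}
=\bigl(U^{\otimes t}KU^{\dag\otimes t}\otimes I\bigr)\ket{\eta}
\end{align}
for every $K\in\mathcal{L}(\mathcal{H}^{\otimes t})$. Integrating over $U\in\mathcal{X}$ gives $M_{t,\mu_{\mathcal{X}}}E(K)=E(\mathcal{T}(K))$, where $\mathcal{T}(K):=\int_{\mathcal{X}}U^{\otimes t}KU^{\dag\otimes t}d\mu_{\mathcal{X}}(U)$ is the twirling channel.

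It is now a standard consequence of the bi-invariance of the Haar measure that $\mathcal{T}$ is the orthogonal projection (in the Hilbert--Schmidt inner product) onto $\mathrm{Comm}(\Omega_t(\mathcal{X}))$: the image of $\mathcal{T}$ lies in the commutant because $V^{\otimes t}\mathcal{T}(K)V^{\dag\otimes t}=\mathcal{T}(K)$ for all $V\in\mathcal{X}$, and $\mathcal{T}$ acts as the identity on the commutant. Since $E$ is a linear bijection from $\mathcal{L}(\mathcal{H}^{\otimes t})$ to $\mathcal{H}^{\otimes 2t}$, every $\ket{\Psi}\in\mathcal{H}^{\otimes 2t}$ has a unique preimage $K$, and $M_{t,\mu_{\mathcal{X}}}\ket{\Psi}=\ket{\Psi}$ is equivalent to $\mathcal{T}(K)=K$, which in turn is equivalent to $K\in\mathrm{Comm}(\Omega_t(\mathcal{X}))$. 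Combined with the projection property from step one, this identifies $M_{t,\mu_{\mathcal{X}}}$ as the orthogonal projection onto $E(\mathrm{Comm}(\Omega_t(\mathcal{X})))$.

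The main obstacle I anticipate is purely bookkeeping: keeping the transpose/conjugate bookkeeping straight in the identity $(A\otimes B)\ket{\eta}=(AB^T\otimes I)\ket{\eta}$ and making sure the factor $U^{*\otimes t}$ on the right copy correctly converts to $U^{\dag\otimes t}$ after transposition so that the twirling channel emerges cleanly. Once this is nailed down, the rest of the argument is standard Schur-type averaging together with the bijectivity of $E$.
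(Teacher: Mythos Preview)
Your proposal is correct and follows essentially the same approach as the paper: both first establish that $M_{t,\mu_{\mathcal{X}}}$ is a Hermitian idempotent via Haar invariance (inversion-invariance for Hermiticity, left-invariance for idempotency), then use the maximally-entangled-state identity to convert the action of $U^{\otimes t}\otimes U^{*\otimes t}$ on $E(K)$ into conjugation $U^{\otimes t}KU^{\dag\otimes t}$, and finally identify the fixed-point set with $E(\mathrm{Comm}(\Omega_t(\mathcal{X})))$ using bijectivity of $E$ and left-invariance of the Haar measure. Your phrasing in terms of the twirling channel $\mathcal{T}$ being the projection onto the commutant is just a repackaging of the same two-direction verification the paper carries out explicitly.
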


\begin{proof}
    First, we show that $M_{t, \mu_\mathcal{X}}$ is a projection. 
    By Corollary~8.31 of Ref.~\cite{knapp2002lie}, the compactness of $\mathcal{X}$ implies that $\mathcal{X}$ is unimodular, and thus $\mu_\mathcal{X}$ is also a right-invariant Haar measure.
    Then, we have 
    \begin{align}
        M_{t, \mu_{\mathcal{X}}} 
        =\int_{U\in\mathcal{X}} U^{\otimes t}\otimes U^{*\otimes t} d\mu_{\mathcal{X}}(U) 
        =\int_{U\in\mathcal{X}} (U^\dag)^{\otimes t}\otimes (U^\dag)^{*\otimes t} d\mu_{\mathcal{X}}(U) 
        =M_{t, \mu_{\mathcal{X}}}^\dag. 
    \end{align}
    Since $\mu_{\mathcal{X}}$ is left-invariant, we have 
    \begin{align}
        M_{t, \mu_{\mathcal{X}}}^2 
        =&\left(\int_{V\in\mathcal{X}} V^{\otimes t}\otimes V^{*\otimes t} d\mu_{\mathcal{X}}(V)\right)
        \left(\int_{U\in\mathcal{X}} U^{\otimes t}\otimes U^{*\otimes t} d\mu_{\mathcal{X}}(U)\right) \nonumber\\
        =&\int_{V\in\mathcal{X}} \int_{U\in\mathcal{X}} (VU)^{\otimes t}\otimes (VU)^{*\otimes t} d\mu_{\mathcal{X}}(U) d\mu_{\mathcal{X}}(U) \nonumber\\
        =&\int_{V\in\mathcal{X}} \int_{U\in\mathcal{X}} U^{\otimes t}\otimes U^{*\otimes t} d\mu_{\mathcal{X}}(U) d\mu_{\mathcal{X}}(V) \nonumber\\
        =&\int_{U\in\mathcal{X}} U^{\otimes t}\otimes U^{*\otimes t} d\mu_{\mathcal{X}}(U) \nonumber\\
        =&M_{t, \mu_{\mathcal{X}}}. 
    \end{align}
    These two relations imply that $M_{t, \mu_\mathcal{X}}$ is a projection.

    Next, we show that the projection space is $E(\mathrm{Comm}(\Omega_t(\mathcal{X})))$. 
    We note that 
    \begin{align}
        (U^{\otimes t}\otimes U^{*\otimes t})(L\otimes I)\ket{\eta} 
        =(U^{\otimes t}\otimes U^{*\otimes t})(L\otimes I)(U^{\otimes t}\otimes U^{*\otimes t})^\dag\ket{\eta} 
        =(U^{\otimes t}L U^{\dag\otimes t}\otimes I)\ket{\eta}. 
    \end{align}
    By taking the Haar integral for $U\in\mathcal{X}$, we get 
    \begin{align}
        M_{t, \mu_\mathcal{X}}(L\otimes I)\ket{\eta} 
        =\left[\left(\int_{U\in\mathcal{X}} U^{\otimes t}LU^{\dag\otimes t}d\mu_\mathcal{X}(U)\right)\otimes I\right]\ket{\eta}. \label{SMeq:SMlem:moment_op_projection1}
    \end{align}
    For the proof of $\{\ket{\Psi}\in\mathcal{H}^{\otimes 2t}\ |\ M_{t, \mu_\mathcal{X}}\ket{\Psi}=\ket{\Psi}\}\supset E(\mathrm{Comm}(\Omega_t(\mathcal{X})))$, we take arbitrary $\ket{\Psi}\in E(\mathrm{Comm}(\Omega_t(\mathcal{X})))$. 
    Then, $\ket{\Psi}$ can be written as $\ket{\Psi}=(L\otimes I)\ket{\eta}$ with some $L\in\mathrm{Comm}(\Omega_t(\mathcal{X}))$. 
    Thus, by using Eq.~\eqref{SMeq:SMlem:moment_op_projection1}, we get 
    \begin{align}
        M_{t, \mu_\mathcal{X}}\ket{\Psi} 
        =M_{t, \mu_\mathcal{X}}(L\otimes I)\ket{\eta} 
        =\left[\left(\int_{U\in\mathcal{X}} U^{\otimes t}LU^{\dag\otimes t}d\mu_\mathcal{X}(U)\right)\otimes I\right]\ket{\eta} 
        =(L\otimes I)\ket{\eta} 
        =\ket{\Psi}. 
    \end{align}
    For the proof of $\{\ket{\Psi}\in\mathcal{H}^{\otimes 2t}\ |\ M_{t, \mu_\mathcal{X}}\ket{\Psi}=\ket{\Psi}\}\subset E(\mathrm{Comm}(\Omega_t(\mathcal{X})))$, we take arbitrary $\ket{\Psi}\in\mathcal{H}^{\otimes 2t}$ satisfying $M_{t, \mu_\mathcal{X}}\ket{\Psi}=\ket{\Psi}$. 
    We take $L\in\mathcal{L}(\mathcal{H}^{\otimes t})$ such that $\ket{\Psi}=(L\otimes I)\ket{\eta}$. 
    Then, we have 
    \begin{align}
        M_{t, \mu_\mathcal{X}}(L\otimes I)\ket{\eta} 
        =(L\otimes I)\ket{\eta}. \label{SMeq:SMlem:moment_op_projection2}
    \end{align}
    By Eqs.~\eqref{SMeq:SMlem:moment_op_projection1} and \eqref{SMeq:SMlem:moment_op_projection2}, we get 
    \begin{align}
        \left[\left(\int_{U\in\mathcal{X}} U^{\otimes t}LU^{\dag\otimes t}d\mu_\mathcal{X}(U)\right)\otimes I\right]\ket{\eta}
        =(L\otimes I)\ket{\eta}, 
    \end{align}
    which implies that 
    \begin{align}
        \int_{U\in\mathcal{X}} U^{\otimes t}LU^{\dag\otimes t}d\mu_\mathcal{X}(U)=L. 
    \end{align}
    Then, by the left invariance of $\mu_{\mathcal{X}}$, we have 
    \begin{align}
        V^{\otimes t}LV^{\dag\otimes t} 
        =\int_{U\in\mathcal{X}} (VU)^{\otimes t}L(VU)^{\dag\otimes t}d\mu_\mathcal{X}(U) 
        =\int_{U\in\mathcal{X}} U^{\otimes t}LU^{\dag\otimes t}d\mu_\mathcal{X}(U) 
        =L\ \forall V\in\mathcal{X}, 
    \end{align}
    which means that $L\in\mathrm{Comm}(\Omega_t(\mathcal{X}))$. 
    Thus, we have proven that $\ket{\Psi}\in E(\mathrm{Comm}(\Omega_t(\mathcal{X})))$. 
\end{proof}

For the second step of the proof of Lemma~\ref{SMlem:trivial_to_comm_eq}, we prepare the properties of $\mathscr{S}_t$ defined by Eq.~\eqref{SMeq:symmetrizer_def}.

\begin{lemma} \label{SMlem:perm_symm_proj}
    Let $t\in\mathbb{N}$, $\mathscr{S}_t$ be defined by Eq.~\eqref{SMeq:symmetrizer_def}, $\sigma\in\mathfrak{S}_t$, $V_\sigma$ be defined by Eq.~\eqref{SMeq:inter_system_permutation}, $\Xi$ be a finite set, $L_1, L_2, ..., L_t\in\mathcal{L}(\mathcal{H})$, and $O_\xi\in\mathcal{L}(\mathcal{H})$ for all $\xi\in\Xi$. 
    Then, 
    \begin{align}
        &\mathscr{S}_t(L_1\otimes L_2\otimes\cdots\otimes L_t) 
        =\mathscr{S}_t(L_{\sigma(1)}\otimes L_{\sigma(2)}\otimes\cdots\otimes L_{\sigma(t)}), \label{SMeq:SMlem:perm_symm_proj1}\\
        &\left(\sum_{\xi\in\Xi} O_\xi \right)^{\otimes t}
        =\sum_{\bm{z}\in\mathcal{Z}_t} \frac{t!}{\prod_{\xi\in\Xi} z_\xi!}\mathscr{S}_t\left(\bigotimes_{\xi\in\Xi} O_\xi^{\otimes z_\xi}\right), \label{SMeq:SMlem:perm_symm_proj2}
    \end{align}
    where $\mathcal{Z}_t$ is defined by Eq.~\eqref{SMeq:Z_t_def}, i.e.,
    \begin{align}
        \mathcal{Z}_t:=\left\{\bm{z}'\in(\mathbb{Z}_{\geq 0})^\Xi\ \middle|\ \sum_{(\lambda, \alpha)\in\Xi} z'_{\lambda, \alpha}=t\right\}. 
    \end{align}
\end{lemma}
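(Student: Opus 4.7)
The proof will follow directly from the definition of $\mathscr{S}_t$ together with an elementary reindexing of the sum over $\mathfrak{S}_t$. I will prove the two statements in order, since (1) feeds into (2).

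For Eq.~\eqref{SMeq:SMlem:perm_symm_proj1}, I will first verify that the action of $V_\tau$ by conjugation on a pure tensor satisfies $V_\tau(L_1\otimes\cdots\otimes L_t)V_\tau^\dag = L_{\tau^{-1}(1)}\otimes\cdots\otimes L_{\tau^{-1}(t)}$, which follows from the definition~\eqref{SMeq:inter_system_permutation} by computing the action on basis vectors. Plugging this into the definition~\eqref{SMeq:symmetrizer_def} gives
\begin{align}
\mathscr{S}_t(L_{\sigma(1)}\otimes\cdots\otimes L_{\sigma(t)})
=\frac{1}{t!}\sum_{\tau\in\mathfrak{S}_t} L_{\sigma(\tau^{-1}(1))}\otimes\cdots\otimes L_{\sigma(\tau^{-1}(t))}.
\end{align}
Substituting $\pi:=\tau\sigma^{-1}$ so that $\sigma\circ\tau^{-1}=\pi^{-1}$, and noting that $\pi\mapsto\pi\sigma$ is a bijection on $\mathfrak{S}_t$, the sum is re-expressed as $\frac{1}{t!}\sum_{\pi\in\mathfrak{S}_t} L_{\pi^{-1}(1)}\otimes\cdots\otimes L_{\pi^{-1}(t)}=\mathscr{S}_t(L_1\otimes\cdots\otimes L_t)$, which is exactly Eq.~\eqref{SMeq:SMlem:perm_symm_proj1}.

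For Eq.~\eqref{SMeq:SMlem:perm_symm_proj2}, I will start from the naive multinomial expansion
\begin{align}
\left(\sum_{\xi\in\Xi} O_\xi\right)^{\otimes t}
=\sum_{(\xi_1,\ldots,\xi_t)\in\Xi^t} O_{\xi_1}\otimes\cdots\otimes O_{\xi_t}.
\end{align}
The key observation is that the left-hand side, being a tensor power of a single operator, is invariant under $V_\tau\cdot V_\tau^\dag$ for every $\tau\in\mathfrak{S}_t$, hence equal to its own symmetrization under $\mathscr{S}_t$. Applying $\mathscr{S}_t$ termwise to the right-hand side and grouping tuples $(\xi_1,\ldots,\xi_t)$ by their composition $\bm{z}=(z_\xi)_{\xi\in\Xi}\in\mathcal{Z}_t$ with $z_\xi:=\#\{s\mid\xi_s=\xi\}$, the already-established Eq.~\eqref{SMeq:SMlem:perm_symm_proj1} shows that every tuple of composition $\bm{z}$ symmetrizes to the common value $\mathscr{S}_t(\bigotimes_{\xi\in\Xi}O_\xi^{\otimes z_\xi})$. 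Since the number of such tuples is the multinomial coefficient $t!/\prod_{\xi\in\Xi}z_\xi!$, the identity~\eqref{SMeq:SMlem:perm_symm_proj2} follows.

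There is no real obstacle here; the entire content is a careful relabeling of a sum over $\mathfrak{S}_t$. The only place where I expect to be careful is the convention of $V_\tau$ acting with $\sigma^{-1}$ in Eq.~\eqref{SMeq:inter_system_permutation}, which determines whether the conjugation rearranges the tensor factors by $\tau$ or by $\tau^{-1}$; getting this sign right is what makes the substitution $\pi=\tau\sigma^{-1}$ produce an exact match rather than a shifted version of the sum.
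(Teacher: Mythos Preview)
Your proposal is correct and follows essentially the same approach as the paper: for Eq.~\eqref{SMeq:SMlem:perm_symm_proj1} both arguments reduce to a reindexing of the sum over $\mathfrak{S}_t$ via the conjugation rule for $V_\tau$, and for Eq.~\eqref{SMeq:SMlem:perm_symm_proj2} both use that a tensor power is fixed by $\mathscr{S}_t$, expand, and group tuples by their composition $\bm{z}$, invoking Eq.~\eqref{SMeq:SMlem:perm_symm_proj1} to identify all terms of the same composition and counting them by the multinomial coefficient. The only cosmetic difference is that the paper keeps the $V_\sigma$'s unexpanded and reindexes via $\sigma'\mapsto\sigma'\sigma$, whereas you first unpack the conjugation and then substitute $\pi=\tau\sigma^{-1}$.
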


\begin{proof}
    By the definition of $\mathscr{S}_t$, we directly get Eq.~\eqref{SMeq:SMlem:perm_symm_proj1} as follows: 
    \begin{align}
        \mathscr{S}_t(L_1\otimes L_2\otimes\cdots\otimes L_t) 
        =&\frac{1}{t!}\sum_{\sigma'\in\mathfrak{S}_t} V_{\sigma'}V_\sigma (L_{\sigma(1)}\otimes L_{\sigma(2)}\otimes\cdots\otimes L_{\sigma(t)})V_\sigma^\dag V_{\sigma'}^\dag \nonumber\\
        =&\frac{1}{t!}\sum_{\sigma'\in\mathfrak{S}_t} V_{\sigma'\sigma} (L_{\sigma(1)}\otimes L_{\sigma(2)}\otimes\cdots\otimes L_{\sigma(t)}) V_{\sigma'\sigma}^\dag \nonumber\\
        =&\frac{1}{t!}\sum_{\sigma'\in\mathfrak{S}_t} V_{\sigma'} (L_{\sigma(1)}\otimes L_{\sigma(2)}\otimes\cdots\otimes L_{\sigma(t)}) V_{\sigma'}^\dag \nonumber\\
        =&\mathscr{S}_t(L_{\sigma(1)}\otimes L_{\sigma(2)}\otimes\cdots\otimes L_{\sigma(t)}). 
    \end{align}
    In the following, we show Eq.~\eqref{SMeq:SMlem:perm_symm_proj2}. 
    By the definition of $\mathscr{S}_t$, we have  
    \begin{align}
        \left(\sum_{\xi\in\Xi} O_\xi\right)^{\otimes t}
        =\mathscr{S}_t\left(\left(\sum_{\xi\in\Xi} O_\xi\right)^{\otimes t}\right) 
        =\sum_{\bm{\xi}\in\Xi^t} \mathscr{S}_t\left(\bigotimes_{\zeta\in\Xi} O_{\zeta}^{c_\zeta(\bm{\xi})}\right), \label{SMeq:SMlem:perm_symm_proj3}
    \end{align}
    where $\bm{c}$ is a map from $\Xi^t$ to $\mathcal{Z}_t$, and the $\zeta$ component $c_\zeta$ of $\bm{c}$ is defined by 
    \begin{align}
        c_\zeta(\bm{\xi}):=\#\{u\ |\ \xi_u=\zeta\}. 
    \end{align}
    By noting that the inside of the summation of the r.h.s. of Eq.~\eqref{SMeq:SMlem:perm_symm_proj3} is given by $\bm{c}(\bm{\xi})$, we can change the summation index as follows:  
    \begin{align}
        \sum_{\bm{\xi}\in\Xi^t} \mathscr{S}_t\left(\bigotimes_{\zeta\in\Xi} O_{\zeta}^{c_\zeta(\bm{\xi})}\right)
        =\sum_{\bm{z}\in\mathcal{Z}_t} \#c^{-1}(\bm{z}) \mathscr{S}_t\left(\bigotimes_{\zeta\in\Xi} O_{\zeta}^{z_\zeta}\right). \label{SMeq:SMlem:perm_symm_proj4}
    \end{align}
    By considering the combinatorial interpretation of the multinomial coefficients, we have 
    \begin{align}
        \#c^{-1}(\bm{z})=\frac{t!}{\prod_{\zeta\in\Xi} z_\zeta}. \label{SMeq:SMlem:perm_symm_proj5}
    \end{align}
    By plugging Eq.~\eqref{SMeq:SMlem:perm_symm_proj5} into Eq.~\eqref{SMeq:SMlem:perm_symm_proj4}, we get Eq.~\eqref{SMeq:SMlem:perm_symm_proj2}. 
\end{proof}

We prepare a lemma about the condition for the existence of solutions of Eqs.~\eqref{SMeq:SMthm:general_cond1}, \eqref{SMeq:SMthm:general_cond2}, and \eqref{SMeq:SMthm:general_cond3}. 
We use $(iii)\implies (i)$ in the second step of the proof of Lemma~\ref{SMlem:trivial_to_comm_eq}, and $(i)\implies (ii)$ in the second step of the proof of the Lemma~\ref{SMlem:nontrivial_to_comm_neq}.

\begin{lemma} \label{SMlem:trivial_solution}
    Let $t\in\mathbb{N}$, $\Lambda$ be a finite set, $m_\lambda\in\mathbb{N}$ for all $\lambda\in\Lambda$, $\Xi:=\{(\lambda, \alpha)\ |\ \lambda\in\Lambda, \alpha\in\{1, 2, ..., m_\lambda\}\}$, $\mathcal{V}$ be a linear subspace of $\mathbb{R}^\Lambda$, and $\mathcal{W}:=\bm{\Delta}^{-1}(\mathcal{V})$, where $\bm{\Delta}: \mathbb{R}^\Xi\to\mathbb{R}^\Lambda$ is defined by Eq.~\eqref{SMeq:Delta_vector_def}, i.e., 
    \begin{align}
        \bm{\Delta}(\bm{w})=(\Delta_\lambda(\bm{w}))_{\lambda\in\Lambda}\ \forall\bm{w}\in\mathbb{R}^\Xi 
    \end{align}
    with $\Delta_\lambda: \mathbb{R}^\Xi\to\mathbb{R}$ defined by 
    \begin{align}
        \Delta_\lambda(\bm{w})=\sum_{\alpha=1}^{m_\lambda} w_{\lambda, \alpha}\ \forall\bm{w}\in\mathbb{R}^\Xi. 
    \end{align}
    Then, the following three statements are equivalent: \\
    (i) There exists $\bm{x}=(x_\lambda)_{\lambda\in\Lambda}\in\mathbb{Z}^\Lambda$ such that 
    \begin{align}
        &\bm{x}\neq \bm{0}, \label{SMeq:SMlem:trivial_solution1_1}\\
        &\sum_{\lambda\in\Lambda} m_\lambda |x_\lambda|\leq 2t, \label{SMeq:SMlem:trivial_solution1_2}\\
        &\sum_{\lambda\in\Lambda} m_\lambda x_\lambda=0, \label{SMeq:SMlem:trivial_solution1_3}\\
        &\sum_{\lambda\in\Lambda} v_\lambda x_\lambda=0\ \forall \bm{v}=(v_\lambda)_{\lambda\in\Lambda}\in\mathcal{V}. \label{SMeq:SMlem:trivial_solution1_4}
    \end{align}
    (ii) There exist $\bm{y}=(y_\lambda)_{\lambda\in\Lambda}, \bm{y}'=(y'_\lambda)_{\lambda\in\Lambda}\in(\mathbb{Z}_{\geq 0})^\Lambda$ such that 
    \begin{align}
        &\bm{y}\neq\bm{y}', \label{SMeq:SMlem:trivial_solution2_1}\\
        &\sum_{\lambda\in\Lambda} m_\lambda y_\lambda
        =\sum_{\lambda\in\Lambda} m_\lambda y'_\lambda\leq t, \label{SMeq:SMlem:trivial_solution2_2}\\
        &\sum_{\lambda\in\Lambda} v_\lambda y_\lambda
        =\sum_{\lambda\in\Lambda} v_\lambda y'_\lambda\ \forall \bm{v}=(v_\lambda)_{\lambda\in\Lambda}\in\mathcal{V}. \label{SMeq:SMlem:trivial_solution2_3}
    \end{align}
    (iii) There exist $\bm{z}=(z_{\lambda, \alpha})_{(\lambda, \alpha)\in\Xi}, \bm{z}'=(z'_{\lambda, \alpha})_{(\lambda, \alpha)\in\Xi}\in(\mathbb{Z}_{\geq 0})^\Xi$ such that 
    \begin{align}
        &\bm{z}\neq\bm{z}', \label{SMeq:SMlem:trivial_solution3_1}\\
        &\sum_{(\lambda, \alpha)\in\Xi} z_{\lambda, \alpha}
        =\sum_{(\lambda, \alpha)\in\Xi} z'_{\lambda, \alpha}\leq t, \label{SMeq:SMlem:trivial_solution3_2}\\
        &\sum_{(\lambda, \alpha)\in\Xi} w_{\lambda, \alpha} z_{\lambda, \alpha}
        =\sum_{(\lambda, \alpha)\in\Xi} w_{\lambda, \alpha} z'_{\lambda, \alpha}\ \forall \bm{w}=(w_{\lambda, \alpha})_{(\lambda, \alpha)\in\Xi}\in\mathcal{W}. \label{SMeq:SMlem:trivial_solution3_3}
    \end{align}
\end{lemma}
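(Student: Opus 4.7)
The plan is to prove the equivalence via the cycle $(i)\implies(ii)\implies(iii)\implies(i)$, with the key linear-algebraic input being an explicit characterization of $\mathcal{W}^\perp$ that exploits the specific form $\mathcal{W}=\bm{\Delta}^{-1}(\mathcal{V})$.

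For $(i)\implies(ii)$, I would take $\bm{x}$ satisfying Eqs.~\eqref{SMeq:SMlem:trivial_solution1_1}--\eqref{SMeq:SMlem:trivial_solution1_4} and split it into positive and negative parts: set $y_\lambda:=\max(x_\lambda,0)$ and $y'_\lambda:=\max(-x_\lambda,0)$, so that $\bm{x}=\bm{y}-\bm{y}'$ and $|x_\lambda|=y_\lambda+y'_\lambda$. Then $\bm{y}\neq\bm{y}'$ follows from $\bm{x}\neq\bm{0}$, while Eqs.~\eqref{SMeq:SMlem:trivial_solution1_3} and \eqref{SMeq:SMlem:trivial_solution1_2} combine to yield $\sum_\lambda m_\lambda y_\lambda=\sum_\lambda m_\lambda y'_\lambda=\tfrac{1}{2}\sum_\lambda m_\lambda|x_\lambda|\leq t$, and Eq.~\eqref{SMeq:SMlem:trivial_solution1_4} gives $\sum_\lambda v_\lambda y_\lambda=\sum_\lambda v_\lambda y'_\lambda$ for all $\bm{v}\in\mathcal{V}$. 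For $(ii)\implies(iii)$, I would define $z_{\lambda,\alpha}:=y_\lambda$ and $z'_{\lambda,\alpha}:=y'_\lambda$ for every $\alpha$; the sums over $\Xi$ reduce to $m_\lambda$-weighted sums over $\Lambda$, and for any $\bm{w}\in\mathcal{W}$ the inner product collapses via $\sum_{(\lambda,\alpha)} w_{\lambda,\alpha}y_\lambda=\sum_\lambda y_\lambda\Delta_\lambda(\bm{w})$ with $\bm{\Delta}(\bm{w})\in\mathcal{V}$, so Eq.~\eqref{SMeq:SMlem:trivial_solution2_3} delivers Eq.~\eqref{SMeq:SMlem:trivial_solution3_3}.

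The main step, and the one requiring a structural observation rather than a substitution, is $(iii)\implies(i)$. Here I would first establish the identity
\begin{align}
\mathcal{W}^\perp=\bigl\{(c_\lambda)_{(\lambda,\alpha)\in\Xi}\ \bigm|\ \bm{c}\in\mathcal{V}^\perp\bigr\}. \nonumber
\end{align}
Since $\ker\bm{\Delta}\subset\mathcal{W}$, any $\bm{u}\in\mathcal{W}^\perp$ lies in $(\ker\bm{\Delta})^\perp=\mathrm{Im}(\bm{\Delta}^{*})$, which forces $u_{\lambda,\alpha}$ to depend only on $\lambda$; writing $u_{\lambda,\alpha}=c_\lambda$, orthogonality to all of $\mathcal{W}$ then reads $\langle\bm{c},\bm{\Delta}(\bm{w})\rangle=0$ for all $\bm{w}\in\mathcal{W}$, equivalent to $\bm{c}\in\mathcal{V}^\perp$ because $\bm{\Delta}(\mathcal{W})=\mathcal{V}$. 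The converse inclusion is immediate. Applying this to $\bm{z}-\bm{z}'\in\mathcal{W}^\perp\cap\mathbb{Z}^\Xi$ yields an integer vector $\bm{c}\in\mathcal{V}^\perp\setminus\{\bm{0}\}$ with $z_{\lambda,\alpha}-z'_{\lambda,\alpha}=c_\lambda$ for every $\alpha$.

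I would then set $\bm{x}:=\bm{c}$ and verify Eqs.~\eqref{SMeq:SMlem:trivial_solution1_1}--\eqref{SMeq:SMlem:trivial_solution1_4}. Nontriviality and Eq.~\eqref{SMeq:SMlem:trivial_solution1_4} are immediate; Eq.~\eqref{SMeq:SMlem:trivial_solution1_3} follows by summing $z_{\lambda,\alpha}-z'_{\lambda,\alpha}=c_\lambda$ over $\Xi$ and using Eq.~\eqref{SMeq:SMlem:trivial_solution3_2}. The bound Eq.~\eqref{SMeq:SMlem:trivial_solution1_2} is the one that requires positivity: from $z_{\lambda,\alpha},z'_{\lambda,\alpha}\geq 0$ one gets $z_{\lambda,\alpha}\geq c_\lambda^+$ and $z'_{\lambda,\alpha}\geq c_\lambda^-$ pointwise, hence $\sum_\lambda m_\lambda c_\lambda^+\leq\sum_{(\lambda,\alpha)\in\Xi}z_{\lambda,\alpha}\leq t$ and likewise for $c_\lambda^-$, so $\sum_\lambda m_\lambda|c_\lambda|=\sum_\lambda m_\lambda(c_\lambda^++c_\lambda^-)\leq 2t$. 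The hard part is recognizing the characterization of $\mathcal{W}^\perp$; everything else is bookkeeping with positive and negative parts.
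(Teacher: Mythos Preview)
Your proof is correct and follows essentially the same cycle $(i)\Rightarrow(ii)\Rightarrow(iii)\Rightarrow(i)$ with the same constructions as the paper: positive/negative parts for $(i)\Rightarrow(ii)$, constant-in-$\alpha$ lift for $(ii)\Rightarrow(iii)$, and constancy of $z_{\lambda,\alpha}-z'_{\lambda,\alpha}$ in $\alpha$ for $(iii)\Rightarrow(i)$. The only cosmetic difference is that for $(iii)\Rightarrow(i)$ the paper extracts this constancy by plugging the explicit test vectors $w_{\lambda,\alpha}=\delta_{\lambda,\mu}(\delta_{\alpha,\beta}-\delta_{\alpha,\beta'})\in\ker\bm{\Delta}\subset\mathcal{W}$ and then $w_{\lambda,\alpha}=v_\lambda\delta_{\alpha,\beta_\lambda}$, whereas you package the same two observations into the identity $\mathcal{W}^\perp=\bm{\Delta}^*(\mathcal{V}^\perp)$ (valid because $\bm{\Delta}$ is surjective); likewise the paper bounds $\sum_\lambda m_\lambda|x_\lambda|$ via $|z-z'|\leq z+z'$ while you use $z\geq c^+$, $z'\geq c^-$, which are equivalent.
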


\begin{proof}
    First, we show $(i)\implies (ii)$. 
    We suppose that we can take $\bm{x}\in\mathbb{Z}^\Lambda$ satisfying Eqs.~\eqref{SMeq:SMlem:trivial_solution1_1}, \eqref{SMeq:SMlem:trivial_solution1_2}, \eqref{SMeq:SMlem:trivial_solution1_3}, and \eqref{SMeq:SMlem:trivial_solution1_4}. 
    We define $\bm{y}, \bm{y}'\in (\mathbb{Z}_{\geq 0})^\Lambda$ by 
    \begin{align}
        &y_\lambda:=\frac{|x_\lambda|+x_\lambda}{2}, \\
        &y'_\lambda:=\frac{|x_\lambda|-x_\lambda}{2}, 
    \end{align}
    which implies that $y_\lambda-y'_\lambda=x_\lambda$. 
    Thus Eq.~\eqref{SMeq:SMlem:trivial_solution2_1}, the equality in Eq.~\eqref{SMeq:SMlem:trivial_solution2_2}, and Eq.~\eqref{SMeq:SMlem:trivial_solution2_3} directly follow from Eqs.~\eqref{SMeq:SMlem:trivial_solution1_1}, \eqref{SMeq:SMlem:trivial_solution1_3}, and \eqref{SMeq:SMlem:trivial_solution1_4}, respectively. 
    The inequality in Eq.~\eqref{SMeq:SMlem:trivial_solution2_2} can be shown as follows: 
    \begin{align}
        \sum_{\lambda\in\Lambda} m_\lambda y_\lambda
        =\frac{1}{2}\left(\sum_{\lambda\in\Lambda} m_\lambda y_\lambda+\sum_{\lambda\in\Lambda} m_\lambda y'_\lambda\right)
        =\frac{1}{2}\sum_{\lambda\in\Lambda} m_\lambda (y_\lambda+y'_\lambda)
        =\frac{1}{2}\sum_{\lambda\in\Lambda} m_\lambda |x_\lambda|
        \leq t.
    \end{align}

    Next, we show $(ii)\implies (iii)$. 
    We suppose that we can take $\bm{y}, \bm{y}'\in (\mathbb{Z}_{\geq 0})^\Lambda$ satisfying Eqs.~\eqref{SMeq:SMlem:trivial_solution2_1}, \eqref{SMeq:SMlem:trivial_solution2_2} and \eqref{SMeq:SMlem:trivial_solution2_3}. 
    We define $\bm{z}, \bm{z}'\in(\mathbb{Z}_{\geq 0})^\Xi$ by $z_{\lambda, \alpha}:=y_\lambda$ and $z'_{\lambda, \alpha}:=y'_\lambda$ for all $(\lambda, \alpha)\in\Xi$. 
    Then, Eq.~\eqref{SMeq:SMlem:trivial_solution3_1} directly follows from Eq.~\eqref{SMeq:SMlem:trivial_solution2_1}. 
    We note that 
    \begin{align}
        &\sum_{(\lambda, \alpha)\in\Xi} z_{\lambda, \alpha}
        =\sum_{\lambda\in\Lambda}\sum_{\alpha=1}^{m_\lambda} y_\lambda
        =\sum_{\lambda\in\Lambda} m_\lambda y_\lambda, \\
        &\sum_{(\lambda, \alpha)\in\Xi} w_{\lambda, \alpha} z_{\lambda, \alpha}
        =\sum_{\lambda\in\Lambda} \left(\sum_{\alpha=1}^{m_\lambda} w_{\lambda, \alpha}\right) y_\lambda 
        =\sum_{\lambda\in\Lambda} \Delta_\lambda(\bm{w}) y_\lambda, 
    \end{align}
    and in the same way, we can show that $\sum_{(\lambda, \alpha)\in\Xi} z'_{\lambda, \alpha}=\sum_{\lambda\in\Lambda} m_\lambda z'_\lambda$ and $\sum_{(\lambda, \alpha)\in\Xi} w_{\lambda, \alpha} z'_{\lambda, \alpha}=\sum_{\lambda\in\Lambda} \Delta_\lambda(\bm{w}) y'_\lambda$. 
    Since $\bm{\Delta}(\bm{w})\in\mathcal{V}$, Eqs.~\eqref{SMeq:SMlem:trivial_solution3_2} and \eqref{SMeq:SMlem:trivial_solution3_3} directly follow from Eqs.~\eqref{SMeq:SMlem:trivial_solution2_2} and \eqref{SMeq:SMlem:trivial_solution2_3}.

    Finally, we show $(iii)\implies (i)$. 
    We suppose that we can take $\bm{z}, \bm{z}'\in (\mathbb{Z}_{\geq 0})^\Xi$ satisfying Eq.~\eqref{SMeq:SMlem:trivial_solution3_1}, \eqref{SMeq:SMlem:trivial_solution3_2}, and \eqref{SMeq:SMlem:trivial_solution3_3}. 
    We take arbitrary $\mu\in\Lambda$ and $\beta, \beta'\in\{1, 2, ..., m_\lambda\}$. 
    By noting that $\bm{w}\in\mathcal{W}$ when $w_{\lambda, \alpha}:=\delta_{\lambda, \mu}(\delta_{\alpha, \beta}-\delta_{\alpha, \beta'})$, Eq.~\eqref{SMeq:SMlem:trivial_solution3_3} implies that $z_{\mu, \beta}-z_{\mu, \beta'}=z'_{\mu, \beta}-z'_{\mu, \beta'}$, which yields $z_{\mu, \beta}-z'_{\mu, \beta}=z_{\mu, \beta'}-z'_{\mu, \beta'}$. 
    Since this holds for all $\beta, \beta'\in\{1, 2, ..., m_\mu\}$, $z_{\lambda, \alpha}-z'_{\lambda, \alpha}$ is independent of $\alpha$. 
    Thus we can define $\bm{x}\in\mathbb{Z}^\Lambda$ such that 
    \begin{align}
        x_\lambda=z_{\lambda, \alpha}-z'_{\lambda, \alpha}\ \forall \alpha\in\{1, 2, ..., m_\lambda\}, \lambda\in\Lambda. \label{SMeq:SMlem:trivial_solution4}
    \end{align}
    By this relation, Eq.~\eqref{SMeq:SMlem:trivial_solution1_1} follows from Eq.~\eqref{SMeq:SMlem:trivial_solution3_1}. 
    By using Eqs.~\eqref{SMeq:SMlem:trivial_solution4} and \eqref{SMeq:SMlem:trivial_solution3_2}, we get Eq.~\eqref{SMeq:SMlem:trivial_solution1_2} as follows: 
    \begin{align}
        \sum_{\lambda\in\Lambda} m_\lambda |x_\lambda|
        =\sum_{(\lambda, \alpha)\in\Xi} |x_\lambda|
        =\sum_{(\lambda, \alpha)\in\Xi} |z_{\lambda, \alpha}-z'_{\lambda, \alpha}| 
        \leq\sum_{(\lambda, \alpha)\in\Xi} (z_{\lambda, \alpha}+z'_{\lambda, \alpha})
        =\sum_{(\lambda, \alpha)\in\Xi} z_{\lambda, \alpha}+\sum_{(\lambda, \alpha)\in\Xi} z'_{\lambda, \alpha} 
        \leq 2t, 
    \end{align}
    where we used the triangle inequality in the first inequality. 
    Similarly, by using Eqs.~\eqref{SMeq:SMlem:trivial_solution4} and \eqref{SMeq:SMlem:trivial_solution3_2}, we get Eq.~\eqref{SMeq:SMlem:trivial_solution1_3} as follows: 
    \begin{align}
        \sum_{\lambda\in\Lambda} m_\lambda x_\lambda
        =\sum_{(\lambda, \alpha)\in\Xi} x_\lambda
        =\sum_{(\lambda, \alpha)\in\Xi} (z_{\lambda, \alpha}-z'_{\lambda, \alpha}) 
        =\sum_{(\lambda, \alpha)\in\Xi} z_{\lambda, \alpha}-\sum_{(\lambda, \alpha)\in\Xi} z'_{\lambda, \alpha}
        =0. 
    \end{align}
    For any $\lambda\in\Lambda$, we arbitrarily take $\beta_\lambda\in\{1, 2, ..., m_\lambda\}$. 
    For any $\bm{v}\in\mathcal{V}$, by noting that $\bm{w}\in\mathcal{W}$ when $w_{\lambda, \alpha}:=v_\lambda\delta_{\alpha, \beta_\lambda}$, Eq.~\eqref{SMeq:SMlem:trivial_solution3_3} implies that $\sum_{\lambda\in\Lambda} z_{\lambda, \beta_\lambda}v_\lambda=\sum_{\lambda\in\Lambda} z'_{\lambda, \beta_\lambda}v_\lambda$, which implies Eq.~\eqref{SMeq:SMlem:trivial_solution1_4}. 
\end{proof}

For the first step of the proof of Lemma~\ref{SMlem:nontrivial_to_comm_neq}, we show the property of the totally antisymmetric state.

\begin{lemma} \label{SMlem:antisymmetric_state}
	Let $m\in\mathbb{N}$, $A$ be a linear operator on $\mathbb{C}^m$ and $\ket{\chi(\mathbb{C}^m)}$ and $\omega_m$ be defined by Eqs.~\eqref{SMeq:antisymmetric_state_def} and \eqref{SMeq:omega_def}, respectively, i.e., 
    \begin{align}
        &\ket{\chi(\mathbb{C}^m)}:=\frac{1}{\sqrt{m!}}\sum_{\sigma\in\mathfrak{S}_m} \mathrm{sgn}(\sigma)\bigotimes_{\alpha=1}^m \ket{\sigma(\alpha)}, \\
		&\omega_t(A):=\sum_{s=1}^t I^{\otimes s-1}\otimes A\otimes I^{\otimes t-s}. 
	\end{align} 
	Then, 
    \begin{align}
        \omega_m(A)\ket{\chi(\mathbb{C}^m)}
        =\mathrm{tr}(A)\ket{\chi(\mathbb{C}^m)}. \label{SMeq:SMlem:antisymmetric_state1}
    \end{align}
\end{lemma}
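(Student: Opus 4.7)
The plan is to exploit the fact that $\ket{\chi(\mathbb{C}^m)}$ spans the one-dimensional totally antisymmetric subspace of $(\mathbb{C}^m)^{\otimes m}$. First I would verify that $\omega_m(A)$ commutes with every $V_\sigma$, $\sigma\in\mathfrak{S}_m$: conjugating the summand $I^{\otimes s-1}\otimes A\otimes I^{\otimes m-s}$ by $V_\sigma$ merely moves $A$ from tensor-position $s$ to position $\sigma(s)$, and the resulting sum over $s$ is invariant under this relabeling. Combined with the transformation law $V_\sigma\ket{\chi(\mathbb{C}^m)}=\mathrm{sgn}(\sigma)\ket{\chi(\mathbb{C}^m)}$, this forces $\omega_m(A)\ket{\chi(\mathbb{C}^m)}$ to transform under the sign representation of $\mathfrak{S}_m$ as well, and hence to lie in the one-dimensional totally antisymmetric subspace spanned by $\ket{\chi(\mathbb{C}^m)}$. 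Thus there is a scalar $c(A)\in\mathbb{C}$ with $\omega_m(A)\ket{\chi(\mathbb{C}^m)}=c(A)\ket{\chi(\mathbb{C}^m)}$.

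Second, to identify $c(A)$, I would compute $c(A)=\bra{\chi(\mathbb{C}^m)}\omega_m(A)\ket{\chi(\mathbb{C}^m)}$. The reduced density matrix of $\ket{\chi(\mathbb{C}^m)}$ on any single site equals $I/m$, a fact that follows directly from expanding $\ket{\chi(\mathbb{C}^m)}$ in the product basis and tracing out all but one tensor factor, where the antisymmetry forces the off-diagonal contributions to cancel. Consequently each of the $m$ summands satisfies $\bra{\chi(\mathbb{C}^m)}(I^{\otimes s-1}\otimes A\otimes I^{\otimes m-s})\ket{\chi(\mathbb{C}^m)}=\mathrm{tr}(A/m)=\mathrm{tr}(A)/m$, and summing over $s\in\{1,\dots,m\}$ yields $c(A)=\mathrm{tr}(A)$, as desired.

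A direct alternative would be to substitute the explicit sum defining $\ket{\chi(\mathbb{C}^m)}$ into $\omega_m(A)\ket{\chi(\mathbb{C}^m)}$, expand $A=\sum_{ij}A_{ij}\ket{i}\bra{j}$, and reorganize the resulting double sum over $\mathfrak{S}_m$ using a relabeling of permutations; this is routine but combinatorially heavier, so the representation-theoretic route above is cleaner. There is no substantive obstacle: the content of the lemma is the standard statement that a Slater-determinant state is an eigenvector of any symmetric one-body operator $\omega_m(A)$ with eigenvalue equal to the trace of the single-particle operator, and the only mildly subtle point is the verification that the single-site marginal is maximally mixed, which is immediate from the antisymmetry.
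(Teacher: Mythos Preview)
Your proof is correct and takes a genuinely different, more conceptual route than the paper's. Both arguments begin with the permutation invariance of $\omega_m(A)$, but the paper then proceeds by a direct expansion: it writes $\omega_m(A)\bigotimes_\alpha\ket{\alpha}$ explicitly, inserts the operator $(I+V_{\tau_{\beta,\beta'}^{-1}})/2$, and uses the sign of the transposition to show that the off-diagonal contributions $\beta\neq\beta'$ cancel inside the alternating sum, leaving only the diagonal terms summing to $\mathrm{tr}(A)$. In effect the paper carries out the ``combinatorially heavier'' alternative you describe at the end of your proposal, packaged via a transposition trick. Your argument instead exploits the one-dimensionality of the antisymmetric subspace to reduce the question to computing a single scalar, and then reads off that scalar from the single-site marginal $\rho_1=I/m$. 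Your route is shorter and highlights the representation-theoretic content (Schur's lemma on the sign representation); the paper's route is self-contained and avoids invoking the reduced density matrix, at the cost of more bookkeeping.
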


\begin{proof}
    By using $V_\sigma$ defined by Eq.~\eqref{SMeq:inter_system_permutation}, we have 
    \begin{align}
        \ket{\chi(\mathbb{C}^m)}
        =\frac{1}{\sqrt{m!}}\sum_{\sigma\in\mathfrak{S}_m} \mathrm{sgn}(\sigma)V_{\sigma^{-1}}\left(\bigotimes_{\alpha=1}^m \ket{\alpha}\right). \label{SMeq:SMlem:antisymmetric_state2}
    \end{align}
    This implies that 
    \begin{align}
        \omega_m(A)\ket{\chi(\mathbb{C}^m)}
        =\frac{1}{\sqrt{m!}}\sum_{\sigma\in\mathfrak{S}_m} \mathrm{sgn}(\sigma)\omega_m(A)V_{\sigma^{-1}}\left(\bigotimes_{\alpha=1}^m \ket{\alpha}\right) 
        =\frac{1}{\sqrt{m!}}\sum_{\sigma\in\mathfrak{S}_m} \mathrm{sgn}(\sigma)V_{\sigma^{-1}}\omega_m(A)\left(\bigotimes_{\alpha=1}^m \ket{\alpha}\right). \label{SMeq:SMlem:antisymmetric_state3}
    \end{align}
    We note that 
    \begin{align}
        \omega_m(A)\left(\bigotimes_{\alpha=1}^m \ket{\alpha}\right)
        =&\sum_{\beta=1}^m \left(I^{\otimes \beta-1}\otimes A\otimes I^{\otimes m-\beta}\right) \left[\left(\bigotimes_{\alpha=1}^{\beta-1} \ket{\alpha}\right)\otimes\ket{\beta}\otimes\left(\bigotimes_{\alpha=\beta+1}^m \ket{\alpha}\right)\right] \nonumber\\
        =&\sum_{\beta=1}^m \sum_{\beta'=1}^m \left(\bigotimes_{\alpha=1}^{\beta-1} \ket{\alpha}\right)\otimes a_{\beta', \beta}\ket{\beta'}\otimes\left(\bigotimes_{\alpha=\beta+1}^m \ket{\alpha}\right) \nonumber\\
        =&\sum_{\beta=1}^m \sum_{\beta'=1}^m \frac{I+V_{\tau_{\beta, \beta'}^{-1}}}{2}\left[\left(\bigotimes_{\alpha=1}^{\beta-1} \ket{\alpha}\right)\otimes a_{\beta', \beta}\ket{\beta'}\otimes\left(\bigotimes_{\alpha=\beta+1}^m \ket{\alpha}\right)\right], \label{SMeq:SMlem:antisymmetric_state4}
    \end{align}
    where $a_{\beta', \beta}:=\braket{\beta'|A|\beta}$, and $\tau_{\beta, \beta'}$ is the transposition between $\beta$ and $\beta'$. 
    By plugging Eq.~\eqref{SMeq:SMlem:antisymmetric_state4} into Eq.~\eqref{SMeq:SMlem:antisymmetric_state3}, we get 
    \begin{align}
        \omega_m(A)\ket{\chi(\mathbb{C}^m})=\frac{1}{\sqrt{m!}}\sum_{\beta=1}^m \sum_{\beta'=1}^m \left(\sum_{\sigma\in\mathfrak{S}_m} \mathrm{sgn}(\sigma)V_{\sigma^{-1}}\frac{I+V_{\tau_{\beta, \beta'}^{-1}}}{2}\right)
        \left[\left(\bigotimes_{\alpha=1}^{\beta-1} \ket{\alpha}\right)\otimes a_{\beta', \beta}\ket{\beta'}\otimes\left(\bigotimes_{\alpha=\beta+1}^m \ket{\alpha}\right)\right]. \label{SMeq:SMlem:antisymmetric_state5}
    \end{align}
    We note that 
    \begin{align}
        \sum_{\sigma\in\mathfrak{S}_m} \mathrm{sgn}(\sigma)V_{\sigma^{-1}}\frac{I+V_{\tau_{\beta, \beta'}^{-1}}}{2}
        =&\sum_{\sigma\in\mathfrak{S}_m} \frac{1}{2}\left(\mathrm{sgn}(\sigma)V_{\sigma^{-1}}+\mathrm{sgn}(\tau_{\beta, \beta'})\mathrm{sgn}(\tau_{\beta, \beta'}\sigma)V_{(\tau_{\beta, \beta'}\sigma)^{-1}}\right) \nonumber\\
        =&\sum_{\sigma\in\mathfrak{S}_m} \frac{1+\mathrm{sgn}(\tau_{\beta, \beta'})}{2}\mathrm{sgn}(\sigma)V_{\sigma^{-1}} \nonumber\\
        =&\delta_{\beta, \beta'}\sum_{\sigma\in\mathfrak{S}_m} \mathrm{sgn}(\sigma)V_{\sigma^{-1}}. \label{SMeq:SMlem:antisymmetric_state6}
    \end{align}
    By plugging Eq.~\eqref{SMeq:SMlem:antisymmetric_state6} into Eq.~\eqref{SMeq:SMlem:antisymmetric_state5}, we get 
    \begin{align}
        \omega_m(A)\ket{\chi(\mathbb{C}^m)}
        =&\frac{1}{\sqrt{m!}}\sum_{\beta=1}^m \sum_{\beta'=1}^m \delta_{\beta, \beta'}\sum_{\sigma\in\mathfrak{S}_m} \mathrm{sgn}(\sigma)V_{\sigma^{-1}}\left[\left(\bigotimes_{\alpha=1}^{\beta-1} \ket{\alpha}\right)\otimes a_{\beta', \beta}\ket{\beta'}\otimes\left(\bigotimes_{\alpha=\beta+1}^m \ket{\alpha}\right)\right] \nonumber\\
        =&\frac{1}{\sqrt{m!}}\sum_{\beta=1}^m \sum_{\sigma\in\mathfrak{S}_m} \mathrm{sgn}(\sigma)V_{\sigma^{-1}}\left[\left(\bigotimes_{\alpha=1}^{\beta-1} \ket{\alpha}\right)\otimes a_{\beta, \beta}\ket{\beta}\otimes\left(\bigotimes_{\alpha=\beta+1}^m \ket{\alpha}\right)\right] \nonumber\\
        =&\left(\sum_{\beta=1}^m a_{\beta, \beta}\right)
        \left[\frac{1}{\sqrt{m!}}\sum_{\sigma\in\mathfrak{S}_m} \mathrm{sgn}(\sigma)V_{\sigma^{-1}}\left(\bigotimes_{\alpha=1}^m \ket{\alpha}\right)\right]. \label{SMeq:SMlem:antisymmetric_state7}
    \end{align}
    By Eqs.~\eqref{SMeq:SMlem:antisymmetric_state2} and \eqref{SMeq:SMlem:antisymmetric_state7}, we get Eq.~\eqref{SMeq:SMlem:antisymmetric_state1}. 
\end{proof}

For the third step of the proof of Lemma~\ref{SMlem:nontrivial_to_comm_neq}, we show that the commutant of a Lie group is the same as that of its associated Lie algebra.

\begin{lemma} \label{SMlem:Lie_group_algebra_comm}
    Let $n, t\in\mathbb{N}$, $\mathcal{S}$ be a connected compact Lie subgroup of $\mathcal{U}_n$ and $\mathfrak{s}$ be its associated Lie algebra. 
    Then, 
    \begin{align}
        \mathrm{Comm}(\Omega_t(\mathcal{S})) 
        =\mathrm{Comm}(\omega_t(\mathfrak{s})). 
    \end{align}
\end{lemma}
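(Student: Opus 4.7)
\noindent
\textit{Proof proposal for Lemma~\ref{SMlem:Lie_group_algebra_comm}.}
The plan is to prove the two inclusions separately, using the relation
\begin{align}
    \Omega_t(e^{i\theta A}) = (e^{i\theta A})^{\otimes t} = e^{i\theta\,\omega_t(A)}
\end{align}
that follows from the fact that $I\otimes I\otimes\cdots\otimes I\otimes A\otimes I\otimes\cdots\otimes I$ for different slots commute with one another, so the tensor factors exponentiate independently into the sum $\omega_t(A)$.

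For the inclusion $\mathrm{Comm}(\Omega_t(\mathcal{S}))\subset\mathrm{Comm}(\omega_t(\mathfrak{s}))$, I would take any $X\in\mathrm{Comm}(\Omega_t(\mathcal{S}))$ and any $A\in\mathfrak{s}$. Since $\mathcal{S}$ is a Lie group with Lie algebra $\mathfrak{s}$ (using the physical convention of the paper), $e^{i\theta A}\in\mathcal{S}$ for every $\theta\in\mathbb{R}$. Therefore $X$ commutes with $e^{i\theta\,\omega_t(A)}$ for all $\theta$; differentiating at $\theta=0$ (legitimate because we are in a finite-dimensional operator space) yields $[X,\omega_t(A)]=0$, hence $X\in\mathrm{Comm}(\omega_t(\mathfrak{s}))$.

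For the reverse inclusion I would take $X\in\mathrm{Comm}(\omega_t(\mathfrak{s}))$. Commutation with $\omega_t(A)$ implies commutation with every polynomial in $\omega_t(A)$, and in finite dimension the Taylor series of $e^{i\theta\,\omega_t(A)}$ converges absolutely, so $X$ commutes with $e^{i\theta\,\omega_t(A)}=\Omega_t(e^{i\theta A})$ for all $\theta\in\mathbb{R}$ and all $A\in\mathfrak{s}$. The key remaining ingredient is that the exponential map is surjective onto $\mathcal{S}$: this is a standard fact for connected compact Lie groups (e.g.\ Corollary~4.48 of Ref.~\cite{knapp2002lie}), and both hypotheses are explicitly assumed. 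Thus every $U\in\mathcal{S}$ can be written as $U=e^{iA}$ with $A\in\mathfrak{s}$, giving $[X,\Omega_t(U)]=0$.

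The only potential subtlety, and therefore the step I would be most careful about, is the invocation of surjectivity of the exponential map in the second inclusion. Without connectedness one would only reach the identity component, and without compactness surjectivity can fail even in the connected case (e.g.\ $\mathrm{SL}(2,\mathbb{R})$); both assumptions are used crucially here. Once this is justified, everything else is essentially a one-line differentiation plus a power-series argument, so no combinatorial input is needed and the lemma falls out cleanly.
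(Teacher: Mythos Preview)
Your proof is correct and follows essentially the same approach as the paper: both directions use the identity $\Omega_t(e^{i\theta A})=e^{i\theta\,\omega_t(A)}$, one direction by differentiating at $\theta=0$ and the other by invoking surjectivity of the exponential map for connected compact Lie groups. Your write-up is in fact slightly more careful than the paper's (which has minor typos in the displayed commutators), and your explicit discussion of why both connectedness and compactness are needed is a nice addition.
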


\begin{proof}
    First, we show that $\mathrm{Comm}(\Omega_t(\mathcal{S}))\subset\mathrm{Comm}(\omega_t(\mathfrak{s}))$. 
    We take arbitrary $L\in\mathrm{Comm}(\Omega_t(\mathcal{S}))$. 
    For any $A\in\mathfrak{s}$ and $\theta\in\mathbb{R}$, we have $e^{i\theta A}\in\mathcal{S}$, which implies that $e^{i\theta \omega_t(A)}=\Omega_t(e^{i\theta A})\in\Omega_t(\mathcal{S})$. 
    Thus we have $[L, e^{i\theta A}]=0$. 
    By taking the derivative at $\theta=0$, we get $[L, A]=0$. 
    Since this holds for all $A\in\mathfrak{s}$, we have $L\in\mathrm{Comm}(\mathfrak{s})$. 

    Next, we show that $\mathrm{Comm}(\Omega_t(\mathcal{S}))\supset\mathrm{Comm}(\omega_t(\mathfrak{s}))$. 
    We take arbitrary $L\in\mathrm{Comm}(\omega_t(\mathfrak{s}))$. 
    Since $\mathcal{S}$ is connected and compact, every $U\in\mathcal{S}$ can be written as $U=e^{iA}$ with some $A\in\mathfrak{s}$. 
    Thus we have $[L, \Omega_t(U)]=[L, e^{i\omega_t(A)}]=0$. 
    Since this holds for all $U\in\mathcal{S}$, we get $L\in\mathrm{Comm}(\Omega_t(\mathcal{S}))$. 
\end{proof}

We show that in Theorem~\ref{SMthm:general},  Eq.~\eqref{SMeq:SMthm:general_cond3} is equivalent to Eq.~\eqref{SMeq:SMthm:general_cond4} in a special case.

\begin{lemma} \label{SMlem:simplified_condition}
    Let $n\in\mathbb{N}$, $T$ be a unitary representation of a group $G$ on a single qudit, $R=T^{\otimes n}$, $\Lambda$ be the set of the labels of the inequivalent irreducible representations appearing in $R$, $\bm{f}$ be defined by Eq.~\eqref{SMeq:trace_vec_def}, $\Gamma$ be a set of subsets of $\{1, 2, ..., n\}$, $k:=\max_{\gamma\in\Gamma} \#\gamma$, $\mathcal{V}:=\mathrm{span}_\mathbb{R}(\{\bm{f}(A)\ |\ \exists \gamma\in\Gamma \textrm{ s.t. } A\in\mathfrak{u}_{n, G, R}^\gamma\})$, $\mathcal{C}:=\{\bm{f}(A\otimes\mathrm{I}^{\otimes n-k})\ |\ A\in\mathcal{L}_{k, G, T^{\otimes k}}\}$, and $\bm{x}\in\mathbb{Z}^\Lambda$. 
    Then, $\sum_{\lambda\in\Lambda} v_\lambda x_\lambda=0$ for all $\bm{v}\in\mathcal{V}$ is equivalent to $\sum_{\lambda\in\Lambda} c_\lambda x_\lambda=0$ for all $\bm{c}\in\mathcal{C}$. 
\end{lemma}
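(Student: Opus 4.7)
The plan is to prove $\mathcal{V}^\perp = \mathcal{C}^\perp$ (interpreted against any $\bm{x}\in\mathbb{R}^\Lambda$, in particular $\mathbb{Z}^\Lambda$) by establishing the two set-theoretic inclusions $\mathcal{V}\subseteq\mathcal{C}$ and $\mathcal{C}\subseteq\mathrm{span}_\mathbb{C}(\mathcal{V})$ inside $\mathbb{C}^\Lambda$. Once these are in place, the first inclusion immediately yields that $\sum_\lambda c_\lambda x_\lambda=0$ on $\mathcal{C}$ implies $\sum_\lambda v_\lambda x_\lambda=0$ on $\mathcal{V}$, and the second yields the converse since a complex linear combination of vectors annihilating $\bm{x}$ also annihilates $\bm{x}$.

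The single technical ingredient is the invariance $\bm{f}(UAU^\dag)=\bm{f}(A)$ for every $U\in\mathcal{U}_{n,G,R}$ and $A\in\mathcal{L}_{n,G,R}$. This follows by writing $U=\sum_\lambda F_\lambda(I\otimes U_\lambda)F_\lambda^\dag$ and $A=\sum_\lambda F_\lambda(I\otimes A_\lambda)F_\lambda^\dag$ as in Eq.~\eqref{SMeq:sym_op_decomp}, so that $(UAU^\dag)_\lambda=U_\lambda A_\lambda U_\lambda^\dag$, and then applying cyclicity of the trace to get $\mathrm{tr}((UAU^\dag)_\lambda)=\mathrm{tr}(A_\lambda)$. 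Because $R=T^{\otimes n}$, every qudit permutation operator $Q_\sigma$ commutes with $R(g)$ and hence lies in $\mathcal{U}_{n,G,R}$, so the invariance lets us freely relabel qudits without changing $\bm{f}$.

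For $\mathcal{V}\subseteq\mathcal{C}$, I take a generator $A\in\mathfrak{u}_{n,G,R}^\gamma$ for some $\gamma\in\Gamma$, which by definition has the form $A=A_\gamma\otimes I_{\gamma^c}$ with $A_\gamma\in\mathfrak{u}_{|\gamma|,G,T^{\otimes|\gamma|}}$. Choosing $\sigma\in\mathfrak{S}_n$ that maps $\gamma$ into $\{1,\ldots,|\gamma|\}\subseteq\{1,\ldots,k\}$, I set $B:=A_\gamma\otimes I_{\{|\gamma|+1,\ldots,k\}}\in\mathfrak{u}_{k,G,T^{\otimes k}}\subseteq\mathcal{L}_{k,G,T^{\otimes k}}$, so that $Q_\sigma A Q_\sigma^\dag=B\otimes\mathrm{I}^{\otimes n-k}$. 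The invariance step then gives $\bm{f}(A)=\bm{f}(B\otimes\mathrm{I}^{\otimes n-k})\in\mathcal{C}$, and passing to real linear combinations yields $\mathcal{V}\subseteq\mathcal{C}$.

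For the converse inclusion $\mathcal{C}\subseteq\mathrm{span}_\mathbb{C}(\mathcal{V})$, I fix some $\gamma_0\in\Gamma$ with $|\gamma_0|=k$, which exists by definition of $k$. Any $B\in\mathcal{L}_{k,G,T^{\otimes k}}$ decomposes as $B=B_1+iB_2$ with $B_1=(B+B^\dag)/2$ and $B_2=(B-B^\dag)/(2i)$; unitarity of $T^{\otimes k}$ implies that commutation with $T^{\otimes k}(g)$ is preserved under Hermitian conjugation (after $g\mapsto g^{-1}$), so $B_1,B_2\in\mathfrak{u}_{k,G,T^{\otimes k}}$. Placing $B_j$ on the qudits indexed by $\gamma_0$ via an appropriate permutation produces an element of $\mathfrak{u}_{n,G,R}^{\gamma_0}$ whose $\bm{f}$-value equals $\bm{f}(B_j\otimes\mathrm{I}^{\otimes n-k})$, so $\bm{f}(B_j\otimes\mathrm{I}^{\otimes n-k})\in\mathcal{V}$ and hence $\bm{f}(B\otimes\mathrm{I}^{\otimes n-k})=\bm{f}(B_1\otimes\mathrm{I}^{\otimes n-k})+i\bm{f}(B_2\otimes\mathrm{I}^{\otimes n-k})\in\mathrm{span}_\mathbb{C}(\mathcal{V})$. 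The only delicate point in the argument is the conjugation invariance of $\bm{f}$; the rest is routine bookkeeping with qudit permutations and the complexification $\mathcal{L}_{k,G,T^{\otimes k}}=\mathfrak{u}_{k,G,T^{\otimes k}}\oplus i\,\mathfrak{u}_{k,G,T^{\otimes k}}$.
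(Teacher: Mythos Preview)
Your proposal is correct and follows essentially the same approach as the paper: both proofs hinge on the conjugation invariance $\bm{f}(PAP^\dag)=\bm{f}(A)$ for qudit permutations $P\in\mathcal{U}_{n,G,R}$ (via the block decomposition and trace cyclicity), together with the complexification $\mathcal{L}_{k,G,T^{\otimes k}}=\mathfrak{u}_{k,G,T^{\otimes k}}+i\,\mathfrak{u}_{k,G,T^{\otimes k}}$. The paper phrases the conclusion as $\mathcal{C}=\mathcal{V}+i\mathcal{V}$ after first identifying $\mathcal{V}$ with $\{\bm{f}(A\otimes\mathrm{I}^{\otimes n-k})\mid A\in\mathfrak{u}_{k,G,T^{\otimes k}}\}$, whereas you present it as the pair of inclusions $\mathcal{V}\subseteq\mathcal{C}$ and $\mathcal{C}\subseteq\mathrm{span}_\mathbb{C}(\mathcal{V})$; these are the same statement, and your version is arguably a bit more explicit about the padding step $A_\gamma\mapsto A_\gamma\otimes I_{\{|\gamma|+1,\ldots,k\}}$ for $|\gamma|<k$.
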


\begin{proof}
    For any $\gamma\in\Gamma$, we can take some qudit permutation operator $P$ such that the map $\mathcal{E}_{\gamma, P}: \mathfrak{u}_{\#\gamma, G, T^{\otimes\#\gamma}}\to\mathfrak{u}_{n, G, T^{\otimes n}}^\gamma$ defined by $\mathcal{E}_{\gamma, P}(A):=P(A\otimes\mathrm{I}^{\otimes n-\#\gamma})P^\dag$ is a bijection. 
    Since $R$ is given by $T^{\otimes n}$, we have $P\in\mathcal{U}_{n, G, R}$, which implies that $P$ can be written as $P=\sum_{\lambda\in\Lambda} F_\lambda(I\otimes P_\lambda)F_\lambda^\dag$ with some $P_\lambda\in\mathcal{U}(\mathcal{M}_\lambda)$. 
    $B:=A\otimes\mathrm{I}^{\otimes n-\#\gamma}$ can also be written as $B=\sum_{\lambda\in\Lambda} F_\lambda ( I\otimes B_\lambda)F_\lambda^\dag$ with some $B\in\mathcal{L}(\mathbb{C}^{m_\lambda})$. 
    Then, we have $\bm{f}(\mathcal{E}_{\gamma, P}(A))=(\mathrm{tr}(P_\lambda B_\lambda P_\lambda^\dag))_{\lambda\in\Lambda}=(\mathrm{tr}(B_\lambda))_{\lambda\in\Lambda}=\bm{f}(A\otimes\mathrm{I}^{\otimes n-\#\gamma})$. 
    By taking the range of this equation over $A\in\mathfrak{u}_{\#\gamma, G, T^{\otimes \#\gamma}}$, we get $\{\bm{f}(A)\ |\ A\in\mathfrak{u}_{n, G, T^{\otimes n}}^\gamma\}=\{\bm{f}(A\otimes\mathrm{I}^{\otimes n-\#\gamma})\ |\ A\in\mathfrak{u}_{\#\gamma, G, T^{\otimes \#\gamma}}\}$, which implies that $\mathcal{V}=\{\bm{f}(A\otimes\mathrm{I}^{\otimes n-k})\ |\ A\in\mathfrak{u}_{k, G, T^{\otimes k}}\}$. 
    Since we have $\mathcal{L}_{k, G, T^{\otimes k}}=\mathfrak{u}_{k, G, T^{\otimes k}}+i\mathfrak{u}_{k, G, T^{\otimes k}}$, it holds that $\mathcal{C}=\mathcal{V}+i\mathcal{V}$. 
    Thus, $\sum_{\lambda\in\Lambda} v_\lambda x_\lambda=0$ for all $\bm{v}\in\mathcal{V}$ is equivalent to $\sum_{\lambda\in\Lambda} c_\lambda x_\lambda=0$ for all $\bm{c}\in\mathcal{C}$. 
\end{proof}

We give a rough sufficient condition for the assumptions in Lemma~\ref{SMlem:U1_nontrivial_existence_general}.

\begin{lemma} \label{SMlem:U1_sufficient_cond}
    Let $n, k\in\mathbb{N}$ satisfy $k\geq 2$ and $n\geq 2^k$, and $b_{n, j}$ be defined by Eq.~\eqref{SMeq:U1_bound_def}, i.e., 
    \begin{align}
        b_{n, j}:=\frac{2^{\lfloor j/2\rfloor}}{\left\lceil \frac{j}{2}\right\rceil!}\prod_{\alpha=1}^{\lceil j/2\rceil} (n-j+2\alpha-1). 
    \end{align}
    Then, $b_{n, k}\leq\binom{n}{\lceil k/2\rceil+1}$. 
    Moreover, when $k$ is odd, $b_{n, k}\leq b_{n, k+1}$. 
\end{lemma}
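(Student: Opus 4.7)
My plan is to treat the two inequalities separately by writing $b_{n,k}$ as an explicit product, splitting into the cases $k=2\ell$ even and $k=2m+1$ odd, and then bounding the relevant ratios by elementary algebra. For even $k=2\ell$, the definition gives
\begin{align*}
    b_{n,2\ell}=\frac{2^\ell}{\ell!}\prod_{\alpha=1}^\ell (n-2\alpha+1),
\end{align*}
while for odd $k=2m+1$ (so $\lceil k/2\rceil=m+1$),
\begin{align*}
    b_{n,2m+1}=\frac{2^m}{(m+1)!}\prod_{\alpha=1}^{m+1}(n-2m+2\alpha-2).
\end{align*}

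For the first claim $b_{n,k}\leq\binom{n}{\lceil k/2\rceil+1}$, I would form the ratio $R:=\binom{n}{\lceil k/2\rceil+1}/b_{n,k}$ and rearrange it into a product of ``easy'' factors. In the even case this yields
\begin{align*}
    R=\frac{n}{2^\ell(\ell+1)}\prod_{j=1}^\ell \frac{n-j}{n-2j+1},
\end{align*}
and each factor $(n-j)/(n-2j+1)=1+(j-1)/(n-2j+1)\geq 1$, so $R\geq n/[2^\ell(\ell+1)]$; since $n\geq 2^{2\ell}$ and the elementary inequality $2^\ell\geq \ell+1$ holds for $\ell\geq 1$, this gives $R\geq 1$. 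In the odd case the same manipulation produces $R\geq (n-m-1)/[2^m(m+2)]$, and the hypothesis $n\geq 2^{2m+1}$ together with the easy bound $2^{2m+1}\geq 2^m(m+2)+m+1$ (which I would verify by induction, with equality at $m=1$) closes the argument.

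For the second claim, I would directly compute
\begin{align*}
    \frac{b_{n,2m+2}}{b_{n,2m+1}}=2\prod_{j=0}^m\frac{n-1-2j}{n-2j}=2\prod_{j=0}^m\left(1-\frac{1}{n-2j}\right),
\end{align*}
and then use the elementary inequality $\prod_i(1-a_i)\geq 1-\sum_i a_i$ for $a_i\in[0,1]$ (proven by induction on the number of factors) together with the monotonicity bound $1/(n-2j)\leq 1/(n-2m)$. This reduces the claim to $n\geq 4m+2$, which is implied by $n\geq 2^{2m+1}$ for all $m\geq 1$, i.e.\ $k\geq 3$ odd.

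I expect the main obstacle to be bookkeeping rather than any conceptual difficulty: the estimates are tight at the boundary cases ($k=2$, $k=3$, and $m=1$ in part two), so the constants in the inequalities $2^\ell\geq \ell+1$ and $2^{2m+1}\geq 2^m(m+2)+m+1$ and $2^{2m+1}\geq 4m+2$ must be verified carefully, and one has to make sure the ``each factor is $\geq 1$'' step in the ratio for part one is valid in the stated range (which requires $n\geq 2\ell$ and $n\geq 2m$, both of which are much weaker than $n\geq 2^k$). Once these small numerical checks are done, the remainder of the proof is a routine induction or direct numerical verification.
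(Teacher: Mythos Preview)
Your argument is correct and follows essentially the same route as the paper: for the first inequality both you and the paper compare $\binom{n}{\lceil k/2\rceil+1}$ and $b_{n,k}$ factor by factor (the paper via a reindexing of the product, you via the ratio $\prod_j(n-j)/(n-2j+1)\geq 1$) and reduce to the same numerical check $n\geq 2^{\lfloor k/2\rfloor}(\lceil k/2\rceil+1)$. For the second inequality the paper also uses a factor-by-factor comparison, pairing $n-2\alpha$ with $n-2\alpha+1$ and reducing to $n\leq 2(n-k)$, i.e.\ $n\geq 2k$; your use of the Weierstrass bound $\prod(1-a_i)\geq 1-\sum a_i$ is a slightly different device that yields the marginally weaker sufficient condition $n\geq 4m+2$, but both are comfortably implied by $n\geq 2^k$.
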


\begin{proof}
    For the proof of $b_{n, k}\leq\binom{n}{\lceil k/2\rceil+1}$, it is sufficient to show that 
    \begin{align}
        2^{\lfloor k/2\rfloor}\left(\left\lceil \frac{k}{2}\right\rceil+1\right)\prod_{\alpha=1}^{\lceil k/2\rceil} (n-k+2\alpha-1)
        \leq\prod_{\alpha=1}^{\lceil k/2\rceil+1} (n-\alpha+1). \label{SMeq:SMlem:U1_sufficient_cond1}
    \end{align}
    First, we consider the case when $k$ is even. 
    The product part in the definition of $b_{n, k}$ is upper bounded as 
    \begin{align}
        \prod_{\alpha=1}^{\lceil k/2\rceil} (n-k+2\alpha-1) 
        =&\prod_{\alpha=1}^{k/2} (n-k+2\alpha-1) \nonumber\\
        =&\prod_{\alpha=2}^{(k/2)+1} \left[n-k+2\left(\frac{k}{2}+2-\alpha\right)-1\right] \nonumber\\
        =&\prod_{\alpha=2}^{\lceil k/2\rceil+1} [n-\alpha+1-(\alpha-2)] \nonumber\\
        \leq&\prod_{\alpha=2}^{\lceil k/2\rceil+1} (n-\alpha+1). \label{SMeq:SMlem:U1_sufficient_cond2}
    \end{align}
    By the assumption that $n\geq 2^k$, we get 
    \begin{align}
        \left(\left\lceil \frac{k}{2}\right\rceil+1\right)2^{\lfloor k/2\rfloor} 
        \leq 2^{\lceil k/2\rceil}2^{\lfloor k/2\rfloor} 
        =2^k 
        \leq n. \label{SMeq:SMlem:U1_sufficient_cond3}
    \end{align}
    By multiplying Eqs.~\eqref{SMeq:SMlem:U1_sufficient_cond2} and \eqref{SMeq:SMlem:U1_sufficient_cond3}, we get Eq.~\eqref{SMeq:SMlem:U1_sufficient_cond1}.

    Next, we consider the case when $k$ is odd. 
    We note that Eq.~\eqref{SMeq:SMlem:U1_sufficient_cond1} is equivalent to 
    \begin{align}
        2^{\lfloor k/2\rfloor}\left(\left\lceil \frac{k}{2}\right\rceil+1\right)\prod_{\alpha=1}^{\lceil k/2\rceil-1} (n-k+2\alpha-1)
        \leq\prod_{\alpha=2}^{\lceil k/2\rceil+1} (n-\alpha+1), \label{SMeq:SMlem:U1_sufficient_cond4}
    \end{align}
    because the term for $\alpha=\lceil{k/2} \rceil$ in the l.h.s. and the term for $\alpha=1$ in the r.h.s. in Eq.~\eqref{SMeq:SMlem:U1_sufficient_cond1} are both equal to $n$. 
    By noting that $k-1$ is even, we can substitute $n\mapsto n-1$ and $k\mapsto k-1$ in Eq.~\eqref{SMeq:SMlem:U1_sufficient_cond2}. 
    Then, the product part in the l.h.s. of Eq.~\eqref{SMeq:SMlem:U1_sufficient_cond4} is upper bounded as 
    \begin{align}
        \prod_{\alpha=1}^{\lceil k/2\rceil-1} (n-k+2\alpha-1) 
        =&\prod_{\alpha=1}^{\lceil (k-1)/2\rceil} [(n-1)-(k-1)+2\alpha-1] \nonumber\\
        \leq&\prod_{\alpha=2}^{\lceil (k-1)/2\rceil+1} [(n-1)-\alpha+1] \nonumber\\
        =&\prod_{\alpha=2}^{\lceil k/2\rceil} [n-(\alpha+1)+1] \nonumber\\
        =&\prod_{\alpha=3}^{\lceil k/2\rceil+1} (n-\alpha+1). \label{SMeq:SMlem:U1_sufficient_cond5}
    \end{align}
    By the assumption that $n\geq 2^k$ and $\lceil k/2\rceil\geq 2$, we get 
    \begin{align}
        \left(\left\lceil \frac{k}{2}\right\rceil+1\right)2^{\lfloor k/2\rfloor} 
        =\left(\left\lceil \frac{k}{2}\right\rceil+2\right)2^{\lfloor k/2\rfloor}-2^{\lfloor k/2\rfloor} 
        \leq 2\left\lceil \frac{k}{2}\right\rceil 2^{\lfloor k/2\rfloor}-1 
        \leq 2^{\lceil k/2\rceil}2^{\lfloor k/2\rfloor}-1 
        =2^k-1 
        \leq n-1. \label{SMeq:SMlem:U1_sufficient_cond6}
    \end{align}
    By multiplying Eqs.~\eqref{SMeq:SMlem:U1_sufficient_cond5} and \eqref{SMeq:SMlem:U1_sufficient_cond6}, we get Eq.~\eqref{SMeq:SMlem:U1_sufficient_cond4}. 
    For the proof of $b_{n, k}\leq b_{n, k+1}$, it is sufficient to show that 
    \begin{align}
        \prod_{\alpha=1}^{(k+1)/2} (n-2\alpha+2)\leq 2\prod_{\alpha=1}^{(k+1)/2} (n-2\alpha+1). \label{SMeq:SMlem:U1_sufficient_cond7}
    \end{align}
    We note that 
    \begin{align}
        \prod_{\alpha=2}^{(k+1)/2} (n-2\alpha+2)
        =\prod_{\alpha=1}^{(k+1)/2-1} (n-2(\alpha+1)+2) 
        \leq\prod_{\alpha=1}^{(k+1)/2-1} (n-2\alpha+1). \label{SMeq:SMlem:U1_sufficient_cond8}
    \end{align}
    By the assumption of $n\geq 2^k$, we have 
    \begin{align}
        n=2(n-k)-(n-2^k)-(2^k-2k)\leq 2(n-k). \label{SMeq:SMlem:U1_sufficient_cond9}
    \end{align}
    By multiplying Eqs.~\eqref{SMeq:SMlem:U1_sufficient_cond8} and \eqref{SMeq:SMlem:U1_sufficient_cond9}, we get Eq.~\eqref{SMeq:SMlem:U1_sufficient_cond7}. 
\end{proof}

We prepare a lemma making it easier to check whether the assumption in Lemma~\ref{SMlem:SU2_nontrivial_nonexistence_general} holds or not.

\begin{lemma} \label{SMlem:binomial_SU2}
	Let $n, a, b\in\mathbb{Z}$ and $0\leq a\leq b\leq n/2-1$. 
	Then, 
    \begin{align}
        \binom{n}{j+1}-\binom{n}{j}\geq \min\left\{\binom{n}{a+1}-\binom{n}{a}, \binom{n}{b+1}-\binom{n}{b}\right\} 
    \end{align}
    for all $j\in\{a, a+1, ..., b\}$. 
\end{lemma}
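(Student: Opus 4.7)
Let $d_j := \binom{n}{j+1} - \binom{n}{j}$ for $j \in \{0, 1, \ldots, \lfloor n/2\rfloor - 1\}$. The conclusion is equivalent to the assertion that the sequence $(d_j)$ is \emph{unimodal} on this range, i.e., first nondecreasing and then nonincreasing. Once unimodality is established, the minimum of $d_j$ over any subinterval $\{a, a+1, \ldots, b\} \subseteq \{0, 1, \ldots, \lfloor n/2\rfloor - 1\}$ is automatically attained at one of the endpoints $j = a$ or $j = b$, which is precisely the statement of the lemma.

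To prove unimodality, I would compute the discrete second difference
\begin{align*}
d_{j+1} - d_j = \binom{n}{j+2} - 2\binom{n}{j+1} + \binom{n}{j}.
\end{align*}
Using the identity $\binom{n}{k+1} = \binom{n}{k}\cdot\frac{n-k}{k+1}$, a direct calculation gives
\begin{align*}
d_{j+1} - d_j = \binom{n}{j}\cdot\frac{q(j)}{(j+1)(j+2)}, \qquad q(j) := 4j^2 - 4(n-2)j + (n^2 - 5n + 2).
\end{align*}
Since the prefactor is strictly positive on the range of interest, the sign of $d_{j+1} - d_j$ equals the sign of the quadratic $q(j)$. The discriminant simplifies as $(n-2)^2 - (n^2 - 5n + 2) = n + 2$, so the roots of $q$ are $j_\pm = \bigl((n-2) \pm \sqrt{n+2}\bigr)/2$.

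From $\sqrt{n+2} > 0$, one obtains $j_+ > (n-2)/2 = n/2 - 1$. Hence for every integer $j$ with $0 \leq j \leq \lfloor n/2\rfloor - 2$, we automatically have $j < j_+$, so the sign of $q(j)$ is determined solely by whether $j \leq j_-$ or $j > j_-$: namely, $q(j) \geq 0$ for $j \leq j_-$ and $q(j) < 0$ for $j > j_-$. This is exactly the unimodal sign pattern of $d_{j+1} - d_j$ that we wanted. The boundary cases $j_- < 0$ (in which $d_j$ is monotonically nonincreasing) and $j_- > \lfloor n/2 \rfloor - 2$ (in which $d_j$ is monotonically nondecreasing) also force the minimum onto an endpoint, so the conclusion holds uniformly.

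\textbf{Main obstacle.} The argument is essentially an algebraic bookkeeping exercise; the only potentially delicate point is checking that $j_+ > n/2 - 1$ so that the sign of $q(j)$ does not flip back on $\{0, 1, \ldots, \lfloor n/2 \rfloor - 2\}$, but this follows at once from $\sqrt{n+2} > 0$. Hence there is no substantive difficulty once unimodality is identified as the right intermediate claim.
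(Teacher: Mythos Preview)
Your proposal is correct and follows essentially the same route as the paper's proof. Both arguments compute the second difference $d_{j+1}-d_j$ (the paper uses the shifted form $d_j-d_{j-1}=\binom{n}{j-1}\frac{(n-2j)^2-(n+2)}{j(j+1)}$, which is your $q$ after the index shift $j\mapsto j+1$), observe that the sign is governed by a quadratic in $j$ that changes sign exactly once on the relevant range, deduce unimodality of $(d_j)$, and conclude that the minimum over $\{a,\dots,b\}$ is attained at an endpoint.
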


\begin{proof}
    We note that 
    \begin{align}
        \left(\binom{n}{j+1}-\binom{n}{j}\right)-\left(\binom{n}{j}-\binom{n}{j-1}\right) 
        =\binom{n}{j-1}\frac{(n-2j)^2-(n+2)}{j(j+1)}. \label{SMeq:SMlem:binomial_SU2}
    \end{align}
    for all $j\in\mathbb{Z}$ satisfying $1\leq j\leq n/2-1$. 
    The r.h.s. is positive at $j=1$ and negative at $j=n/2-1$, and monotonically decreases as $j$ increases. 
    We can thus take $k\in\{a, a+1, ..., b\}$ such that the value of Eq.~\eqref{SMeq:SMlem:binomial_SU2} is positive when $j<k$ and non-positive when $j\geq k$. 
    This means that $\binom{n}{j+1}-\binom{n}{j}$ is increasing while $j\leq k$ and nonincreasing while $k\leq j$. 
    Therefore $\binom{n}{j+1}-\binom{n}{j}$ takes the minimum value at $j=a$ or $j=b$. 
\end{proof}

By using the lemma above, we give a rough sufficient condition for the assumption in Lemma~\ref{SMlem:SU2_nontrivial_nonexistence_general}. 
The following lemma is the counterpart of Lemma~\ref{SMlem:U1_sufficient_cond} in the $\mathrm{SU}(2)$ case.

\begin{lemma} \label{SMlem:SU2_sufficient_cond}
    Let $n, k\in\mathbb{N}$ satisfy $k\geq 2$ and $n\geq 2^{2(\lfloor k/2\rfloor+1)}$, and $c_{n, k}$ be defined by Eq.~\eqref{SMeq:SU2_bound_def}, i.e., 
    \begin{align}
        c_{n, k}:=\frac{2^{\lfloor k/2\rfloor}}{(\lfloor \frac{k}{2}\rfloor+1)!}\prod_{\alpha=1}^{\lfloor k/2\rfloor+1} (n-2\alpha+1). 
    \end{align} 
    Then, $c_{n, k}\leq\binom{n}{j+1}-\binom{n}{j}$ for all $j\in\{\lfloor k/2\rfloor+1, \lfloor k/2\rfloor+2, ..., \lfloor n/2\rfloor-1\}$. 
\end{lemma}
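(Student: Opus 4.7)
The plan is to invoke Lemma~\ref{SMlem:binomial_SU2} with $a = K := \lfloor k/2\rfloor + 1$ and $b = \lfloor n/2\rfloor - 1$, which reduces the claim to verifying $c_{n,k} \leq \binom{n}{j+1} - \binom{n}{j}$ only at the two endpoints $j = K$ and $j = \lfloor n/2\rfloor - 1$. The hypothesis $k \geq 2$ gives $K \geq 2$, so $n \geq 2^{2K} \geq 16$ and the range is non-empty.

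For the lower endpoint $j = K$, I would use the identity $\binom{n}{K+1} - \binom{n}{K} = \binom{n}{K}(n-2K-1)/(K+1)$ together with the product expansion of $\binom{n}{K}$. The desired bound reduces algebraically to
\begin{align*}
\frac{n-2K-1}{(K+1)\,2^{K-1}}\prod_{\alpha=1}^{K} \frac{n-\alpha+1}{n-2\alpha+1} \geq 1.
\end{align*}
Each factor of the product is $\geq 1$ because $\alpha \geq 1$, so it suffices to show $n - 2K - 1 \geq (K+1)\,2^{K-1}$. Using $K+1 \leq 2^K$ bounds the right side by $2^{2K-1}$, and the hypothesis $n \geq 2^{2K}$ together with $2^{2K-1} \geq 2K+1$ (valid for $K \geq 2$) closes this case.

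For the upper endpoint $j = \lfloor n/2\rfloor - 1$, the plan is to lower-bound
\begin{align*}
\binom{n}{\lfloor n/2\rfloor} - \binom{n}{\lfloor n/2\rfloor - 1} \geq \frac{2\,\binom{n}{\lfloor n/2\rfloor}}{n+3} \geq \frac{2^{n+1}}{(n+1)(n+3)},
\end{align*}
using the explicit ratio of consecutive binomials (handled by parity of $n$) and the standard estimate $\binom{n}{\lfloor n/2\rfloor} \geq 2^n/(n+1)$. Upper-bounding $c_{n,k} \leq 2^{K-1}n^K/K!$ by replacing each factor $(n-2\alpha+1)$ with $n$, the claim reduces to $2^{n-K+2} K! \geq n^K(n+1)(n+3)$. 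Since the left side grows exponentially in $n$ while the right grows only polynomially, it suffices to verify this at the boundary $n = 2^{2K}$, which after taking logarithms becomes $2^{2K} \geq 2K^2 + 5K - 2 - \log_2(K!)$, easily checked for $K \geq 2$.

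The main obstacle is that the estimate at the upper endpoint is nearly tight in the boundary case $K = 2, n = 16$: any looser intermediate bound, whether on $\binom{n}{\lfloor n/2\rfloor}$, on $c_{n,k}$, or on $(n+1)(n+3)$, would fail to close the inequality there. Tracking the constants precisely at this one case is therefore the delicate part, while the lower-endpoint estimate and the subsequent monotonicity-in-$n$ argument are routine.
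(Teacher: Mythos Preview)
Your proposal is correct and shares the paper's overall structure: both invoke Lemma~\ref{SMlem:binomial_SU2} to reduce to the two endpoints $j=\lfloor k/2\rfloor+1$ and $j=\lfloor n/2\rfloor-1$, and the lower-endpoint arguments are essentially identical (product expansion of $\binom{n}{K+1}-\binom{n}{K}$, factor-by-factor comparison with $c_{n,k}$, then the elementary bound $(K+1)2^{K-1}\le 2^{2K}-2K-1$).

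The upper-endpoint arguments differ. The paper stays with direct product manipulation: it expands $\binom{n}{\lfloor n/2\rfloor}-\binom{n}{\lfloor n/2\rfloor-1}$ as a product, keeps the first $\lfloor k/2\rfloor+1$ factors, bounds the remaining ones from below by~$1$, and compares term-by-term with the product defining $c_{n,k}$; the even and odd cases of $n$ are treated separately. Your route instead uses the global estimate $\binom{n}{\lfloor n/2\rfloor}\ge 2^n/(n+1)$ together with the crude upper bound $c_{n,k}\le 2^{K-1}n^K/K!$, reducing the problem to an exponential-versus-polynomial inequality in $n$. This is conceptually cleaner but buys two small obligations the paper avoids: you must justify the monotonicity-in-$n$ step (routine, e.g.\ via $(1+1/n)^{K+2}\le e^{(K+2)/n}<2$ for $n\ge 2^{2K}$), and you must be careful with constants at the extremal case $K=2$, $n=16$, where your chain of bounds is nearly tight (your approximation $\log_2(n+1)+\log_2(n+3)\approx 4K$ slightly understates the right-hand side; the correct sufficient condition is $2^{2K}\ge 2K^2+5K-1-\log_2(K!)$, which still holds for $K\ge 2$). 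The paper's approach trades this delicacy for a parity split and somewhat more bookkeeping.
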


\begin{proof}
    By Lemma~\ref{SMlem:binomial_SU2}, it is sufficient to show that $c_{n, k}\leq\binom{n}{j+1}-\binom{n}{j}$ only for $j=\lfloor k/2\rfloor+1$ and $\lfloor n/2\rfloor-1$. 
    First, we prove that $c_{n, k}\leq\binom{n}{j+1}-\binom{n}{j}$ in the case of $j=\lfloor k/2\rfloor+1$. 
    We note that 
    \begin{align}
        \binom{n}{\lfloor \frac{k}{2}\rfloor+2}-\binom{n}{\lfloor \frac{k}{2}\rfloor+1}
        =&\frac{n-2\lfloor \frac{k}{2}\rfloor-3}{(\lfloor \frac{k}{2}\rfloor+2)!}\left(\prod_{\alpha=1}^{\lfloor k/2\rfloor+1} (n-\alpha+1)\right) \nonumber\\
        \geq&\frac{n-2\lfloor \frac{k}{2}\rfloor-3}{(\lfloor \frac{k}{2}\rfloor+2)!}\prod_{\alpha=1}^{\lfloor k/2\rfloor+1} (n-2\alpha+1) \nonumber\\
        =&\frac{n-2\lfloor \frac{k}{2}\rfloor-3}{(\lfloor \frac{k}{2}\rfloor+2)2^{\lfloor k/2\rfloor}}c_{n, k}. \label{SMeq:SMlem:SU2_sufficient_cond1}
    \end{align}
    Since $n$ satisfies $n\geq 2^{2(\lfloor k/2\rfloor+1)}$, we have 
    \begin{align}
        \left(\left\lfloor \frac{k}{2}\right\rfloor+2\right)2^{\lfloor k/2\rfloor} 
        =&\left[\left(\left\lfloor \frac{k}{2}\right\rfloor+1\right)+1\right]\left(2^{\lfloor k/2\rfloor}+2\right)-2\left\lfloor \frac{k}{2}\right\rfloor-4 \nonumber\\
        \leq& 2^{\lfloor k/2\rfloor+1}\cdot \left(2^{\lfloor k/2\rfloor}+2^{\lfloor k/2\rfloor}\right)-2\left\lfloor \frac{k}{2}\right\rfloor-3 \nonumber\\
        =& 2^{2(\lfloor k/2\rfloor+1)}-2\left\lfloor \frac{k}{2}\right\rfloor-3 \nonumber\\
        \leq& n-2\left\lfloor \frac{k}{2}\right\rfloor-3. \label{SMeq:SMlem:SU2_sufficient_cond2}
    \end{align}
    By Eqs.~\eqref{SMeq:SMlem:SU2_sufficient_cond1} and \eqref{SMeq:SMlem:SU2_sufficient_cond2}, we get $c_{n, k}\leq\binom{n}{\lfloor k/2\rfloor+2}-\binom{n}{\lfloor k/2\rfloor+1}$.

    Next, we prove the inequality in the case of $j=\lfloor n/2\rfloor-1$. 
    When $n$ is even, we note that 
    \begin{align}
        \binom{n}{\left\lfloor \frac{n}{2}\right\rfloor}-\binom{n}{\left\lfloor \frac{n}{2}\right\rfloor-1} 
        =&\frac{1}{(\frac{n}{2})!}\prod_{\alpha=-1}^{n/2-3} (n-\alpha-1) \nonumber\\
        =&\frac{2(n-2)}{(\frac{n}{2}-1)!}(n-1)\prod_{\alpha=2}^{n/2-3} (n-\alpha-1) \nonumber\\
        \geq&\frac{2(n-2)}{(\left\lfloor \frac{k}{2}\right\rfloor+3)!(\frac{n}{2})^{n/2-\lfloor k/2\rfloor-4}}(n-1)\left(\prod_{\alpha=2}^{\lfloor k/2\rfloor+1} (n-\alpha-1)\right)\left(\frac{n}{2}\right)^{n/2-\lfloor k/2\rfloor-4} \nonumber\\
        \geq&\frac{2(n-2)}{(\lfloor \frac{k}{2}\rfloor+3)!}(n-1)\prod_{\alpha=2}^{\lfloor k/2\rfloor+1} (n-2\alpha+1) \nonumber\\
        =&\frac{2(n-2)}{(\lfloor \frac{k}{2}\rfloor+3)!}\prod_{\alpha=1}^{\lfloor k/2\rfloor+1} (n-2\alpha+1) \nonumber\\
        =&\frac{n-2}{(\left\lfloor \frac{k}{2}\right\rfloor+2)(\left\lfloor\frac{k}{2}\right\rfloor+3)2^{\lfloor k/2\rfloor-1}}c_{n, k}. \label{SMeq:SMlem:SU2_sufficient_cond3}
    \end{align}
    Since $n$ satisfies $n\geq 2^{2(\lfloor k/2\rfloor+1)}$, we have 
    \begin{align}
        \left(\left\lfloor \frac{k}{2}\right\rfloor+2\right)\left(\left\lfloor \frac{k}{2}\right\rfloor+3\right)2^{\lfloor k/2\rfloor-1} 
        =&\left(\left\lfloor \frac{k}{2}\right\rfloor+2\right)\left(\left\lfloor \frac{k}{2}\right\rfloor+4\right)2^{\lfloor k/2\rfloor-1}-\left(\left\lfloor \frac{k}{2}\right\rfloor+2\right)2^{\lfloor k/2\rfloor-1} \nonumber\\
        \leq&2^{\lfloor k/2\rfloor+3}\cdot 2^{\lfloor k/2\rfloor-1}-2 \nonumber\\
        =&2^{2(\lfloor k/2\rfloor+1)}-2 \nonumber\\
        \leq&n-2, \label{SMeq:SMlem:SU2_sufficient_cond4}
    \end{align}
    where we note that we can prove that $(j+2)(j+4)\leq 2^{j+3}$ for all $j\in\mathbb{N}$ by the mathematical induction. 
    By Eqs.~\eqref{SMeq:SMlem:SU2_sufficient_cond3} and \eqref{SMeq:SMlem:SU2_sufficient_cond4}, we get $c_{n, k}\leq\binom{n}{\lfloor n/2\rfloor}-\binom{n}{\lfloor n/2\rfloor-1}$.

    When $n$ is odd, $n+1$ is even, and we have $n+1\geq 2^{2(\lfloor k/2\rfloor+1)}$. 
    Thus we can substitute $n\mapsto n+1$ in the result in the case when $n$ is even,  
    Then, we get 
    \begin{align}
        c_{n, k}
        \leq c_{n+1, k}
        \leq\binom{n+1}{\lfloor \frac{n+1}{2}\rfloor}-\binom{n+1}{\lfloor \frac{n+1}{2}\rfloor-1}. \label{SMeq:SMlem:SU2_sufficient_cond5}
    \end{align}
    We note that 
    \begin{align}
        \binom{n+1}{\lfloor \frac{n+1}{2}\rfloor}-\binom{n+1}{\lfloor \frac{n+1}{2}\rfloor-1}
        =&\binom{n+1}{\frac{n+1}{2}}-\binom{n+1}{\frac{n-1}{2}} \nonumber\\
        =&\left(\binom{n}{\frac{n-1}{2}}+\binom{n}{\frac{n+1}{2}}\right)-\left(\binom{n}{\frac{n-3}{2}}+\binom{n}{\frac{n-1}{2}}\right) \nonumber\\
        =&\binom{n}{\frac{n+1}{2}}-\binom{n}{\frac{n-3}{2}} \nonumber\\
        =&\binom{n}{\frac{n-1}{2}}-\binom{n}{\frac{n-3}{2}} \nonumber\\
        =&\binom{n}{\lfloor \frac{n}{2}\rfloor}-\binom{n}{\lfloor \frac{n}{2}\rfloor-1}. \label{SMeq:SMlem:SU2_sufficient_cond6}
    \end{align}
    By plugging Eq.~\eqref{SMeq:SMlem:SU2_sufficient_cond6} into Eq.~\eqref{SMeq:SMlem:SU2_sufficient_cond5}, we get $c_{n, k}\leq\binom{n}{\lfloor n/2\rfloor}-\binom{n}{\lfloor n/2\rfloor-1}$. 
\end{proof}

We are going to see two properties of the sequence $(a_{n, k, j})_{j=1}^\infty$ defined by Eq.~\eqref{SMeq:a_nkj_def}, i.e., 
\begin{align}
        a_{n, k, j}:=\sum_{p=0}^j \binom{n}{j-p}\binom{n-k+p-1}{p}. 
\end{align}
First, we prepare the property that we used to get the explicit expression of the result of Theorem~\ref{SMthm:SU2_general_locality}.

\begin{lemma} \label{SMlem:sequence_property1}
    Let $n, k\in\mathbb{Z}$ satisfy $0\leq k\leq n-1$ and the sequence $(a_{n, k, j})_{j=0}^\infty$ be defined by Eq.~\eqref{SMeq:a_nkj_def}. 
    Then, 
    \begin{align}
        a_{n, 2k, k}=\frac{2^k}{k!}\prod_{\alpha=1}^k (n-2\alpha+1). \label{SMeq:SMlem:sequence_property1_1}
    \end{align}
\end{lemma}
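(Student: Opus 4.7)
My plan is to evaluate $a_{n,2k,k}$ by recognizing it as a generating-function coefficient and applying a sequence of substitutions that reduce it to a standard generalized binomial coefficient. First, collecting the defining sum \eqref{SMeq:a_nkj_def} as the Cauchy product of $(1+z)^n$ and $(1-z)^{-(n-2k)}=\sum_{p\geq 0}\binom{n-2k+p-1}{p}z^p$ (valid as a formal power series for any integer exponent under the generalized binomial convention already in force in the definition of $a_{n,k,j}$), I obtain
\[
a_{n,2k,k} = [z^k]\bigl((1+z)^n(1-z)^{-(n-2k)}\bigr).
\]

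Next, I will apply the Mobius substitution $v=z/(1-z)$, equivalently $z=v/(1+v)$, which is analytic at the origin and fixes it. Since $1+z=(1+2v)/(1+v)$, $1-z=1/(1+v)$, and $dz=dv/(1+v)^2$, a direct Jacobian computation transforms the contour-integral representation of the Taylor coefficient into
\[
a_{n,2k,k} = [v^k]\frac{(1+2v)^n}{(1+v)^{k+1}}.
\]
The advantage is that $n$ now appears only in the numerator, with a denominator depending solely on $k$.

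Then I will perform the further substitutions $v=(e^{s}-1)/2$ and $u=e^{2s}-1$, which are tailored so that $1+2v=e^{s}$ makes the numerator monomial $(1+2v)^n=e^{ns}$, while the identity $(e^{s}-1)(e^{s}+1)=e^{2s}-1=u$ collapses the product $v^{k+1}(1+v)^{k+1}$ appearing in the integrand. After carefully accumulating the powers of $2$ produced by the two Jacobians and the factor $(1+v)^{k+1}=2^{-(k+1)}(1+e^{s})^{k+1}$, the residue simplifies to
\[
a_{n,2k,k} = 4^{k}\,\mathrm{Res}_{u=0}\frac{(1+u)^{(n-1)/2}}{u^{k+1}} = 4^{k}\binom{(n-1)/2}{k},
\]
where the last equality is Newton's generalized binomial theorem.

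Finally, expanding the generalized binomial coefficient as $\binom{(n-1)/2}{k}=\frac{1}{2^{k}\,k!}\prod_{\alpha=1}^{k}(n-2\alpha+1)$ yields exactly Eq.~\eqref{SMeq:SMlem:sequence_property1_1}. The only genuine technical obstacle is the bookkeeping of the various factors of $2$ and $4$ accumulated through the successive substitutions and Jacobians; the conceptual content is that the Mobius substitution $z\mapsto v$ linearizes $(1+z)^n(1-z)^{-(n-2k)}$ into $(1+2v)^n/(1+v)^{k+1}$, after which the problem essentially reduces to Newton's binomial theorem applied to the exponent $(n-1)/2$.
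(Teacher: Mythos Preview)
Your proof is correct and takes a genuinely different route from the paper's. The paper treats both sides of \eqref{SMeq:SMlem:sequence_property1_1} as degree-$k$ polynomials in the variable $z$ (after replacing $n$ by $z$), and verifies equality by checking the $k+1$ values $z=2k$ and $z=2\alpha-1$ for $\alpha\in\{1,\dots,k\}$; the vanishing at the odd integers comes from a sign-reversing pairing $p\leftrightarrow 2k-2\alpha+1-p$ in the sum. Your argument instead identifies the sum as a single generating-function coefficient, and the M\"obius substitution $v=z/(1-z)$ followed by $u=e^{2s}-1$ collapses it to the closed form $4^{k}\binom{(n-1)/2}{k}$, from which \eqref{SMeq:SMlem:sequence_property1_1} is immediate. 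Your approach is more conceptual (it \emph{finds} the answer rather than verifying a guessed identity) and yields the pleasing intermediate expression $a_{n,2k,k}=4^{k}\binom{(n-1)/2}{k}$, at the cost of invoking residue calculus and, for even $n$, the generalized binomial series with half-integer exponent. The paper's interpolation argument is purely algebraic and self-contained, but requires knowing the target formula in advance and carrying out the somewhat ad hoc evaluations at the sample points. Both are valid; your method would likely generalize more readily if one wanted analogous identities for other values of the middle index.
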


\begin{proof}
    We are going to prove that 
    \begin{align}
        \sum_{p=0}^k \left[\frac{1}{(k-p)!}\prod_{\beta=1}^{k-p}(z-\beta+1)\right]\left[\frac{1}{p!}\prod_{\beta=1}^p (z-2k+p-\beta)\right]
        =\frac{2^k}{k!}\prod_{\alpha=1}^k (z-2\alpha+1) \label{SMeq:SMlem:sequence_property1_2}
    \end{align}
    for all $z\in\mathbb{C}$, which gives Eq.~\eqref{SMeq:SMlem:sequence_property1_1} as a special case of $z=n$. 
    We define a polynomial $q(z)$ as the l.h.s. of Eq.~\eqref{SMeq:SMlem:sequence_property1_2}. 
    Since the both sides of Eq.~\eqref{SMeq:SMlem:sequence_property1_2} are polynomials of degree $k$, it is sufficient to show that Eq.~\eqref{SMeq:SMlem:sequence_property1_2} holds for $z=2k$ and $z=2\alpha-1$ with $\alpha\in\{1, 2, ..., k\}$. 
    First, when $z=2k$, we have 
    \begin{align}
        q(2k)
        =&\sum_{p=0}^k \left[\frac{1}{(k-p)!}\prod_{\beta=1}^{k-p} (2k-\beta+1)\right]\left[\frac{1}{p!}\prod_{\beta=1}^p (p-\beta)\right] \nonumber\\
        =&\frac{1}{k!}\prod_{\beta=1}^k (2k-\beta+1) \nonumber\\
        =&\frac{1}{k!}\cdot\frac{\displaystyle\prod_{\beta=1}^{2k} (2k-\beta+1)}{\displaystyle\prod_{\beta=k+1}^{2k} (2k-\beta+1)} \nonumber\\
        =&\frac{1}{k!}\cdot\frac{\left[\displaystyle\prod_{\beta=1}^k (2k-2\beta+1)\right]\left[\displaystyle\prod_{\beta=1}^k (2k-2\beta+2)\right]}{\displaystyle\prod_{\beta=1}^k (k-\beta+1)} \nonumber\\
        =&\frac{2^k}{k!}\prod_{\beta=1}^k (2k-2\beta+1), 
    \end{align}
    where we used $\prod_{\beta=1}^p (p-\beta)=\delta_{p, 0}$ in the second equality.  
    Next, we consider the case when $z=2\alpha-1$ with $\alpha\in\{1, 2, ..., k\}$. 
    By substitution, we have 
    \begin{align}
        q(2\alpha-1)
        =\sum_{p=0}^k \left[\frac{1}{(k-p)!}\prod_{\beta=1}^{k-p}(2\alpha-\beta)\right]\left[\frac{1}{p!}\prod_{\beta=1}^p (2\alpha-2k+p-\beta-1)\right]. \label{SMeq:SMlem:sequence_property1_3}
    \end{align}
    We note that 
    \begin{align}
        \prod_{\beta=1}^p (2\alpha-2k+p-\beta-1)
        =&\prod_{\beta=1}^p [2\alpha-2k+p-(p+1-\beta)-1] \nonumber\\
        =&\prod_{\beta=1}^p (2\alpha-2k+\beta-2) \nonumber\\
        =&(-1)^p \prod_{\beta=1}^p (2k-2\alpha-\beta+2). \label{SMeq:SMlem:sequence_property1_4}
    \end{align}
    By plugging Eq.~\eqref{SMeq:SMlem:sequence_property1_4} into Eq.~\eqref{SMeq:SMlem:sequence_property1_3}, we get 
    \begin{align}
        q(2\alpha-1) 
        =&\sum_{p=0}^k (-1)^p \left[\frac{1}{(k-p)!}\prod_{\beta=1}^{k-p}(2\alpha-\beta)\right]\left[\frac{1}{p!}\prod_{\beta=1}^p (2k-2\alpha-\beta+2)\right] \nonumber\\
        =&\sum_{p\in[0, k]\cap [k-2\alpha+1, 2k-2\alpha+1]\cap\mathbb{Z}} (-1)^p \left[\frac{1}{(k-p)!}\prod_{\beta=1}^{k-p}(2\alpha-\beta)\right]\left[\frac{1}{p!}\prod_{\beta=1}^p (2k-2\alpha-\beta+2)\right] \nonumber\\
        =&\sum_{p\in[0, k]\cap [k-2\alpha+1, 2k-2\alpha+1]\cap\mathbb{Z}} (-1)^p \binom{2\alpha-1}{k-p}\binom{2k-2\alpha+1}{p}, \label{SMeq:SMlem:sequence_property1_5}
    \end{align}
    where we used $\prod_{\beta=1}^{k-p} (2\alpha-\beta)=0$ if $p\leq k-2\alpha$, and $\prod_{\beta=1}^p (2k-2\alpha-\beta+2)=0$ if $p\geq 2k-2\alpha+2$ in the second equality. 
    By noting that the condition $p\in[0, k]\cap [k-2\alpha+1, 2k-2\alpha+1]\cap\mathbb{Z}$ is invariant under the transformation $p\mapsto 2k-2\alpha+1-p$, we have 
    \begin{align}
        q(2\alpha-1) 
        =&\sum_{p\in[0, k]\cap [k-2\alpha+1, 2k-2\alpha+1]\cap\mathbb{Z}} (-1)^{2k-2\alpha+1-p} \binom{2\alpha-1}{k-(2k-2\alpha+1-p)}\binom{2k-2\alpha+1}{2k-2\alpha+1-p} \nonumber\\
        =&-\sum_{p\in[0, k]\cap [k-2\alpha+1, 2k-2\alpha+1]\cap\mathbb{Z}} (-1)^p \binom{2\alpha-1}{2\alpha-k+p-1}\binom{2k-2\alpha+1}{2k-2\alpha+1-p} \nonumber\\
        =&-\sum_{p\in[0, k]\cap [k-2\alpha+1, 2k-2\alpha+1]\cap\mathbb{Z}} (-1)^p \binom{2\alpha-1}{k-p}\binom{2k-2\alpha+1}{p}. \label{SMeq:SMlem:sequence_property1_6}
    \end{align}
    By Eqs.~\eqref{SMeq:SMlem:sequence_property1_5} and \eqref{SMeq:SMlem:sequence_property1_6}, we get $q(2\alpha-1)=0$. 
\end{proof}

Next, by using the lemma above, we derive another property of the sequence $(a_{n, k, j})_{j=0}^\infty$ for the explicit expression of the result of Theorem~\ref{SMthm:U1_general_locality}.

\begin{lemma} \label{SMlem:sequence_property2}
    Let $n, k\in\mathbb{Z}$ satisfy $0\leq k\leq n-1$ and the sequence $(a_{n, k, j})_{j=0}^\infty$ be defined by Eq.~\eqref{SMeq:a_nkj_def}. 
    Then, 
    \begin{align}
        \frac{a_{n, k, \lceil k/2\rceil}+a_{n, k, \lfloor k/2\rfloor}}{2}
        =\frac{2^{\lfloor k/2\rfloor}}{\lceil \frac{k}{2}\rceil!}\prod_{\alpha=1}^{\lceil k/2\rceil} (n-k+2\alpha-1). 
    \end{align}
\end{lemma}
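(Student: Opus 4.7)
The plan is to reduce Lemma~\ref{SMlem:sequence_property2} to Lemma~\ref{SMlem:sequence_property1} by separating the even and odd cases of $k$ and, in the odd case, using a simple ``Pascal'' recursion on the sequence $(a_{n, k, j})$.

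First I would handle the easy case $k$ even. Then $\lceil k/2 \rceil = \lfloor k/2 \rfloor = k/2$, so the left-hand side is just $a_{n, k, k/2}$. Applying Lemma~\ref{SMlem:sequence_property1} with its parameter $k$ replaced by $k/2$ gives
\begin{align*}
a_{n, k, k/2} = a_{n, 2(k/2), k/2} = \frac{2^{k/2}}{(k/2)!}\prod_{\alpha=1}^{k/2}(n-2\alpha+1),
\end{align*}
and the index substitution $\alpha \mapsto k/2 - \alpha + 1$ converts $\prod_{\alpha=1}^{k/2}(n-2\alpha+1)$ into $\prod_{\alpha=1}^{k/2}(n-k+2\alpha-1)$, matching the right-hand side.

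For odd $k$, write $k = 2m+1$, so $\lceil k/2\rceil = m+1$ and $\lfloor k/2\rfloor = m$. The key step is to establish the identity
\begin{align*}
a_{n, k, j} + a_{n, k, j-1} = a_{n+1, k+1, j},
\end{align*}
which follows immediately from Pascal's rule $\binom{n+1}{j-p} = \binom{n}{j-p}+\binom{n}{j-p-1}$ applied to the definition~\eqref{SMeq:a_nkj_def}, after a single index shift $p \mapsto p+1$ in the second piece (the boundary terms vanish since $\binom{n}{-1}=0$). Applying this with $j = m+1$ gives $a_{n, k, m+1} + a_{n, k, m} = a_{n+1, k+1, m+1}$, and since $k+1 = 2(m+1)$ we can invoke Lemma~\ref{SMlem:sequence_property1} with $n \to n+1$ and $k \to m+1$:
\begin{align*}
a_{n+1, 2(m+1), m+1} = \frac{2^{m+1}}{(m+1)!}\prod_{\alpha=1}^{m+1}\bigl((n+1)-2\alpha+1\bigr).
\end{align*}
Dividing by $2$ and using the substitution $\alpha \mapsto m+2-\alpha$ inside the product yields the desired right-hand side $\frac{2^m}{(m+1)!}\prod_{\alpha=1}^{m+1}(n-k+2\alpha-1)$.

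There is no real obstacle; the only place that requires care is verifying the Pascal recursion (tracking the two boundary contributions at $p = 0$ and $p = j$ in the sum) and then keeping the index substitutions straight so that the products on the two sides align. Alternatively, one can derive the recursion transparently via the generating function $\sum_j a_{n, k, j}z^j = (1+z)^n(1-z)^{-(n-k)}$, noting that multiplication by $(1+z)$ sends this to $(1+z)^{n+1}(1-z)^{-(n-k)} = \sum_j a_{n+1, k+1, j}z^j$.
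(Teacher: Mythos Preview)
Your proposal is correct and follows essentially the same route as the paper: the even case is a direct application of Lemma~\ref{SMlem:sequence_property1} followed by the index reversal $\alpha\mapsto k/2+1-\alpha$, and the odd case uses the Pascal-type recursion $a_{n,k,j}+a_{n,k,j-1}=a_{n+1,k+1,j}$ to reduce to $a_{n+1,\,k+1,\,(k+1)/2}$, after which Lemma~\ref{SMlem:sequence_property1} and the same index reversal finish the job. Your generating-function remark is a nice addition but is not needed, and neither the paper nor you need anything beyond this.
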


\begin{proof}
    First, when $k$ is even, by Lemma~\ref{SMlem:sequence_property1}, we have 
    \begin{align}
        \frac{a_{n, k, \lceil k/2\rceil}+a_{n, k, \lfloor k/2\rfloor}}{2}
        =&a_{n, k, k/2} \nonumber\\
        =&\frac{2^{k/2}}{(\frac{k}{2})!}\prod_{\alpha=1}^{k/2} (n-2\alpha+1) \nonumber\\
        =&\frac{2^{k/2}}{(\frac{k}{2})!}\prod_{\alpha=1}^{k/2} \left[n-2\left(\frac{k}{2}+1-\alpha\right)+1\right] \nonumber\\
        =&\frac{2^{\lfloor k/2\rfloor}}{\lceil \frac{k}{2}\rceil!}\prod_{\alpha=1}^{\lceil k/2\rceil} (n-k+2\alpha-1). 
    \end{align}
    Next, we consider the case when $k$ is odd. 
    We note that 
    \begin{align}
        a_{n, k, j+1}+a_{n, k, j}
        =&\sum_{p=0}^{j+1} \binom{n}{j+1-p}\binom{n-k+p-1}{p}+\sum_{p=0}^j \binom{n}{j-p}\binom{n-k+p-1}{p} \nonumber\\
        =&\sum_{p=0}^{j+1} \left(\binom{n}{j+1-p}+\binom{n}{j-p}\right)\binom{n-k+p-1}{p} \nonumber\\
        =&\sum_{p=0}^{j+1} \binom{n+1}{j+1-p}\binom{n-k+p-1}{p} \nonumber\\
        =&a_{n, k+1, j+1}, 
    \end{align}
    which implies that 
    \begin{align}
        \frac{a_{n, k, \lceil k/2\rceil}+a_{n, k, \lfloor k/2\rfloor}}{2}
        =\frac{a_{n, k, (k+1)/2}+a_{n, k, (k-1)/2}}{2}
        =\frac{a_{n+1, k+1, (k+1)2}}{2}. 
    \end{align}
    By applying Lemma~\ref{SMlem:sequence_property1} to the r.h.s. of this equation, we get 
    \begin{align}
        \frac{a_{n, k, \lceil k/2\rceil}+a_{n, k, \lfloor k/2\rfloor}}{2}
        =&\frac{1}{2}\frac{2^{(k+1)/2}}{(\frac{k+1}{2})!}\prod_{\alpha=1}^{(k+1)2} (n-2\alpha+2) \nonumber\\
        =&\frac{2^{(k-1)/2}}{(\frac{k+1}{2})!}\prod_{\alpha=1}^{(k+1)/2} \left[n-2\left(\frac{k+1}{2}+1-\alpha\right)+2\right] \nonumber\\
        =&\frac{2^{(k-1)/2}}{(\frac{k+1}{2})!}\prod_{\alpha=1}^{(k+1)/2} (n-k+2\alpha-1) \nonumber\\
        =&\frac{2^{\lfloor k/2\rfloor}}{(\lceil \frac{k}{2}\rceil)!}\prod_{\alpha=1}^{(k+1)/2} (n-k+2\alpha-1). 
    \end{align}
\end{proof}

In the following, we prepare two properties about binomial coefficients for the proof of Lemma~\ref{SMlem:SU2_cond}. 
First, we prove a property used for proving that the equations in Lemma~\ref{SMlem:SU2_cond} imply the ones in Theorem~\ref{SMthm:general}.

\begin{lemma} \label{SMlem:SU2_coefficient1}
    Let $n, k\in\mathbb{Z}_{\geq 0}$ satisfy $n\geq k$. 
    Then, for any $j\in\{0, 1, ..., \lfloor k/2\rfloor\}$, there exists $(v_{j, l})_{j, l\in\{0, 1, ..., \lfloor k/2\rfloor\}}\in\mathbb{R}^{(\lfloor k/2\rfloor+1)^2}$ such that for any $\alpha\in\mathbb{Z}$, 
    \begin{align}
        \binom{n-k}{\alpha-j}+\binom{n-k}{\alpha-k+j}
        =\sum_{l=0}^{\lfloor k/2\rfloor} v_{j, l}\binom{n-2l}{\alpha-l}. \label{SMeq:SMlem:SU2_coefficient1_1}
    \end{align}
\end{lemma}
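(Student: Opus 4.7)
The plan is to convert the stated identity between binomial-coefficient expressions indexed by $\alpha\in\mathbb{Z}$ into an identity of polynomials in a formal variable $z$, and then resolve the polynomial identity by a dimension count in the space of palindromic (self-reciprocal) polynomials.

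First, I would multiply both sides of Eq.~\eqref{SMeq:SMlem:SU2_coefficient1_1} by $z^\alpha$ and sum over $\alpha\in\mathbb{Z}$, using the elementary generating-function identity $\sum_{\alpha\in\mathbb{Z}}\binom{m}{\alpha-s}z^\alpha=z^s(1+z)^m$ for $m\in\mathbb{Z}_{\geq 0}$. This turns the claim into the polynomial identity
\[
z^j(1+z)^{n-k}+z^{k-j}(1+z)^{n-k}=\sum_{l=0}^{\lfloor k/2\rfloor} v_{j,l}\, z^l(1+z)^{n-2l}.
\]
Since $n\geq k\geq 2l$ for every $l$ appearing in the sum, both sides are polynomial multiples of $(1+z)^{n-k}$, and dividing out this common factor reduces the claim to showing
\[
z^j+z^{k-j}\in\mathrm{span}_{\mathbb{R}}\!\left\{z^l(1+z)^{k-2l}\ \middle|\ l=0,1,\ldots,\lfloor k/2\rfloor\right\}
\]
in $\mathbb{R}[z]$.

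Next, I would observe that each basis candidate $z^l(1+z)^{k-2l}$ has the palindromic symmetry $z^k p(1/z)=p(z)$, which is immediate from $z^k\cdot (1/z)^l(1+1/z)^{k-2l}=z^{k-l}\cdot z^{-(k-2l)}(z+1)^{k-2l}=z^l(1+z)^{k-2l}$. The polynomial $z^j+z^{k-j}$ enjoys the same symmetry. The space $\mathcal{P}_k$ of polynomials of degree at most $k$ in $\mathbb{R}[z]$ satisfying $z^k p(1/z)=p(z)$ has dimension exactly $\lfloor k/2\rfloor+1$, because its independent coefficients are those indexed by $0,1,\ldots,\lfloor k/2\rfloor$.

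Finally, I would establish linear independence of $\{z^l(1+z)^{k-2l}\}_{l=0}^{\lfloor k/2\rfloor}$ by inspecting minimal-degree terms: the polynomial $z^l(1+z)^{k-2l}$ has $z^l$ as its lowest-degree monomial with coefficient $1$, so the family is triangular with respect to the monomial basis ordered by degree and therefore linearly independent. Since $\lfloor k/2\rfloor+1$ linearly independent elements of the $(\lfloor k/2\rfloor+1)$-dimensional space $\mathcal{P}_k$ form a basis, every palindromic polynomial — in particular $z^j+z^{k-j}$ — lies in their span, producing the required coefficients $v_{j,l}$. I do not anticipate a serious obstacle; the main thing to be careful about is the initial translation to the polynomial identity and the uniform treatment of the boundary case where $k$ is even and $j=k/2$, so that $z^j+z^{k-j}=2z^{k/2}$, which the palindromic-basis argument absorbs without modification.
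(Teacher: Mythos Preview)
Your proof is correct and takes a genuinely different route from the paper's. The paper argues by induction on $k$: assuming the identity for $k=K$, it explicitly constructs the coefficients $v_{j,l}$ for $k=K+1$ via telescoping sums of the form $\sum_{p=j}^{K-j}(-1)^{p-j}\binom{n-K}{\alpha-p}$, treating the even and odd parity of $K$ separately. Your approach instead packages the family of binomial identities (over all $\alpha$) as a single polynomial identity via the generating function $\sum_\alpha\binom{m}{\alpha-s}z^\alpha=z^s(1+z)^m$, cancels the common factor $(1+z)^{n-k}$, and then observes that both $z^j+z^{k-j}$ and the proposed spanning set $\{z^l(1+z)^{k-2l}\}_{l=0}^{\lfloor k/2\rfloor}$ lie in the $(\lfloor k/2\rfloor+1)$-dimensional space of palindromic polynomials of degree $\leq k$; a triangularity argument on lowest-degree terms shows the latter is a basis. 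Your argument is shorter and more conceptual, and it makes the structural reason for the identity transparent (palindromic symmetry). The paper's inductive construction has the minor advantage of yielding explicit recursive formulas for the $v_{j,l}$, but since only existence is used downstream, your dimension count suffices.
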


\begin{proof}
    We prove this lemma by the mathematical induction about $k$. 
    The statement trivially holds for $k=0$, because when $j=0$ and $v_{0, 0}=2$, Eq.~\eqref{SMeq:SMlem:SU2_coefficient1_1} holds for all $\alpha\in\mathbb{Z}$. 
    We take arbitrary $K\in\mathbb{Z}_{\geq 0}$ and suppose that this statement holds for $k=K$, i.e., for any $j\in\{0, 1, ..., \lfloor k/2\rfloor\}$, we can take $v'_{j, l\in\{0, 1, ..., \lfloor k/2\rfloor\}}\in\mathbb{R}^{(\lfloor k/2\rfloor+1)^2}$ such that for any $\alpha\in\mathbb{Z}$, 
    \begin{align}
        \binom{n-K}{\alpha-j}+\binom{n-K}{\alpha-K+j}
        =\sum_{l=0}^{\left\lfloor K/2\right\rfloor} v'_{j, l}\binom{n-2l}{\alpha-l}. \label{SMeq:SMlem:SU2_coefficient1_2}
    \end{align}
    In the following, we are going to prove that the statement holds for $k=K+1$. 
    When $K$ is even, for any $j, l\in\{0, 1, ..., K/2\}$, we set 
    \begin{align}
        v_{j, l}=(-1)^{K/2-j}\delta_{K/2, l}+\sum_{p=j}^{K/2-1} (-1)^{p-j}v'_{p, l}, \label{SMeq:SMlem:SU2_coefficient1_3}
    \end{align}
    where we note that this means that $v_{K/2, l}=\delta_{K/2, l}$. 
    Then, for any $\alpha\in\mathbb{Z}$, we have 
    \begin{align}
        \sum_{l=0}^{K/2} v_{j, l}\binom{n-2l}{\alpha-l}
        =&(-1)^{K/2-j}\binom{n-K}{\alpha-K/2}+\sum_{p=j}^{K/2-1} \left[(-1)^{p-j}\sum_{l=0}^{K/2} v'_{p, l}\binom{n-2l}{\alpha-l}\right] \nonumber\\
        =&(-1)^{K/2-j}\binom{n-K}{\alpha-K/2}+\sum_{p=j}^{K/2-1} (-1)^{p-j}\left(\binom{n-K}{\alpha-p}+\binom{n-K}{\alpha-K+p}\right) \nonumber\\
        =&\sum_{p=j}^{K-j} (-1)^{p-j}\binom{n-K}{\alpha-p} \nonumber\\
        =&\sum_{p=j}^{K-j} \left[(-1)^{p-j}\binom{n-(K+1)}{\alpha-p}-(-1)^{(p+1)-j}\binom{n-(K+1)}{\alpha-(p+1)}\right] \nonumber\\
        =&\binom{n-(K+1)}{\alpha-j}+\binom{n-(K+1)}{\alpha-K+j}. \label{SMeq:SMlem:SU2_coefficient1_4}
    \end{align}
    When $K$ is odd, for any $j\in\{0, 1, ..., (K+1)/2\}$, we set 
    \begin{align}
        &v_{j, l}=\sum_{p=j}^{(K-1)/2} (-1)^{p-j}v'_{p, l}\ \forall l\in\{0, 1, ..., (K-1)/2\}, \label{SMeq:SMlem:SU2_coefficient1_5}\\
        &v_{j, (K+1)/2}=2(-1)^{(K+1)/2-j}. \label{SMeq:SMlem:SU2_coefficient1_6}
    \end{align}
    Then, for any $\alpha\in\mathbb{Z}$, we have 
    \begin{align}
        \sum_{l=0}^{(K+1)/2} v_{j, l}\binom{n-2l}{\alpha-l} 
        =&2(-1)^{(K+1)/2-j}\binom{n-(K+1)}{\alpha-\frac{K+1}{2}}+\sum_{p=j}^{(K-1)/2} \left[(-1)^{p-j}\sum_{l=0}^{(K-1)/2} v'_{p, l}\binom{n-2l}{\alpha-l}\right] \nonumber\\
        =&2(-1)^{(K+1)/2-j}\binom{n-(K+1)}{\alpha-\frac{K+1}{2}}+\sum_{p=j}^{(K-1)/2} (-1)^{p-j}\left(\binom{n-K}{\alpha-p}+\binom{n-K}{\alpha-K+p}\right). \label{SMeq:SMlem:SU2_coefficient1_7}
    \end{align}
    We note that 
    \begin{align}
        &\sum_{p=j}^{(K-1)/2} (-1)^{p-j}\left(\binom{n-K}{\alpha-p}+\binom{n-K}{\alpha-K+p}\right) \nonumber\\
        =&\sum_{p=j}^{(K-1)/2} (-1)^{p-j}\binom{n-K}{\alpha-p}-\sum_{p=(K+1)/2}^{K-j} (-1)^{p-j}\binom{n-K}{\alpha-p} \nonumber\\
        =&\sum_{p=j}^{(K-1)/2} \left[(-1)^{p-j}\binom{n-(K+1)}{\alpha-p}-(-1)^{(p+1)-j}\binom{n-(K+1)}{\alpha-(p+1)}\right] \nonumber\\
        &\hspace{5mm} -\sum_{p=(K+1)/2}^{K-j} \left[(-1)^{p-j}\binom{n-(K+1)}{\alpha-p}-(-1)^{(p+1)-j}\binom{n-(K+1)}{\alpha-(p+1)}\right] \nonumber\\
        =&\left[\binom{n-(K+1)}{\alpha-j}-(-1)^{(K+1)/2-j}\binom{n-(K+1)}{\alpha-\frac{K+1}{2}}\right]
        -\left[(-1)^{(K+1)/2-j}\binom{n-(K+1)}{\alpha-\frac{K+1}{2}}-\binom{n-(K+1)}{\alpha-(K+1)+j}\right] \nonumber\\
        =&\left(\binom{n-(K+1)}{\alpha-j}+\binom{n-(K+1)}{\alpha-(K+1)+j}\right)-2(-1)^{(K+1)/2-j}\binom{n-(K+1)}{\alpha-\frac{K+1}{2}}. \label{SMeq:SMlem:SU2_coefficient1_8}
    \end{align}
    By plugging Eq.~\eqref{SMeq:SMlem:SU2_coefficient1_8} into Eq.~\eqref{SMeq:SMlem:SU2_coefficient1_7}, we get 
    \begin{align}
        \sum_{l=0}^{(K+1)/2} v_{j, l}\binom{n-2l}{\alpha-l}
        =\binom{n-(K+1)}{\alpha-j}+\binom{n-(K+1)}{\alpha-(K+1)+j}. 
    \end{align}
    We have thus proven that the statement holds for $k=K+1$. 
\end{proof}

Next, we prove a property used for proving that the equations in Theorem~\ref{SMthm:general} imply the ones in Lemma~\ref{SMlem:SU2_cond}.

\begin{lemma} \label{SMlem:SU2_coefficient2}
    Let $n, k\in\mathbb{Z}_{\geq 0}$ satisfy $n\geq k$. 
    Then, for any $j\in\{0, 1, ..., \lfloor k/2\rfloor\}$, there exists $(w_{j, l})_{j, l\in\{0, 1, ..., \lfloor k/2\rfloor\}}\in\mathbb{R}^{(\lfloor k/2\rfloor+1)^2}$ such that for any $\alpha\in\mathbb{Z}$, 
    \begin{align}
        \binom{n-2j}{\alpha-j}
        =\sum_{l=0}^{\lfloor k/2\rfloor} w_{j, l}\left(\binom{n-2l}{\alpha}+\binom{n-2l}{\alpha-2l}\right). \label{SMeq:SMlem:SU2_coefficient2_1}
    \end{align}
\end{lemma}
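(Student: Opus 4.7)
The plan is to reformulate Eq.~\eqref{SMeq:SMlem:SU2_coefficient2_1} as a polynomial identity in a single variable and then reduce it to a triangular linear-algebra problem via a substitution that exposes the palindromic symmetry hidden in both sides. Using $\binom{N}{\alpha} = [z^\alpha](1+z)^N$, Eq.~\eqref{SMeq:SMlem:SU2_coefficient2_1} is equivalent to
\begin{align}
z^j(1+z)^{n-2j} = \sum_{l=0}^{M} w_{j,l}\, (1+z^{2l})(1+z)^{n-2l}
\end{align}
with $M := \lfloor k/2 \rfloor$. Since $n \geq k \geq 2M$, I may factor out $(1+z)^{n-2M}$ from both sides and reduce to the $n$-independent identity $z^j(1+z)^{2M-2j} = \sum_{l=0}^{M} w_{j,l}\, (1+z^{2l})(1+z)^{2M-2l}$ in the polynomial ring in $z$.

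The key step is the factorization $(1+z)^2 = z(z + z^{-1} + 2) = z\,t$ with $t := z + z^{-1} + 2$. This gives $z^j(1+z)^{2M-2j} = z^M t^{M-j}$ and $(1+z^{2l})(1+z)^{2M-2l} = z^M(z^l + z^{-l})\, t^{M-l}$; dividing by $z^M$, the identity becomes
\begin{align}
t^{M-j} = \sum_{l=0}^{M} w_{j,l}\, U_l(t)\, t^{M-l}
\end{align}
where $U_l(t) := z^l + z^{-l}$ is the Chebyshev-type polynomial in $t$ of degree $l$ determined by the recursion $U_{l+1}(t) = (t-2)U_l(t) - U_{l-1}(t)$ with $U_0 = 2$ and $U_1 = t - 2$. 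An easy induction (or evaluation at $z = -1$, i.e., $t = 0$) shows that $U_l(t)$ has leading coefficient $1$ for $l \geq 1$ and constant term $2(-1)^l$.

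It then suffices to show that $\{U_l(t)\, t^{M-l}\}_{l=0}^{M}$ is a basis of the $(M+1)$-dimensional space of polynomials in $t$ of degree $\leq M$. Expressing each $U_l(t)\, t^{M-l}$ as a column in the ordered basis $(t^M, t^{M-1}, \ldots, t^0)$, the coefficient at row $t^{M-r}$ and column $l$ equals the coefficient of $t^{l-r}$ in $U_l(t)$, which vanishes for $r > l$ and equals $2(-1)^l$ on the diagonal $r = l$. Hence the change-of-basis matrix is upper triangular with nonzero diagonal, so it is invertible; inverting this triangular system for each $j \in \{0, 1, \ldots, M\}$ produces the required real (in fact rational) coefficients $w_{j,l}$.

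The main obstacle is recognizing the substitution $t = z + z^{-1} + 2$: without it, the identity is opaque because of the multiplicative complexity of $(1+z^{2l})(1+z)^{2M-2l}$. Once this substitution linearizes the structure, the remaining verification is a routine triangular-inversion argument.
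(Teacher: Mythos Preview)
Your argument is correct and takes a genuinely different route from the paper's. The paper proves Lemma~\ref{SMlem:SU2_coefficient2} by induction on $k$: the even-to-odd step is trivial (since $\lfloor k/2\rfloor$ is unchanged), while the odd-to-even step invokes the companion Lemma~\ref{SMlem:SU2_coefficient1} together with a telescoping identity (Eq.~\eqref{SMeq:SMlem:SU2_coefficient1_8}) to manufacture the new coefficient $w_{(K+1)/2,\,l}$ explicitly. Your approach instead passes to generating functions, factors out $(1+z)^{n-2M}$ to make the problem $n$-independent, and then uses the substitution $t=z+z^{-1}+2=(1+z)^2/z$ to turn the palindromic polynomial identity into the question of whether $\{U_l(t)\,t^{M-l}\}_{l=0}^{M}$ spans polynomials of degree $\le M$; the upper-triangular structure with diagonal entries $U_l(0)=2(-1)^l$ finishes it. What your approach buys is a clean, parity-free, self-contained proof that does not rely on Lemma~\ref{SMlem:SU2_coefficient1}, and it makes transparent why the coefficients exist (an invertible change of basis) and are rational. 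What the paper's approach buys is explicit recursive formulas for the $w_{j,l}$, which may be useful if one wants the coefficients in closed form rather than as the solution of a triangular system.
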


\begin{proof}
    We prove this lemma by the mathematical induction about $k$. 
    We take arbitrary $K\in\mathbb{Z}_{\geq 0}$ and suppose that this lemma holds for $k=K$, i.e., for any $j\in\{0, 1, ..., \lfloor K/2\rfloor\}$, we can take $(w'_{j, l})_{j, l\in\{0, 1, ..., \lfloor k/2\rfloor\}}\in\mathbb{R}^{(\lfloor k/2\rfloor+1)^2}$ such that for any $\alpha\in\mathbb{Z}$, 
    \begin{align}
        \binom{n-2j}{\alpha-j}
        =\sum_{l=0}^{\lfloor k/2\rfloor} w'_{j, l}\left(\binom{n-2l}{\alpha}+\binom{n-2l}{\alpha-2l}\right). \label{SMeq:SMlem:SU2_coefficient2_2}
    \end{align}
    We are going to prove that this lemma holds for $k=K+1$. 
    When $K$ is even, the statement trivially holds for $k=K+1$, because Eq.~\eqref{SMeq:SMlem:SU2_coefficient2_1} is equivalent for $k=K$ and $k=K+1$. 
    In the following, we consider the case when $K$ is odd. 
    By Lemma~\ref{SMlem:SU2_coefficient1}, for any $p\in\{0, 1, ..., (K-1)/2\}$, we can take $(v_{p, q})_{p, q\in\{0, 1, ..., (K-1)/2\}}\in\mathbb{R}^{[(K+1)/2]^2}$ such that for any $\alpha\in\mathbb{Z}$, 
    \begin{align}
        \binom{n-K}{\alpha-p}+\binom{n-K}{\alpha-K+p}=\sum_{q=0}^{(K-1)/2} v_{p, q}\binom{n-2q}{\alpha-q}. \label{SMeq:SMlem:SU2_coefficient2_3}
    \end{align}
    For $j\in\{0, 1, ..., (K-1)/2\}$, we set $w_{j, l}=w'_{j, l}$ and $w_{j, (K+1)/2}=0$. 
    Then, we have Eq.~\eqref{SMeq:SMlem:SU2_coefficient2_1} for all $\alpha\in\mathbb{Z}$. 
    For $j=(K+1)/2$, we set 
    \begin{align}
        &w_{(K+1)/2, l}=-\frac{1}{2}\sum_{p=0}^{(K-1)/2}\sum_{q=0}^{(K-1)/2} (-1)^{(K+1)/2+p}v_{p, q}w'_{q, l}\ \forall l\in\left\{0, 1, ..., \frac{K-1}{2}\right\}, \label{SMeq:SMlem:SU2_coefficient2_4}\\
        &w_{(K+1)/2, (K+1)/2}=\frac{1}{2}(-1)^{(K+1)/2}. \label{SMeq:SMlem:SU2_coefficient2_5}
    \end{align} 
    For any $\alpha\in\mathbb{Z}$, by plugging $j=0$ into Eq.~\eqref{SMeq:SMlem:SU2_coefficient1_8} in Lemma~\ref{SMlem:SU2_coefficient1}, we get 
    \begin{align}
        \binom{n-(K+1)}{\alpha-\frac{K+1}{2}} 
        =&-\frac{1}{2}(-1)^{(K+1)/2}\sum_{p=0}^{(K-1)/2} (-1)^p\left(\binom{n-K}{\alpha-p}+\binom{n-K}{\alpha-K+p}\right) \nonumber\\
        &\hspace{5mm} +\frac{1}{2}(-1)^{(K+1)/2}\left(\binom{n-(K+1)}{\alpha}+\binom{n-(K+1)}{\alpha-(K+1)}\right). \label{SMeq:SMlem:SU2_coefficient2_6}
    \end{align}
    By plugging Eq.~\eqref{SMeq:SMlem:SU2_coefficient2_2} into Eq.~\eqref{SMeq:SMlem:SU2_coefficient2_3}, we have 
    \begin{align}
        \binom{n-K}{\alpha-p}+\binom{n-K}{\alpha-K+p} 
        =\sum_{q=0}^{(K-1)/2} \sum_{l=0}^{(K-1)/2} v_{p, q}w'_{q, l}\left(\binom{n-2l}{\alpha}+\binom{n-2l}{\alpha-2l}\right). \label{SMeq:SMlem:SU2_coefficient2_7}
    \end{align}
    By plugging Eq.~\eqref{SMeq:SMlem:SU2_coefficient2_7} into Eq.~\eqref{SMeq:SMlem:SU2_coefficient2_6}, we get 
    \begin{align}
        \binom{n-(K+1)}{\alpha-\frac{K+1}{2}} 
        =&-\frac{1}{2}\sum_{p=0}^{(K-1)/2} \sum_{q=0}^{(K-1)/2} \sum_{l=0}^{(K-1)/2} (-1)^{(K+1)/2+p}v_{p, q}w'_{q, l}\left(\binom{n-2l}{\alpha}+\binom{n-2l}{\alpha-2l}\right) \nonumber\\
        &\hspace{5mm} +\frac{1}{2}(-1)^{(K+1)/2}\left(\binom{n-(K+1)}{\alpha}+\binom{n-(K+1)}{\alpha-(K+1)}\right) \nonumber\\
        =&\sum_{l=0}^{(K+1)/2} w_{(K+1)/2, l}\left(\binom{n-2l}{\alpha}+\binom{n-2l}{\alpha-2l}\right), 
    \end{align}
    where we used Eqs.~\eqref{SMeq:SMlem:SU2_coefficient2_4} and \eqref{SMeq:SMlem:SU2_coefficient2_5}. 
    We have thus proven that the statement holds for $k=K+1$. 
\end{proof}

\end{document}